\newtheorem{theorem}{Theorem}
\newtheorem{hypothesis}{Hypothesis}
\newtheorem*{Srednicki}{Srednicki's ETH}
\newcolumntype{L}{>{\RaggedRight}X}
\definecolor{mypink1}{rgb}{0.858, 0.188, 0.478}
\definecolor{mypink2}{RGB}{219, 48, 122}
\definecolor{mypink3}{cmyk}{0, 0.7808, 0.4429, 0.1412}
\definecolor{mygray}{gray}{0.6}
\newcommand{\R}{\mathbb{R}}
\newcommand{\e}{\mathrm{e}}
\newcommand{\Hilb}{\mathcal{H}}
\newcommand{\Ham}{\varmathbb{H}}
\DeclareMathOperator{\Tr}{Tr}
\newcommand{\1}{\mathbbm{1}}
\newcommand{\Ket}[1]{ \left| #1 \right\rangle}
\newcommand{\Bra}[1]{ \left\langle #1 \right|}
\newcommand{\Scal}[2]{ \left\langle #1 | #2 \right\rangle}
\newcommand{\MV}[1]{\langle #1 \rangle}
\newcommand{\braket}[2]{\langle #1 | #2 \rangle}
\newcommand{\ketbra}[2]{| #2 \rangle\langle #1 |}
\newcommand{\ket}[1]{| #1 \rangle}
\newcommand{\bra}[1]{\langle #1 |}
\renewcommand{\equiv}{\coloneqq}
\newcommand{\KEn}{\ket{E_n}}
\DeclareMathOperator{\rhont}{\rho_t^{(n)}}
\DeclareMathOperator{\rde}{\rho_{\mathrm{DE}}}
\DeclareSymbolFont{lettersA}{U}{txmia}{m}{it}
\DeclareMathSymbol{\m@thbbch@rA}{\mathord}{lettersA}{129}
\DeclareMathSymbol{\m@thbbch@rB}{\mathord}{lettersA}{130}
\DeclareMathSymbol{\m@thbbch@rC}{\mathord}{lettersA}{131}
\DeclareMathSymbol{\m@thbbch@rD}{\mathord}{lettersA}{132}
\DeclareMathSymbol{\m@thbbch@rE}{\mathord}{lettersA}{133}
\DeclareMathSymbol{\m@thbbch@rF}{\mathord}{lettersA}{134}
\DeclareMathSymbol{\m@thbbch@rG}{\mathord}{lettersA}{135}
\DeclareMathSymbol{\m@thbbch@rH}{\mathord}{lettersA}{136}
\DeclareMathSymbol{\m@thbbch@rI}{\mathord}{lettersA}{137}
\DeclareMathSymbol{\m@thbbch@rJ}{\mathord}{lettersA}{138}
\DeclareMathSymbol{\m@thbbch@rK}{\mathord}{lettersA}{139}
\DeclareMathSymbol{\m@thbbch@rL}{\mathord}{lettersA}{140}
\DeclareMathSymbol{\m@thbbch@rM}{\mathord}{lettersA}{141}
\DeclareMathSymbol{\m@thbbch@rN}{\mathord}{lettersA}{142}
\DeclareMathSymbol{\m@thbbch@rO}{\mathord}{lettersA}{143}
\DeclareMathSymbol{\m@thbbch@rP}{\mathord}{lettersA}{144}
\DeclareMathSymbol{\m@thbbch@rQ}{\mathord}{lettersA}{145}
\DeclareMathSymbol{\m@thbbch@rR}{\mathord}{lettersA}{146}
\DeclareMathSymbol{\m@thbbch@rS}{\mathord}{lettersA}{147}
\DeclareMathSymbol{\m@thbbch@rT}{\mathord}{lettersA}{148}
\DeclareMathSymbol{\m@thbbch@rU}{\mathord}{lettersA}{149}
\DeclareMathSymbol{\m@thbbch@rV}{\mathord}{lettersA}{150}
\DeclareMathSymbol{\m@thbbch@rW}{\mathord}{lettersA}{151}
\DeclareMathSymbol{\m@thbbch@rX}{\mathord}{lettersA}{152}
\DeclareMathSymbol{\m@thbbch@rY}{\mathord}{lettersA}{153}
\DeclareMathSymbol{\m@thbbch@rZ}{\mathord}{lettersA}{154}
\NewDocumentCommand{\varmathbb}{m}
 {
  \tl_map_inline:nn { #1 } { \use:c { m@thbbch@r##1 } }
 }
\title{Pure states statistical mechanics:\\[1ex]     
        On its foundations and \\[1ex] applications to quantum gravity}   
\author{Fabio Anz\`a}             
\begin{document}

\baselineskip=18pt plus1pt

\setcounter{secnumdepth}{3}
\setcounter{tocdepth}{3}

\maketitle                  
\begin{dedication}
I dedicate this work to my grandparents, \emph{Nonna Pia} and \emph{Nonno Coc\`o}. The path you laid down and your endless support gives me the strength to outlast my mistakes. For that, I will always be extraordinarily  grateful.
\end{dedication}        
\begin{center}
{\huge {\bf Abstract}}
\end{center}
\vspace{1cm}
$\quad \,\,\, $The project concerns the study of the interplay among quantum mechanics, statistical mechanics and thermodynamics, in isolated quantum systems. The goal of this research is to improve our understanding of the concept of thermal equilibrium in quantum systems. 

First, I investigated the role played by observables and measurements in the emergence of thermal behaviour. This led to a new notion of thermal equilibrium which is specific for a given observable, rather than for the whole state of the system. The equilibrium picture that emerges is a generalization of statistical mechanics in which we are not interested in the state of the system but only in the outcome of the measurement process. I investigated how this picture relates to one of the most promising approaches for the emergence of thermal behaviour in quantum systems: the Eigenstate Thermalization Hypothesis. Then, I applied the results to study the equilibrium properties of peculiar quantum systems, which are known to escape thermalization: the many-body localised systems. Despite the localization phenomenon, which prevents thermalization of subsystems, I was able to show that we can still use the predictions of statistical mechanics to describe the equilibrium of some observables. Moreover, the intuition developed in the process led me to propose an experimentally accessible way to unravel the interacting nature of many-body localised systems. 

Then, I exploited the ``Concentration of Measure'' and the related ``Typicality Arguments'' to study the macroscopic properties of the basis states in a tentative theory of quantum gravity: Loop Quantum Gravity. These techniques were previously used to explain why the thermal behaviour in quantum systems is such an ubiquitous phenomenon at the macroscopic scale. I focused on the local properties, their thermodynamic behaviour and interplay with the semiclassical limit. The ultimate goal of this line of research is to give a quantum description of a black hole which is consistent with the expected semiclassical behaviour. This was motivated by the necessity to understand, from a quantum gravity perspective, how and why an horizon exhibits thermal properties.

\begin{acknowledgements}

\hspace{0.5cm} These three years have been a painfully wonderful journey. I have met amazing people and each one of them has given me something I am thankful for: a few moments of their time. There is absolutely no way I will be able to express, in words, how grateful I am. But I am going to try anyway. I would like to start by thanking the people who have had the patience to work with me. 

\hspace{0.5cm} First, my supervisor, Vlatko Vedral, for letting me forge my own research path, without interfering, while being a constant source of encouragement and scientific inspiration. On that note, I am also profoundly grateful to Davide Girolami for always being there when I needed an advice. I will miss, deeply, our chats about physics, your insights and our discussions about Italian politics. It was really fun. 

\hspace{0.5cm} I would also like to thank the other members of the Oxford group for providing a wonderful research environment: Tristan Farrow, Oscar Dahlsten, Benjamin Yadin, Christian Schilling, Cormac Browne, the Felixes (or Felices, according to Ben\cite{Yadin2017}) Binder and Tennie, Nana Liu, Pieter Bogaert, Anu Unnikrishnan, Tian Zhang and Reevu Maity.  Many thanks also to my collaborators for helping, teaching and showing me your way: Goffredo Chirco, Christian Gogolin, Marcus Huber and Francesca Pietracaprina.  

\hspace{0.5cm} During some rough times, I was lucky enough to have someone who was willing to give me a piece of advice. For that, I am particularly indebted to John Goold and Michele Campisi. I want you to know your advices gave me the strength to trust my choice, even when others casted doubt on it. 

\hspace{0.5cm} Many thanks to all the people and institutions who have hosted me during these three years. In particular, I would like to thank Rosario Fazio and the Condensed Matter Group at ICTP for hosting me in Trieste. A special thank is directed to Jim Crutchfield and the Information Theory group at U.C. Davis: For opening your door, giving me asylum and showing me what an amazing environment you created. I am also grateful to the ``Angelo Della Riccia'' foundation and to St. Catherine's College for providing funding for my research.

\hspace{0.5cm} Among the amazing people I have met in these years, there is one group which always made me feel at home: the Oxford University Volleyball Club. I will never be able to thank you enough for everything we have experienced together: The game and the adrenaline, the incredibly late nights, the fun on the sand and even the boring line-judging duties during the ladies' games. We have accomplished much and these marvellous memories will always have a special place in my heart. So, thank you KBar, Alex, Jonas, Gytis, Andy, David, Sanders, Stefan, Rory, Kuba, Christos, Adam, Sven, Nick, Andrea and Megan. I will always hold your friendship in the highest regard.

\hspace{0.5cm} Finally, I would like to thank my parents, my brother and my sister for their endless support. You have given me the strength to make my own mistakes and shown me the wisdom to accept them. I will always love you and I want you to know I could not have done this without you. Last but not least, I would like to thank Carmen for showing up, out of the blue, and for deciding to put up with me every day. It is not easy, but we will have a lot of fun.

\end{acknowledgements}  

\begin{romanpages}          
\tableofcontents            
\listoffigures              
\end{romanpages}            

\addcontentsline{toc}{chapter}{Preface}
\chapter*{Preface}


Quantum Statistical mechanics can be considered as a set of tools which attempt to connect the microscopic description of small quantum systems with the macroscopic behaviour, mostly governed by thermodynamics. With this mindset, the purpose of this thesis if twofold. On the one hand, I aimed at extending the domain of applicability of the tools of statistical mechanics, focussing on observables, rather than on the whole state of the system. On the other hand, one of the popular approaches to connect microscopic details to macroscopic behaviour will be exploited to study the large-scale limit of the states of a tentative theory of quantum gravity: Loop Quantum Gravity. The goal here is to compare the predicted behaviour with the known phenomenology of General Relativity, looking for analogies or discrepancies. 

The manuscript contains this Preface, an Overview, nine chapters and four appendices. The first page of each chapter outlines its content, while the last section summarises its results. Moreover, each chapter has been written so that it is self-contained. I made this choice by considering that most readers are not going to read the manuscript all at once. Rather, they will focus on the information relevant to them at each time and this choice should facilitate the reading. Here we briefly summarise the content. 

\begin{itemize}
\item In the Overview we provide some context, highlighting the mindset of the author, and a summary of the thesis. 

\item Chapter \ref{Ch1} is merely introductory. We review basic tools of quantum mechanics and of quantum information. This is mainly to setup notation and terminology.

\item In Chapter \ref{Ch2} we introduce the basic approaches to the emergence of thermal equilibrium in isolated quantum systems. These have been collected under the name of ``Pure States Statistical Mechanics''. We also included a roadmap of important reviews on this topic. This is done mainly for a reader who is not familiar with the issues of field.

\item Chapter \ref{Ch3} contains a new approach to explain the emergence of thermal equilibrium in quantum systems. We also make a connection with one of the approaches highlighted in Chapter \ref{Ch2}: the Eigenstate Thermalization Hypothesis. The notion of Thermal Observables is given, mirroring the concept of thermal state. Moreover, the definition of Hamiltonian Unbiased Observables is given and their properties are studied.

\item In Chapter \ref{Ch4} we build our intuition on the ideas presented in Chapter \ref{Ch3} and investigate the conditions which make a physically relevant observable to be Hamiltonian Unbiased. We focus on observables which can be concretely measured in a laboratory.

\item In Chapter \ref{Ch5} the proposed picture of thermalization of observables, is tested on a class of systems which are known to avoid thermalization: the many-body localised systems. Moreover, exploiting the ideas developed in the previous chapters we propose a new diagnostic tool to test whether a system is in the Anderson Localised phase or in the Many-Body Localised phase.

\item The second part of the thesis starts with Chapter \ref{Ch6}, which has two purposes. On the one hand, we address the problem of studying the macroscopic behaviour of the states in Loop Quantum Gravity with ideas and techniques borrowed from Information Theory. On the other hand, we provide a technical introduction to the basis states of Loop Quantum Gravity, the spin networks, and their geometric interpretation.

\item In Chapter \ref{Ch7} one of the techniques to understand quantum thermalization is extended and applied to study the local properties of spin networks in the macroscopic regime. This is done in a simple setup where we have a large closed surface and we are interested in the properties of a small patch.

\item In Chapter \ref{Ch8} we push forward the ideas of the two previous chapters. We investigate the boundary properties of a $3D$ region of quantum space. In particular, we analyse the macroscopic properties of the boundary and their interplay with the expected semiclassical phenomenology. 

\item Eventually, in Chapter \ref{Ch9} we provide a quick summary of the results and discuss the picture that emerges. We also outline some future lines of investigation. 
\end{itemize}

\section*{List of Publications}

This manuscript is the result of the author's research, often developed in collaboration
with others. Chapters \ref{Ch3},\ref{Ch4} and \ref{Ch5} deal with the foundations of statistical mechanics
and thermodynamics. They are based on the following papers, \cite{Anza2017,Anza2018,Anza2018b}   
\begin{itemize}
\item F. Anza, V. Vedral, {\it Information-theoretic equilibrium and observable thermalization}, Scientific Reports 7, 44066 (2017)
\item F. Anza, C. Gogolin, M. Huber, {\it Eigenstate Thermalization for degenerate observables},  Phys. Rev. Lett. (2018)
\item F. Anza, F. Pietracaprina, J. Goold, {\it Local aspects of MBL dynamics}, In preparation
\end{itemize}
Chapters \ref{Ch7} and \ref{Ch8} are the result of the work of the author, in collaboration with Dr. G. Chirco, on the problem
of studying the macroscopic regime of spin networks in Loop Quantum Gravity. They are based on the following papers \cite{Anza2016,Anza2017b}: 
\begin{itemize}
\item F. Anza, G. Chirco, {\it Typicality in spin network states of quantum geometry}, Phys. Rev. D 94, 084047 (2016)
\item F. Anza, G. Chirco, {\it Fate of the Hoop Conjecture in quantum gravity}, Phys. Rev. Lett. 119, 231301 (2017)
\end{itemize}
These two sets of work identify the main lines of research followed. Aside them, the following papers \cite{Anza2015a,Aaltonen2015,Anza2015} where also published during D.Phil., but they
are not part of the main lines of research so they have not been included in the thesis:
\begin{itemize}
\item F. Anza, S. Speziale, {\it A note on the secondary simplicity constraints in loop quantum gravity}, Class. Quantum Grav. 32 195015 (2015) 
\item T. Aaltonen {\it et al}., {\it Search for resonances decaying to Top and Bottom quark with the CDF experiment}, Phys. Rev. Lett. 115, 061801 (2015)
\item F. Anza, S. Di Martino, B. D. Militello, A. Messina, {\it Dynamics of a particle confined in a two-dimensional dilating and deforming domain}, Phys. Scr. 90 074062 (2015)
\end{itemize}
Moreover, the following paper was written to explain to the larger public a basic intuition behind thermalization of quantum systems. It has been published in the ``Romulus Magazine'', a journal 
run by the Wolfson College of Oxford:
\begin{itemize}
\item F. Anza, {\it Typically growing entropy}, Romulus Magazin, Wolfson College, University of Oxford (2015)
\end{itemize}

\addcontentsline{toc}{chapter}{Overview}
\chapter*{Overview}\label{Ch:Overview}

Thermodynamics was originally developed as a phenomenological theory of macroscopic systems: A set of empirical results which, during the years, have been progressively formalised, organised and synthesised into a number of ``laws'' which are completely independent on the physical substrate. Because of that, thermodynamics puts severe constraints on the behaviour of macroscopic systems. Most of the technological advances occurred during the industrial revolution have their roots in this simple, and yet powerful, fact. 

In the second half of the XIX century Maxwell, Boltzmann, Gibbs and others tried to connect these macroscopic laws with the multi-faceted nature of the microscopic world. The result of this attempt was Statistical Mechanics: a theory which provides mechanical roots to macroscopic concepts as temperature and heat. According to Statistical Mechanics, such macroscopic behaviour is ascribed to the emergence of a single condition: \emph{thermal equilibrium}. Indeed, the central postulate of the theory is that the state of the system has a specific functional form which we call \emph{thermal state}. The tools of statistical mechanics are then used to study macroscopic properties of complex systems, from this single assumption. Later, in the first half of the XX century, the rise of quantum mechanics forced us to adapt these tools to include quantum effects. The result was Quantum Statistical Mechanics, a notable improvement of the classical theory which expanded its domain of applicability. 
Both classical and quantum statistical mechanics enjoy a marvelous success in predicting and explaining the large-scale behaviour of several systems. 

Despite an undeniable success, it is well known that there are systems which do not thermalize, thus escaping such description. Two examples are \emph{integrable} and \emph{localised} systems. Moreover, the theory is not free from conceptual issues. In particular, a long-standing problem is that the evolution of isolated quantum systems is not compatible with the dynamical emergence of thermal states. We find ourselves in the situation where we have a theory which, despite being based on a seemingly wrong postulate, still works very well in a large number of cases. A possible way out of this problem is to question the practical utility of the concept of isolated systems. This is a legitimate route, based on the successful theory of Open Quantum Systems. However, it is opinion of the author that, from the conceptual point of view, this is not sufficiently satisfactory. Even if our system is interacting with a large environment, in principle we could still include the environment in our description and obtain an isolated system. Here we insist on dealing with isolated quantum systems. This path is undoubtedly more painful as it tackles a broader issue: understanding the interplay between coherent dynamics of complex quantum systems and their equilibrium behaviour. We choose this perspective as, beyond the purpose of the thesis, the outcomes of these investigations are expected to have technological consequences, for example for the task of building a quantum computer.

Moreover, in the last twenty years we witnessed a remarkable progress in the ability to manipulate quantum systems. Thanks to these technological developments, we witnessed a surge of interest for these questions, which now can be experimentally addressed. Modern experiments are able to probe the coherent dynamics of nearly-isolated systems, providing important data about equilibration and thermalization in quantum systems. More recently, the will to understand quantum equilibration and thermalization has met the need to investigate how the rules of thermodynamics are modified at the nanoscale, where quantum effects should not be neglected and we are far away from the macroscopic regime. These joints efforts gave birth to the field of ``Quantum Thermodynamics'', which is addressing the interplay between quantum theory and thermodynamics, both for foundational purposes and with the aim of pushing forward our technology. For all these reasons, to understand the conditions which lead to a thermal behaviour, and how to avoid them, is a topic of both conceptual and technological relevance.\\

With this perspective in mind, in the first part of the thesis we argue that the role played by observables in the emergence of thermal equilibrium is often overlooked. Almost all known approaches to the foundations of statistical mechanics focus on the state of a system. While a statement about the form of the state will always reflect on the observables, the converse is not generally true. Thus, observing thermal equilibrium properties for a few observables does not mean that the whole state of the system is thermal. In practical experiments, and sometimes even in our numerical simulations, our conclusions are based on the behaviour of a few observables. In order to trace back such behaviour to a specific form of the whole state of the system we need to have access to all the matrix elements of the density matrix. While this can be done for relatively small systems, by changing the observable to measure, it is concretely impossible for systems of modest sizes, which still fall in the category of microscopic systems. For example, a pure state of $L$ spin-$1/2$ has $2 \times 2^L -1$ linearly independent real parameters. This number grows exponentially with the size of the system, while the number of observables we can measure in a laboratory is usually very limited and certainly it does not grow exponentially with the size of the system. We must accept that our experimental observations provide only very little information about the state of the system. Thus, the information we can gather, when we have access to one observable, is given by the probability distribution of the eigenvalues of the observable. For this reason, the emergence of thermal equilibrium for an observable $A$ should be ascribed to the behaviour of its eigenvalues probability distribution. This picture presents no contradiction with the unitary dynamics of isolated systems. A more detailed analysis of this issue will be provided in Chapter \ref{Ch3}, where we give a notion of thermal equilibrium which takes into account the fact that we can only measure some observables and not the whole state of the system.\\

Thus, in the first part of this thesis we give such notion of thermal equilibrium and study the properties that it heralds. We are interested in its physical relevance to address the emergence of thermal behaviour in isolated quantum systems. The results are then put into a more general perspective by studying the relation with ``Pure-states quantum statistical mechanics'': a theory which aims at putting solid foundations to statistical mechanics and thermodynamics. The theory is based on three conceptual pillars: the ``Dynamical Equilibration'' approach, the ``Eigenstate Thermalization Hypothesis'' and the ``Typicality Approach''. The first one aims at understanding the mechanisms behind equilibration of quantum systems, irrespectively of the possible thermal nature of the equilibrium. The second one is a hypothesis, based on intuition from quantum chaos, which could concretely explain how thermal equilibrium emerges. The third one is a set of statistical tools which can be used to argue why thermalization seems to be a macroscopically ubiquitous phenomenon. The research presented in this thesis is mainly focused on the latter two approaches and an introduction to the tools of pure-states statistical mechanics is given in Chapter \ref{Ch2}.

The Eigenstate Thermalization Hypothesis is often introduced as a property of the energy eigenstates. Loosely speaking, it ascribes the emergence of thermal equilibrium to the fact that thermal properties emerge already at the level of individual energy eigenstates. Despite being numerically well-corroborated, at this stage the approach is still not really useful for making predictions. The reason is that the Eigenstate Thermalization Hypothesis is a technical condition which depends on the exact form of the energy eigenstates. Without direct access to the energy eigenstates there is no concrete way to assess wether a given observable will satisfy the Eigenstate Thermalization Hypothesis or not. As previously suggested, a problem with this picture is that the role of the observable under scrutiny is often overlooked. We believe that a better way to look at the ETH is to regard it as a relational property of the observable and the energy basis. With this mindset, in the thesis we aimed for a characterisation of the observables which should satisfy the Eigenstate Thermalization Hypothesis. The notion of Hamiltonian Unbiased Observables, presented in the Chapter \ref{Ch3} and studied in Chapters \ref{Ch4} and \ref{Ch5}, should be understood as a first approximation to the solution of this problem. Further work in this direction, to refine such characterization, is currently ongoing. We will comment on this in the conclusive chapter.\\

The typicality approach gives rise to the ``local thermalization paradigm'', summarised in Chapter \ref{Ch2}. According to this picture, even if the whole system is isolated and the state is pure, the reduced state of a small subsystem can still be very close to a thermal state, due its entanglement with the rest of the system. Performing a statistical analysis on the Hilbert space and using the ``concentration of measure phenomenon'' one can argue that this behaviour is expected to be true in the overwhelming majority of cases. The basic intuition behind the statement is the same as the central limit theorem: the fluctuations of smooth functions around their average are suppressed in high-dimensional spaces. This behaviour can be formalised by means of a theorem called ``Levy's Lemma'', which we give in Chapter \ref{Ch2}.

The most striking consequence is that, as long as we are interested in the local properties of a high-dimensional Hilbert space, the maximum entropy assumption for the overall state of the system is a good assumption, in almost all cases. This justifies the use of maximum entropy principle for the study of local properties of macroscopic system. In the standard setup of statistical mechanics, given by weakly interacting quantum systems, this gives local thermalization. However, Levy's Lemma allows for a more general characterization, beyond thermal equilibrium. It provides a tool to study the local properties of macroscopic systems subject to an arbitrary constraint, without necessarily referring to an Hamiltonian. Because of that, it is an approach which is particularly suitable to study systems whose dynamics is not described by a standard Hamiltonian operator. Classical gravity falls in this category, as the solutions of the dynamical problem are given by the solution of a constraints which is called ``Hamiltonian constraints''. This is caused by the fact that general relativity is a theory which is invariant under diffeomorphism. This feature is inherited by Loop Quantum Gravity. This is a tentative theory of quantum gravity which builds on the early attempts by Dirac to quantize General Relativity. Its sates belong to the so-called spin network Hilbert space. It describes the degrees of freedom a quantised geometry and its states have a clean geometrical interpretation as a collection of adjacent polyhedra. 

Building our intuition on that, we exploited the concentration of measure argument to perform a statistical analysis on the spin network Hilbert space. We focus on the local properties, in the macroscopic regime. These are important, especially at the classical level, as they are the ones we are able to experimentally probe. For this reason the ``typical'' results, in the semiclassical regime, are put in comparison with the behaviour expected from general relativity, looking for analogies and discrepancies. The underlying idea is that the ``typical'' properties of an Hilbert space should be preserved in the macroscopic regime. If not exactly, they should provide a decent approximation. In such regime it is then easier to study the emergence of the semiclassical phenomenology. In Chapter \ref{Ch7} we first extended the technique to address gauge-invariant Hilbert spaces and we then applied it to study the behaviour of a small patch of surface belonging to a larger surface. In Chapter \ref{Ch8} we apply the argument to a slightly different setup, where we consider the boundary of a large $3D$ region of quantum space. The consequences of the typicality argument have a striking similarity with the condition for the emergence of a black hole in the classical theory. When there is too much entanglement between the interior and the boundary, the latter exhibits properties similar to the ones of an horizon. Moreover, the threshold for the emergence of typicality, in the classical regime, resembles the threshold for the creation of a black hole.\\

Eventually, Chapter \ref{Ch9} closes the manuscript. There, we give a quick summary and all the results are put into perspective. Moreover, we draw some conclusions and give an outline for future works.


\chapter{Introduction}\label{Ch1}

The purpose of this chapter is to introduce some background material. We focus on the basic tools of quantum mechanics and quantum information, mainly to establish the notation. Our main references for quantum mechanics are: The book ``Quantum Mechanics'', by C. Cohen-Tannoudji, B. Diu and F. Laloe \cite{Cohen-Tannoudji1977} and the book ``Modern Quantum Mechanics'' by J.J. Sakurai\cite{Sakurai1994}. Moreover, our main references for quantum information theory are the book ``Geometry of Quantum States'' by I. Bengtsson and K. Zyczkowski \cite{Bengtsson2008} and the classic ``Quantum Computation and Quantum Information'' by Nielsen and Chuang\cite{Nielsen2010}. 

\section{States}\label{sec:states}

The space of the states of an isolated quantum system has the structure of an Hilbert space $\mathcal{H}$. Throughout the thesis we will deal with Hilbert spaces which have finite dimension $D$, unless otherwise stated. A state of the system is specified by giving an element $\ket{\psi}$ of $\mathcal{H}$ which describes the information that we can have about the state of the quantum system. To every state $\ket{\psi}\in \mathcal{H}$ we associate a unique dual vector $\bra{\psi}$ such that $\left( \ket{\psi} , \ket{\phi}\right) =  \braket{\psi}{\phi}$ defines the scalar product between $\ket{\psi}$ and $\ket{\phi}$. This can always be done thanks to the fact that the space of the states is an Hilbert space. The scalar product can be used to define a norm, which in turn can be used to define a metric over the Hilbert space of the states. The norm function $|| \cdot ||$ is defined as the square root of the scalar product of a vector with itself $||\ket{\psi}||=\sqrt{\braket{\psi}{\psi}}$ and we will consider only vectors which have unit norm: $||\ket{\psi}||=1$. For two arbitrary states $\ket{\psi}$ and $\ket{\phi}$ the quantity $|\braket{\psi}{\phi}|^2$ is called amplitude and it has the follwing probabilistic interpretation (``Born rule''): If we prepare the system in the state $\ket{\phi}$, the probability of seeing $\ket{\psi}$ when we measure the system is the amplitude $|\braket{\psi}{\phi}|^2$. The statement is clearly symmetric with respect to the swap of the two states. We say that two states are orthogonal when $\braket{\psi}{\phi}=0$. This means that if we prepare the system in one of the two states, there is zero probability to find the system in the other one, which means that we have the ability to distinguish them. 

An arbitrary basis $\mathcal{B} \coloneqq \left\{\ket{n}\right\}_{n=1}^D$ for $\mathcal{H}$, where $D$ is the dimension of the Hilbert space, is given by a set of $D$ completely distinguishable states $\braket{n}{m}=\delta_{nm}$, where $\delta_{ab}$ is the Kronecker delta between the two integers $n$ and $m$. Given an arbitrary basis $\mathcal{B}$, we can always decompose a state into its component in $\mathcal{B}$: $\ket{\psi} = \sum_{n=1}^D d_n \ket{n}$, where the coefficients are complex $d_n\in \mathbb{C}$ and sum up to one because of the normalisation of the state: $\braket{\psi}{\psi} = \sum_{n=1}^D$. An analogue expansion exists for the dual vector: $\bra{\psi} = \sum_{n=1}^D d_n^{*} \bra{n}$, where $d_n^{*}$ is the complex conjugate of $d_n$. 

In order to account for classical probabilities a state can be represented by means of a density operator $\rho$. This is a linear operators on $\mathcal{H}$ which satisfies the following three conditions. First, normalization: $\mathrm{Tr} \rho =1$. This is due to the probabilistic interpretation mentioned before. Second, $\rho$ must be self-adjoint: $\rho=\rho^\dagger$, where $\rho^\dagger$ is the adjoint of $\rho$. Third, non-negativity of its eigenvalues $\rho \geq 0$. We say that a state is pure when $\rho_{\mathrm{pure}} = \ket{\psi}\bra{\psi}$ is given by the outer product of $\ket{\psi}$ and $\bra{\psi}$. Otherwise, a state is mixed (or a mixture) when it is a convex sum of several pure states $\ket{\psi_i}$: $\rho_{\mathrm{mix}} = \sum_{i} p_i \ket{\psi_i}  \bra{\psi_i}$.

If we have a composite system made by $N$ subsystems, each described by an Hilbert space $\mathcal{H}_i$, with $i=1,\ldots, N$, the total Hilbert space will be the tensor product of $\mathcal{H} = \otimes_{i=1}^N \mathcal{H}_i$. On this space we can define the ``partial trace operation'' which maps a global state $\ket{\psi} \in \mathcal{H}$ into its marginal state on a smaller portion of the whole system. For example, let's assume  to have $\mathcal{H}= \mathcal{H}_A \otimes \mathcal{H}_{B}$ with $D=D_A D_B$ and $D_{A}$ and $D_B$ are the respective dimensions of the subspaces. Then for any $\rho \in \mathcal{H}$ we have:
\begin{align}
&\rho_A \coloneqq \Tr_{B} \rho = \sum_{i_B=1}^{D_B} \bra{i_B} \rho \ket{i_B} &&\rho_B \coloneqq \Tr_{A} \rho = \sum_{j_A=1}^{D_A} \bra{j_A} \rho \ket{j_A}
\end{align}
where $\mathcal{B}_A \coloneqq \left\{\ket{j_A}\right\}_{j_A=1}^{D_A}$ and $\mathcal{B}_B \coloneqq \left\{\ket{i_B}\right\}_{i_B=1}^{D_B}$ are two arbitrary basis in $\mathcal{H}_A$ and $\mathcal{H}_B$. The result does not depend on the specific basis that we choose to perform the computation. 

\section{Observables}\label{sec:observables}

An observable $A$ in quantum mechanics is represented by a self-adjoint operator acting on the Hilbert space of the states $\mathcal{H}$. When we measure the quantity $A$ on a system the outcome of the measurement process can only be one of the eigenvalues of such operators. From the Born rule, the expected value of an observable $A$ in the state $\rho$ is $\MV{A}_\rho \coloneqq \Tr \rho A$. Using the spectral decomposition of $A$ we can write 
\begin{align}
&A = \sum_{j=1}^{n_A} a_j A_j = \sum_{j=1}^{n_A} \sum_{s_j=1}^{d_j} a_j \ket{a_j ,s_j}\bra{a_j,s_j} \,\, .
\end{align}
Here $n_A$ is the number of distinct eigenvalues of $A$; $A_j$ is an operator which projects any state onto the subspace $\mathcal{H}_j \subset \mathcal{H}$ of the whole Hilbert space where $A$ has a definite eigenvalue $a_j$; the index $s_j$ is present whenever $n_A \le D$ and it accounts for possible degeneracies in the eigenvalues of $A$. Whenever they are present the subspace $\mathcal{H}_j$ has dimension larger than $1$ and the vectors $\ket{a_j,s_j}$ provide a basis for $\mathcal{H}_j$. Whenever $n_A = D$ the eigenvectors of $A$ provide a basis for the whole Hilbert space which we will call the ``eigenbasis of $A$''. The operators $A_j$ belong to a specific class of operators called ``projectors''.  An operator $\Pi$ is a projector when $\Pi^2 = \Pi$ so that it has eigenvalues which are either zero or one. If $n_A=D$ the eigenvectors of $A$ are unique and we can use the eigenvalues to label them. In this case $A_j$ has only one non-zero eigenvalue and we can write $A_j=\ket{a_j}\bra{a_j}$. When the $a_j$ has degeneracy $d_j$ we have $A_j = \sum_{s_j=1}^{d_j} \ket{a_j , s_j} \bra{a_j , s_j}$. For arbitrary degeneracies we have orthogonality $A_i A_j = \delta_{ij} A_j$ and completeness $\sum_{i=1}^{n_A} A_i = \mathbb{I}$, where $\mathbb{I}$ is the identity operator. \\

\section{Measurements}\label{sec:measurements}

What happens to the state $\rho$ of the system after we perform a measurement? If the observable is $A$, the measurement postulate of quantum mechanics dictates that the eigenvalue $a_j$ will occur with probability $p_\rho(a_j) \coloneqq \Tr(\rho A_j)$ and the state after the measurement will be 
\begin{align}
&\rho_j \coloneqq \frac{A_j \rho A_j}{p_\rho(a_j)} \,\, .
\end{align}
If we know that the measurement has been performed but we do not know the actual outcome of the measurement processs the state of the system will be described by the following mixture: 
\begin{align}
&\tilde{\rho} \coloneqq \sum_{i=1}^{n_A} p_\rho(a_j) \rho_j = \sum_{i=1}^{n_A} A_j \rho A_j \,\, .
\end{align}

\section{Quantum bits}\label{sec:qubits}

The first non-trivial Hilbert space is the one which has dimension $2$. This describes the space of the states of a two-states system which we now call ``Qubit''. It can be used, for example, to describe the spin-$1/2$ degrees of freedom of an electron. Given that the total angular momentum is fixed to $1/2$, a basis for the Hilbert space is given by the eigenstates of the angular momentum along the $z$ direction: $\left\{ \ket{\uparrow_z} ,\ket{\downarrow_z} \right\}$. We will often refer to this basis as the ``computational basis'' and use the following notation $\left\{ \ket{\uparrow_z} ,\ket{\downarrow_z} \right\} = \left\{ \ket{0}, \ket{1}\right\}$. The set of Pauli operators 
\begin{align}
& \sigma_x = \left( \begin{array}{ll}
 0 & 1\\
 1 & 0
  \end{array} \right) , &&& \sigma_x = \left( \begin{array}{ll}
 0 & -i\\
 i & 0
  \end{array} \right), &&&& \sigma_z = \left( \begin{array}{ll}
 1 & 0\\
 0 & -1
  \end{array} \right) 
\end{align}
together with the identity operator, provides a basis for the space of the quantum observables acting on the qubit Hilbert space. This means that we can use them to decompose any density matrix as
\begin{align}
&\rho = \frac{\mathbb{I}}{2} + \frac{1}{2}\sum_{\gamma=x,y,z} b^\gamma \sigma_\gamma \,\,\,
\end{align}
The three numbers $(b^x,b^y,b^z)$ are real and they belong to the $3D$ euclidean space. In this way we can visualize the state of a qubit as a vector $\vec{b}$ from the origin to a point inside the $2$-sphere. This is embedded in the three-dimensional euclidean space and has norm $\sum_{\gamma=x,y,z} (b^\gamma)^2 = \vec{b} \cdot \vec{b} \leq 1$. The vector $\vec{b}$ is called Bloch vector and the sphere is the Bloch sphere. The set of pure states can be identified with the surface of the sphere and every point inside represents a mixed state. The center of the sphere is the maximally mixed state $\frac{\mathbb{I}}{2}$. Moreover, if we endow the space $\mathcal{B}(\mathcal{H})$ of bounded operators on $\mathcal{H}$ with the Hilbert-Schmidt scalar product $\left\langle A, B\right\rangle \coloneqq \Tr (AB^\dagger)$ we have, for two arbitrary states $\rho$ and $\sigma$
\begin{align}
&\Tr(\rho \sigma) = \frac{1}{2} + \frac{1}{2} \, \vec{b}_\rho \cdot \vec{b}_\sigma \,\, .
\end{align}
Here $\vec{b}_\rho$ and $\vec{b}_\sigma$ are the Bloch vectors of $\rho$ and $\sigma$, respectively. This means that the scalar product between two states is mapped into the euclidean scalar product between the respective Bloch vectors, plus a $1/2$ constant contribution. \\

The fact that observables are represented by matrices implies that action of the composition of two of them, in principle, depends on the order. For example, $\sigma_x \sigma_y \ket{\psi} \neq \sigma_y \sigma_x \ket{\psi}$. If we define the commutator as $[X,Y] \coloneqq XY - YX$, this means that observables in general will not commute. For example it is known that $[\sigma_i , \sigma_j] = i \epsilon_{ijk} \sigma_k$ with $i,j,k=x,y,z$ and $\epsilon_{ijk}$ being the $3D$ totally antisymmetric Levi-Civita tensor. The non-commutativity between observables is one of the most striking feature of quantum mechanics and it gives rise to a highly  non-classical behaviour of quantum mechanics. For example, if we are in an eigenstate of $\sigma_x$ with eigenvalue $+1$, we have complete certainty that if we measure the spin along the $x$ axis we will find $+1$. However, due to the non-commutative character of the observables, this also implies that we are completely ignorant about the outcome of the measurement along $y$ and $z$. Indeed, if we assume that our state is $\ket{\uparrow_x}$ we have $p_{\uparrow_x}(\uparrow_z) = p_{\uparrow_x}(\downarrow_z) = p_{\uparrow_x}(\uparrow_y) = p_{\uparrow_x}(\downarrow_y) = \frac{1}{2}$, where we used the notation $p_\rho(a_j)$ for the probability of observing $a_j$ when the state is $\rho$.

\section{Time-evolution of isolated quantum systems}\label{sec:time}

Isolated quantum systems evolve in time via the dynamics generated by the Schroedinger equation:
\begin{align}
&i \hbar \frac{d}{dt} \ket{\psi} = \Ham \ket{\psi} \,\, .
\end{align}
Here $\Ham$ is the Hamiltonian of the system and it is called the ``generator'' of the dynamics, borrowing the terminology from Group Theory. Indeed, the most general solution of the Schroedinger equation is given through the action of a time-dependent unitary operator $U(t)$ on the initial state $\ket{\psi(0)}$:
\begin{align}
&\ket{\psi(t)} = U(t) \ket{\psi(0)} && U(t) \coloneqq e^{- \frac{i}{\hbar} \Ham t}
\end{align}
An operator $U$ is unitary when $UU^\dagger = U^\dagger U = \mathbb{I}$. Since $U(t)$ maps the initial state $\ket{\psi(0)}$ into the state at a generic time $\ket{\psi(t)}$ we call it ``propagator''. For the density matrix, this means that the evolution is generated by the commutator with the Hamiltonian
\begin{align}
&i \hbar \frac{d}{dt} \rho = [\Ham , \rho] && \rho(t) = U(t) \rho(0) U^\dagger(t) \,\, .
\end{align}
Generic Hamiltonians may have degenerate eigenvalues. However, for reasons which will become more clear in the next chapter, we focus on the case where $\Ham$ has no degeneracies. This means that the spectral decomposition of $\Ham$ reads
\begin{align}
&\Ham = \sum_{n=1}^D E_n \,\, \mathbb{E}_{nn}  && \mathbb{E}_{nm} \coloneqq \ket{E_n}\bra{E_m}  \,\, .
\end{align} 
Because of the unitary form of the propagator there are always $D$ conserved quantities. As we will see, their physical meaning is the probability distribution of the energy eigenvalues, which does not change in time. If we expand the density matrix at time $t$ in the eigenbasis of $\Ham$ we have
\begin{align}
&\rho(t) =  \sum_{n,m} \rho_{mn} e^{-\frac{i}{\hbar}(E_n-E_m)t} \mathbb{E}_{nm} = \sum_{n} \rho_{nn} \mathbb{E}_{nn} + \sum_{n \neq m} \rho_{mn} e^{-i \omega_{nm}t} \,\, \mathbb{E}_{nm}\, \, ,
\end{align}
with $\omega_{nm} \coloneqq \frac{E_n-E_m}{\hbar}$. The first term, which is constant in time, is called ``Diagonal Ensemble'' $\rho_{\mathrm{DE}}$. It is a mixed state which contains only the information about the state of the system which is preserved through the the unitary evolution. This is given by the diagonal part of $\rho(t)$, in the Hamiltonian eigenbasis, which defines the energy eigenvalues probability distribution $p_t(E_n) \coloneqq p_{\rho(t)}(E_n)$: 
\begin{align}
&p_t(E_n) = \Tr \rho(t) \mathbb{E}_{nn} = \bra{E_n} \rho(t) \ket{E_n} = \bra{E_n} \rho(0) \ket{E_n} = p_0(E_n) = \rho_{nn}
\end{align}
Throughout the thesis we deal with isolated systems with unitary dynamics. For this reason, we can drop the time index in the energy probability distribution $p_t(E_n)=p_0(E_n)=p(E_n)$. Thanks to the no-degeneracy assumption on $\Ham$ we can also see that $\rde$ is equal to the infinite-time-averaged state:
\begin{align}
&\lim_{T\to \infty} \frac{1}{T} \int_{0}^T  \rho(s) ds = \sum_{n=1}^D \rho_{nn} \,\,\mathbb{E}_{nn} = \rde \,\, ,
\end{align}
due to the fact that
\begin{align}
&\lim_{T\to \infty} \frac{1}{T} \int_{0}^T e^{-i \omega_{nm}s} ds = \delta_{nm}
\end{align}
if there are no degeneracies. This is an important property of the dynamical evolution of isolated quantum systems as, among other things, it implies two things. First, the dynamic is periodic. This means that, strictly speaking, the whole state never converges to a stationary state. Second, due to the existence of the set $\left\{p(E_n)\right\}_{n=1}^D$  of $D$ conserved quantities the state will never forget completely about its initial conditions. This entails that, strictly speaking, there can be no thermalization of an isolated quantum system. We see immediately that the phenomenon of ``quantum thermalization'' is structurally different from its classical counterpart. We will expand on this in the next chapter. Now we focus on giving a few tools from quantum information theory which will be used throughout the thesis.

\section{Entropies}\label{sec:entopies}

Suppose a random variable $X$ can assume a certain number $i=1,\ldots,N$ of values with probability distributions $P=\left\{p_i\right\}$. A fundamental quantity in information theory is given by the Shannon entropy $H[P] \coloneqq - \sum_{i=1}^N p_i \log p_i$. This provides a measure of the spreading of $P$ over different values $i=1,\ldots,N$. When $p_i = \delta_{i,i_0}$ its value is zero, which means that we have complete knowledge of $X$. When $p_i=\frac{1}{N}$ the entropy attains its maximum value $H_\mathrm{max}=\log N$ and thus we have complete ignorance about $X$. A generalization of Shannon entropy is given by the ``Renyi entropies'' or ``$\alpha-$entropies'':
\begin{align}
&H_\alpha[P] \coloneqq \frac{1}{1-\alpha} \log \left( \sum_{i=1}^N p_i^\alpha \right) && \alpha \geq 0 \,\, , \,\, \alpha \neq 1
\end{align}
In the limit $\alpha \to 1$ we recover the Shannon entropy: $\lim_{\alpha \to 1} H_\alpha [P] = H[P]$. All these quantities can be generalised to the quantum realm.\\

Given an observable $A$ with $n_A$ distinct eigenvalues, a state $\rho$ defines the probability distribution $p_\rho(a_j)$ over the set of eigenvalues $a_j$ via $p_\rho (a_j) \coloneqq \Tr \rho A_j$. A measure of the spreading of this probability distribution is given by the Shannon entropy 
\begin{align}
&H_A[\rho]\coloneqq - \sum_{j=1}^{n_A} p_\rho(a_j) \log p_\rho(a_j)
\end{align}
We will often refer to this quantity as to ``Shannon Entropy of the Observable $A$'' or ``Observable Entropy''. Among all the possible basis of the Hilbert space, for each state there is one such that the density matrix has diagonal form: $\rho = \sum_{l=1}^D \lambda_l \ket{\lambda_l} \bra{\lambda_l}$.  The eigenvalues $\lambda_l$ have probabilistic interpretation as they give the probability to find the system in $\ket{\lambda_l}$ when the state is $\rho$. Indeed, because of the properties of a density matrix we have $\lambda_l \in [0,1]$ and  $\sum_l \lambda_l =1$. For an arbitrary state $\rho$ the von Neumann entropy $S(\rho)$ is defined as the Shannon entropy the eigenvalues of $\rho$: 
\begin{align}
&S(\rho) \coloneqq  - \sum_{l=1}^{D} \lambda_l \log \lambda_l = - \Tr \rho \log \rho 
\end{align}
A pure state is such that there is only one non-zero eigenvalue as $\rho_{\mathrm{pure}} = \ket{\psi}\bra{\psi}$ for some $\ket{\psi}$. For this reason the von Neumann entropy of a pure state is always zero. Both the Observable and the von Neumann entropy exploits the Shannon entropy functional. They can be generalised by using the Renyi-entropy functional form to obtain the quantum version of the Renyi-entropies:
\begin{align}
&S_\alpha(\rho) \coloneqq  \frac{1}{1-\alpha} \log \sum_{l=1}^{D} (\lambda_l)^\alpha = \frac{1}{1-\alpha} \log \Tr \rho^\alpha
\end{align}

\section{Quantum and classical distinguishability}\label{sec:distinguishability}

In classical information theory, if we have two probability distribution $P=\left\{p_i \right\}_{i=1}^N$ and $Q=\left\{ q_j\right\}_{j=1}^N$ there are quantities which are able to quantify how different are such distributions. Important examples are the ``Total Variation'' $T(P,Q)$, the ``Kullback-Leibler divergence'' (also called Relative Entropy) $H(P|\!|Q)$ and the ``Bhattacharyya coefficient'' $B(P,Q)$. Their definitions are:
\begin{align}
&T(P,Q) \coloneqq \sup_{i} |p_i - q_i| = \frac{1}{2} \sum_{i}|p_i - q_i| = ||P-Q||_1  \in [0,1]\\
& H(P|\!|Q) \coloneqq \sum_i p_i \log \frac{p_i}{q_i} \in [0, + \infty[\\
& B(P,Q) \coloneqq \sum_i \sqrt{p_i q_i} \in [0,1]
\end{align}
All these quantities are non-negative and attain the following extremal values if and only if $P=Q$: $T(P,P)=H(P|\!|P)=0$ and $B(P,P)=1$. The total variation is also a metric and therefore provides a notion of distance between probability distributions. They can all be extended to the quantum realm to address the distinguishability between two quantum states $\rho$ and $\sigma$. Their respective definitions are ``Trace Distance'' $D(\rho,\sigma)$, ``Quantum Relative Entropy'' $H(\rho|\!|\sigma)$ and ``Fidelity'' $F(\rho,\sigma)$:
\begin{align}
&D(\rho,\sigma) \coloneqq \frac{1}{2} || \rho-\sigma||_{\mathrm{Tr}} = \sup_{\Pi} |\Tr(\rho \Pi) - \Tr(\sigma \Pi)| \in [0,1]\\ 
&H(\rho|\!|\sigma) = \Tr \rho \left( \log \rho - \log \sigma \right) \in [0,+\infty]\\
&F(\rho,\sigma) = ||\sqrt{\rho}\sqrt{\sigma}||_{\mathrm{Tr}} = \Tr \sqrt{\sqrt{\rho} \sigma \sqrt{\rho}}
\end{align}
where $||A||_{\mathrm{Tr}} \coloneqq \Tr \sqrt{AA^\dagger}$ is called the ``Trace Norm'' and it is the norm function defined by Hilbert-Schmidt scalar product on the space of the bounded linear operators acting on the Hilbert space $\mathcal{H}$. 

\section{Quantum Entanglement}\label{Sec:Entanglement}

The concept of entanglement is one of the most striking feature of quantum mechanics\cite{Nielsen2010,Bengtsson2008}. It is a form of quantum correlations which pertains the composite nature of quantum systems and the tensor product structure of the Hilbert space. Here we will only need the concept of bipartite entanglement. Assuming that we have a Hilbert space $\mathcal{H}$ that we can partition in the following way $\mathcal{H} = \mathcal{H}_A \otimes \mathcal{H}_B$, with $D_A$ and $D_B$ the respective dimensions, a pure state $\ket{\psi} \in \mathcal{H}$ is said to be separable if it can be written as
\begin{align}
&\ket{\psi} = \sum_{i=1}^{D_A} a_i \ket{i_A}  \otimes \sum_{k=1}^{D_B} b_k \ket{k_B}  \,\, . \label{eq:sep}
\end{align}
If the state is not separable as Eq.(\ref{eq:sep}) we say that $\ket{\psi}$ has bipartite entanglement between $A$ and $B$. In the case of pure states $\ket{\psi}$, the amount of bipartite entanglement between $A$ and $B$ is quantified by the entanglement entropy $E_{AB}(\ket{\psi})$. This is the von Neumann entropy of the reduced state of $A$ or $B$: 
\begin{align}
&E_{AB}(\ket{\psi}) = S_{\mathrm{vN}}(\rho_A(\psi)) = S_{\mathrm{vN}}(\rho_B(\psi))
\end{align}
where $\rho_A(\psi)$ and $\rho_B(\psi)$ are the reduced density matrices obtained via partial trace operation on $\ket{\psi}$. Entanglement allows the entropy of the reduced state to be higher than the entropy of the whole system. Thus, for pure states, we measure of the amount of entanglement between the two subsystems through the entropy of the marginals.

\section{Quantum thermal equilibrium}\label{Sec:EqTh}

Here we briefly review how to describe thermal equilibrium for a quantum system. First, we say that a system is at equilibrium, or in a stationary state, when its state does not change in time. Due to the specific form of the unitary evolution, a stationary state must be diagonal in the Hamiltonian eigenbasis. Moreover, since the energy probability distribution does not change in time, we can always associate a stationary state, the diagonal ensemble $\rde$, to any given initial state $\rho$. This is obtained by applying the ``dephasing map'' \cite{Bengtsson2008,Nielsen2010} to the initial state:
\begin{align}
&\ket{\psi} = \sum_{n}c_n \ket{E_n}  \quad \longrightarrow \quad \sum_{n=1}^D \mathbb{E}_{nn} \, \ket{\psi}\bra{\psi} \, \mathbb{E}_{nn} = \sum_{n=1}^D |c_n|^2 \,\, \mathbb{E}_{nn} = \rde
\end{align}
Second, a quantum system is at \emph{thermal} equilibrium if its state belongs to the class of the so-called ``Gibbs ensembles''. These are stationary states which result from \emph{Jaynes principle} of constrained maximisation of von Neumann entropy\cite{Jaynes1957,Jaynes1957a}. The three most notable examples are the ``microcanonical'' $\rho_{mc}(E,\delta E)$, the ``canonical'' $\rho_\beta$ and the ``grand canonical'' $\rho_{\beta,\mu}$ ensembles. According to quantum statistical mechanics, the microcanonical ensemble is apt to describe the equilibrium behaviour of an isolated quantum system whose energy lies in a small window $I_0 \coloneqq [E_0 - \frac{\delta E}{2}, E_0 + \frac{\delta E}{2}]$:
\begin{align}
&\rho_{\mathrm{mc}}(E_0 , \delta E) = \frac{P_{I_0}}{\Tr P_{I_0}} = \frac{1}{\mathcal{N}(E_0,\delta E)} \sum_{E_n \in I_0} \ket{E_n} \bra{E_n} \,\, ,
\end{align}
where $P_{I_0}$ is the projector onto the subspace defined by the window $I_0$ and $\mathcal{N}(E_0,\delta E) = \Tr P_{I_0}$ is the number of energy eigenstates with eigenvalues in $I_0$. 

The canonical ensemble is characterised by the maximum entropy state at fixed average value of the energy. It is meant to describe the behaviour of a quantum system which exchanges energy, but not matter, with a large environment at temperature $T = \frac{1}{k_B \beta}$, where $k_B$ is the Boltzmann constant:
\begin{align}
&\rho_{\beta} = \frac{e^{- \beta \Ham}}{\mathcal{Z}_\beta} && \mathcal{Z}_{\beta} \coloneqq \Tr e^{- \beta \Ham} = \sum_{n=1}^D e^{- \beta E_n} \,\, .
\end{align}
Eventually, the grand canonical ensemble is characterised by the maximum entropy state at fixed average value of the energy and of the number of particles. It can be used to describe the behaviour of a quantum system which can exchange energy and matter with an environment:
\begin{align}
&\rho_{\beta,\mu} = \frac{e^{- \beta ( \Ham - \mu \hat{N}) }}{\mathcal{Z}_{\beta,\mu}} && \mathcal{Z}_{\beta,\mu} \coloneqq \Tr e^{- \beta (\Ham - \mu \hat{N})} \,\, ,
\end{align}
where $\mu$ is the chemical potential and $\hat{N}$ is the operator which counts the number of particles in the system. The problem with this characterization of thermal equilibrium properties is that it seems to be not compatible with the dynamics generated by the Schroedinger equation. Indeed, these are all mixed states and unitary evolution preserves the von Neumann entropy of the whole system. Therefore, there is no unitary dynamics which can evolve an initially pure state $\ket{\psi_0}$ into a thermal state. Thus, while they are useful tools to describe the macroscopic behaviour of quantum systems, we still need to understand how this description is compatible with the unitary evolution of an isolated system.

\chapter{Pure-states statistical mechanics}\label{Ch2}


``Pure States Statistical Mechanics'' (PSSM) \cite{Lloyd1988} is a set of theoretical results and mathematical tools which aims at explaining how thermodynamics and statistical mechanics emerge, in isolated quantum systems. The theory is not yet a coherent and well understood set of statements but it is founded on three main conceptual pillars: the Eigenstate Thermalisation Hypothesis (ETH), the so-called Typicality Arguments and the Dynamical Equilibration Approach. For the purpose of this thesis only the first two are relevant. 

The number of papers dealing with the topic of thermalization and the microscopic foundations of thermodynamics is, without doubt, overwhelming. Moreover, this topic has been recently the subject of several reviews. For this reason here we do not attempt to give a complete review of the ideas and results involved. Rather, we choose a convenient narrative which aims at giving the reader the concepts which are necessary to understand the original part of this work, while giving an overview of the topic. After a brief introduction, where we reprise the ideas given in the Overview and we expand on the issue of isolated systems, we give a roadmap for the most recent review works dealing with this topic. This is done mainly for the sake of the interested reader who is not familiar with the field. Afterwards, we present two promising approaches to explain the emergence of thermal behaviour in quantum systems: the Eigenstate Thermalization Hypothesis and the Typicality Approach.

\section{Introduction}\label{sec:Ch2Intro}


Quantum Statistical Mechanics (QSM) is based on the assumption that the system under scrutiny is in a thermal equilibrium state, characterised by Jaynes Maximum Entropy Principle \cite{Jaynes1957,Jaynes1957a}. Given the large success of QSM, to assess the validity of its postulates and find criteria which guarantee its applicability is a long-standing issue, which needs to be addressed from the perspective of quantum mechanics. As anticipated in Section \ref{Sec:EqTh} and in the Overview, reconciling the nature of isolated quantum systems with Jaynes Principle is a non-trivial task. An initial state $\ket{\psi_0} \in \mathcal{H}$ will never evolve towards a thermal state under the unitary dynamics given by Schroedinger equation\cite{Kinoshita}. Therefore, either one accepts to be dealing with open quantum systems or one changes perspective and perseveres in the task of investigating the interplay between coherent dynamics and emergence of equilibrium. In this thesis we follow the second path and here we explain our choice.\\

Concretely, isolated systems are often an idealisation which is useful for practical purposes. In real experiments, achieving true isolation of a quantum system is in clear contrast with our ability to perform measurements on it. Despite that, the recent technological progresses in the manipulation of quantum systems now allow to experimentally probe regimes which are very close to such idealisation. As a result, the picture emerging from experimental investigations is in good agreement with the one given by theory: True equilibration and thermalization of the whole system are not possible, due to the unitary dynamics and its conserved quantities. This was undoubtedly clear theoretically, but also proven experimentally\cite{Kinoshita,Greiner2002}. In spite of that, this behaviour seems to be extremely fragile: a rare situation which, very often, we do not experience. The behaviour which is observed in most cases, for physical observables, is a quick equilibration to the prediction given by statistical mechanics \cite{Gring2012,Trotzky2012,Pertot2014}. This behaviour can not be ascribed to an interaction with the environment, for the following reason. 

When thermalization occurs because of decoherence with an environment, this happens on a time-scale $\tau_{\mathrm{dec}} \sim \hbar/\Gamma$, where $\Gamma$ is the strength of interaction between system and  environment. In many experimental setups to investigate isolated quantum systems\cite{Kinoshita,Greiner2002,Gring2012,Trotzky2012,Pertot2014,Bloch2012,Ritter2007} the estimated value of $\Gamma$ is very low and thus $\tau_{\mathrm{dec}}$ is very large. Despite that, in most cases we still observe quick equilibration and emergence of thermal expectation values, on a time-scale which is shorter than $\tau_{\mathrm{dec}}$. This excludes the interaction with an environment as the cause for the emergence of thermal behaviour. Moreover, if we numerically analyse the out-of-equilibrium dynamics of an isolated system, in several cases we observe the same behaviour. This provides sufficient evidence to say that some ``effective equilibration/thermalization'' are part of the phenomenology of isolated quantum systems and we still need to understand their underlying physics. 

The choice to deal with isolated system is therefore motivated by the broader issue of understanding the interplay between coherent dynamics of quantum systems and their equilibrium behaviour. As anticipated in the Overview, this issue is of both conceptual and practical relevance. For example, the emergence of thermalization puts severe constraints on the efficiency of the components in a quantum computer\cite{Lloyd2012}. If we want to avoid them, we should aim at preventing the components to thermalize. While the achievable degree of isolation of quantum systems has remarkably improved in the last twenty years, we still observe a general tendency of isolated quantum systems to local equilibration and thermalization. Only in the last few years, due to the rebirth of such foundational questions, we improved our understanding of these mechanisms. However, we are still far away from having a complete picture of these phenomena and further studies are certainly needed. \\

Due to the large amount of literature on this topic we had to make a choice about which arguments to present. Rather than giving a complete review, we decided to focus on the topics which are relevant for the original work presented in this thesis. To complete the exposition, and provide additional background material, here we present a roadmap for the recent reviews on the topic of equilibration and thermalization in isolated quantum systems.

\section{Roadmap of recent reviews}

The following works review different aspects of the interplay among quantum mechanics, statistical mechanics and thermodynamics, and complement the background material provided in this thesis. It is hope of the author that this can serve as roadmap for a reader who is willing to approach such an interesting research topic. 

\begin{itemize}
\item We believe Christian Gogolin's Master Thesis \cite{Gogolin2010} is the best starting point for the reader who is not familiar with the subject. It covers the basic information necessary to approach the issues of the topic. Moreover, his PhD Thesis \cite{Gogolin2014} and two reviews co-authored with J. Eisert \cite{Gogolin2016} and M. Friesdorf \cite{Eisert2015a} provide a wealth of information about the subject of ``Equilibration and Thermalization in closed Quantum Systems''. These four works summarise most of the results from the approach called ``Dynamical Equilibration''. Its main aim is to understand the dynamical mechanism underlying the equilibration of quantum systems. 

\item The book by J. Gemmer, M. Michel, and G. Mahler ``Quantum Thermodynamics''\cite{Gemmer2010} is also a very good starting point for a reader not familiar with the issues of the field. The second part of the book describes an approach to the quantum foundations of thermodynamics which goes under the name of ``Typicality approach''. This is very close in spirit to the approach described in Section \ref{sec:Ch2Typicality}. However, the technique used are slightly different from the ones presented there. 

\item The review by Yukalov ``Equilibration and thermalization in finite quantum systems'' \cite{Yukalov2011} offers a synthetic summary of some approaches to equilibration, thermalization, and decoherence in finite quantum systems. It provides information about the experimental efforts to investigate the coherent dynamics of quantum systems. Quasi-isolated and open quantum systems are also considered in this review. It is opinion of the author that such review provides a nice birds-eye-view on the topic and it is a good entry point for a reader already familiar with the basic issues of the field. 

\item The work  ``Colloquium: Nonequilibrium dynamics of closed interacting quantum systems''  by A. Polkovnikov, K. Sengupta, A. Silva, and M. Vengalattore\cite{Polkovnikov2011b}. It gives an overview of some theoretical and experimental insights concerning the dynamics of isolated quantum systems. It is mainly focused on the technique called ``quantum quenches'': Sudden changes in the parameters of the Hamiltonian which generate an out-of-equilibrium dynamics. It also covers the basics of Eigenstate Thermalization Hypothesis.

\item The editorial ``Focus on Dynamics and Thermalization in Isolated Quantum Many-Body Systems'' from New Journal of Physics by M. Cazalilla and M. Rigol\cite{Cazalilla2010} and the related focus issue. In the editorial the authors explains the significance of the papers published in the focus issue and offer a more general perspective about understanding of the dynamics of isolated quantum systems. A reader with knowledge of the issues of the field will find the editorial and the related focus issue to be a good starting point to study the promising approaches and some relevant literature. 

\item The review ``From quantum chaos and eigenstate thermalization to statistical mechanics and thermodynamics'' by D'Alessio, Y. Kafric, A. Polkovnikov and M. Rigol\cite{DAlessio2016} gives a pedagogical and detailed exposition of the Eigenstate Thermalization Hypothesis. The authors introduce the topic and highlight the connection with quantum chaos and random matrix theory. They also make the connection with  thermodynamics and provide concepts which are of paramount relevance to understand the emergence of thermal equilibrium in quantum systems.

\item The review ``Quantum Chaos and Thermalization in Isolated Systems of Interacting Particles'' by F. Borgonovi, F.M. Izrailev, L. F. Santos and V. G. Zelevinsky\cite{Borgonovi2016}. In this review the authors are focused on the emergence of chaotic behaviour in many-body quantum systems and the resulting thermal features. The approach stems from early investigations on atomic and nuclear physics and the relevance of chaos to model such systems. The techniques are extended and applied to study the thermalization behaviour of fermions, bosons and spin systems. 

\item The more recent review ``The role of quantum information in thermodynamics'' by J. Goold, M. Huber, A. Riera, L. del Rio and P. Skrzypczyk \cite{Goold2016} focuses on the interplay between quantum information theory and thermodynamics of quantum systems. This review is not focused on the foundations of statistical mechanics. Rather, it offers a broader perspective about the thermodynamic behaviour in quantum systems. It is a good entry point for a reader who is not familiar with the issues of understanding the quantum aspects of thermodynamics. 

\item The work ``On the foundation of statistical mechanics: a brief review'' by N. Singh\cite{Singh2013} provides a very good overview of the topic, where historical aspects are also taken into account. It focuses on  the conceptual aspects rather than giving concrete details on specific approaches. The main aim of the review is to show that the ergodic hypothesis is not necessary for the validity of statistical mechanics. 

\item The paper ``Thermalization and prethermalization in isolated quantum systems: a theoretical overview'' by T. Mori, T. N. Ikeda, E. Kaminishi and M. Ueda \cite{Mori2018}. This is the most recent review, as of March 2018, which has appeared on the topic. It covers modern results on the Eigenstate Thermalization Hypothesis and other intriguing relaxation phenomena as ``pre-thermalization'' and the relaxation dynamics of integrable systems. 

\item The ``Special issue on Quantum Integrability in Out of Equilibrium Systems'' published in the Journal of Statistical Mechanics: Theory and Experiments\cite{Calabrese2016}. This volume collects a large amount of papers on topics relevant to understand the dynamical behaviour of quantum systems. It is focused on integrable quantum systems, which are well-known examples of quantum systems which do not dynamically reach thermal equilibrium. It provides a wealth of information relevant to understand the conditions which lead to thermalization and also a large amount of modern literature on the topic.

\item Eventually, the ``Compendium of the foundations of classical statistical physics'' written by J. Uffink \cite{Uffink2006}. This work provides an historic survey of the classical work by Maxwell, Boltzmann and Gibbs in statistical physics. Moreover, it also reviews more modern approaches such as ergodic theory and non-equilibrium statistical mechanics. While the work is focused on classical systems, it is opinion of the author that several concepts and ideas presented in the review are generally important to understand the equilibration and thermalization phenomena. 

\end{itemize}

This is, by no means, a complete list of the relevant works. Rather, it is a list of review papers where most of the issues of the field are explained and where the promising approaches are presented in details.
In the remaining part of the chapter we will discuss two of these approaches: the Eigenstate Thermalization Hypothesis and the Typicality Approach.

\section{Eigenstate Thermalization}\label{sec:Ch2ETH}

The term \emph{Eigenstate Thermalization Hypothesis} (ETH) was coined by Srednicki\cite{Srednicki} in 1994. According to the ETH, the emergence of thermal equilibrium in quantum systems should be ascribed to the presence of some underlying chaotic behaviour. Due to the fact that the unitary evolution is a deterministic map for the energy eigenstates, the emergence of such chaotic behaviour is not due to the dynamical map, but to the structure of the energy eigenstates. Indeed, ETH is a hypothesis about properties of individual energy eigenstates of quantum many-body systems which was suggested by various results in quantum chaos theory. Among others, Berry Conjecture\cite{Berry1977} and Shnirelman Theorem\cite{Shnirelman1974} certainly played a major role in leading the way toward the formulation of the ETH. 
Their basic intuition is that, at the macroscopic scale, the energy eigenstates of a sufficiently complex quantum system will be so involved that their overlaps with the eigenstates of a physical observable can be effectively described by random variables. 
If the ETH is fulfilled, it guarantees thermalization for all observables that equilibrate.
Depending on how broad one wants the class of initial states that thermalize to be, the fulfillment of the ETH can also be a necessary criterion for thermalization \cite{Gogolin2016,DePalma2015a}.
The ETH has been criticized for its lack of predictive power, as it leaves open at least three important questions: what precisely are ``physical observables'', what makes a system ``sufficiently complex'' to expect that ETH applies, and how long it will take for such observables to reach the thermal expectation values\cite{Short2012,Garcia-Pintos2017,Reimann2016,Wilming2017,DeOliveira2018}.
For this reason, a lot of effort has been focused on numerical investigations that validate the ETH in specific Hamiltonian models and for various observables, often including local ones.
The ETH is generally found to hold in non-integrable systems that are not many-body localized and equilibration towards thermal expectation values usually happens on reasonable times scales \cite{Short2012,Reimann2016,Garcia-Pintos2017,DAlessio2016}. However, a satisfactory analytical explanation for ETH is still missing.


\subsection{Versions of the ETH}

Several different versions of the ETH have appeared in the literature. They are essentially different mathematical statements which aim at formalizing the same physical intuition: thermalization can emerge in quantum systems because, in the thermodynamic limit, the energy eigenstates of reasonable Hamiltonians exhibit thermal properties. Our goal here is to provide a reasonable clusterization of the most used versions of the ETH.
For historical reasons, the one proposed and studied by Srednicki in 1994-99\cite{Srednicki,Srednicki1996,Srednicki1999} has particular relevance and we review it with more details. On a more technical side, all versions of the ETH are statements about properties of large systems. In principle one would hence state the following in terms of families of systems of increasing size/particle number. To not over-complicate things, we do not make this explicit and instead implicitly assume that a limit of large system size exists and makes sense. Whence, the following are meant as statements about asymptotic scaling.

\begin{Srednicki}
It is an ansatz on the matrix elements of an observable when it is written in the Hamiltonian eigenbasis $\left\{ \Ket{E_{n}}\right\}$:
\begin{align}
&A_{m,n} \approx f_A^{(1)}(\overline{E}) \delta_{mn} + e^{-\frac{S(\overline{E})}{2}}f^{(2)}_{A}(\overline{E},\omega) R_{mn} \,\,\, , \label{eq:ETH} 
\end{align}
where $\overline{E} \equiv \frac{E_{n} + E_{m}}{2}$, $\omega \equiv E_{n} - E_{m}$ while $f_A^{(1)}$ and $f_A^{(2)}$ are smooth functions of their arguments. $S(\overline{E})$ is the thermodynamic entropy at energy $\overline{E}$ defined as $e^{S(\overline{E})} \equiv \overline{E} \sum_{n} \delta_{\epsilon}(\overline{E}- E_{n})$, where $\delta_{\epsilon}$ is a smeared version of the Dirac delta distribution. For each $(m,n)$ such that $m \neq n$, $R_{mn}$ is a complex random variable with zero mean and unit variance.\\
\end{Srednicki}
In words: diagonal elements of physically reasonable observables vary smoothly with the energy and off-diagonal matrix elements are exponentially small and randomly distributed. The function $f^{(2)}$ accounts for the fact that there can be some fine-grained structure in the off-diagonal matrix elements. This is what Srednicki argued to be fulfilled in a hard-sphere gas\cite{Srednicki}.

\begin{hypothesis}[Complete ETH] \label{hyp:originaleth}
  The matrix elements $A_{m,n} \coloneqq \bra{E_m}A\ket{E_n}$ of any physically reasonable observable $A$ with respect to the energy eigenstates $\ket{E_m}$ in the bulk of the spectrum of a Hamiltonian of a system with $N$ particles satisfy

  \begin{align}
&- \ln | A_{m+1,m+1} - A_{m,m} | \,\, , \,\, - \ln |A_{m,n}|  \in \mathcal{O}(N) 
  \end{align}

\end{hypothesis}
In words: Off-diagonal elements of physically reasonable observables and the differences between neighboring diagonal elements are exponentially small in the size of the system.
Srednicki's hypothesis belongs to this kind of ETH. Similar variants appeared for example in \cite{Rigol2012,Srednicki1996,Srednicki1999,Khemani2014,DAlessio2016}.

\begin{hypothesis}[Thermal ETH] \label{hyp:thermaleth}
  There exists a function $\beta:\R\to\R_0^+$ such that for any physically reasonable observable $A$ the expectation values $A_{m} \coloneqq \bra{E_m}A\ket{E_m}$ of $A$ with respect to the energy eigenstates $\ket{E_m}$ in the bulk of the spectrum of a Hamiltonian of system are close to thermal in the sense that 
  \begin{equation}
    |A_{m} - \Tr(A\,\e^{-\beta(E_m)\,H})/\Tr(\e^{-\beta(E_m)\,H})| \in \mathcal{O}(1/N) \,.
  \end{equation}
\end{hypothesis}
Whether a $1/N$ scaling should be required or whether one would be content with a weaker decay is debatable. Such formulations of ETH appeared for example in \cite{Rigol2008,Muller2015,DePalma2015a,Kim2014}, along with a rigorous proof of a statement that is closely related but weaker than Hypothesis~\ref{hyp:originaleth} for translation invariant Hamiltonians with finite range interactions.
\begin{hypothesis}[Smoothness ETH] \label{hyp:smoothnesseth}
  For any physically reasonable observable $A$ there exists a function $a:\R \to \R$ that is Lipschitz continuous with a Lipschitz constant $L \in \mathcal{O}(1/N)$ such that the expectation values $A_{m} \coloneqq \bra{E_m}A\ket{E_m}$ of $A$ with respect to the energy eigenstates $\ket{E_m}$ in the bulk of the spectrum of a Hamiltonian of system with $N$ particles satisfy
  \begin{equation}
     - \ln |A_{m} - a(E_m)| \in \mathcal{O}(N) \,.
  \end{equation}
\end{hypothesis}
In words: The expectation values of physically reasonable observables in energy eigenstates approximately vary slowly as a function of energy instead of widely jumping over a broad range of values even in small energy intervals.
The function $a(E)$ is often related to the average of $A$ over a small microcanonical energy window around $E$.
Similar statements of the ETH have been used for example in \cite{Rigol2008,Bartsch2017,Beugeling2014,Beugeling2015,Khodja2015,Polkovnikov2011b,Steinigeweg2013a,Torres-Herrera2014}. \\

Several other versions of the ETH and variations of the statements above can be found in the literature and there is a further level of diversification which needs to be mentioned:
All the statements above are intended to hold for all energy eigenstates in the bulk of the spectrum.
It is also possible to require them to hold only for all but a small fraction of these eigenstates, which somehow goes to zero in the thermodynamical limit.
Similar statements have been dubbed \emph{Weak ETH}\cite{Mori2016}.
Another related concept is the \emph{Eigenstate Randomization Hypothesis}\cite{Ikeda2011}, which states that the diagonal elements of physical observables should behave as random variables.
Together with an assumption on the smoothness of the energy distribution, this allows to derive a bound on the difference between the infinite-time and a suitable microcanonical average.

The main difference among the formulations of the ETH listed above is that the first one is also a statement about the off-diagonal matrix elements $A_{mn}$ while the other two pertain only to diagonal matrix elements $A_{mm}$.
We believe it is important to highlight this aspect because the off-diagonal matrix elements contribute in a non-trivial way to the out-of-equilibrium dynamics of the observable \cite{Short2012,Garcia-Pintos2017,Reimann2016,Wilming2017,DeOliveira2018,Linden,Mondaini2017a}.
This is the reason why we (as others do \cite{Rigol2012}) consider the Complete ETH as more fundamental.
Hereafter, when we refer to ETH we will always refer to the technical statement of Complete ETH or ETH \ref{hyp:originaleth}.

\subsection{Thermalization according to the ETH}

On its own, the ETH does not guarantee the emergence of thermal expectation values. This is a nontrivial point which is often overlooked. For this reason, we review here the argument leading from ETH to the emergence of thermal expectation values. Let $\ket{\psi_0} = \sum_{n} c_n \ket{E_n}$ be a generic initial state of our many-body quantum system and let $\Ham$ be its time-independent Hamiltonian. Let's assume the observable $A$ satisfies the ETH. At a generic time $t \in \mathbb{R}_0^+$ we have
\begin{align}
&\MV{A(t)} \coloneqq \bra{\psi_t} A \ket{\psi_t} = \underbrace{\sum_{n} A_{n,n} |c_n|^2}_{T_1} + \underbrace{\sum_{n \neq m} c_n c_m^{*} e^{- i \frac{E_n - E_m}{\hbar}t} A_{n,m}}_{T_2} \,\,\, .
\end{align}\label{eq:evol}
We see a constant term $T_1$, which depends on the diagonal matrix elements $A_{n,n}$, plus a fluctuation term $T_2$, which depends on the off-diagonal matrix elements $A_{m,n}$. Because of ETH, each element in  the sum of $T_2$ is exponentially small in the size of the system. However, there is an exponentially large number of such elements. If the phases are coherently aligned, they can sum up to a non-negligible value. This accounts for out-of-equilibrium configurations. If this happens at $t=0$, as time passes this condition will not be satisfied, due to the time-evolving phases. After a sufficiently long time the term $T_2$ will become  negligible with respect to $T_1$. This is a dynamical dephasing mechanism which can explain how an observable starting in an out-of-equilibrium configuration can reach an equilibrium value, given by $T_1$. However, equilibration is only part of the thermalization process. The remaining part being independence on the initial conditions. We now show how an observable that satisfies ETH has an equilibrium value which does not depend on the initial conditions.

The state $\ket{\psi_0}$ can not be completely arbitrary, as we know that there are states which never equilibrate or thermalise. For example, the state $\frac{\ket{E_{n}} + \ket{E_m}}{\sqrt{2}}$ will always oscillate between $\ket{E_n}$ and $\ket{E_m}$ at frequency $\omega = \frac{E_n - E_m}{\hbar}$ and dephasing will never occur. However, to prepare such state is an impossible task, for a macroscopic many-body quantum system. When the size of the system increases the energy eigenstates become extremely dense and it is practically impossible, even for the most accurate and careful experimentalist, to single out two arbitrary energy eigenstates. This leads to the following assumption on the energy distribution $p_0(E_n) \coloneqq  |\braket{\psi_0}{E_n}|^2 = |c_n|^2$ of the class of initial states which we are going to consider. We will assume that contributions to $p_0(E_n)$ outside a small energy window, which we call $I_0$, will be negligible. Such window is centred around the average value $E = \sum_n p_0(E_n)E_n$ and its width can be evaluated using the variance $(\Delta E)^2 = \sum_n |c_n|^2(E_n - E)^2$. In this way $I_0 \coloneqq \left[ E - \frac{\Delta E}{2} , E + \frac{\Delta E}{2}\right]$ and 
\begin{align}
&\ket{\psi_0} = \sum_{n} c_n \KEn \approx \sum_{n \in I_0} c_n \KEn \,\,\, .
\end{align}
Therefore, the window $I_0$ is usually assumed to be macroscopically small ($\delta E/E \ll 1$) but sufficiently wide to host a large number of energy eigenstates. Using this assumption in combination with the diagonal part of ETH one can argue that the equilibrium value does not depend on the initial conditions:

\begin{align}
&\Tr \left(A \rho_{\mathrm{DE}} \right) \simeq \sum_{n \in I_0} A_{n,n} p_0(E_n) \simeq A_{n_0,n_0}\sum_{n \in I_0} p_0(E_n) \simeq \nonumber \\
&\qquad \qquad \,\,\, \simeq \frac{1}{N(E,\Delta E)}\sum_{n\in I_0} A_{n,n}= \Tr \left( A \rho_{\mathrm{MC}}\right),
\end{align}
where $N(E,\delta E)$ is the number of energy eigenstates within $I_0$. In the first passage we used the fact that $p_0(E_n)$ has non-negligible contribution only from energies belonging to $I_0$. In the second passage we used the diagonal part of ETH: within the small energy window $I_0$ the $A_{n,n}$ are essentially constant, up to an exponentially small error. So we took the value at the center ($n_0 \, : \min_{n \in I_0}|E_n - E| = |E_{n_0} - E|$) and brought it outside the sum. Using again the assumption discussed before, the sum of $p_0(E_n)$ within $I_0$ is almost $1$, since contributions outside $I_0$ are negligible. In the third passage, using again both the ETH and the small variance assumption, we recognize that this is equal to the expectation value computed on the microcanonical ensemble $\rho_{\mathrm{MC}}$.

\subsection{Summary on the ETH}

The ETH is an ansatz on the matrix elements of an observable, in the energy eigenbasis, which is expected from results in quantum chaos theory. The intuition behind ETH is that the thermal behaviour should emerge at the level of individual energy eigenstates. Concretely, this can be stated in different mathematical ways. Hereafter, where we talk about ETH we will always refer to Srednicki's ETH or ETH 1. The key points of this approach are the following. First, the magnitude of the off-diagonal matrix elements is exponentially small in the size of the system. However, the phases can be coherently aligned with the phases of the initial state, to give an out-of-equilibrium expectation value. If this happens at $t=0$, the time evolution will cause dephasing and equilibration to the predictions of the diagonal ensemble. This will not happen always. However, it is the behaviour that we expect in generic cases. Second, the diagonal matrix element $A^{\mathrm{ETH}}_{nn}$ have a smooth behaviour, as a function of the energy. This means that their variation with the energy can be appreciated only at energy differences which are exponentially large in the size of the system. If the initial state has an energy distribution which is sufficiently narrow, it will sample only values $A^{\mathrm{ETH}}_{nn}$ and $A^{\mathrm{ETH}}_{mm}$ which are exponentially close to each other. Because of that, the expectation value of $A^{\mathrm{ETH}}$ on the diagonal ensemble will be the same as the one on the microcanonical ensemble. 

The ETH has been checked numerically in several cases\cite{Steinigeweg2013,Mondaini2016,Khodja2015,Magan2016,Sorg2014,Steinigeweg2013a,Mondaini2017a,Khemani2014,Rigol2009,Kim2014,Beugeling2014,Ikeda2013,Ikeda2011,Beugeling2015,Bartsch2017,Konstantinidis2015,Dymarsky2018,Yoshizawa2018}, often for local observables. Despite being well-corroborated, we still lack an analytical understanding of this phenomenon. Because of that, the main issue raised against this approach is that it lacks of predictive power, as it leaves open at least three important questions: what precisely are ``physical observables'', what makes a system ``sufficiently complex'' to expect that ETH applies, and how long it will take for such observables to reach thermal expectation values.

\section{Typicality}\label{sec:Ch2Typicality}

Now we turn our attention to a different approach for the emergence of thermal behaviour in isolated quantum systems: the ``Typicality Approach''. The mathematical apparatus which has been used to develop this approach relies heavily on ideas and techniques from Information Theory. The key result is the proof and formalization of the following intuition: Although the pair `subsystem+environment' is isolated and in a pure state, the reduced state of the system can exhibit a thermal behaviour, due to the presence of entanglement between the subsystem and the environment. Different techniques  have been used to obtain similar ``typicality arguments''. Here we do not review all of them, as this is beyond the scope of the thesis. Rather, the goal is to provide the reader with the conceptual and technical tools necessary to understand the second part of this thesis: Applications to Quantum Gravity. For this reason, now we review the approach which exploits the Concentration of Measure Phenomenon \cite{Ledoux2001} and Levy's Lemma, which we give in Appendix \ref{App:Levy}. \\

\begin{figure}[h!]
\begin{center}
\includegraphics[scale=0.2]{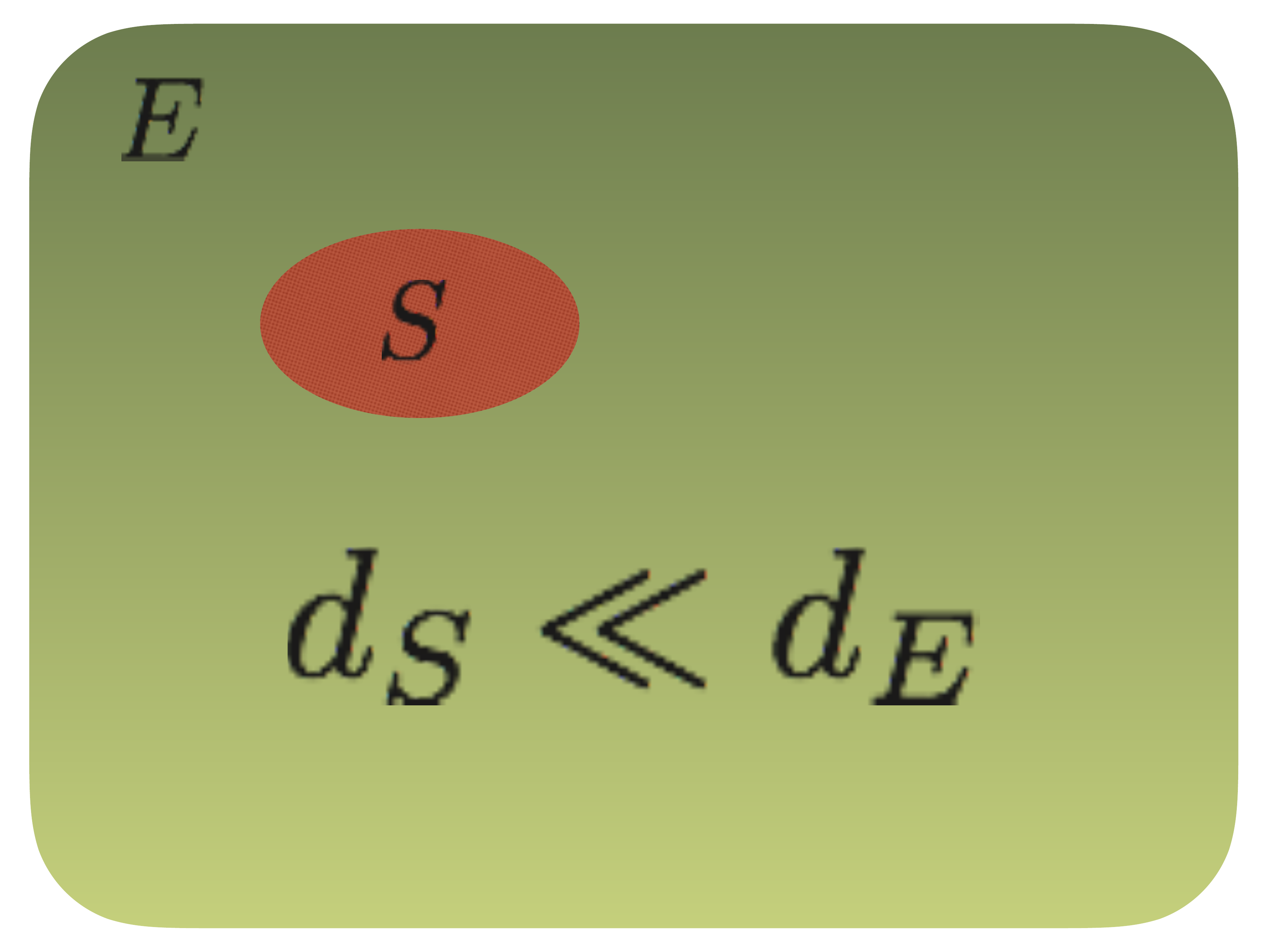}\caption[Setup for the ``Local thermalization'' paradigm]{Setup for the ``Local thermalization'' paradigm. A large quantum system, which we sometimes refer to as the ``universe'' is split in ``small system'' plus ``large environment''. The bipartition is performed in an asymmetric way to ensure that the dimension of the Hilbert space of the system $d_S$ is much smaller than the dimensions of the Hilbert space of the environment $d_E$.}\label{fig:lbit}
\end{center}
\end{figure}

Consider a large isolated quantum system, which we call ``universe'', and a bipartition into ``small system'' $S$ and ``large environment'' $E$. The universe is assumed to be in a pure state. We also assume that it is subject to a completely arbitrary \emph{global constraint} $\mathcal{R}$. In the standard context of statistical mechanics this will be the fixed energy constraint. Such constraint is realised by imposing that the  allowed states must belong to the subspace $\mathcal{H}_{\mathcal{R}}$ of the states of the total Hilbert space $\mathcal{H}_U$ which satisfy the constraint $\mathcal{R}$:
\begin{align}
\Hilb_{\mathcal{R}} \subseteq \Hilb_{U} = \Hilb_E \otimes \Hilb_S \,\,\, .
\end{align}
$\mathcal{H}_S$ and $\mathcal{H}_E$ are the Hilbert spaces of the system and environment, with dimensions $d_S$ and $d_E$, respectively. The bipartition is done in an asymmetric way and this is enforced by assuming that $d_S \ll d_E$. Given this setup, we define the microcanonical state of the universe $\mathcal{I}_{\mathcal{R}}$ as the mixed state which assigns equal probability to all the states which satisfy $\mathcal{R}$ and zero otherwise:
\begin{align}
&\mathcal{I}_{\mathcal{R}} \coloneqq \frac{\mathcal{P}_{\mathcal{R}} }{d_{\mathcal{R}}},
\end{align}
where $\mathcal{P}_\mathcal{R}$ is the projection operator on $\mathcal{H}_{\mathcal{R}}$ and $d_{\mathcal{R}} \coloneqq \mathrm{dim} \,\mathcal{H}_{\mathcal{R}}$. The \emph{canonical state} of the system $\Omega_S$ is defined as the partial trace of $\mathcal{I}_{\mathcal{R}}$ over the degrees of freedom of the environment:
\begin{align}
&\Omega_S  \equiv \mathrm{Tr}_E [\mathcal{I}_{\mathcal{R}}] \,. \label{eq:Canon}
\end{align}
The key result that we are going to exploit in Chapter \ref{Ch6} and \ref{Ch7} was firstly proven in Ref.\cite{Popescu2006}.
\begin{theorem} \label{thm:typicality}
For an arbitrary $\varepsilon > 0$, the volume of states $\mathrm{Vol} \left[  \phi \in \mathcal{H}_{\mathcal{R}} \, \vert \, D(\rho_S(\phi),\Omega_S) \geq \eta \right]$ such that the distance between the reduced density matrix of the system $\rho_S(\phi) \coloneqq \Tr_E \ket{\phi} \ket{\phi}$ and the canonical state $\Omega_S$ is more than $\eta$ can be evaluated as follows
  \begin{equation} 
\frac{\mathrm{Vol} \left[  \phi \in \mathcal{H}_{\mathcal{R}} \, \vert \, D(\rho_S(\phi),\Omega_S) \geq \eta \right] }{\mathrm{Vol} \left[ \phi \in \mathcal{H}_{\mathcal{R}} \right] } \leq 4\,  \mathrm{Exp} \left[ - \frac{2}{9\pi^3} d_{\mathcal{R}} \varepsilon^2 \right] \label{eq:VolStates}
  \end{equation}
\end{theorem}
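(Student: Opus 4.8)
This is a concentration-of-measure statement, so the plan is to massage it into a form where Levy's Lemma (Appendix~\ref{App:Levy}) applies directly. I would identify the uniform measure on $\mathcal{H}_{\mathcal{R}}$ with the uniform measure on the unit sphere $S^{2d_{\mathcal{R}}-1}$ living in the real vector space underlying $\mathcal{H}_{\mathcal{R}}$ (real dimension $2d_{\mathcal{R}}$), and introduce the single scalar function $f(\phi) \coloneqq D(\rho_S(\phi),\Omega_S)$ on that sphere. The choice of $\Omega_S$ as reference is the crucial structural input: since the average of $\ket{\phi}\bra{\phi}$ over the sphere in $\mathcal{H}_{\mathcal{R}}$ equals $\mathcal{I}_{\mathcal{R}} = \mathcal{P}_{\mathcal{R}}/d_{\mathcal{R}}$, taking the partial trace gives $\MV{\rho_S(\phi)} = \Omega_S$, so $\Omega_S$ is precisely the mean reduced state. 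Levy's Lemma will then assert that $f$ concentrates sharply about its mean $\MV{f}$ with Gaussian tails controlled by $d_{\mathcal{R}}$ and by the Lipschitz constant of $f$, so the two quantities I must pin down are the Lipschitz constant of $f$ and the mean $\MV{f}$.

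\textbf{Lipschitz constant.} First I would bound $|f(\phi_1)-f(\phi_2)|$. The triangle inequality for the trace distance gives $|f(\phi_1)-f(\phi_2)| \le D(\rho_S(\phi_1),\rho_S(\phi_2))$; contractivity of the partial trace (a trace-preserving positive map) bounds this by $D(\ket{\phi_1}\bra{\phi_1},\ket{\phi_2}\bra{\phi_2}) = \sqrt{1-\abs{\braket{\phi_1}{\phi_2}}^2}$; and an elementary estimate gives $\sqrt{1-\abs{\braket{\phi_1}{\phi_2}}^2} \le \norm{\ket{\phi_1}-\ket{\phi_2}}$. Because the Euclidean (chordal) distance is dominated by the geodesic distance on the sphere, $f$ is Lipschitz with constant $L=1$ in the metric Levy's Lemma uses.

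\textbf{Mean, the hard part.} Writing $f = \tfrac12\norm{\rho_S-\Omega_S}_{\mathrm{Tr}}$ and using $\norm{X}_{\mathrm{Tr}} \le \sqrt{d_S}\,\norm{X}_2$ for a Hermitian operator on the $d_S$-dimensional system, together with Jensen's inequality for the concave square root, I reduce the estimate to a second moment:
\[
\MV{f} \;\le\; \tfrac12\sqrt{d_S}\,\sqrt{\MV{\Tr[(\rho_S-\Omega_S)^2]}} \;=\; \tfrac12\sqrt{d_S}\,\sqrt{\MV{\Tr\rho_S^{2}}-\Tr\Omega_S^{2}}\,,
\]
where the last equality again uses $\MV{\rho_S}=\Omega_S$. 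The genuinely technical ingredient is $\MV{\Tr\rho_S^{2}}$, which I would evaluate with the replica/swap trick: write $\Tr\rho_S^{2} = \Tr[(\rho_S\otimes\rho_S)\mathbb{F}_S]$ with $\mathbb{F}_S$ the swap on two copies of the system, and average $\ket{\phi}\bra{\phi}^{\otimes 2}$ over the sphere in $\mathcal{H}_{\mathcal{R}}$; by Schur's lemma this average is the normalised projector onto the symmetric subspace, $\mathcal{P}_{\mathcal{R}}^{\otimes 2}(\1+\mathbb{F}_{\mathcal{R}})/\big(d_{\mathcal{R}}(d_{\mathcal{R}}+1)\big)$. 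Tracing against the swap operators and simplifying yields $\MV{\Tr\rho_S^{2}}-\Tr\Omega_S^{2} \le \Tr\Omega_E^{2} \coloneqq 1/d_E^{\mathrm{eff}}$, where $\Omega_E \coloneqq \Tr_S \mathcal{I}_{\mathcal{R}}$, hence $\MV{f}\le \tfrac12\sqrt{d_S/d_E^{\mathrm{eff}}}$. I expect the real work to sit here, since one must carefully handle the projectors $\mathcal{P}_{\mathcal{R}}$ and keep disciplined track of which partial traces act on system versus environment.

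\textbf{Assembly.} Finally I would combine the pieces: setting the threshold $\eta = \MV{f}+\varepsilon$, the event $\{f \ge \eta\}$ is contained in $\{\abs{f-\MV{f}} \ge \varepsilon\}$, so Levy's Lemma with $L=1$ and sphere dimension $2d_{\mathcal{R}}$ produces exactly the stated exponent, $\exp[-\tfrac{2}{9\pi^3}d_{\mathcal{R}}\varepsilon^2]$ (the factor $2$ in the numerator being the real-vs-complex dimension doubling); the numerical prefactor (here $4$) is inherited from the precise constant in the version of Levy's Lemma invoked and the passage from the two-sided deviation to the one-sided threshold. The mean bound $\MV{f}\le\tfrac12\sqrt{d_S/d_E^{\mathrm{eff}}}$ then guarantees that $\eta$ is small precisely in the physically relevant asymmetric regime $d_S \ll d_E$. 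The Lipschitz and assembly steps are routine once metric conventions are fixed; the swap-trick second-moment computation is the only substantive obstacle.
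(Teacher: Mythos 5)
Your proposal is correct and takes exactly the approach the paper relies on — it is the Popescu--Short--Winter argument that the paper cites rather than reproduces: Levy's Lemma (Appendix~\ref{App:Levy}) applied on the unit sphere of $\mathcal{H}_{\mathcal{R}}$ to the Lipschitz function $f(\phi)=D(\rho_S(\phi),\Omega_S)$, with $\MV{f}$ controlled via the swap-trick second moment to give the $\tfrac{1}{2}\sqrt{d_S/d_E^{\mathrm{eff}}}$ contribution to $\eta$, and the real-vs-complex dimension doubling supplying the factor $2$ in the exponent. Since the thesis defers the proof entirely to Ref.~\cite{Popescu2006} and only records Levy's Lemma in the appendix, your reconstruction supplies precisely the intended steps, all of which check out.
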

where $D(\rho,\sigma)$ is the trace-distance on the convex set of the density matrices, while
\begin{align}
& \eta = \varepsilon + \frac{1}{2} \sqrt{\frac{d_S}{d_E^{\mathrm{eff}}}}, 
\end{align}
with the effective dimension of the environment defined as
\begin{align}
 d_E^{\mathrm{eff}} \equiv \frac{1}{\mathrm{Tr}_E \left[ \left( \mathrm{Tr}_S \mathcal{I}_{\mathcal{R}} \right)^2\right] } \geq \frac{d_{\mathcal{R}}}{d_S} \label{deff}.
\end{align}
The bound in Eq.\eqref{eq:VolStates} states that the fraction of states which are far away from the canonical state $\Omega_S$ more than $\eta$ decreases exponentially with the dimension of the ``allowed Hilbert space'' $d_{\mathcal{R}}$ and with $\varepsilon^2 = \left(\eta - \frac{1}{2} \sqrt{\frac{d_S}{d_E^{\mathrm{eff}}}}\right)^2$. This means that, as the dimension of the Hilbert space $d_{\mathcal{R}}$ grows, a huge fraction of states gets concentrated around the canonical state. The proof of the result relies on the \emph{concentration of measure phenomenon} and the key tool to prove it is Levy's lemma\cite{Ledoux2001}, which we summarize in Appendix \ref{App:Levy}.  .\\


To connect this result with statistical mechanics and thermodynamics we need to consider as constraint $\mathcal{R}$ the restriction to a fixed-energy slice. Indeed, we assume here that we have a large quantum system at fixed energy $E_0$. Hence, we restrict our space to the subspace of states which have a given energy $E_0$. Calling $\Ham_U$ the Hamiltonian of the universe,
\begin{align}
&\Ham_U = \Ham_S+\Ham_E+\Ham_{\mathrm{int}},
\end{align}
where $\Ham_S$ and $\Ham_E$ are the Hamiltonian of the system and environment respectively, and $\Ham_{\mathrm{int}}$ is the interaction. In the standard statistical mechanics context the system and the environment are weakly coupled and the density of states of the environment increases exponentially with the energy. With these assumptions, $\Omega_S$ can be computed with well-known techniques\cite{Greiner1995,Schrodinger1989} and the result is the Gibbs canonical ensemble:
\begin{align}
&\Omega_S \simeq \frac{e^{-\beta \Ham_S}}{\mathcal{Z}} &&\mathcal{Z} \coloneqq \Tr e^{-\beta \Ham_S}
\end{align}
As in the classical case, the value of the temperature is set by the value of the fixed energy $E_0$. This allows to turn Theorem \ref{thm:typicality} into a more familiar statement. Given that the total energy of the universe is fixed and approximately $E_0$, assuming weak interaction between system and environment and an exponentially large density of states for the environment, almost every pure state of the universe is such that the reduced state of the system is very close to the Gibbs' state.\\

It is important to stress that, on its own, the theorem does not say anything about the explicit form of $\Omega_S$. Rather, it shows that, within $\mathcal{H}_{\mathcal{R}}$, large fluctuations of $\rho_S(\phi)$ around $\rho_S(\phi)\approx \Omega_S$ are exponentially rare. Given an arbitrary constraint $\mathcal{R}$ and the respective microcanonical state $\mathcal{I}_{\mathcal{R}}$, computing $\Omega_S$ is a highly non-trivial task, which does not pertain Theorem \ref{thm:typicality}. The result that $\Omega_S$ is a thermal state depends on the specific choice of the constraint $\mathcal{R}$ and on other assumptions, which may not be justified in generic cases. Despite that, Theorem \ref{thm:typicality} remain valid for a completely arbitrary constraint $\mathcal{R}$ and does not depend on the specific form of $\Omega_S$. In this sense, it goes beyond the ``local thermalization'' paradigm and it is able to address more general situations, as we will see in Chapter \ref{Ch7} and \ref{Ch8}.

\subsection{Summary on Typicality}

The Typicality Approach formalises the picture of ``local thermalization''. However, for a correct interpretation, it is important to disentangle two aspects. One is the physical reason for the emergence of a local thermal state in weakly interacting quantum systems, which is the entanglement between the subsystem and the rest. This feature is also present in the Eigenstate Thermalization Hypothesis. The other one is the argument that this behaviour is expected in the overwhelming majority of cases. This is the key statement of typicality, due to the concentration of measure phenomenon. It is important to stress here that the statement does not have a dynamical meaning. Rather, it is the result of a kinematical analysis on the Hilbert space and as such we should not assume that a specific dynamics will necessary lead to the emergence of typical properties. Indeed, we know that this does not always happens. Examples are the many-body localised systems. Thus, the correct interpretation of this analysis is that it explains why thermalization is such an ubiquitous phenomenon, in macroscopic systems. The reason is that, at the local level, most of the global states look very similar to a thermal state. Because of that, only very peculiar Hamiltonians will keep the state out of the ``typical situation''. A simple example of such case is given by non-interacting Hamiltonians. Studies aiming at identifying physical criteria to assess whether the dynamics of a given Hamiltonian, for a given observable, will lead to a typical or to an untypical expectation value are currently ongoing\cite{Garcia-Pintos2017,Hamazaki2018}.

\section{Summary}\label{sec:Ch2Discussion}

In this Chapter we introduced the basic aspects of two approaches for the emergence of thermal equilibrium in isolated quantum systems: the Eigenstate Thermalization Hypothesis and the Typicality Approach.  According to the first one, thermalization occurs at the level of individual energy eigenstates. Thus, the emergence of a microcanonical expectation value at equilibrium is due to the energy-eigenstate-expectation values of an observable $A$ being of a thermal form. The main problem with this approach is that there is no analytical explanation for the ETH. It is only a condition that can be checked a posteriori, once we have knowledge of the exact form of the energy eigenstates. Because of that, it lacks of predictive power, as it does not specify ($1$) which systems are expected to satisfy ETH, ($2$) for which observables, and ($3$) how long it will take to reach thermal equilibrium. The relevance of our work for the foundations of ETH will be discussed in Chapters \ref{Ch3}, \ref{Ch4} and \ref{Ch5}.

According to typicality, thermalization of a small subsystem is due to its entanglement with the rest. Performing a statistical analysis on a large-dimensional Hilbert space one can prove that, in a weakly-interacting quantum system, the overwhelming majority of pure states have a reduced state which is thermal. This is the ``local thermalization'' picture which emerges from the Typicality Approach and it can be formalised by means of Levy's Lemma. Theorem \ref{thm:typicality} allows to go beyond the local thermalization paradigm, which is expected to hold only in the weakly interacting case. It argues that a generic pure state will have a reduced state which in most case is indistinguishable from the canonical state of the system Eq.(\ref{eq:Canon}). The number of global states which escape this result is exponentially small in the dimension of the Hilbert space. This result will be used heavily in the second part of the thesis, especially in Chapters \ref{Ch7} and \ref{Ch8}.

\addcontentsline{toc}{part}{On the foundations}
\part*{On the foundations}

\chapter{Information-theoretic equilibrium and Observable Thermalization}\label{Ch3}

A crucial point in statistical mechanics is the definition of the notion of thermal equilibrium, which can be given as the state that maximises the von Neumann entropy, under the validity of some constraints. In this chapter we argue that such a notion of thermal equilibrium is, \emph{de facto}, not experimentally verifiable. To overcome this issue, we propose a weaker notion of thermal equilibrium, specific for a given observable and therefore experimentally testable. We will bring to light the thermal properties that it heralds and understand its relation with Gibbs ensembles. Moreover, we will see how this is relevant to understand the emergence of thermal behaviour in a closed quantum system. In particular, we will show that there is always a class of observables which exhibits thermal equilibrium properties and we give a recipe to explicitly construct them. Eventually,  we will show that such observables have an intimate connection with the ETH. The chapter is based on the author's work, published in Ref. \cite{Anza2017}.

\section{Introduction}

The ordinary way in which thermal equilibrium properties are obtained, in statistical mechanics, is through a complete characterisation of the thermal form of the state of the system. One way of deriving such form is by using \textit{Jaynes principle}\cite{Greiner1995,Uffink2006,Jaynes1957,Jaynes1957a}, which is the constrained maximisation of von Neumann entropy $S_{\mathrm{vN}}$. The main point of Jaynes' work was that statistical mechanics can be seen under the light of probabilistic inference, in which one is forced to give predictions about some macroscopic properties of the system, even though the information that we have on the system is not complete. Adopting such point of view, Jaynes showed that the unique state that maximises $S_{\mathrm{vN}}$ (compatibly with the prior information that we have on the system) is our best guess about the state of the system at the equilibrium. The outcomes of such procedure are the Gibbs ensembles mentioned in Section~\ref{Sec:EqTh}.\\

The starting point of Jaynes' derivation of statistical mechanics is that $S_{\mathrm{vN}}$ is a way of estimating the uncertainty that we have about which pure state the system inhabits. Unfortunately, even though $S_{\mathrm{vN}}$ is undoubtedly an important quantity, we know from quantum information theory that it does not address all kind of ignorance we have about the system. It is not the entropy of an observable (though the state is observable).\\

This issue is intimately related with the way we acquire information about a system, i.e. via quantum measurements. The process of measuring an observable $A$ on a quantum system allows to probe only the diagonal part of the density matrix $\Bra{a_i} \rho \Ket{a_i}$, when this is written in the observable eigenbasis $\left\{\Ket{a_i}\right\}$. For such a reason, from the experimental point of view, it is not possible to assess whether a macroscopic many-body quantum system is at thermal equilibrium (e.g. canonical ensemble $\rho_\beta$): the number of observables needed to probe all the density matrix elements is too big. In any experimentally reasonable situation we have access only to a few (sometimes just one or two) observables. It is therefore natural to imagine situations in which the outcomes of measurements are compatible with the assumption of thermal equilibrium, while the rest of the density matrix of the system is not. \\

The idea can be illustrated with a simple example. Suppose that we are interested in assessing whether a single particle in a box is at thermal equilibrium or not. We assume here that, for experimental reasons, we have access only to the position observable. The only information that we can get about the system is the probability distribution of the position: $p(\vec{x}) \equiv \bra{\vec{x}} \rho  \ket{\vec{x}}$. At thermal equilibrium (i.e. when the state is the canonical ensemble $\rho_\beta$), such distribution is constant. However, the same is true if the particle is in any given eigenstate of the Hamiltonian $\Ket{E}$:
\begin{align}
& \Bra{\vec{x}} \rho_\beta \Ket{\vec{x}} = \frac{1}{V} = \left\vert\frac{e^{-i \vec{k} \cdot \vec{x}}}{\sqrt{V}}\right\vert^2 = |\left\langle \vec{x} | E \right\rangle |^2
\end{align}
where $V$ is the volume of the box. Without other observables it is not possible to probe all the density matrix elements. If our results are, up to experimental uncertainties which are always present, compatible with a constant distribution, we conclude that our observations are compatible with the system being in a thermal state. Adopting a more general perspective, for a many-body quantum system, we will surely encounter a similar issue. This might very well cause the impression that the system is in a thermal state.\\

Despite that, we think that the fact that a distribution is compatible with its thermal counterpart will lead to the emergence of certain thermal properties, concerning the specific observable under scrutiny. Building our intuition on that, we propose a new notion of thermal equilibrium specific for a given observable, experimentally verifiable and which relies on a figure of merit that is not the von Neumann entropy. A good choice for such a figure of merit comes from quantum information theory and it is the Shannon entropy $H_{A}$ of the eigenvalues probability distribution $\left\{p(a_j) \right\}$ of an observable $A$. The operational interpretation of $H_{A}$\cite{Nielsen2010} matches our needs since it addresses the issue of the knowledge of an observable and it provides a measure for the entropy of its probability distribution.\\

Throughout the chapter we will work under the assumption that the Hilbert space of the system has finite dimension and we will refer to the cases in which the Hamiltonian of the system has no local conserved quantities, even though it is possible to address situations where there are several conserved quantities, like integrable quantum systems. Moreover, we will assume that the observable has a pure-point spectrum with the following spectral decomposition $A = \sum_j a_j A_j$, where $A_j$ is the projector onto the eigenspace defined by the eigenvalue $a_j$. $H_A$ is the entropy of its eigenvalues probability distribution $p(a_j) \equiv \mathrm{Tr} \left( \rho A_j\right)$
\begin{align}
&H_{A}[ \hat{\rho} ]  \equiv  -  \sum_{j \, : \, a_j \in \sigma_{A}} p(a_j) \log p(a_j)
\end{align}
where $\sigma_{A}$ is the spectrum of $A$.  We propose to define the notion of thermal equilibrium, for an arbitrary but fixed observable $A$, via a characterisation of the probability distribution of its eigenvalues. We will say that:\\

 $A$ {\it is at  thermal equilibrium when its eigenvalues probability distribution $p(a_j)$ maximises the Shannon entropy $H_{A}$, under arbitrary perturbations with conserved energy}. We call an observable with such a probability distribution: {\it thermal observable}.\\

It is important to note that such a principle characterises only the probability distribution at equilibrium and it does not uniquely identify an equilibrium state. Given an equilibrium distribution $p(a_j) = p^{\mathrm{eq}}(a_j)$, there will be several quantum states $\rho$ which give the same probability distribution for the eigenvalues $a_j$. In this sense this is weaker a notion of equilibrium, with respect to the ordinary one in which the whole state of the system is required to be an equilibrium state.\\

In the remaining part of the chapter we study the main consequences of the proposed notion of thermal equilbrium:  its physical meaning and the relation with Gibbs ensembles. The results provide evidence that the proposed notion of equilibrium is able to address the emergence of thermalisation. This is our first result. \\

Furthermore, we study the proposed notion of equilibrium in a closed quantum system and prove that there is a large class of bases of the Hilbert space which always exhibit thermal behaviour and we give an algorithm to explicitly construct them. We dub them \textit{Hamiltonian Unbiased Bases} (HUBs) and, accordingly, we call an observable which is diagonal in one of these bases \textit{Hamiltonian Unbiased Observable} (HUO). The existence and precise characterisation of observables which always thermalise in a closed quantum system is our second result. Furthermore, we investigate the relation between the notion of thermal observable and the ETH\cite{Berry1977,Shnirelman1974,Rigol2008,Deutsch1991a,Srednicki,Reimann2015a,Rigol2012,Srednicki1996,Srednicki1999,Polkovnikov2011b,Rigol2009}. We find an intimate connection between the concept of HUOs and ETH: the reason why these observables thermalise is precisely because they satisfy the ETH. Hence, with the existence and characterisation of the HUOs we are providing a genuine new prediction about which observables satisfies ETH, for any given Hamiltonian. The existence of this relation between HUOs and ETH is a highly non trivial feature and the fact that we can use it to predict which observables will satisfy ETH is our third result. Indeed, as far as we know, there is no way to predict which observables should satisfy ETH in a generic Hamiltonian system. It is an hypothesis which has to be checked by inspection, case by case.\\

\section{Information-theoretic equilibrium}

The request that the equilibrium distribution must be a maximum for $H_{A}$ can be phrased as a constrained optimisation problem, which can be solved using the Lagrange multiplier technique. The details of the treatment can be found in Appendix~\ref{App:EE}. Two sets of equilibrium equations are obtained and we show in which sense they account for the emergence of thermodynamic behaviour in the observable $A$. We assume that the only knowledge that we have on the system is the normalisation of the state and the mean value of the energy $\MV{\Ham}=E_0$, where $\Ham$ is the Hamiltonian of the system. We also make the following assumptions on the Hamiltonian: that it has spectral decomposition $\Ham = \sum_{\alpha} E_\alpha \mathbb{E}_{\alpha \alpha}$, where $\mathbb{E}_{\alpha \alpha} \equiv \Ket{E_{\alpha}} \Bra{E_{\alpha}}$ and that its eigenvectors $\left\{ \Ket{E_{\alpha}}\right\}$ provide a full basis of the Hilbert space. We call $\Ket{\psi_n}$ the eigenstates of the density operator, $\rho_n$ are the respective projectors and $q_n$ its eigenvalues. To  describe the state of the system we use the following convenient basis: $\left\{ \Ket{j,s}\right\}$ in which the first index $j$ runs over different eigenvalues $a_j$ of $A$ and the second index $s$ accounts for the fact that there might be degeneracies and it labels the states within each subspace defined by $a_j$. We also make use of the projectors $\Pi_{js} \equiv \Ket{j,s} \Bra{j,s}$. Furthermore, we are going to need the quantity $\mathcal{E}_n(j,s)\equiv \Bra{\psi_n} \Pi_{j,s} \Ham \Ket{\psi_n}$. Using the overlaps $D^{(n)}_{j,s} \equiv \left\langle j,s|\psi_n \right\rangle$ the expression of the constraints:
\begin{subequations}
\begin{align}
&\mathcal{C}_N \equiv \mathrm{Tr}(\rho) - 1 =  \sum_{n;j,s} q_n \left\vert D_{js}^{(n)}\right\vert^2 -1 \\
&\mathcal{C}_E \equiv \mathrm{Tr}\left(\rho \Ham \right) - E_0 = \sum_{n;j,s} q_n \mathcal{E}_n(j,s) - E_0 \,\,\,.
\end{align}
\end{subequations}
It is important to remark here that the knowledge of $E_0$ is always subject to uncertainty, which we call $\delta E$. In any physically reasonable situation there will be two conceptually different sources of uncertainty: a purely experimental one and an exquisitely quantum one. In this sense, all the states $\rho : \mathrm{Tr} \rho \Ham  \in I_0 \equiv \left[ E_0 - \frac{\delta E}{2}, E_0 + \frac{\delta E}{2}\right]$ will be solutions of the constraint equation, up to uncertainty $\delta E$. Even though we do not make any assumption on such a quantity, we note that $\delta E$ is usually assumed to be small on a macroscopic scale but still big enough to host a large number of eigenvalues of the Hamiltonian. \\

Exploiting Lagrange's multipliers technique we obtain four sets of equations. Derivatives with respect to the multipliers 
enforce the validity of the constraints while the derivatives with respect to the field variables give two independent set of equations. 
Using two particular linear combinations of them we obtain the following equilibrium equations (EEs):
\begin{subequations}
\begin{align}
&\overline{\mathcal{E}}_n(j,s) = \mathcal{E}_n(j,s) \label{eq:EE1} \\
&\hspace{-0.5cm} - |D_{js}^{(n)}|^2 \log \left(\sum_{m,s'} q_m  |D_{js'}^{(m)}|^2\right) =  \left( 1-\lambda_N \right)|D_{js}^{(n)}|^2   -  \lambda_E \mathcal{E}_n(j,s) \,\,\, , \label{eq:EE2}
\end{align}
\end{subequations}
where $\lambda_E$ and $\lambda_N$ are the Lagrange multipliers associated to $\mathcal{C}_E$ and $\mathcal{C}_N$, respectively.\\

Our first result is the understanding of the physical consequences of these equations. Together, they account for the emergence of thermal behaviour in $A$. Eq.\eqref{eq:EE1} gives the stability
under the flow generated by the Hamiltonian and it implies that the equilibrium distribution $p_{\mathrm{eq}}(a_j)$ has to be be invariant under the unitary dynamics. Indeed, writing the time evolution equation for $p_t(a_j)$, using the von Neumann equation, we obtain
\begin{align}
&i\hbar \frac{\partial }{\partial t} p_t(a_j) =\sum_{n,s} q_n \left(\overline{\mathcal{E}}_n(j,s)- \mathcal{E}_n(j,s)\right) \stackrel{eq}{=} 0 \,\,\, ,
\end{align}
where the superscript ``$eq$'' stands for ``at the equilibrium'', i.e. after plugging in the EEs. Eq.\eqref{eq:EE2} fixes the functional form of the distribution with respect to the Hamiltonian and to the Lagrange multipliers. It can be shown that it is responsible for the emergence of a thermodynamical relation between $H_A$ and the mean value of the energy. Integrating \eqref{eq:EE2} over the whole spectrum of $A$ we obtain
\begin{align}
&H_{A}^{eq} = (1-\lambda_N) - \lambda_E \, E_0 .
\end{align}
There is a linear contribution in the mean value of the energy, plus a ``zero-point'' term $H^{(0)}_{A}=(1-\lambda_N)$. This relation
brings to light the thermodynamical meaning of Shannon entropy $H_{A}^{eq}$ since such linear dependence on the average 
energy is a distinguishing feature of thermodynamic equilibrium. We note a strong analogy with the properties of von Neumann entropy, 
which acquires thermodynamical relevance once the state of the system is the Gibbs state $\rho_G = \frac{e^{-\beta \Ham}}{\mathcal{Z}}$, where 
$\mathcal{Z} = \mathrm{Tr} e^{-\beta \Ham}$ is the partition function
\begin{align}
& S_{\mathrm{vN}}(\rho_G) = \log \mathcal{Z} + \beta E_0 \qquad \longleftrightarrow \qquad H_{A}^{eq} = (1-\lambda_N) - \lambda_E \, E_0\,\,. \label{eq:von}
\end{align}

\section{Relation with statistical mechanics}

We now come to the issue of the relation with the ordinary notion of thermal equilibrium and, therefore, with the Gibbs ensemble. With respect to ours, this is a much more stringent condition because it is a characterisation of the full state of the system. For such a reason, in order to understand if our notion of thermal observable is compatible with the standard characterisation of thermal equilibrium, we need to find the condition under which our criterion gives a complete characterisation of the state of the system. Considering that we are using a maximum-entropy principle, a plausible auxiliary condition is the maximisation of the smallest among all the Shannon entropies. Such a request fully characterises the state of the system because the lowest Shannon entropy of the state is unique and it is the one in which the density matrix of the state is diagonal. Indeed, using the Schur-concavity of the Shannon entropy and the Schur-Horn theorem on eigenvalues of an Hermitian matrix \cite{Nielsen2010} it is easy to prove that
\begin{align}
& \min_{A \in \mathcal{A}} H_{A} (\rho) = S_{\mathrm{vN}} (\rho) \equiv - \mathrm{Tr} \left( \rho \log \rho \right) \,\,\, ,
\end{align}
where $\mathcal{A}$ is the algebra of the observables\cite{Strocchi2008a}. Our minimalist request to maximise the lowest Shannon entropy is translated in the condition of maximising von Neumann entropy. The state which maximises von Neumann's entropy, with the constraint of fixed total energy, is precisely the Gibbs state. It is therefore clear that our proposal constitutes an observable-wise generalisation of the ordinary notion of thermal equilibrium.\\

In the next section we will unravel the existence of a profound relation between the notion of thermal observable and the Eigenstate Thermalization Hypothesis.

\section{Isolated Quantum Systems - Relation to ETH}

ETH, in its original formulation by Srednicki \cite{Rigol2008,Deutsch1991a,Srednicki,Reimann2015a,Rigol2012,Srednicki1996,Srednicki1999,Polkovnikov2011b}, is an ansatz on the matrix elements of an observable when it is written in the Hamiltonian eigenbasis $\left\{ \Ket{E_{\alpha}}\right\}$:
\begin{align}
&A_{\alpha \beta}^{\mathrm{ETH}} \approx f_O^{(1)}(\overline{E}) \delta_{\alpha \beta} + e^{-\frac{S(\overline{E})}{2}}f^{(2)}_{A}(\overline{E},\omega) R_{\alpha \beta} \,\,\, , \label{eq:ETH} 
\end{align}
where $\overline{E} \equiv \frac{E_{\alpha} + E_{\beta}}{2}$, $\omega \equiv E_{\alpha} - E_{\beta}$ while $f_O^{(1)}$ and $f_O^{(2)}$ are smooth functions of their arguments. $S(\overline{E})$ is the thermodynamic entropy at energy $\overline{E}$ defined as $e^{S(\overline{E})} \equiv E \sum_{\alpha} \delta_{\epsilon}(\overline{E}- E_{\alpha})$, where $\delta_{\epsilon}$ is a smeared version of the Dirac delta distribution. $R_{\alpha \beta}$ is a complex random variable with zero mean and unit variance. Furthermore, it is important to remember that ETH by itself does not guarantee thermalisation, we need to impose that the initial state has a small dispersion in the energy eigenbasis\cite{Reimann2015a}. When this is true one says that $A^{\mathrm{ETH}}$ thermalises in the sense that its dynamically evolving expectation value is close to the microcanonical expectation value defined by the conditions on the average energy
\begin{align}
&\Bra{\psi(t)} A^{\mathrm{ETH}}\Ket{\psi(t)} \simeq \mathrm{Tr} \left( A^{\mathrm{ETH}} \rho_{mc} \right)\label{eq:AverTherm}
\end{align}
 where $\rho_{mc} = \rho_{mc}(E_0, \delta E)$ is the microcanonical state defined by the condition on the average value of the energy $\mathrm{Tr} \rho \Ham  \in I_0$.

The reason why Eq.(\ref{eq:AverTherm}) becomes true if ETH holds is purely technical, in the following sense. The small dispersion assumption on the initial state is telling us that when we expand the initial state in the energy eigenbasis, we will only have non-zero contributions from a certain energy window:
\begin{align}
&\Ket{\psi_0} = \sum_{\alpha} c_\alpha \Ket{E_\alpha} \simeq \sum_{\alpha : E_{\alpha} \in I_0} c_{\alpha} \Ket{E_{\alpha}} \label{eq:pure} \,\, .
\end{align}
If $\delta E$ is small enough, the smooth function $f_{O}^{(1)}$ will not vary too much in such an energy window. Once ETH has been checked by inspection, one can conclude that there is thermalisation of the average value in the sense of Eq.(\ref{eq:AverTherm}). It is important to note that, from the conceptual point of view, the ``small energy-dispersion assumption'' is a key element of the emergence of thermal equilibrium but it has nothing to do with ETH which, by itself, is only the ansatz in Eq.(\ref{eq:ETH}). Nevertheless, this assumption is expected to hold in real experiments because, when working with a many-body quantum system, it is almost impossible to prepare coherent superpositions of states with macroscopically different energies \cite{Reimann,Reimann2010}. \\

The main intuition behind the emergence of ETH for an observable $A$ (``Berry conjecture'' \cite{Berry1977,Shnirelman1974}) is that for a many-body quantum system the Hamiltonian eigenstates should be so complicated that, when they are written in the eigenbasis of $A$, their coefficients can be effectively described in term of randomly chosen coefficients and this would lead to the emergence of thermalisation. So far, nothing has been said about which observables we should look at, however it is generally expected that such a statement should be true for local observables. We explicitly show that such mechanism is revealed by the maximisation of Shannon entropy and this allows us to give a prediction about which observables should satisfy ETH. In the next section we will also clarify why we expect local observables to satisfy ETH and we will also provide some general conditions on the Hamiltonian whose validity guarantees that local observables will satisfy ETH.\\

Before we continue we need to define the following short-hand notation: $\sum_{\alpha}' \equiv \sum_{\alpha \in I_0}$, where $\alpha \in I_0$ stands for $\alpha : E_\alpha \in I_0$. 

\subsection{Hamiltonian Unbiased Observables and ETH}\label{Sub:HUOandETH}

To study the relation with ETH we need to change our perspective. The point of view that we are adopting is the following. One of the key-points behind the ETH is that, in many real cases, the expectation values computed onto the Hamiltonian eigenvectors can be very close to the thermal expectation values. Moreover, when one wants to argue that thermalisation in a closed quantum system arises because of ETH, a main assumption is that the initial pure state of the system $\Ket{\psi_0}$ has a very small energy uncertainty $\Delta E$, with respect to the average energy $E_0$: $\frac{\Delta E}{E_0} \ll 1$. Following these two insights, we take the extreme limit in which $\Delta E = 0$. With such a choice, we are left with an Hamiltonian eigenstate and a constraint equation given by $\mathrm{Tr} \rho \Ham = E_\alpha$. We note that Eq.(\ref{eq:EE1}) is trivially satisfied for an Hamiltonian eigenstate. Hence we assume that $\Ket{\psi} = \Ket{E_\alpha} : E_{\alpha} \in I_0$ and use the solvability of Eq.(\ref{eq:EE2}) as a criterion to look for observables which can be thermal. While this is a very specific choice, we will show that it is enough to unravel interesting features regarding the ETH. With this assumption, the second equilibrium equation becomes
\begin{align}
&\hspace{-0.5cm} - |D_{j}|^2 \log  |D_{j}|^2  =  \left( 1-\lambda_N \right)|D_{j}|^2    -  \lambda_E |D_{j}|^2 E_{\alpha}
\end{align}
Dividing by $|D_{j}|^2$ on each side we obtain a right-hand side which does not depend on the label $j$.
Hence, keeping in mind that $x\log x \to 0$ for $x \to 0$, the most general solution of this equation is a 
constant distribution with support on some subset ($\mathcal{I}_{\alpha} \subset \sigma_{A}$ 
depending on $E_{\alpha}$) of the spectrum and zero on its complementary:
\begin{equation}
p(a_j)  = \left\{ \begin{array}{ll}
  \frac{1}{d_{\alpha}} & \forall \,\, a_j \in \mathcal{I}_{\alpha} \subset \sigma_{A} \\
 & \\
 0 & \forall \,\, a_j \in \sigma_{A} / \mathcal{I}_{\alpha}  \label{eq:const} 
  \end{array} \right. 
\end{equation}
where $d_{\alpha} = \mathrm{dim} \, \mathcal{I}_{\alpha}$ is the number of orthogonal states on which the distribution has non-zero value. The distribution of eigenvalues of $A$ given by Eq.(\ref{eq:const}) fully agrees with the prediction from the microcanonical ensemble $\rho_{mc}^{(\alpha)}$, defined by the condition in Eq.(\ref{eq:const}). This is true, in particular, for the expectation value
\begin{align}
& \MV{A} = \frac{1}{d_{\alpha}} \sum_{j \, : a_j \in \mathcal{I}_{\alpha}} a_j = \mathrm{Tr} \left(A \, \rho_{mc}^{(\alpha)}\right)
\end{align}
It has to be understood here that Eq.(\ref{eq:const}) is a highly non-trivial condition on $A$, which is not going to be fulfilled by every observable since it imposes a very specific relation between its eigenstates and the energy eigenvectors.\\

Within the context of quantum information theory, and especially in quantum cryptography, a similar relation has already been studied. Two bases of a $\mathcal{D}-$dimensional Hilbert space ($\mathcal{B}_v \equiv \left\{ \Ket{v_i} \right\}$ and $\mathcal{B}_w \equiv \left\{ \Ket{w_j} \right\}$) are called {\it mutually unbiased bases} (MUB) \cite{Wehner2010,Bengtsson2007,Bandyopadhyay2002,Durt2010,Lawrence2002} when 
\begin{align}
&|\left\langle v_i | w_j \right\rangle|^2 = \frac{1}{\mathcal{D}} \qquad \forall \,\, i,j=1,\ldots, \mathcal{D}
\end{align}
Such a concept is a generalisation, expressed in term of vector bases, of canonically conjugated operators. In other words, each vector of $\mathcal{B}_v$ is completely delocalised in the basis $\mathcal{B}_w$ and viceversa. Here we mention the result about MUBs which matters most for our purposes: given the $2^N$-dimensional Hilbert space of $N$ qubits, there are $2^N+1$ MUBs and we have an algorithm to explicitly find all of them\cite{Lawrence2002}. Therefore, if we are in an energy eigenstate, an observable unbiased with respect to the Hamiltonian basis will always exhibit a microcanonical distribution. This is also true if our state is not exactly an energy eigenstate, it is enough to have a state that has a sufficiently narrowed energy distribution. We now provide a simple argument to prove such a statement. We also note that such condition is closely related to the small dispersion condition briefly discussed before and that it is necessary to guarantee thermalisation, according to the ETH.\\ 

By assumption, the pure state in Eq.(\ref{eq:pure}) has an energy distribution $\left\{p_{\alpha} \equiv |c_{\alpha}|^2 \right\}_{\alpha = 1}^{\mathcal{D}}$ with small dispersion. This implies that its Shannon entropy $H_{\Ham}$ has a small value, because the profile of the distribution will be peaked around a certain value. For such a reason $H_{\Ham}(\Ket{\psi_0})$ will be much smaller than its maximum value
\begin{align}
&H_{\Ham}(\Ket{\psi_0}) \ll \log \mathcal{D} \label{eq:SmallH}
\end{align}
Moreover, it can be proven\cite{Wehner2010} that between any pair of MUB, like the Hamiltonian eigenbasis and an HUB, there exists the following entropic uncertainty relation, involving their Shannon entropies:
\begin{align}
&H_{\Ham} (\Ket{\phi}) + H_{HUB}(\Ket{\phi}) \geq \log \mathcal{D} \qquad \forall \Ket{\phi} \in \mathcal{H} \label{eq:EntUnc}
\end{align}
Putting together Eq.(\ref{eq:SmallH}) and Eq.(\ref{eq:EntUnc}) we obtain that, for all the states with a small energy dispersion, the Shannon entropy of an HUB will always have a value close to the maximum:
\begin{align}
&H_{HUB}(\Ket{\psi_0})  \simeq \log \mathcal{D}
\end{align}
This in turn implies that the distribution of all the HUO will be approximately the same as the one computed on the microcanonical state. For such a reason, we now study the properties of a HUO:
\begin{align}
&A = \sum_{j,s} a_j \Ket{j,s} \Bra{j,s} &&\qquad \braket{j,s}{E_{\alpha}} =  \frac{e^{i\theta_{js,\alpha}}}{\sqrt{\mathcal{D}}} \label{eq:Const}
\end{align}

\subsubsection*{HUOs and ETH: diagonal matrix elements}

In order to investigate the relation with ETH we need to study the matrix elements of a HUO in the energy basis:
\begin{align}
&A_{\alpha \beta}= \frac{1}{\mathcal{D}} \sum_{j,s} a_j e^{i \omega_{js}^{\alpha \beta}}  \,\,\, ,  \,\, \mathrm{with} \,\,\, \omega_{js}^{\alpha \beta} =  (\theta_{js,\beta}- \theta_{js,\alpha})\,\,. \label{eq:off}
\end{align}
It is straightforward to conclude that its diagonal matrix elements are constant in such a basis and therefore the so-called (Hamiltonian) Eigenstate Expectation values reproduce the 
microcanonical expectation values: 
\begin{align}
&A_{\alpha \alpha} = \frac{1}{\mathcal{D}} \sum_{j,s} a_j = \frac{1}{\mathcal{D}}\mathrm{Tr} \, A = \mathrm{Tr} \left(A \, \rho_{mc}\right) = \MV{A}_{mc} \label{eq:EE}
\end{align}
This is the first part of our third result and as it is, it can already be used to explain the emergence of thermalisation in closed quantum system, for some observables. In \cite{Reimann,Reimann2010} Reimann proved an important theorem about equilibration of closed quantum systems. He was able to show that under certain experimentally realistic conditions, the mean value of an observable is not much different from its value computed on the time-averaged density matrix, or Diagonal Ensemble (DE):
\begin{align}
&\rho_{DE} \equiv \sum_{\alpha} \left|c_{\alpha}\right|^2 \Ket{E_{\alpha}} \Bra{E_{\alpha}} \label{eq:DE}
\end{align}
where $c_{\alpha} \equiv \left\langle \psi_0 | E_{\alpha}\right\rangle$ and $\Ket{\psi_0}$ is the initial pure state of the isolated system. Roughly speaking, the two main assumptions made by Reimann are the following: first, that in any experimentally realistic condition the state of the system will occupy a huge number of energy eigenstates, even if the average energy is known up to a small experimental uncertainty; second, that the observable under study has a finite range of average values, due to the fact that we wish to measure it. For a clear and synthetic discussion on this topic we suggest \cite{Yukalov2011} and we send the reader to the original references \cite{Reimann,Reimann2010}. We note that the first assumption does not contradict the small energy dispersion assumption. Indeed, as argued by Reimann, in a many-body quantum system, even if the energy is known up to a macroscopically small scale $\delta E$, there will be a huge number of eigenstates within the range $E_{\alpha} \in I_0$. This is equivalent to having a high density of energy eigenstates. To conclude, given a HUB it is always possible to obtain a HUO which satisfies the finite-range assumption. We can therefore apply Reimann's theorem to HUOs.\\

It is important to note that Reimann's theorem explains equilibration around the DE but this does not necessarily entail thermalisation. The DE still retains information about the initial state while, on 
the contrary, thermalisation is defined (also) by the independence on the initial state. This is the point where our result is able to take a step forward and explain the emergence of thermal equilibrium in the HUOs. We can use Eq.(\ref{eq:EE}) to prove that all the HUOs exhibit complete independence from the initial conditions:
\begin{align}
& \mathrm{Tr} \left( A \rho_{DE}\right) = \sum_{\alpha} |\psi_{\alpha}^{0}|^2 A_{\alpha \alpha} = \mathrm{Tr} \left( A \rho_{mc} \right) =\MV{A}_{mc} 
\end{align}

\subsubsection*{HUOs and ETH: off-diagonal matrix elements}

In order to prove that a HUO satisfies ETH we need to study also its off-diagonal matrix elements. By using Eq.(\ref{eq:Const}), we can investigate how the phases $\omega_{js}^{\alpha \beta} \equiv \left( \theta_{js,\beta} - \theta_{js,\alpha} \right)$ are distributed. This can be done numerically exploiting the available algorithms to generate MUBs\cite{Lawrence2002,Durt2010}. The numerical investigation of the distribution of $\omega_{js}^{\alpha \beta}$ is reported in the supplementary material. Here we simply state the result: for a fixed value of the energy quantum numbers, the observed distributions of $\cos \omega_{js}^{\alpha \beta}$, $\sin \omega_{js}^{\alpha \beta}$ are well described by the assumption that $\omega_{js}^{\alpha \beta}$ are independent and randomly distributed in $\left[ -\pi,\pi \right]$, with a constant probability distribution.\\

There are different ways in which this result can be used. The general argument is the following and it reflects the spirit of ETH: in Eq.(\ref{eq:off}), the phases $\omega_{js}^{\alpha \beta}$ will have a randomising action on the eigenvalues $a_j$  and this will make the value of the off-diagonal matrix elements severely smaller than the value of the diagonal ones:
\begin{align}
&\mathrm{Re} A_{\alpha \beta}, \mathrm{Im} A_{\alpha \beta} \ll \frac{1}{\mathcal{D}}\mathrm{Tr} A =  A_{\alpha \alpha}
\end{align}
The randomness of the coefficients involved in the evaluation of the off-diagonal matrix elements has been recently proposed as the basic mechanism to explain the $\frac{1}{\sqrt{\mathcal{D}}}$ scaling behaviour which was observed to occur in some models \cite{Khatami2013,Konstantinidis2015,Beugeling2014,Beugeling2015,Khemani2014}. The maximisation of Shannon entropy is therefore giving us a recipe to find the observables for which this is true.\\

\subsection{HUOs and ETH: two important examples}

In order to understand how this works in practice we need to say something specific about the eigenvalues. We highlight two important cases in which our result is helpful: an observable which is highly degenerate and an observable whose eigenvalues distribution is not correlated with the phases $\omega_{js}^{\alpha \beta}$.\\

{\it Highly degenerate observable} - Assuming that $j=1,\ldots, n_A$ while $s=1,\ldots,d_j$, with $d_j \gg n_A$, the sum in Eq. (\ref{eq:off}) splits into $n_A$ of sums and each one of them is a sum of $d_j \gg 1$ identically distributed random variables. We can apply the central limit theorem to the real and imaginary part of Eq. (\ref{eq:off}) and obtain the following expression for the off-diagonal matrix elements 
\begin{align}
A^{\mathrm{HUO}}_{\alpha \beta} &\approx \sum_{j=1}^{n_A}  X_{\alpha \beta}^{(j)} & X_{\alpha \beta}^{(j)} &\sim \mathcal{N} \left[0,\left(\frac{\lambda_j \sqrt{d_j}}{D}\right)^2\right] \label{eq:HUO}\,,
\end{align}
where $X^{(j)}_{\alpha \beta} \sim \mathcal{N}[\mu,\sigma^2]$ means that $X^{(j)}_{\alpha \beta}$ is a complex random variable, normally distributed, with mean $\mu$ and variance $\sigma^2$.
Under the additional assumption that the $X^{(j)}_{\alpha \beta}$ are independent one finds that, because Eq.~\eqref{eq:HUO} is a finite sum of normally distributed random variables, we have 
\begin{equation}
A_{\alpha \beta}^{\mathrm{HUO}} \sim \mathcal{N} \left[ 0, \sigma^2_{n_A}\right] \qquad \mathrm{with} \qquad  \sigma_{n_A}^{2} \equiv \sum_{j=1}^{n_A} \left(\frac{\lambda_j \sqrt{d_j}}{D}\right)^2 = \frac{1}{D} \MV{\left(A^{\mathrm{HUO}}\right)^2}_{\mathrm{mc}}\,
\end{equation}
Eventually we get
\begin{align}
A^{\mathrm{HUO}}_{\alpha \beta} \approx \sqrt{\frac{1}{D} \MV{\left(A^{\mathrm{HUO}}\right)^2}_{\mathrm{mc}} }  \,\,\,\mathcal{R}_{\alpha \beta} \,. \label{eq:HUOandETH}
\end{align}
For a binary observable, i.e with two distinct eigenvalues ($\pm 1$), this means that for large $d_j$ 
\begin{align}
A^{\mathrm{HUO}}_{\alpha \beta} &\approx \frac{1}{\sqrt{D}} \mathcal{R}_{\alpha \beta} & \mathcal{R}_{\alpha \beta} &\sim \mathcal{N}[0,1] \,,
\end{align}
which means that $A^{\mathrm{HUO}}$ satisfies Hypothesis~\ref{hyp:originaleth}. The diagonal matrix elements reproduces the microcanonical expectation values and the off-diagonal matrix elements are well described by a random variable with zero mean and  $\sim \frac{1}{\sqrt{\mathcal{D}}}$ variance. This is precisely what we expect from a local observable, since its eigenvalues have a huge number of degeneracies, which grows exponentially with the size of the system, and it is in full agreement with the randomness conjecture made in \cite{Beugeling2015,Khemani2014}.\\

{\it Uncorrelated distribution} - If the eigenvalues distributions $\left\{ a_j \right\}_{j,s}$ and the phases $\left\{e^{i \omega_{js}^{\alpha \beta}}\right\}_{j,s}$ are not correlated Eq.(\ref{eq:off}) becomes 

\begin{align}
&A_{\alpha \beta}^{unc} \approx   \left(\sum_{j,s} \lambda_{j} \right) \left(\frac{1}{\mathcal{D}}\sum_{j,s} e^{i \omega_{js}^{\alpha \beta}}  \right) = \frac{\mathrm{Tr} A}{\mathcal{D}} \,  \delta_{\alpha \beta}
\end{align}

Where we used the fact that the two sequences are uncorrelated to approximate the sum as the product of two sums and in the second identity we used the fact that $\left\{\Ket{j,s}\right\}$ is a complete basis. We conclude that the off-diagonal matrix elements of such an observable are much smaller than the diagonal ones and therefore we can neglect them when we compute its dynamical expectation value:
\begin{align}
&\Bra{\psi(t)} A^{unc} \Ket{\psi(t)} = \sum_{\alpha,\beta} \psi_{\alpha}^{0} \bar{\psi}_{\beta}^{0} e^{- \frac{i}{\hbar} (E_{\alpha} - E_{\beta})t} A_{\alpha \beta}^{unc}  \approx \sum_{\alpha} |\psi_{\alpha}^{0}|^2 A_{\alpha \alpha}^{unc} = \MV{A^{unc}}_{mc}
\end{align}
 The assumption that the sequences $\left\{ a_j \right\}_{j,s}$ and $\left\{e^{i \omega_{js}^{\alpha \beta}}\right\}_{j,s}$ are not correlated will not be always fulfilled, nevertheless this is what we intuitively expect from the following argument. We know that $\omega_{js}^{\alpha \beta}$ can be described by a random variable with a constant probability distribution, this suggests that there is no correlation between $\left\{\omega_{js}^{\alpha \beta}\right\}_{j,s}$ and its labeling $\left\{ j,s\right\}_{j,s}$. On the contrary, the eigenvalues $a_j$ are highly correlated to the labels and therefore we do not expect them to be correlated to the $\left\{\omega_{js}^{\alpha \beta}\right\}_{j,s}$. \\

These results follows from the study of the equilibrium equations, under the assumption that the state is an Hamiltonian eigenstate $\Ket{\psi} = \Ket{E_{\alpha}}$ belonging to the energy shell $I_0$. It is straightforward to see that the same results hold when the state is the microcanonical state $\rho_{mc}(E_0,\delta E)$ involved in Eq.(\ref{eq:AverTherm}) and defined by the condition $\mathrm{Tr} \rho \Ham \in I_0$.

\section{Discussion}

We proposed a new notion of thermal equilibrium for an observable: we say that $A$ is a \emph{thermal observable} when its eigenvalues probability distribution maximises the Shannon entropy $H_{A}$, compatibly with the validity of some constrains. Setting up a constrained optimisation problem we derived two equilibrium equations and studied their physical implications. Eq.(\ref{eq:EE1}) enforces the stability of the distribution with respect to the dynamics generated by the Hamiltonian while Eq.(\ref{eq:EE2}) fixes the functional form of the distribution. Integrating the second equation we obtained a linear relation between Shannon entropy and the mean value of the energy, which is a strong indication that $H_{A}$ at equilibrium has some thermodynamic properties. The physical meaning of the two equilibrium equations provides evidence that the maximisation of Shannon entropy is able to address the emergence of thermal behaviour. We also studied the relation of the proposed notion of equilibrium with quantum statistical mechanics. We showed that it is possible to interpret the ordinary notion of thermal equilibrium through the maximisation of Shannon entropy. The request to maximise the lowest among all the possible Shannon entropies lead to Gibbs ensembles and therefore to the quantum statistical mechanics characterisation of thermal equilibrium.\\

Together, the physical meaning of the equilibrium equations and the proven relation between maximisation of Shannon entropy and Gibbs ensemble, provide strong evidence that the proposed notion of thermal observable is physically relevant for the purpose of investigating the concept of thermal equilibrium. This is our first result. \\

The second result is the relevance of the proposed notion of equilibrium to address the emergence of thermal behaviour in a closed quantum system and especially its relation with the ETH. Using maximisation of $H_{A}$ we were able to find a large class of observables which always thermalise and provide an algorithm to explicitly construct them. We call them Hamiltonian Unbiased Observables (HUOs). Using this result, we analysed the matrix elements of an HUO in the Hamiltonian eigenbasis to understand if they satisfy the ETH. The study of the diagonal matrix elements reveals that they always satisfy ETH, being constant in such basis. This has been used to prove thermalisation of the average value of HUOs, in connection with Reimann's theorem about equilibration of observables.\\

The study of the off-diagonal matrix elements of a HUO revealed that their value is exponentially suppressed in the dimension of the Hilbert space. Concretely, this has been proven for two classes of observables: highly degenerate and uncorrelated observables. This complete the proof that such HUOs satisfy the ETH. The relevance of this result for pure-states statistical mechanics is related to the two main objections usually raised against ETH: the lack of predictive power for what concern which observables satisfy ETH. The proposed notion of thermal equilibrium is revealing its predictive power since it gives us a way of finding observables which always satisfy ETH, in a closed quantum system.\\

We would like to conclude by putting this set of results about ETH in a more general perspective. ETH is one of the main paradigms to justify the applicability of statistical mechanics to closed many-body quantum systems. However, it is just a working hypothesis, it is not derived from a conceptually clear theoretical framework. For such a reason, one of the major open issues is its lack of predictability. Despite that, there has been a huge effort to investigate whether the ETH can be invoked to explain thermalisation in concrete Hamiltonian models \cite{Khatami2013,Konstantinidis2015,Beugeling2014,Beugeling2015,Khemani2014,Steinigeweg2013,Sorg2014,Zangara2013,Steinigeweg2013a,Kim2014,Muller2015,Mondaini2016,Magan2016,Khodja2015,Ikeda2013} and its use is nowadays ubiquitous. Therefore, we think it is very important to put the ETH under a conceptually clear framework. In this sense, the relevance of our work resides in the fact that we obtain the ETH ansatz as a prediction, by using a natural starting point, the maximisation of Shannon entropy. Furthermore, using the proposed notion of thermal equilibrium is already giving concrete benefits. First, we now have a way of computing observables which satisfy the ETH and this prediction can be tested both numerically and experimentally; second, by studying the conditions under which local observables satisfy our equilibrium equations we are able to give two conditions which are necessary to guarantee the validity of ETH for all local observables. We believe that our investigation suggests that maximisation of $H_{A}$ is able to grasp the main intuition behind ``thermalisation according to ETH'' and it can be the physical principle behind the appearance of ETH.\\ 

Further investigation in this direction is certainly needed, but we would like to conclude by suggesting a way in which this new tool can be used to address the long-standing issue of the thermalisation times. From our investigation one can infer that Shannon entropy is a good figure of merit to study the dynamical onset of thermalisation in a closed quantum system. Within this picture, the time-scale at which thermalisation should occur for $A$ is therefore given by the time-scale at which $H_{A}$ reaches its maximum value. A prediction about the time-scale at which  $H_{A}$ is maximised will translate straightforwardly in a prediction about the thermalisation time.

\chapter{Eigenstate Thermalization for Degenerate Observables}\label{Ch4}



In the previous chapter we argued for a new, observable-wise, notion of thermal equilibrium. We have investigated how such notion of thermal equilibrium might be relevant to address thermal equilibrium in closed quantum systems. As a result, we found a large class of bases and observables which exhibit thermal properties: the HUBs and the HUOs. We have shown that they admit an algorithmic construction and, investigating their matrix elements in the Hamiltonian eigenbasis, we have argued that highly degenerate HUOs satisfy ETH. Unfortunately this still leaves open the question of when concrete, physically relevant, observables satisfy the HUO condition. In this chapter we make progress in this direction: we present a theorem which can be used as a tool to investigate the emergence of the HUO condition. Building on the connection between HUOs and ETH, we aim at elucidating the emergence of ETH for observables that can realistically be measured due to their high degeneracy, such as local, extensive or macroscopic observables. We bisect this problem into two parts, a condition on the relative overlaps and one on the relative phases between the eigenbases of the observable and Hamiltonian. We show that the relative overlaps are unbiased for highly degenerate observables and demonstrate that unless relative phases conspire to cumulative effects, this makes such observables verify ETH. Through this we elucidate potential pathways towards proofs of thermalization. This chapter is based on the work published by the author, in collaboration with Dr. C. Gogolin and Dr. M. Huber, in Ref. \cite{Anza2018}.\\

\section{Physical observables}

Another issue left open by the definitions of the ETH given in Chapter \ref{Ch2} is the identification of physical observables for which ETH is supposed to hold.
In this work we show that highly degenerate observables are good candidates.
Those are natural in at least three scenarios:
First, local observables only have a small number of distinct eigenvalues, as they act non-trivially only on a low dimensional space, and each such level is exponentially degenerate in the size of the system on which they do not act.
Second, averages of local observables, like for example the total magnetization, are, for combinatorial reasons, highly degenerate around the center of their spectrum.
Third, \emph{macro observables} as introduced by von~Neumann \cite{Neumann1929,VonNeumann2010} and studied in \cite{Goldstein2006,Goldstein2010,Goldstein2010a} that are degenerate through the notion of macroscopicity.
Here the idea is that on macroscopically large systems one can only ever measure a rather small number of observables and these observables can take only a number of values that is much smaller than the enormous dimension of the Hilbert space and they commute either exactly or are very close to commuting observables. An example are the classical position and momentum of a macroscopic system.
While they are ultimately a coarse grained version of the sum of the microscopic positions and momenta of all the constituents, they can both be measured without disturbing the other in any noticeable way.
Such classical observables hence partition, in a natural way, the Hilbert space of a quantum system in a direct sum of subspaces, each corresponding to a vector of assignments of outcomes for all the macro observables.
Even by measuring all the available macro observables one can only identify which subspace a quantum system is in, but never learn its precise quantum state.
Dynamically, to get the impression that a system equilibrates or thermalizes, it is hence sufficient that the overlap of the true quantum state with each of the subspaces from the partition is roughly constant in time and the average agrees with the suitable thermodynamical prediction. 
One would thus expect ETH to hold for such observables.
As in any realistic situation, the number of observables times the maximum number of outcomes per observable (and hence the number of different subspaces) is vastly smaller than the dimension of the Hilbert space, one is again dealing with highly degenerate observables.

\subsection{Hamiltonian Unbiased Observables}
Before we proceed with the main result of the chapter, we quickly summarize the conceptual results of the previous one, originally derived by the author in \cite{Anza2017}. Suppose $A\coloneqq\sum_i a_i A_i$ is an observable with eigenvalues $a_i$ and respective projectors $A_i$. We say that $A$ is a \textit{thermal observable} with respect to the state $\rho$ if its measurement statistics $p(a_i) \coloneqq \Tr\left( \rho\, A_i\right)$ maximizes the Shannon entropy $S_A \coloneqq - \sum_i p(a_i) \log p(a_i)$ under two constraints: normalization of the state $\Tr(\rho) = 1$ and fixed average energy $\Tr\left(\rho \Ham \right)$.

In \cite{Anza2017} it was proven that this is a generalization of the standard notion of thermal equilibrium, in the following sense: What we usually mean by thermal equilibrium is that the state of the system $\rho$ is  close to the Gibbs state $\rho_G$, in the sense given by some distance defined on the convex set of density matrices. A well-known way to characterize $\rho_G$ is via the constrained maximization of von Neumann entropy $S_{\mathrm{vN}}\coloneqq-\Tr( \rho\, \log \rho)$.
Now, for any state $\rho$, the minimum Shannon entropy $S_A$ (among all the observables $A$) is the von Neumann entropy
\begin{align}
\min_{A} S_A = S_{\mathrm{vN}}\,.
\end{align}
Therefore, the Gibbs ensemble is the state that maximizes the lowest among all the Shannon entropies $S_A$.
Hence the maximization of the Shannon entropy $S_A$ is an observable dependent generalization of the ordinary notion of thermal equilibrium. 

One can use the Lagrange multiplier technique to solve the constrained optimization problem and two equilibrium equations emerge. They implicitly define the equilibrium distribution $p_{\mathrm{eq}}(a_i)$ as their solution. Using such equations to investigate the emergence of thermal observables in a closed quantum system, it can be proven that for any given Hamiltonian there is a huge amount of observables that satisfy ETH: the Hamiltonian Unbiased Observables (HUO).

The name originates from the following notion:
A set of normalized vectors $\{\Ket{u_j}\}_j$ is mutually unbiased with respect to another set of vectors $\{\ket{v_k}\}_k$ if the inner product between any pair satisfies |$\braket{u_i}{v_k}| = 1/\sqrt{D}$, where $D$ is the dimension of the Hilbert space. A basis is called Hamiltonian Unbiased Basis (HUB) if it is unbiased with respect to the Hamiltonian basis.
Accordingly, a HUO is an observable which is diagonal in a HUB.
The concept of mutually unbiased basis (MUBs) has been studied in depth in quantum information theory \cite{Wehner2010,Bengtsson2007,Bandyopadhyay2002,Lawrence2002,Durt2010,Bengtsson2007a}.
For our purposes, the most important result is the following:
Given a Hilbert space $\Hilb = \otimes_{j=1}^N \Hilb_j$ with $\dim(\Hilb_j) = p$ for some prime number $p$ and some fixed orthonormal basis in $\Hilb$ there is a total of $p^N+1$ orthonormal bases, including the fixed basis, that are all pairwise mutually unbiased \cite{Bengtsson2007,Bandyopadhyay2002}.
Moreover, there is an algorithm to explicitly construct all of them \cite{Bengtsson2007,Bandyopadhyay2002}.
Applying this result to the Hamiltonian basis we conclude that there are $p^N$ HUBs. By studying the matrix elements of a HUO in the Hamiltonian basis, in subsection \ref{Sub:HUOandETH} we argued that sufficiently degenerate HUOs should satisfy ETH, under some mild additional conditions. Before we proceed, we would like to expand on the mechanism behind the emergence of ETH for a highly-degenerate HUO.
Eq.~\eqref{eq:HUOandETH} will hold whenever we can apply the central limit theorem within each subspace at fixed eigenvalue.
As was argued in \cite{Anza2017}, for a fixed pair of indices $(m,n)$, the phases $\gamma_{js}^{mn}$ behave as if they were pseudo-random variables and their number is exponentially large in the system size.
The labels $(j,s)$ provide a partition of these $D$ phases into $n_A$ groups, each made of $d_j$ elements.
In the overwhelming majority of cases each group of $d_j$ phases will exhibit the same statistical behavior as the whole set.
In this case, Eq.~\eqref{eq:HUOandETH} will behave as a sum of independent random variables and it will give the exponential decay of the off-diagonal matrix elements.
It may happen that the index $j$, labeling different eigenvalues, samples the phases in a biased way and prevents some of the off-diagonal matrix elements from being exponentially small.
This, even though it seems unlikely, is possible and it would induce a coherent dynamics on the observable which can prevent its thermalization.
This can happen for example in integrable quantum system for observables which are close to being conserved quantities.

The point can also be seen from the perspective of random matrix theory. Given the Hamiltonian eigenbasis, if we perform several random unitary transformations and study the distribution of the outcome basis, it can be shown that, in the overwhelming majority of cases we will end up with a basis that is almost HUB\cite{Durt2010,Bengtsson2007a}, up to corrections which are exponentially small in the system size.
Hence for large system sizes, if we pick a basis at random, most likely it will be almost a HUB \cite{Durt2010,Bengtsson2007a}.

We now present the main result of the paper: a theorem that can be used to study under which conditions highly degenerate observables are HUO.
\begin{theorem} \label{thm:mainresult}
  Let $\{\ket{\psi_m}\}_{m=1}^M \subset \Hilb$ be a set of orthonormal vectors in a Hilbert space $\Hilb$ of dimension $D$. 
  Let $A = \sum_{j=1}^{n_A} a_j \Pi_j$ be an operator on $\Hilb$ with $n_A \leq D$ distinct eigenvalues $a_j$ and corresponding eigen-projectors $\Pi_j$.
  Decompose $\Hilb = \bigoplus_{j=1}^{n_A} \Hilb_j$ into a direct sum such that each $\Hilb_j$ is the image of the corresponding $\Pi_j$ with dimension $D_j$.
  For each $j$ for which $D_j(D_j-1) \geq  M+1$ there exists an orthonormal basis $\{\ket{j,k}\}_{k=1}^{D_j} \subset \Hilb_j$ such that for all $k,m$
  \begin{equation} \label{eq:THEO}
    | \braket{\psi_m}{j,k} |^2 = \bra{\psi_m} \Pi_j \ket{\psi_m} / D_j \,.  
  \end{equation}
\end{theorem}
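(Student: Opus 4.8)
My plan is to first strip the statement down to a single eigenspace. Fix $j$ with $D_j(D_j-1)\ge M+1$, write $d\coloneqq D_j$, and set $\ket{\phi_m}\coloneqq\Pi_j\ket{\psi_m}\in\Hilb_j$ and $t_m\coloneqq\bra{\psi_m}\Pi_j\ket{\psi_m}/d=\norm{\phi_m}^2/d$. Since every candidate $\ket{j,k}$ lies in $\Hilb_j$ we have $\braket{\psi_m}{j,k}=\braket{\phi_m}{j,k}$, so \eqref{eq:THEO} is equivalent to producing an orthonormal basis $\{\ket{j,k}\}_{k=1}^d$ of $\Hilb_j$ with $|\braket{\phi_m}{j,k}|^2=t_m$ for all $m,k$; vectors with $\ket{\phi_m}=0$ impose nothing and may be dropped, lowering $M$. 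Two structural facts drive everything: (i) for \emph{any} orthonormal basis $\sum_k|\braket{\phi_m}{j,k}|^2=\norm{\phi_m}^2=d\,t_m$, so $t_m$ is exactly the per-vector average and the $d$ conditions at fixed $m$ carry only $d-1$ independent constraints; and (ii) orthonormality of the $\ket{\psi_m}$ gives the frame bound $\sum_m\ketbra{\phi_m}{\phi_m}=\Pi_j\bigl(\sum_m\ketbra{\psi_m}{\psi_m}\bigr)\Pi_j\le\Pi_j$, which balances the problem.

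Equivalently I seek a unitary on $\Hilb_j$ whose columns $\ket{u_k}$ simultaneously flatten the diagonals of all rank-one operators $P_m\coloneqq\ketbra{\phi_m}{\phi_m}$, i.e.\ $\bra{u_k}P_m\ket{u_k}=t_m$. For a single $m$ this is the Schur--Horn statement that a Hermitian operator can be rotated to constant diagonal $\Tr(P_m)/d$, always solvable; the real content is achieving it for all $m$ at once. I would build the basis constructively through two-level (Givens) rotations indexed by the $\binom d2$ coordinate planes: in $\mathrm{span}(\ket{u_p},\ket{u_k})$ a single rotation-plus-phase redistributes the paired diagonal entries of each $P_m$ while preserving their sum, and can be tuned to kill one scalar flatness defect. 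Counting knobs, the $\binom d2$ planes supply $d(d-1)$ real parameters against which I must cancel the $M$ defects together with the single global normalisation constant, and this is exactly where the hypothesis $d(d-1)\ge M+1$ is consumed.

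The serious obstacle is that the rotations are coupled: a plane rotation tuned to equalise $P_m$ also disturbs every other $P_{m'}$ sharing that plane, so a naive ``fix one operator at a time'' sweep can undo earlier progress, and a purely dimension-counting/transversality attempt fails outright (the flat target is over-determined and is reached only because of the frame structure, not generically). I would therefore replace the greedy sweep by a global argument. Consider the smooth map $F$ from the flag manifold $U(d)/\mathrm{T}^d$, of real dimension $d(d-1)$, sending a basis to the defect vector $(\bra{u_k}P_m\ket{u_k}-t_m)_{m,k}$, which by fact (i) lands in the balanced subspace where $\sum_k(\cdot)_{mk}=0$. The plan is to show $0$ is in the image either (a) by a degree/continuation argument, deforming $\{P_m\}$ along a path to a reference configuration that is manifestly flattenable (one adapted to a Weyl--Heisenberg/Fourier basis, where Gauss-sum identities give exact flatness) and using compactness of the flag manifold plus the balanced structure to prevent the solution set from escaping along the path; or (b) by minimising the nonnegative potential $V(U)=\sum_{m,k}(\bra{u_k}P_m\ket{u_k}-t_m)^2$ and proving, via the frame relation and the $d(d-1)$ available rotational directions, that no strictly positive critical minimum can exist once $d(d-1)\ge M+1$, forcing $\min V=0$.

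The hardest part, and where I expect to spend the most effort, is precisely this decoupling: certifying that the $d(d-1)$ rotational degrees of freedom annihilate all $M$ defects \emph{simultaneously} rather than one-by-one. Fact (i) guarantees each defect has vanishing mean, so each can individually be pushed through zero, while the frame bound (ii) is what ties the defects together compatibly; making that interplay quantitative enough to yield the clean threshold $D_j(D_j-1)\ge M+1$ is the crux. Once the single-$j$ statement is established, it holds verbatim for every eigenspace satisfying the dimension condition, and no further reassembly is needed.
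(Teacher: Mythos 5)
Your reduction to a single eigenspace and the reformulation as simultaneous diagonal flattening ($\bra{u_k}P_m\ket{u_k}=t_m$ for a unitary's columns) are correct and match the paper's starting point, but what follows is a plan, not a proof: the step you yourself call ``the crux'' --- certifying that the rotational degrees of freedom can annihilate all defects simultaneously --- is exactly the content of the theorem, and neither of your proposed routes (degree/continuation toward a Fourier-type reference configuration, or ruling out positive critical minima of $V$) is carried out, or even sketched in a checkable way. Worse, the counting in your own formulation signals that this nonlinear route cannot produce the stated threshold: by your fact (i), each $m$ contributes $d-1$ independent scalar constraints, so the defect map $F$ goes from the flag manifold $U(d)/\mathrm{T}^d$, of dimension $d(d-1)$, into a balanced space of dimension $M(d-1)$ --- not $M$. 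Whenever $M>d$ the image of $F$ has measure zero in the target, so no generic (transversality/degree) argument can reach $0$; and $d\geq M$ is vastly more restrictive than $D_j(D_j-1)\geq M+1$ (in the paper's main application, local observables with $M=2^N$ and $d=2^{N-k}$, the theorem's hypothesis holds for all $k\leq N/2$ while $d\geq M$ never holds). You correctly sense that non-generic structure must save the day, but you attribute it to the frame bound (ii), which in fact plays no role in the paper's proof.

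The structure you are missing is a linearization. The paper maps each pure state of $\Hilb_j$ to its generalized Bloch vector in $\mathbb{R}^{D_j^2-1}$, using $|\braket{\varphi}{\varphi'}|^2=\tfrac{1}{D_j}+\tfrac{D_j-1}{D_j}\,\vec{b}\cdot\vec{b}'$. Under this map your flatness condition $|\braket{\phi_m}{u_k}|^2=t_m$ becomes the \emph{linear} condition that $\vec{b}_{u_k}$ be orthogonal to the Bloch vector of the normalized projection $\Pi_j\ket{\psi_m}/\sqrt{p_{m,j}}$, for every $m$; and the requirement that $\{\ket{u_k}\}_k$ be an orthonormal basis of $\Hilb_j$ translates into $\sum_k\vec{b}_{u_k}=0$ together with $\vec{b}_{u_h}\cdot\vec{b}_{u_k}=-1/(D_j-1)$ for $h\neq k$, i.e.\ the $D_j$ Bloch vectors are the unit-normalized facet vectors of a regular $(D_j-1)$-simplex (this is where the Minkowski theorem enters). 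The whole problem therefore collapses to fitting one regular simplex inside the subspace orthogonal to the $M$ given Bloch vectors, whose dimension is at least $D_j^2-2-M$ in the worst case of linear independence; this is possible precisely when $D_j^2-2-M\geq D_j-1$, which is $D_j(D_j-1)\geq M+1$. The coupled-rotation difficulty you anticipate never arises, because in Bloch space all $M$ conditions are linear and are imposed once on a common subspace rather than separately on each basis vector; that linearity, not the frame relation, is what makes the clean threshold come out.
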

A detailed proof is provided in Appendix~\ref{App:Proof}.
If the condition $D_j(D_j-1) \geq M+1$ is fulfilled for all $j$, then the set of all $\{\ket{j,k}\}_{j,k}$ obviously is an orthonormal basis of $\Hilb$ and $A$ is diagonal in that basis.
So, as long as the degeneracies $D_j$ of $A$ are all high enough with respect to $M$,  $A$ has an eigenbasis whose overlaps with the states $\ket{\psi_m}$ are given exactly by the right-hand side of Eq.\eqref{eq:THEO}. 

A particularly relevant case is when $A$ is a local observable acting non-trivially only on some small subsystem $S$ of dimension $D_S$ of a larger $N$-partite spin system of dimension $D = d^N$, i.e., $A \coloneqq \sum_{j=1}^{D_S} a_j \ketbra{a_j}{a_j} \otimes \1_{\overline{S}}$ and $\{\ket{\psi_m}\}_{m=1}^M$ is taken to be an eigenbasis $\{\ket{E_m}\}_{m=1}^D$ of the Hamiltonian $H$ of the full system.
We summarize some non-essential further details in Appendix \ref{AppB}.
In this case the degeneracies are all at least $D_j \geq D/D_S = d^{N-|S|}$, so that the above results guarantees that for all observables on up to $|S| < N/2$ sites there exists a tensor product basis $\{\ket{a_j, k}\}_{j,k}$ for $\Hilb$ which diagonalizes $A$ and with the property that
\begin{equation}
  | \braket{E_m}{a_j,k} |^2 = \frac{1}{d^{N-|S|}} \, \bra{a_j} \Tr_{\overline{S}} \ketbra{E_m}{E_m} \, \ket{a_j} \,.
\end{equation}
For subsystems with support on a small part of the whole system $|S| \ll N-|S|$, it is well known that the reduced states of highly entangled states are (almost) maximally mixed \cite{Popescu2006}, i.e. proportional to the identity. Moreover, based on the data available in the literature \cite{Eisert2010,CastilhoAlcaraz,Palmai2014,Parker2017,Wong2013,Bhattacharya,Alba2009,Ares2014,Molter2014}, there is agreement on the fact that, away from integrability, the energy eigenstates in the bulk of the spectrum have a large amount of entanglement. Thus, if the eigenstates $\ket{E_m}$ are all highly entangled $ \Tr_{\overline{S}} \ketbra{E_m}{E_m} \approx \1_{S} / d^{|S|}$ and we have
\begin{equation}
  | \braket{E_m}{a_j,k} |^2 \approx 1/d^N \,.
\end{equation}

This way of arguing shows how entanglement in the energy basis can lead to the emergence of the ETH in a local observable.
While this result was expected for the diagonal part of ETH, we would like to stress that it is a non-trivial statement about the off-diagonal matrix elements. Since the magnitude of the off-diagonal matrix elements controls the magnitude of fluctuations around the equilibrium values, their suppression in increasing system size is of paramount importance for the emergence of thermal equilibrium.
If one assumes high-entanglement in the energy eigenstates, it is trivial to see that  $A_{mm} \approx \Tr{A}/D$.
Moreover, thanks to the HUO construction and Theorem~\ref{thm:mainresult} we can also make non-trivial statements (as Eq.~\eqref{eq:HUOandETH}) about the off-diagonal matrix elements.

The physical picture that emerges is the following: Entanglement in the energy eigenstates is the feature which makes a local observable satisfy the statement of the ETH. If the energy eigenstates are highly entangled in a certain energy window $I_0 = \left[E_{a} , E_{b} \right]$, as it is expected to happen in a non-integrable model, the ETH will be true for local observables, in the same energy window. 

We now turn our attention to the study of extensive observables and assume that we are interested in a certain energy window $\left[ E_a,E_b\right] $ which contains $M \leq D$ energy eigenstates.
The details of the computations can be found in Appendix \ref{AppB}.
The paradigmatic case that we study is the global magnetization $M_z \coloneqq \sum_{i=1}^{N} \sigma_i^z$.
Writing its spectral decomposition we have $M_z = \sum_{j=-N}^N j \Pi_j$, where the degeneracy $\Tr \Pi_j = D_j$ of each eigenvalue $j$ can be easily computed to be $D_j = {N \choose \frac{N-j}{2}}$.
Again, we call $\mathcal{H}_j \subset \mathcal{H}$ the image of the projector $\Pi_j$.
The inequality  $D_j(D_j -1) \geq M$ selects a subset $j\in [-j_{*}(M),j_{*}(M)]$ of spaces $\mathcal{H}_j$ for which the conditions of our theorem are satisfied.
Small $M$ will guarantee that the hypotheses of the theorem are satisfied in a larger set of subspaces $\mathcal{H}_j$.
If we are interested in the whole energy spectrum $M=D$, a rough estimation, supported by numerical calculations, shows that $j_{*}(D)$ scales linearly with system size: $j_{*}(D) \simeq 0.78 N$.
The physical intuition that we obtain is the following:
Subspaces with ``macroscopic magnetization'', i.e. around the edges of the spectrum of $M_z$, have very small degeneracy and the theorem does not yield anything meaningful for them.
However, in the bulk of the spectrum there is a large window $j\in [-j_{*}(D),j_{*}(D)]$ where the respective subspaces $\mathcal{H}_j$ meet the conditions for the applicability of the theorem.
Therefore $ \forall j \in \mathbb{Z} \cap [-j_{*}(D),j_{*}(D)] $ we have 
\begin{align}
 |\braket{E_m}{j,s}|^2 = \frac{\bra{E_m} \Pi_j \ket{E_m}}{D_j} \,.
\end{align}
If, for some physical reasons, one is not interested in the whole set energy spectrum but only in a small subset, the window $[-j_{*}(M),j_{*}(M)]$ will increase accordingly. Thanks to our theorem we can extract a physical criterion under which the global magnetization will satisfy ETH. Assuming that we can use Stirling's approximation, the  $M_z$ is a HUO iff 
\begin{equation}
  \bra{E_m} \Pi_j \ket{E_m} \approx 2^{-N H_2(p(j)|\!| p_{\mathrm{mix}})} \label{eq:ldt}\,,
\end{equation}
where $p(j) \coloneqq \left( \frac{1}{2} + \frac{j}{2N}, \frac{1}{2} - \frac{j}{2N}\right)$, $p_{\mathrm{mix}}\coloneqq p(0)$ and we used the binary relative entropy $H_2(p|\!|q) \coloneqq \sum_{k=1,2} p_k \log \frac{p_k}{q_k}$.
This relation has a natural interpretation in terms of large-deviation theory. Indeed, such a relation is a statement about the statistics induced by the energy eigenstates on the observable $M_z$. If such statistics satisfies large-deviation theory, as in Eq.~\eqref{eq:ldt}, the observable will satisfy ETH. A complete understanding of how this concretely happens goes beyond the purpose of the present work and it is left for future investigation.

We note that the hypothesis of the theorem do not hold for the whole spectrum of $M_z$. Moreover, the proven connection between HUOs and ETH relies on the applicability of the central limit theorem in the degeneracy space $\mathcal{H}_j$. Hence the picture that emerges is the following.
For extensive observables, ETH will hold if the statistics induced by the energy eigenstates satisfies a large deviation theory. If this is true, we do not expect it to hold through the whole spectrum but only in the subsectors with sufficiently high degeneracy.
Both statements fully agree with the intuition that, in the thermodynamic limit, macroscopically large values of an extensive sum of local observables should be highly unlikely. In the recent work by Biroli \emph{et al.} \cite{Biroli}, it was argued that in a chain of interacting harmonic oscillators, the measurement statistics of the average of the nearest-neighbour interactions, given by the diagonal ensemble, satisfies a large-deviation statistics. This allows for the presence of rare, non-thermal, eigenstates which can account for the absence of thermalization in some integrable systems. Our results goes along with such intuition. Indeed, if it is possible to show that a large-deviation bound emerges at the level of each energy eigenstate, for all of them, this would amount to a proof of ETH, as discussed before.

We now come to the last application of our theorem: the macro-observables originally proposed by von Neumann.
As for the two previous applications, more details can be found in the Appendix \ref{AppB}
As explained before, macro-observables induce a partition of the Hilbert space into subspaces in which such classical-like observables have all well defined eigenvalues. In this sense a \emph{macrostate} is an assignment of the eigenvalues of all these observables and the index $j$ runs over different macrostates.
By construction, each macrostate $j=1,\ldots, n$ corresponds to a subspace $\mathcal{H}_j$ of the whole Hilbert space which is highly degenerate and to which we can apply our theorem.
According to the result by von Neumann \cite{Neumann1929} and Goldstein et al. \cite{Goldstein2006} it can be proven that the following relation holds for a given partition, \emph{for most Hamiltonians, in the sense of the Haar measure}: $\bra{E_m} P_j \ket{E_m} = \frac{D_j}{D}$.
The $P_j$'s are the projectors onto the subspaces $\mathcal{H}_j$.
Our theorem tells us that there exists a basis $\left\{ \ket{j,s}\right\}$ which diagonalises all the macro-observables such that $\bra{E_m} P_j \ket{E_m} = D_j |\braket{E_m}{j,s}|^2 $.
Using it in synergy with the previously mentioned result we find: 
\begin{align}
&|\braket{E_m}{j,s}|^2 = \frac{1}{D}\,.
\end{align}
This means that for most Hamiltonians, those macro-observables have a common basis that is a HUB.
Given the huge degeneracy of the spaces $\mathcal{H}_j$ this in turn allows us to formulate the following statement: \emph{for most Hamiltonians, in the sense of Haar, the macro-observables are degenerate HUOs and therefore satisfy ETH \ref{hyp:originaleth}}.

\section{Conclusions}
The ETH captures the wide-spread and numerically very well corroborated intuition that the eigenstates of sufficiently complicated quantum many-body system have thermal properties.
Its importance stems from the fact that together with the results that constitute the framework of pure state quantum statistical mechanics, a proof of the ETH would yield a very general argument for the emergence of not just equilibration, but thermalization towards the prediction of equilibrium statistical mechanics from quantum mechanics alone.
Such a rigorous proof is, however, still missing, despite the progresses in recent years that have significantly improved our understanding of the ETH by means of proofs of related statements and counterexamples.
Here we contribute to this program by bisecting the problem of proving ETH in two sub-problems related to the relative phases and the the overlaps between the eigenstates of the Hamiltonian and an observable.
We argue that the ETH can fail because of the former only through conspiratorial correlations in the phases.
Our main result concerns the second half of the problem.
Here we prove a rigorous result that shows when highly degenerate observables satisfy this part of the ETH and become Hamiltonian unbiased observables.
We illustrate our results with three types of physical observables, local, extensive, and macroscopic observables and collect and compare different versions of the ETH.
Our approach allows us in particular to make statements about the off-diagonal elements that are prominent in the original version of the ETH.

\chapter{Thermal Observables in a Many-Body Localized System}\label{Ch5}


In nature, not all systems reach thermal equilibrium. Well-known examples are integrable systems and systems which exhibit Anderson localization (AL) or Many-Body localization (MBL). The latter are disordered models whose energy eigenstates, when expanded in a local basis, have a very narrow profile: they are localised. They constitute a counterexample to the ``local thermalization'' paradigm, summarised in Section \ref{sec:Ch2Typicality}, and therefore we should not trust statistical mechanics to predict their equilibrium behaviour. For this reason, such kind of systems offer a good platform to test the predictive power of the machinery developed so far. In this chapter, we study the emergence of thermal observables (see Chapter \ref{Ch3}) in systems with a localised phase. Moreover, building our intuition on the notion of HUO, we show how they can be exploited to unravel the dynamical phenomenology of a MBL system. The chapter is organised as follows. First, we give a quick summary of the phenomenology of localised systems. Then, we proceed with the study of thermal observables. We will briefly look at their equilibrium behaviour and exploit the results to propose an experimentally accessible way to quantify the logarithmic growth of entanglement in time. This chapter is based on the work done by the author, partially in collaboration with Dr. F. Pietracaprina and Prof. J. Goold. 


\section{Localised Systems}

The study of transport properties of quantum systems is a topic of paramount relevance in condensed matter physics. To such purpose, a crucial aspect is the presence of disorder, usually due to the existence of defects and irregularities within the material under study. In a celebrated work \cite{Anderson1958}, Anderson showed how the presence of strong disorder in a lattice was able to completely suppress transport, in a free-electrons model. More recently, Basko et. al. \cite{Basko2006} argued that such phenomenon was stable with respect to the addition of interactions between particles. Such seminal work effectively showed the existence of a new dynamical phase of matter which we now call \emph{Many-Body Localized} (MBL).  In the last ten years there has been a large amount of effort devoted to understand the defining properties and the phenomenology of MBL systems. Such efforts recently culminated in a series of interesting reviews \cite{Abanin2017,Alet2017a,Scardicchio2017,Nandkishore2015,Altman2015,Imbrie2017} describing the behaviour of quantum systems in the MBL phase. Now we introduce Anderson Localization and, in the next subsection, summarize the most important results about the phenomenology of MBL systems.

\subsection{Anderson Localization}

The core of the localization phenomenology can be described using the original model studied by Anderson in his seminal paper in 1958\cite{Anderson1958}:
\begin{equation}
\Ham_{\mathrm{AL}} = t \sum_{i=1}^L c_i^\dagger c_{i+1} + c_{i+1}^\dagger c_{i} + \sum_{i=1}^L U_i c_i^\dagger c_i \,\,. \label{eq:Anderson}
\end{equation}
Here $c^\dagger_i$ and $c_i$ are the creation and annihilation operators on the site $i$ and $U_i$ represents a random on-site potential where $U_i \in u([-W,W])$ and $u([-W,W])$ is the uniform distribution on the interval $[-W,W]$. For any value of the disorder $W$ all eigenstates $\Ket{E_\alpha^{\mathrm{AL}}}$ of $\Ham_{\mathrm{AL}}$ are localized, with exponentially suppressed coefficients $|\braket{x}{E_\alpha^{\mathrm{AL}}}| \sim e^{-|x-x_{\alpha}|/\xi_\alpha}$. The localization length $\xi_\alpha$ depends on the disorder strength $W$. Each state $\ket{E_\alpha^{\mathrm{AL}}}$ is localized around some value $x_\alpha$ and all many-particle energy eigenstates are products of single-particle eigenstates. Although this model provides the best-known example of single-particle localization, the phenomenon is more general and it can be also be observed in spin systems. The canonical model used for numerical simulation is the XX model with transverse field
\begin{align}
&\Ham_{XX} = J \sum_{i=1}^L  \sigma^x_i  \sigma^x_{i+1} + \sigma^y_i \sigma^y_{i+1} + \sum_{i=1}^L B_z^i \sigma^z_i \,\,, \qquad B_z^i \in u([-W,W]) \,\, .
\end{align}
It can be mapped into the Anderson free-electron models in Eq.(\ref{eq:Anderson}) by means of the Jordan-Wigner transform\cite{Jordan1928}. Thanks to presence of such high degree of localization it can be shown that the Schroedinger dynamics does not bring the system to thermal equilibrium. If we start with a state which has a localised configuration, the system will indefinitely remain very close to such configuration and we will not observe the spreading of the wave-packet. This can be seen as a consequence of the fact that eigenstates are exponentially localized and thus violate the ETH. This phenomenology has to be compared with the standard case of a particle hopping on a clean lattice, where an initially localised wave-packet will rapidly spread over the whole lattice. We now turn our attention to the interacting version of this phenomenon: Many-Body Localized systems.

\subsection{Many-Body Localization}



The Anderson model is a free-fermions model and the idea behind MBL is that the localization phenomenon should survive even in presence of interactions. The stability of the localised phase with respect to the addition of weak interactions was initially argued for with a perturbative treatment\cite{Altshuler1997a,Gornyi2005,Basko2006,Aleiner2010}. Since then, a large body of numerical investigations, mostly focused on fermionic and spin models, has revealed that localization properties can survive in the strongly interacting regime\cite{Kjall2014,Serbyn,Znidarica,DeLuca2013,Andraschko2014,Wei2018,Ponte2015,Bardarson2012}. Thanks to the marvelous improvement in techniques to manipulate and isolate quantum systems, there are nowadays experimental protocols which are able to realize systems in the MBL phase and measure some of their properties\cite{Wei2018,Smith2016,Choi2016,Schreiber2015,Kondov2015}. 

Many aspects of the MBL phenomenology have been (theoretically) explored so far, with the aim of giving a complete characterization of the phenomenon. Many diagnostic tools were used to address different properties of an MBL system. Thanks to such large body of literature we can list here a set of physical properties of the MBL phase which have been, more or less rigorously, understood. This list was compiled by J. Z. Imbrie, V. Ros and A. Scardicchio in Ref.\cite{Imbrie2017}, where more details were given. Here we briefly summarise the overall picture and send the interested reader to the original reference.
\begin{itemize}
\item[(1)] {\bf Absence of DC transport}. In the Anderson case it is possible to give rigorous arguments for the vanishing of the diffusion coefficient. Similar arguments can be given for the MBL phase. They mostly rely on the exponential decay of correlations of the local density operator on the many-body energy eigenstates. The absence of diffusion was confirmed numerically by studying the dynamical conductivity, the spin-spin or the density-density correlation functions in the infinite-time limit. 

\item[(2)] {\bf Anderson localization in configuration space}. Anderson localization can be seen as localization on an abstract graph where each site corresponds to the Fock states of the Hamiltonian. A similar picture is available in the MBL case, where each site is a common eigenstate of all the local operators $\sigma_i^z$. From the point of view of a single energy eigenstate this means that they are weak deformation of the non-interacting eigenstates of all the $\sigma_i^z$. This has been confirmed with numerous numerical investigations.  

\item[(3)] {\bf Area-law entanglement in highly excited states}. In the bulk of the spectrum, away from the edges, thermalising systems are expected to have a large amount of entanglement, growing extensively with the size of the system. This feature is numerically investigated with the use of the half-chain entropy: the von Neumann entropy of the reduced state of half of the system. Systems in the MBL phase exhibit a clear violation of this behaviour as their half-chain entanglement entropy has a scaling with system size $L$ which is sub-extensive: it satisfies an area-law. This behaviour is typical of the low-lying energy eigenstates of a gapped Hamiltonian but in the MBL phase it extends to the whole spectrum.

\item[(4)] {\bf Violation of Eigenstate Thermalization}. The localization properties of both the MBL and AL phases are structurally non-compatible with the ETH. This is mostly due to the lack of a large amount of entanglement in the energy eigenstates. In the previous chapter we argued that a large amount of entanglement is sufficient to guarantee that local observables satisfy the ETH. For this reason, the lack of an extensive amount of entanglement can be seen as evidence that, among all local observables, there is at least one which violates the ETH. We will come back to this point later.

\item[(5)] {\bf Absence of level repulsion in the energy spectrum}. In quantum chaos theory an important tool to diagnose the transition from chaotic (ergodic-like) to integrable behaviour is given by the probability distribution of the level spacing $s=E_n - E_{n-1}$. The same idea can be applied to the Ergodic-MBL transition and the level spacing distribution can be used as a diagnostic to study the transition. On the ergodic side this is expected to exhibit a typical Wigner form, with its level repulsion. On the MBL side this is expected to behave as for integrable systems and exhibit a Poisson distribution shape.

\item[(6)] {\bf Slow growth of entanglement and slow relaxation}. Despite the fact that the structure of energy eigenstates is only weakly modified by the presence of interactions, these have a strong impact on the out-of-equilibrium dynamics. Indeed, if we look at the time-dependent behaviour of the half-chain entropy we observe a slow growth of entanglement which saturates to an extensively large value. The dynamical profile of the half-chain entropy seems to have a logarithmic shape. Several independent numerical investigations supports this behaviour. So far, this is the only diagnostic tool that we have which is concretely able to discern between the Anderson and the Many-Body Localised phase. We will come back to this point later.
\end{itemize}


Such an interesting plethora of phenomena can be effectively understood if we assume that there is an extensive number of almost-local conserved quantities: the \emph{quasi-local integrals of motion} (Q-LIOMs). This gives rise to a unifying framework where to understand the localization phenomenon, via the presence of the Q-LIOMs. 

\subsection{The Q-LIOMs picture}

It is largely believed that the lack of ergodicity in a system should be traced back to the existence of some, possibly local, conserved quantities. Based on a similar intuition, in a recent set of papers\cite{Serbyn2013,Huse2014,Serbyn2013a,Bardarson2012} it was argued that the dynamical behaviour of MBL systems can be understood through an Hamiltonian $\Ham_{MBL}$ which is a nonlinear functional of a complete set of conserved quantities $\tau_i^z$, which have been called $l$-bits. If we restrict the discussion to a chain of spin-$1/2$ systems this means 
\begin{align}
&\Ham_{MBL} = E_0 + \sum_i \lambda_i^{(1)} \tau_i^z + \sum_{i,j} \lambda_{ij}^{(2)} \tau_i^z \tau_j^z + \sum_{n=3}^{\infty} \lambda_{j_1,\ldots,j_n}^{(n)} \tau_{j_1}^z \tau_{j_2}^z \ldots \tau_{j_n}^z \,\, . \label{eq:QLIOMs}
\end{align}
Here the $\left\{\tau_i^z \right\}$ are a extensive and commuting set of conserved quantities which satisfy the $SU(2)$ algebra $[\tau_i^a,\tau_j^b] = \delta_{ij} i \epsilon_{abc} \tau_i^c$. The typical magnitude of the interaction coefficients (the $\lambda$s) falls off exponentially with distance. The $l$-bit $\tau_j^z$ has a support on the chain which is exponentially localised around the $j$-th site of the chain. This notion of localization extends the familiar localization properties of the energy eigenstates in the AL case. If we write a decomposition of the $l$-bits into a basis of local operators $\left\{\mathcal{O}_m\right\}$ we have $\tau_j^z = \sum_{m} C^{j}_m \mathcal{O}_m$ and
\begin{align}
&|C^j_m| \sim  e^{- \frac{d[j,S(m)]}{\xi}}\,\,,
\end{align}
where $\xi$ is the localization length, $S(m)$ is the support of the operator $\mathcal{O}_m$ and $d[j,S(m)]$ stands for the distance between the site $j$ and the furthest site on which the operator $\mathcal{O}_m$ has support on. This means that the $j-$th $l$-bit has a support on the chain which is exponentially suppressed with the distance from the site $j$ and it can be obtained by applying a quasi-local rotation $\Omega$  to the local spins $\sigma_i^z$: $\tau_i^z = e^\Omega \sigma_i^z e^{-\Omega} $ A cartoon to visualize this behaviour can be found in Fig.(\ref{fig:lbit}).
\begin{figure}[h!]
\begin{center}
\includegraphics[scale=0.2]{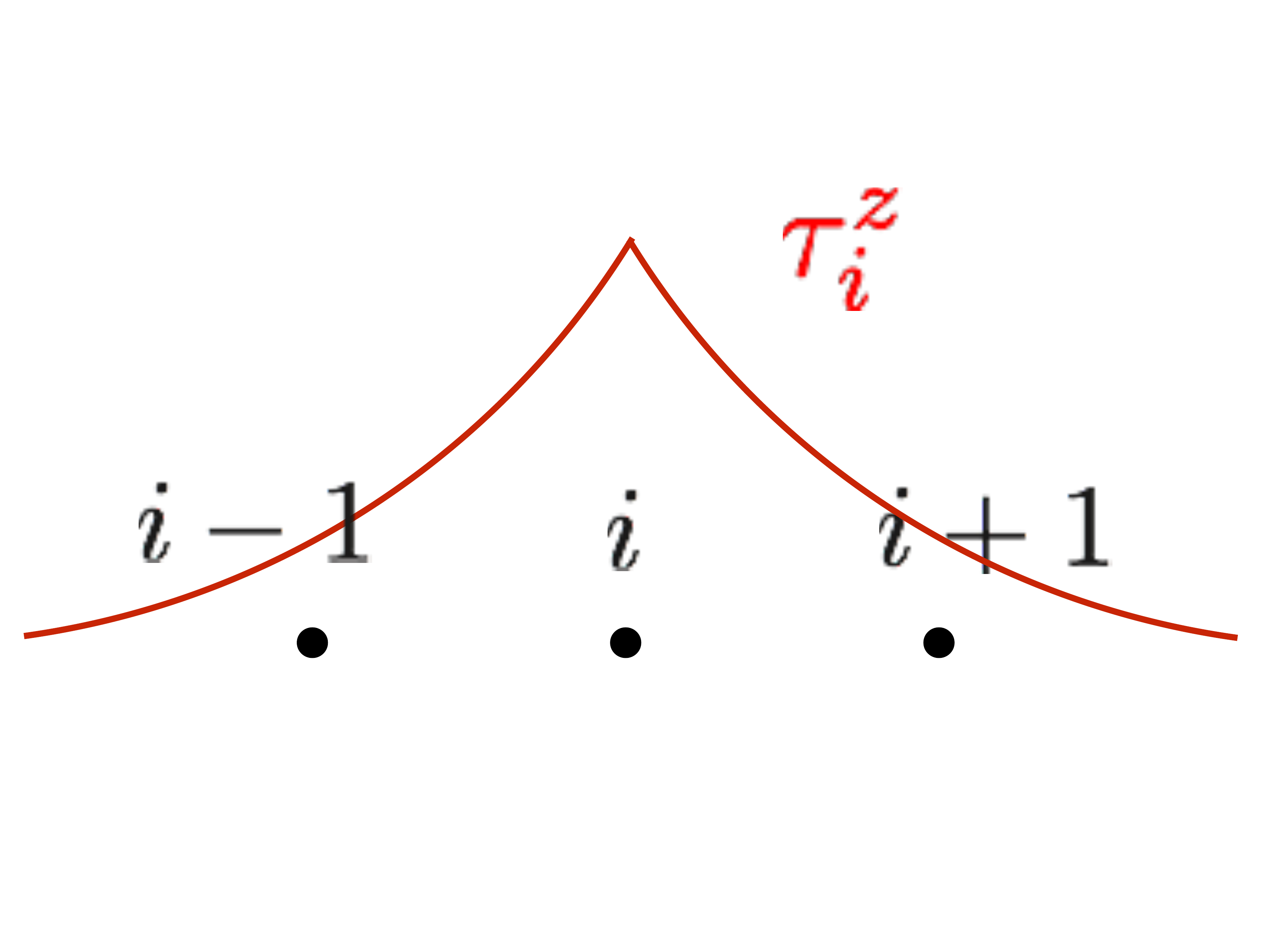}\caption{Exponentially suppressed support of an $l$-bit.}\label{fig:lbit}
\end{center}
\end{figure}

All the properties mentioned above can be accounted for by assuming the existence of the Q-LIOMs in an MBL system and a functional form of the Hamiltonian as in Eq.(\ref{eq:QLIOMs}). As in Ref. \cite{Nandkishore2015}, we summarize them in Table \ref{tab:MBL}. Among all of them, an important role is played by the logarithmic growth of entanglement in time. This is an important feature of the MBL phase which allows us to distinguish it from the AL phase. While such dynamical behavior was discovered with numeric analysis and confirmed in independent studies, it still eludes experimental confirmation. This is mainly due to the highly non-local character of the half-chain entanglement entropy, which is a hardly measurable quantity. Such difficulty gave rise to a plethora of works which aim at providing alternative ways to witness such dynamical behavior\cite{Campbell2017,Iemini2016,Serbyn2014a,DeTomasi2017,Bera2016,Goold2015,Serbyn2014}. Despite that, the logarithmic growth of information is still a dynamical feature which escapes the currently available measurement strategies. In this chapter we build on the results of the previous chapters and study the emergence of thermal behaviour in MBL systems. We will see that the ``Observable's Thermalization'' picture developed so far allows to make predictions about the emergence of thermalization in local observables, despite the system being in a localized phase. Such predictions have been tested against numerical investigations of MBL systems and show full agreement with the theoretical picture. Moreover, building on the intuition developed so far about the HUOs, we devise an efficient strategy to measure the logarithmic growth of entanglement. By focusing on realistically measurable observables, as the single-site ones, we show that the logarithmic spread of entanglement is encoded in the behavior of the single-site observables. These are  accessible with the available techniques of local quantum tomography.



\begin{table}[h!]
\begin{adjustwidth}{-0.8in}{-.5in}  
\begin{center}
    \caption{Comparative table for Thermal, AL and MBL phases}
    \label{tab:MBL}
    \begin{tabular}{c|c|c}
      \toprule 
      \rowcolor{red}
     \textbf{Thermal phase} & \textbf{Anderson Localized} & \textbf{Many-Body Localized}\\
\midrule
 Memory of initial conditions  & Local observables retain memory & Local observables retain memory  \\
      in non-local observables. & of initial conditions, at any time. & of initial conditions, at any time. \\
      \hline
      ETH holds for all  & ETH violated for some  & ETH violated for some \\ 
      local observables & local observables & local observables \\
      \hline
      Transport of energy allowed & No transport of energy & No transport of energy \\
            \hline
      Eigenstates with   & Eigenstates with  & Eigenstates with\\
      volume-law entanglement & area-law entanglement & area-law entanglement \\
            \hline
      Power-law spreading  & No spreading  & Logarithmic spreading \\
      of entanglement & of entanglement & of entanglement \\
	\hline
      Dephasing and dissipation &   No Dephasing and no dissipation & Dephasing but no dissipation \\
      \hline
     Level repulsion &   No Level repulsion & No Level repulsion \\
      \bottomrule 
    \end{tabular}
    \end{center}
\end{adjustwidth}
\end{table}

\section{Thermal observables in MBL}\label{sec:ThermObsMBL}

The existence of a large number of MUBs can be guaranteed if one assumes that the total dimension of the Hilbert space is the power of a prime number\cite{Bengtsson2007,Bandyopadhyay2002}. For this reason, in all fermionic/spin-$1/2$ systems with $L$ particles there are $2^L+1$ MUBs. If we choose the eigenstates of the Hamiltonian as our reference basis, this means that there are $2^L$ HUBs.  It is therefore natural to wonder what is the physical meaning of all these bases. Since they all admit an algorithmic construction, in principle one could diagonalize the Hamiltonian and build them to study their physical behaviour. Unfortunately this is highly unpractical, for two reasons. First, their physical meaning strongly depends on the concrete Hamiltonian model we are considering. Second, their number grows exponentially with the size of the system so the growth of computational complexity with the size of the system is rather unfavourable. One possibility could be to focus on a few of them, but we do not have a general criterion to select a few ``physically relevant'' HUBs. However, for MBL systems we can exploit the intuition provided by the existence of the Q-LIOMs.  

Starting from Eq.(\ref{eq:QLIOMs}) we know that the set $\mathcal{B}^\tau_z \coloneqq \left\{ \tau_i^z\right\}_{i=1}^L$ is a complete set of commuting observables which defines the eigenbasis of the Hamiltonian. Together with the sets $\mathcal{B}^\tau_x \coloneqq \left\{ \tau_i^x\right\}_{i=1}^{L}$ and $\mathcal{B}^\tau_y \coloneqq \left\{ \tau_i^y\right\}_{i=1}^L$ they provide $L$ copies of the quasi-local $SU(2)$ algebra. For this reason it can be easily shown that $\mathcal{B}^\tau_x$ and $\mathcal{B}^\tau_y$ are HUBs and all observables which are diagonal in these basis are HUOs. In the strong-disorder regime the disordered term of the Hamiltonian will dominate the interaction. Hence, it is reasonable to expect that, deep in the MBL phase, the eigenstates of the effective model in Eq.(\ref{eq:QLIOMs}) will be close to being eigenstates of the disordered term. In the case of the disordered XXZ model
\begin{align}
&\Ham_{XXZ} = \sum_{i=1}^L  J \sigma^x_i  \sigma^x_{i+1} + J \sigma^y_i \sigma^y_{i+1} + \Delta \sigma^z_i \sigma^z_{i+1} + \sum_{i=1}^L B_z^i \sigma^z_i \,\,, \label{eq:XXZDis}
\end{align}
where $B_z^i \in u([-W,W])$, this means that the Hamiltonian eigenstates will be close to the eigenstates of the local magnetization along the $z$ direction: $\mathcal{B}_z=\left\{\sigma_i^z \right\}$. We conclude that, deep in the MBL phase, $\mathcal{B}_x$ and $\mathcal{B}_y$ are very close to being HUB and all observables which are diagonal in such basis will be HUO. Following the intuition developed in the previous chapters this should lead to emergence of thermal equilibrium for these observables. We investigated the validity of such claims by means of numerical simulations. Using exact diagonalization we studied the out-of-equilibrium behaviour of the disordered, isotropic ($\Delta=1$) XXZ model in Eq.(\ref{eq:XXZDis}). This exhibits a transition from the thermal $(W < W_c)$ to the MBL $(W>W_c)$ regime at disorder strength $W_c \approx 3.7$ \cite{Luitz2015}. 

\subsection{Dynamical study of observable thermalization}

Before we proceed with the numerics we would like to discuss what we believe is a proper setup to address the dynamical emergence of thermalization, for a fixed, but otherwise generic, observable $A$. The formalism developed in the previous chapters is based on the intuition that, when we talk about thermalization we need to specify the observable we are interested in. This is due to the fact that, if we have access only to the observable $A=\sum_{j}a_j A_j$ we can probe only the diagonal matrix elements of $\rho$ in the eigenbasis of $A$. This defines the probability distribution $p_\rho(a_j)$ for $A$: $p_\rho(a_j) = \Tr{\rho A_j}$. The notion of thermalization we developed in the first chapter is based on the maximisation of the Shannon Entropy of such probability distribution. Given that the HUO satisfy the equilibrium equations and the ETH, if our predictions are correct we should see the dynamical emergence of the maximum entropy principle for an HUO. In other words, by simulating the exact unitary dynamics we should see that, after a transient, the time-dependent probability distribution $p_t(a_j)$ should equilibrate to the prediction given by the maximum entropy principle. For this reason, when we address the problem of thermalization of an observable $A$ we will use as initial state any eigenstate $\ket{a_j}$ of $A$. The reason behind this choice is that such state has a probability distribution for $A$ which as far away as possible from equilibrium, having zero Shannon entropy. Indeed, if $\ket{\psi_0}=\ket{a_{x}}$ we have $p_{0}(a_j) = |\braket{\psi_0}{a_j}|^2 = \delta_{jx}$, whose entropy is clearly zero. Because of that, we are studying how the out-of-equilibrium unitary dynamics can justify the use of the maximum of Shannon entropy principle stated in Chapter \ref{Ch3}. 

Now we proceed with the investigation of this problem for a specific class of observables: the local magnetizations along the three directions $x,y$ and $z$. In particular, we are interested in studying the dynamics of the local HUOs $\sigma_x^i$ and $\sigma_y^i$. If our predictions are correct, if we start from an out-of-equilibrium state for a HUO, we should observe thermalization of the whole probability distribution. Since we are dealing with local magnetization, which is a binary observable for the spin-$1/2$ case, the only linearly independent quantity is the average value, given the normalization of the probability distribution.

\paragraph{Parameters of the numerics} We start by showing the behaviour of the disordered XXZ model below and above the transition at $W=W_c$. Here we show the plots for two representative values of the disorder strength: $W=1$ for the thermal phase and $W=10$ for the MBL phase. The model was also investigated for other values of the disorder strength: $W=0.5, 1, 2 ,5,7,10$. Within the thermal and the MBL phase the results are qualitatively the same as the one showed, respectively, for the $W=1$ and $W=10$ cases. Moreover, we studied the behaviour for chains of increasing lengths $L=6,8,10,12$. The results presented are qualitatively independent on the size of the system. Two notable points are (i) that time fluctuations appear to be less violent for increasing system size (ii) already for chains of very small sizes we observe a clear time-dependent profile, despite the fact that we are far away from the macroscopic regime. Here show the plots for the $L=10$ case. 
\begin{figure}[h!]
\centering
\includegraphics[scale=0.5]{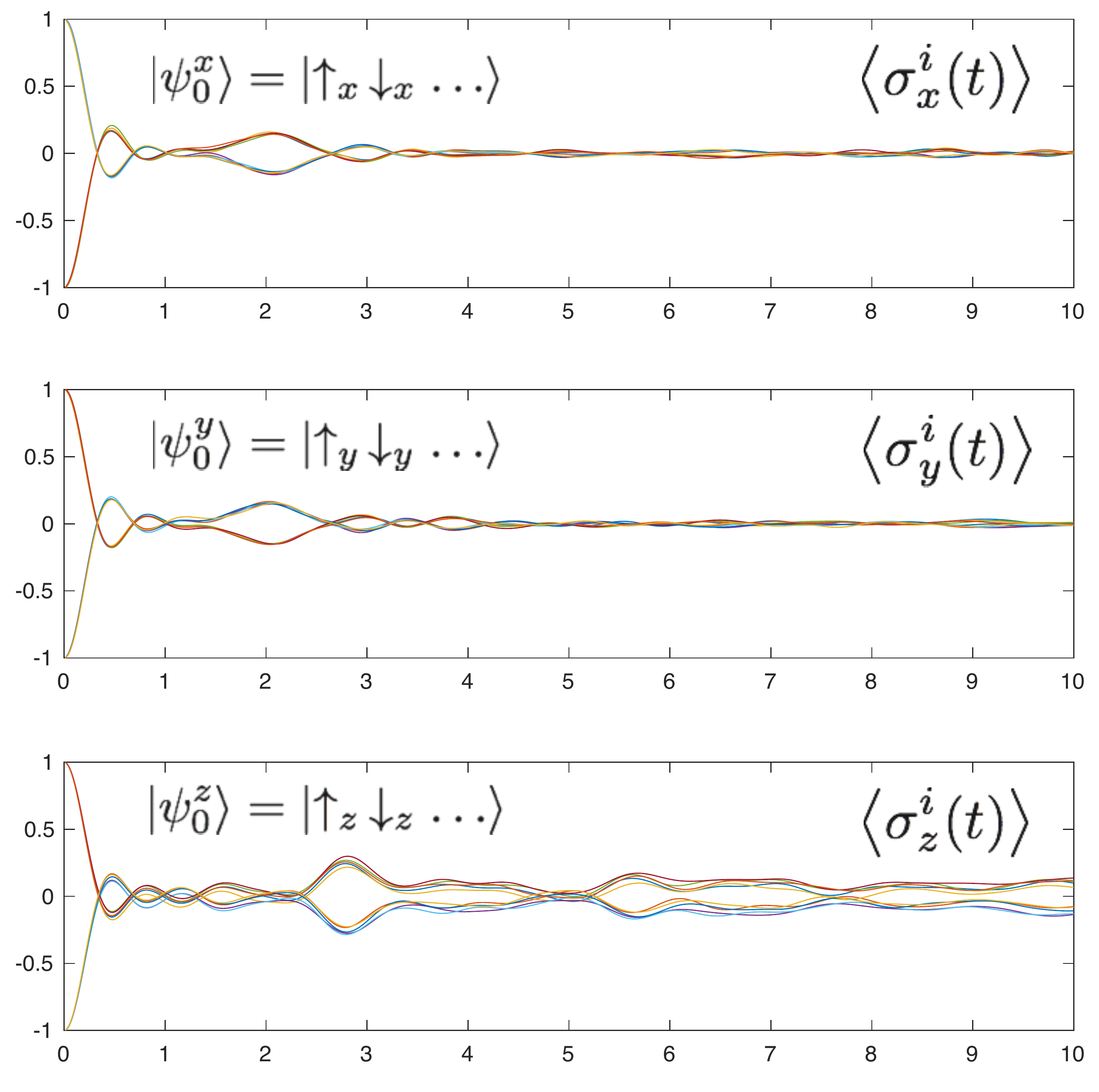}\caption[Local magnetization in the thermal phase]{Here we show the behaviour of the local magnetization along the three directions $x,y,z$. All observables are initialised in an out-of-equilibrium probability distribution given by the antiferromagnetic order and evolved dynamically with a unitary propagator generated by the Hamiltonian of the XXZ model. The parameters of the system are: size $L=10$, interaction $\Delta=1$ and disorder strength $W=1$, which means that the system is in the thermal phase. We average over $N_{dis}=100$ disorder realizations. We observe that in this phase all local observables dynamically thermalize to the expectation value given by the microcanonical ensemble.}\label{fig:LocalThermal}
\end{figure}
Now we show the behavior of the local magnetization along the three directions $x,y,z$, when the initial state of the evolution is the respective Neel state $\ket{\psi_0^\alpha} = \ket{\uparrow_\alpha \, \downarrow_\alpha \ldots}$ with antiferromagnetic order along the $\alpha=x,y,z$ direction. The fact that our dynamical simulations begin with a Neel state does not influence the qualitative picture and our conclusions. Other states belonging to the three basis $\mathcal{B}_x,\mathcal{B}_y,\mathcal{B}_z$ can be used as initial state for the respective observable and the qualitative results seems to be independent on the specific choice.
\begin{figure}[h!]
\centering
\includegraphics[scale=0.5]{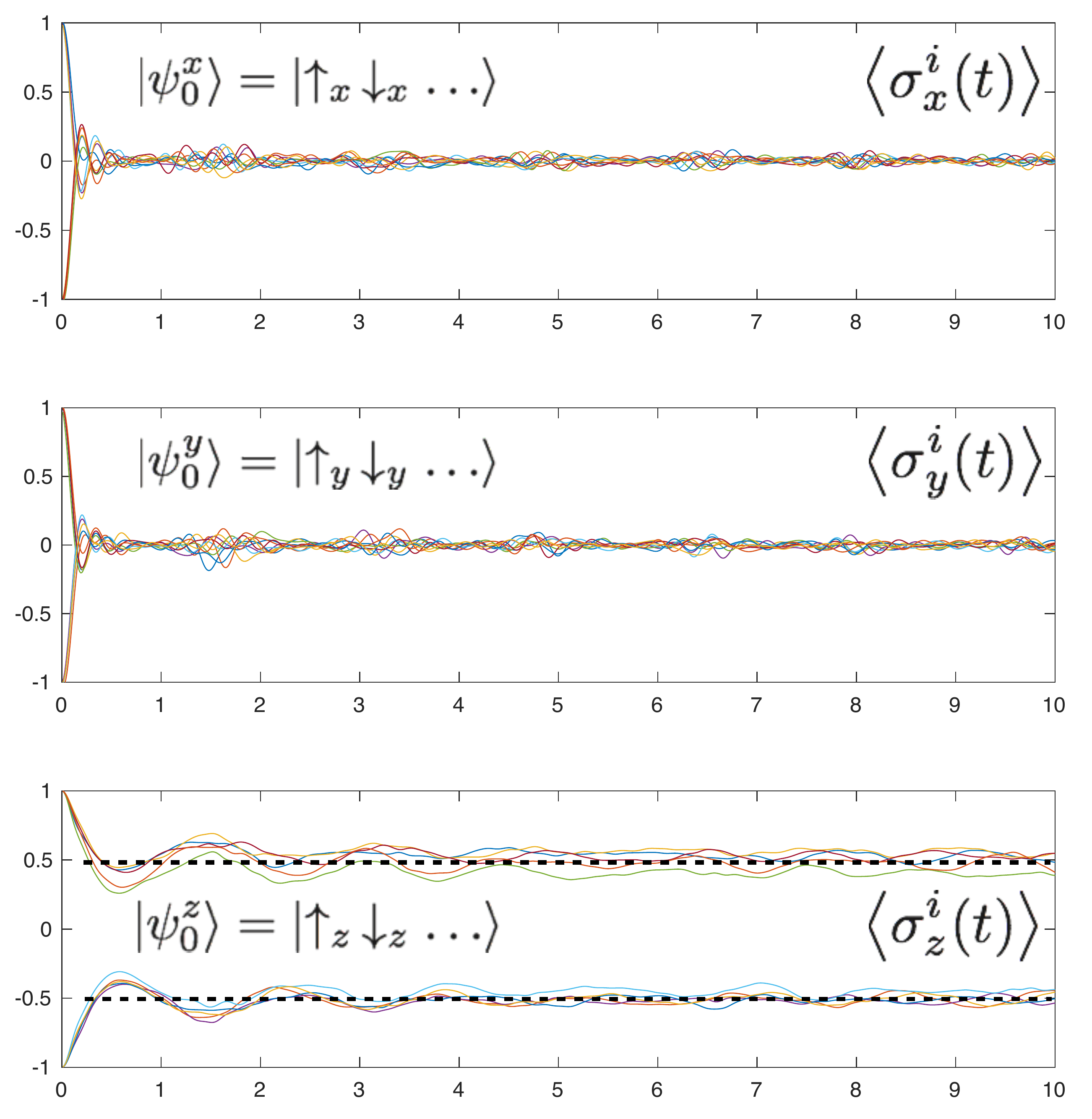}\caption[Local magnetization in the localized phase]{Here we plot the behaviour of the local magnetization along the three directions $x,y,z$. All observables are initialised in an out-of-equilibrium probability distribution given by the antiferromagnetic order and evolved dynamically with a unitary propagator generated by the Hamiltonian of the XXZ model. The parameters of the system are: size $L=10$, interaction $\Delta=1$ and disorder strength $W=10$, which means that the system is in the MBL phase. We average over $N_{dis}=100$ disorder realizations. We observe that the magnetization along the $z$ axis fails to relax to the maximum entropy distribution. This signal the fact that we are in a localised phase. Despite that, we can see that local HUOs, as the local magnetization along the $x$ and the $y$ axis, still dynamically relax to the prediction of the maximum entropy principle.}\label{fig:LocalMBL}
\end{figure}
On the one hand in Fig.~\ref{fig:LocalThermal} we see that in the thermal phase all local observables, when initialised in their antiferromagnetic order eigenstate, dynamically evolve toward a thermal expectation value given by the maximum of Shannon entropy principle. On the other hand, in Fig.~\ref{fig:LocalMBL} we can see that all the $\sigma_z^i$ fails to dynamically reach a thermal expectation value, when the system is in the MBL phase. Despite that, for the local HUOs as the local magnetization along the $x$ and $y$ direction we still observe thermalization, despite being in the MBL phase. This results provide evidence in favour of the picture developed so far about thermalization of observables and HUOs. In this sense this provide a straightforward example of the fact that the theory developed so far constitutes a generalization of statistical mechanics. Indeed, as we can see from Fig. \ref{fig:LocalMBL} the local magnetization along $z$ fails to thermalise, due to localization. This also means that the reduced density matrix $\rho_i(t)$ of a site $i$ fails to relax to a thermal ensemble. Despite that, if we are interested in the magnetization along the $x$ and $y$ direction we can still use statistical mechanics to predict their equilibrium behaviour.

\section{Logarithmic growth of entanglement}

The logarithmic growth of entanglement is the dynamical property of the MBL phase which allows us to distinguish it from the AL phase. Indeed it is a consequence of the existence of the exponentially small tails of the Q-LIOMs and of their interaction in the effective model. While such dynamical behavior was confirmed by several numeric investigations, it still eludes experimental confirmation. This is mainly due to the highly non-local character of the half-chain entanglement entropy, which is a hardly measurable quantity. 

Here we overcome such obstacle and devise an efficient strategy to measure the logarithmic growth of entanglement. By focusing on realistically measurable observables, as the single-site ones, we show that the logarithmic spread of entanglement is encoded in the behavior of the single-site observables. These are accessible with the available techniques of local quantum tomography. If one has access to the reduced density matrices of all sites, it is also possible to compute the \emph{Total Correlations}: a quantity which, at equilibrium, has already been proven to signal the transition from ergodic to the MBL phase\cite{Goold2015,Pietracaprina2017}. Using the total correlation instead of the half-chain entropy brings also a computational advantage. For a spin$-1/2$ chain, to compute the half-chain entropy we need to diagonalize a matrix of dimension $2^{L/2}$. This requires $\mathcal{O}\left(2^{3L/2}\right)$ elementary operations while for the total correlations we only need $\mathcal{O}(L)$. This gives an exponential speedup in the computation of the quantity of interest.

\subsection{Local entropy and total correlations}
Throughout the paper we assume to deal with a one-dimensional isolated quantum system, made by $L$ spin-$1/2$. The dynamics is generated by a Hamiltonian $\Ham$, made by the sum of two terms: a clean one $\Ham_0$ and a disordered one $\Ham_1 (W)$, where $W$ is the strength of the disorder. If $\ket{\psi_t}$ is the state of the whole system at time $t$, the reduced state of the $n$th site is obtained by tracing out the complement $L/n$: $\rhont \coloneqq \Tr_{L / n} \ket{\psi_t} \bra{\psi_t}$. The bipartite entanglement between the $n$th site and the rest is quantified by the von Neumann entropy of the reduced state:
\begin{equation}
S_n(t) \coloneqq - \Tr \rhont \log \rhont 
\end{equation}
This quantity can be measured by means of local tomography. Moreover, if it is possible to do it over all sites $n=1,\ldots,L$ one can also evaluate the average entropy over all sites:
\begin{equation}
S(t) = \frac{1}{L}\sum_{n=1}^L S_n(t) \, .
\end{equation}
This quantity has the following operational meaning. Let $\mathcal{P} \subset \mathcal{H}$ be the set of all tensor product states of an $L$-partite quantum system. The total correlations $T(\rho)$ of a state $\rho$ is defined as
\begin{equation}
T(\rho) = \min_{\pi \in \mathcal{P}}S(\rho |\!| \pi) \, ,
\end{equation}
where $S(\rho |\!| \sigma)\coloneqq - \Tr \rho \log \sigma - S(\rho)$ is the relative entropy. 
$T(\rho)$ measures the relative entropy between $\rho$ and the closest product state $\pi_\rho \in \mathcal{P}$. It turns out that, for each $\rho$ such state is unique. It is the product state of the reduced density matrices obtained from $\rho$: $\pi_\rho = \rho_1 \otimes \ldots \otimes \rho_L$ where $\rho_i \coloneqq \Tr_{L/i}\rho$. In our case it is easy to see that the total correlations $T_t\coloneqq T(\ket{\psi_t}\bra{\psi_t})$ are simply a rescaling of $S(t)$
\begin{equation}
T_t = \sum_{n=1}^L S(\rhont) - S(\ket{\psi_t}\bra{\psi_t}) = L S(t) \label{eq:Tot2}
\end{equation}
By studying the dynamical behavior of the set of all $\{ S_n(t) \}$ and showing that they exhibit a logarithmic modulation in time here we achieve two goals. On the one hand we show that, in order to observe the logarithmic spread of entanglement in time it is sufficient to have access to the density matrix of a single site. On the other hand, we unravel the relevance of the total correlations as a tool to detect the dynamical growth of entanglement in a MBL phase. In \cite{Goold2015} the authors proposed to use the total correlations, at equilibrium, as a tool to detect the breaking of ergodicity. Our result goes along with such intuition. By showing that the time-dependence of $T_t$ is modulated on a logarithmic scale we provide a dynamical picture which complements the intuition developed in \cite{Goold2015}. We now give a theoretical argument which illustrate why each one of the $S_n(t)$ should approach the saturation value with a logarithmic law.

\subsection{Logarithmic spread of entanglement}\label{sec:logspread}
In \cite{Serbyn2013} it was proposed a theoretical picture, based on perturbation theory, to understand this phenomenon. Here we review such argument and explore its implications for the growth of the local entropy. The main intuition is based on the presence of the exponentially suppressed tails of the Q-LIOMs, the decay constant being the localization length $\xi$. Call $V$ the coupling constant of the interaction term, which gives its energy scale. If there are no interactions ($V=0$) all energy eigenstates are single-particle excitations, since there is Anderson Localization. In presence of interactions we have the Q-LIOMs. In this case, if two particles are placed at a distance $x_{ij}$ they would have an interaction energy which is exponentially suppressed because of the exponentially small tails $V_{ij} \sim V e^{-x_{ij}/\xi}$. The dephasing time between them is therefore $t_{ij}\sim \hbar/V_{ij}=\hbar e^{x_{ij}/\xi}/V$. It was therefore argued and numerically confirmed that the dynamical behaviour of the half-chain entropy $S_{L/2}(t)$ exhibits a logarithmic profile in time.

We now look at the implications of this argument for the bipartite entanglement between a single site and the rest. The degrees of freedom on a lattice site $n$ will become entangled with the degrees of freedom living on the $n+k$ site on a time-scale $t_k \sim t_{\mathrm{min}}e^{k a/\xi}$ where $a$ is the distance between any two nearest neighbor   sites. As time goes by the $n$th site will become entangled with an increasing number of sites. Therefore, the higher the number of sites which have entanglement with the $n$-th one, the higher $S_n(t)$. Because of what has been said before, such bipartite correlations will pile-up on a logarithmic time scale. From this we expect a logarithmic growth of $S_n(t)$. We would like to stress here that the argument holds for each one of the sites, for all of them. This is important, from an experimental point of view, because it means that it is not necessary to perform an extensive number of local measurements but one can focus on a single site.

Despite that, if it is possible to reconstruct all the local density matrices $\rhont$, the quantity $S(t)$ can be used as a single quantifier. Its meaning is clear: it quantifies the average growth of entanglement between one site and the rest of the chain. Moreover, as we showed in Eq.(\ref{eq:Tot2}) it is a rescaling of the total correlations, which in turn provide an upper bound to the amount of multipartite entanglement present in the system.

\subsection{About initial states}
When performing numeric simulations of isolated quantum systems the choice of the initial state is of paramount relevance. This is usually driven by a criterion of experimental feasibility and a well-known example is the Neel state. It exhibits an anti-ferromagnetic order and can be prepared as the ground state of a local Hamiltonian. From the quantum information point of view such state is part of the so-called computational basis. The tensor-product basis of the $z$ components of the local spins: $\mathcal{B}_z \coloneqq \left\{ \ket{m^z_1} \ket{m^z_2}\ldots \ket{m^z_L}\right\}$, where $\ket{m^z_i} \in \left\{ \ket{\uparrow_z},\ket{\downarrow_z}\right\}$. However, as we argue now, experimental feasibility is not the only criterion which should be taken into account.

Our goal here is to probe a dynamical feature of a quantum system. Therefore, our initial state should not be too close to a single energy eigenstate. Indeed, if this would be the case, the dynamics would always occur in the proximity of the initial state. In this sense, the time-evolution of observables would not carry much information about the properties of the dynamics. On the contrary, if we start with a state which is as far away from being a single energy eigenstate as possible, we would unravel the non-trivial effects of the dynamics on physically relevant observables. In this case, measuring the time-evolution of physically relevant observables would yield a more informative answer about the properties of the dynamics. 

For our purpose, the MBL phase is characterized by the presence of an extensive number of Q-LIOMs. This means that, deep in the MBL phase, the energy eigenstates are almost tensor product states. Hence, when approaching the high-disorder limit, an initial product-state can get very close to an energy eigenstate. This is what happens, for example, in the XXZ model for the Neel state. The disorder is in the nearest-neighbor term, along the $z$ direction. Therefore, deep in the MBL phase the Neel state is almost an energy eigenstate. For this reason, its time evolution is not ideal for the purpose of probing a dynamical feature of an MBL system.

A set of initial states which can be used to avoid such problem is given by the elements of an Hamiltonian Unbiased Basis\cite{Anza2017,Anza2018} (HUB). As argued in subsection \ref{sec:ThermObsMBL},   deep in the MBL phase the tensor-product states polarized along the $x$ or $y$ direction will be very close to being HUBs:
\begin{align}
& \mathcal{B}_x \coloneqq \left\{ \ket{m_1^x}\ldots \ket{m_L^x} \right\}  &&  \mathcal{B}_y \coloneqq \left\{ \ket{m_1^y}\ldots \ket{m_L^y} \right\} \, , \label{eq:HUBs}
\end{align}
where $\ket{m_i^x}\in \left\{ \ket{\uparrow_x},  \ket{\downarrow_x}\right\}$ and $\ket{m_i^y}\in \left\{ \ket{\uparrow_y},  \ket{\downarrow_y}\right\}$ for all $i=1,\ldots,L$.
An observable which is diagonal in a HUB is called Hamiltonian Unbiased Observable (HUOs). Defined in \cite{Anza2017} and studied in \cite{Anza2018}, highly degenerate HUOs have a phenomenology quite similar to the one of Random Matrices\cite{Torres-Herrera2016,Gubin2012,Torres-Herrera2017,Borgonovi2016,Balz2017,Reimann2016} and they are expected to satisfy the Eigenstate Thermalization Hypothesis\cite{Anza2018}. As such, if our initial state is an eigenstate of a HUO, its expectation value will quickly equilibrate to the predictions of RMT. The evolution starts with a Neel state polarized along the $x$ direction and we follow the expectation values of all $\sigma_i^x$. The dynamics is generated by a disordered XXZ model in the MBL phase. We see a fast equilibration dynamics which leads to thermalization. Such behavior is not peculiar to the specific choice of the Neel order for the initial state. Any initial state which belongs to the basis in Eq.(\ref{eq:HUBs}), and more generally which is part of a HUB, will exhibit the same behavior. With these initial states the dynamics will explore a large portion of the Hilbert space and this should facilitate the study of dynamical properties. We will now provide numeric evidence that such intuition can be used to unravel the logarithmic spread of entanglement in time.
\begin{figure}[h!]
\centering
\includegraphics[scale=0.33]{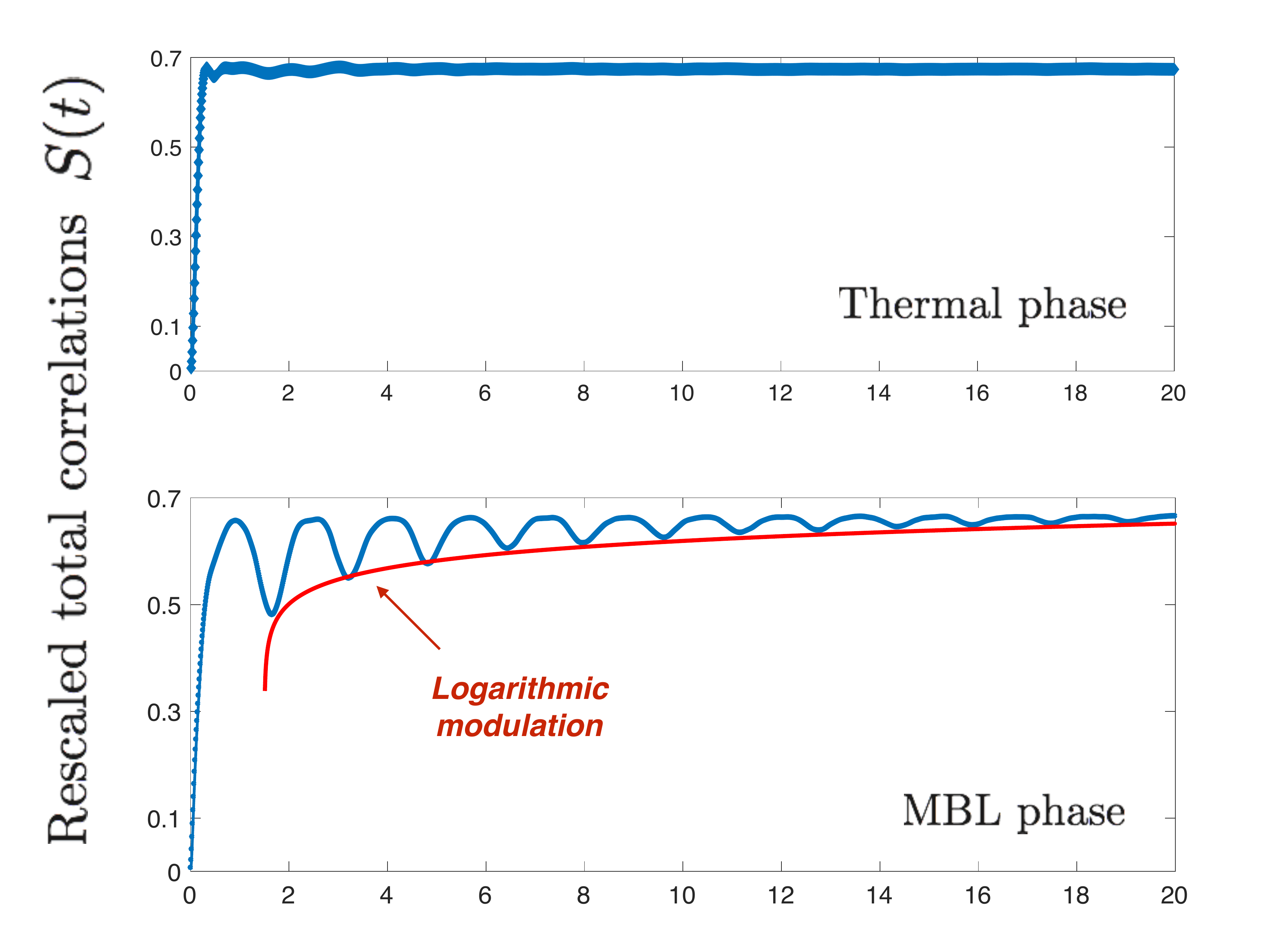}
\caption[Time-depedent profile of the rescaled total correlations]{Time-depedent profile of the rescaled total correlations $S(t)$ in the thermal and MBL phase for the disordered XXZ model with the following parameters. The length of the chain is $L=12$; the model is isotropic so $\Delta=1$ and the disorder strength in the thermal phase (upper figure) is $W=1$ while in the MBL phase (lower figure) is $W=10$. In the MBL phase we can clearly see fast oscillations which are modulated on a logarithmic time-scale. The red line going through the local minima is the fit for the logarithmic modulation.}\label{fig:LogTot}
\end{figure}
\paragraph{Parameters of the numerics.}
For our numeric investigation we focus on the disordered XXZ model of Eq.\ref{eq:XXZDis} at different sizes: $L=8,\ldots,14$. We averaged over $N_{\mathrm{dis}}=200$ disorder realizations. At $\Delta=1$ this model exhibits an ergodic phase for $W<W_c$ and a MBL phase at $W>W_c$ where $W_c \approx 3.7$. More information about the phase diagram can be found in \cite{Alet2017a}. As previously mentioned, since the disorder is along the $z$ direction, deep in the MBL phase the Q-LIOM will be almost diagonal in $\mathcal{B}_z$. For this reason we avoid choosing initial states which are part of $\mathcal{B}_z$ and we focus on states of two HUBs: $\mathcal{B}_x$ and $\mathcal{B}_y$. All plots show the data obtained by using the Neel state in the $x$ direction as initial state: $\ket{\psi_0}=\ket{\uparrow_x \downarrow_x\ldots }$. The qualitative behavior of different initial states, belonging to $\mathcal{B}_x$ or $\mathcal{B}_y$, is the same.
\begin{figure}[h!]
\centering
\includegraphics[scale=0.33]{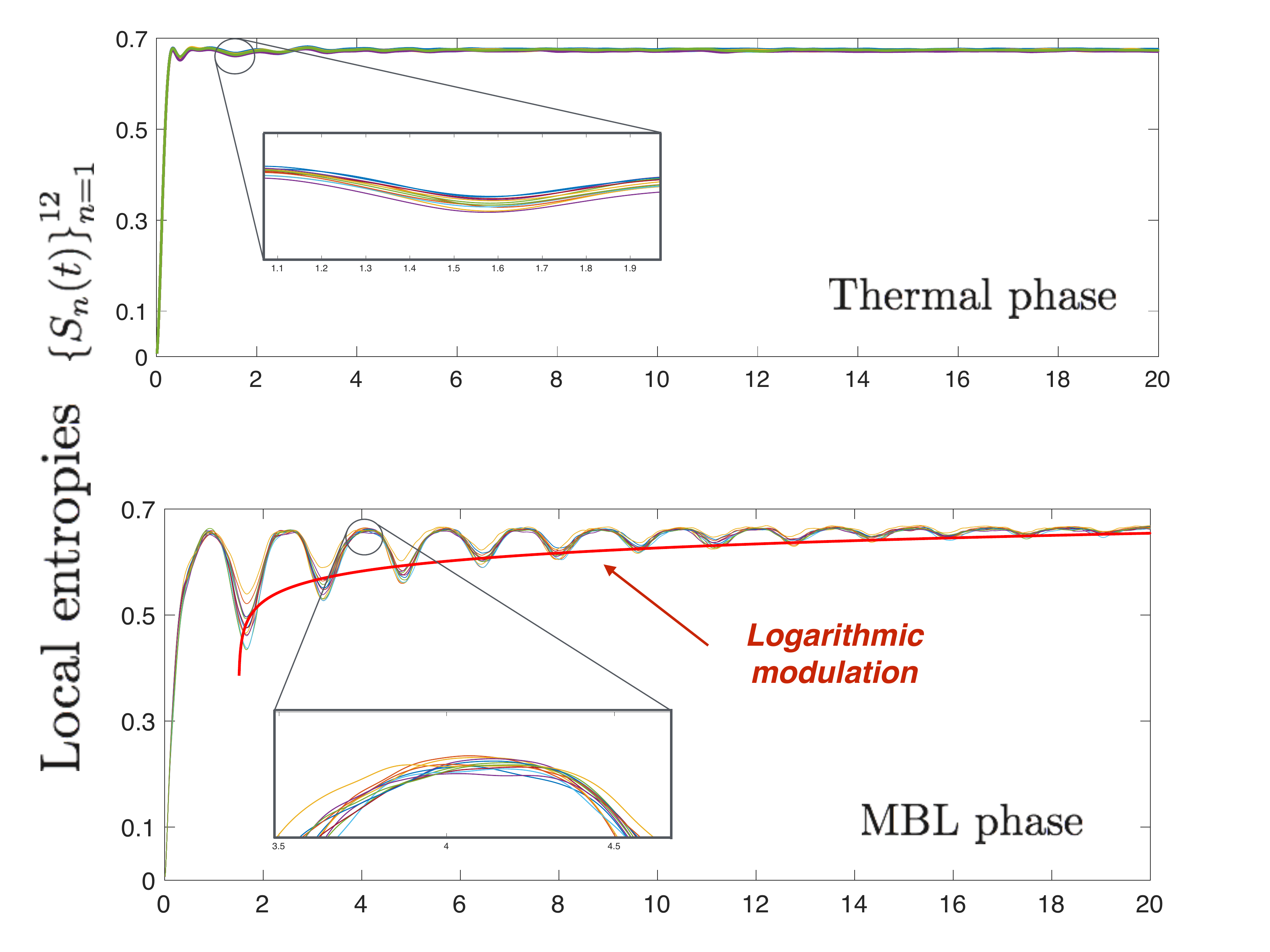}
\caption[Time-depedent profile of the local entropies]{Here we show the time-depedent profile of all the local entropies $\left\{S_n(t)\right\}_{n=1}^L$, in the thermal and MBL phase, for the disordered XXZ model with the following parameters. The length of the chain is $L=12$; the model is isotropic so $\Delta=1$ and the disorder strength in the thermal phase (upper figure) is $W=1$ while in the MBL phase (lower figure) is $W=10$. We see that, in the MBL phase, the behaviour of each $S_n(t)$ is very similar to their average. The red line going through the local minima is the fit for the logarithmic modulation.}\label{fig:LogEnt}
\end{figure}
In figure \ref{fig:LogTot} we compare the behaviour of $S(t)$ for $L=12$ in the thermal ($W=1$) and in the MBL ($W=10$) phase. In the thermal phase we clearly see a fast thermalization to the predictions of the maximum von Neumann entropy, plus small fluctuations around the equilibrium value. The behaviour in the MBL phase is more complex. We observe fast oscillations in time which are modulated on a logarithmic time-scale. This behaviour is also observed at the level of each local entropy $\left\{S_n(t)\right\}_{n=1}^L$, for all of them. In figure \ref{fig:LogEnt} we see that both the slow logarithmic modulation and the fast oscillations are synchronised, across the chain. This confirms the following theoretical picture: In a system which is in an MBL phase its elementary constituents, spin-$1/2$ in this case, share information on a logarithmic time-scale. This is a more fine-grained statement as it pertains the behaviour of all microscopic constituents. In this sense, our result agrees and go beyond the picture given by the half-chain entanglement entropy, confirming the theoretical picture summarised in subsection \ref{sec:logspread}. 

Now we propose the following interpretation of these results for out-of-equilibrium dynamics of the disordered XXZ model in the MBL phase. All our dynamical simulation started with a state which is as far away from an energy eigenstate as possible: the Neel state polarised along the $x$ direction. This is a product state which has no entanglement so all the local entropies starts at a null value. As time goes by it appears that the pure state of the system is oscillating between two classes of states. The first does not change in time and it is clearly a highly entangled one, which we could identify with the local maxima in the behaviour of $S(t)$. The second one is identified with the local minima of $S(t)$ and its nature changes. At first this is almost a tensor product state, having a small entropy. However, as time goes by its tensor-product structure is progressively lost. This happens on a logarithmic time-scale and it should be ascribed to the exponentially small tails of the Q-LIOMs.   

Thus, the observed numerical data can be explained by separating two relevant dynamical processes. We refer to the cartoon in Fig.\ref{fig:MBLConjecture} to visualize this behaviour. The fast dynamics is given by the oscillations of the local entropy between two classes of states: a high-entanglement one and a low-entanglement one. The slower dynamics, attributed to the existence of the Q-LIOMs, is given by the slow movement of the second class of states, whose entanglement increases on a logarithmic time-scale.
\begin{figure}[h!]
\centering
\includegraphics[scale=0.3]{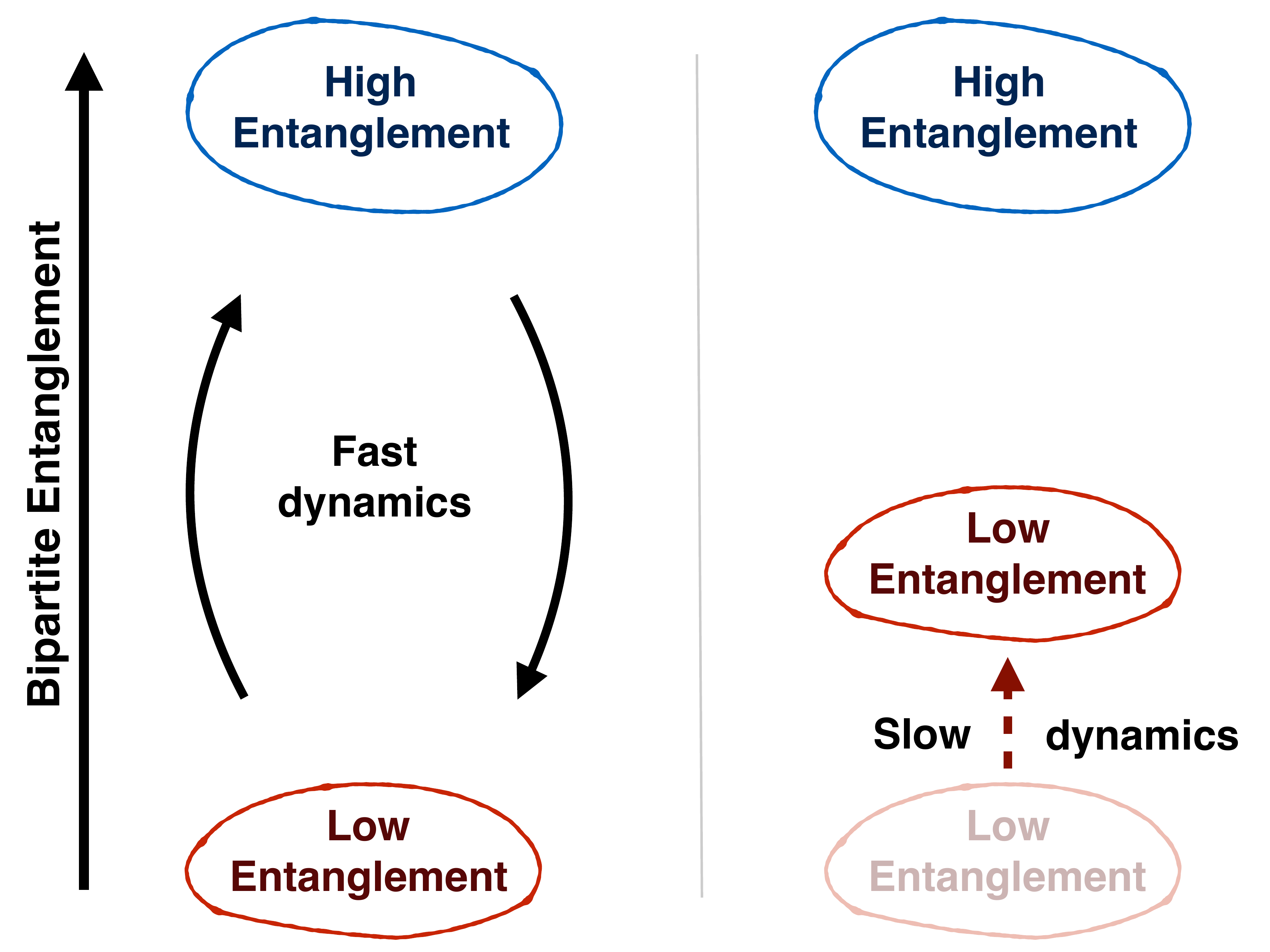}
\caption[Bipartite entanglement dynamics in the localised phase]{In this cartoon we show how to visualize the dynamics of bipartite entanglement in the disordered XXZ model in the MBL phase. The fast dynamics is given by regular oscillations between two regions of the Hilbert space which have high and low entanglement. The slow dynamics is given by a progressive increase of the entanglement in the red region, which approaches the blue region on a logarithmic time scale.}\label{fig:MBLConjecture}
\end{figure}

\subsection{Summary}
We exploited the intuition developed in the previous Chapters to propose a new way to unravel the logarithmic growth of entanglement in MBL systems. By studying the behaviour of the local entropies $\left\{ S_n(t)\right\}$ and of the rescaled total correlation $S(t)$ we provided further numerical evidence that the microscopic constituents of an MBL system share information on a logarithmic time-scale. Our conclusions are based on the numerical simulation of the XXZ model, in the MBL phase, and on the intuition given by the Hamiltonian Unbiased Basis and their dynamical relevance.

\section{Summary and Conclusion}

In Chapter \ref{Ch3} we proposed a new notion of thermal equilibrium, specific for a given observable rather than for the whole state of the system. The equilibrium picture that emerges is that we are dealing with an observable-wise generalization of statistical mechanics. In this framework, the notion of thermal observable plays the same role of the notion of thermal state in statistical mechanics. A peculiar subset of such observables is given by the Hamiltonian Unbiased Observables. On the one side, in this Chapter we test the framework against systems which are known to escape the description of statistical mechanics. We investigated the presence of thermal observables in MBL systems, providing evidence that the proposed notion of thermal equilibrium is able to go beyond the domain of applicability of statistical mechanics. Indeed, we are able to predict the emergence of thermal equilibrium in local observables, even though the reduced state of the system is not at thermal equilibrium. 

On the other side, it has been repeatedly argued that the most important dynamical feature of a system in the MBL phase is the logarithmic spread of entanglement. Despite the amount of theoretical effort aiming at proposing diverse quantities to witness such phenomenon, an experimental confirmation is still lacking. This is partly due to the notorious difficulty in measuring entanglement in a quantum system. Here we make significant progress in this direction by arguing that such behavior is encoded in the eigenvalues of all local density matrices. These are experimentally accessible with local quantum tomography. If it is possible to reconstruct all local density matrices one has also access to the total correlations: a quantity which, in a previous work, was shown to detect, at equilibrium, the transition from the ergodic to the MBL phase. Building our intuition on the notion of Hamiltonian Unbiased Basis we suggest a criterion for the choice of initial states which are both experimentally available and convenient for studying dynamical features of a system in the MBL phase. Eventually, we corroborated our arguments with numeric evidence, built on the study of a spin-$1/2$ chain in the disordered XXZ model. While the numerical data presented here is sufficient to argue for an exponential growth of local entropies, further studies to improve our understanding of this phenomenon are currently ongoing.

\addcontentsline{toc}{part}{Application to Quantum Gravity}
\part*{Application to Quantum Gravity}

\chapter{Macroscopic aspects of Spin Networks}\label{Ch6}


Quantum gravity is considered, by many, as the holy grail of theoretical physics. For several years people have investigated different approaches, each with their own limitations and advantages. Unfortunately, the quest for a coherent quantum theory of gravity, with a semiclassical regime in agreement with General Relativity, is still ongoing. Among all approaches, String Theory and Loop Quantum Gravity (LQG) are certainly the most advanced ones. Here we will focus on LQG and its underlying formalism: the spin networks. A full introduction to the topic of LQG is beyond the scope of the thesis. Rather, the purpose of this chapter is to give the reader the basic tools necessary to understand the original results presented in the next two chapters. For this reason, we decided to split the chapter in two pieces, which are conceptually different. In the first one, 
we address the question at the core of this investigation and present the point of view of the author. In the remaining part we will present the formalism, with a constructive exposition, and show in which sense the spin networks provide a well defined notion of \emph{quantum geometry}. 


\section{Macroscopic behaviour of spin networks}

Information Theory was originally developed as a theory of communication. Since its inception, an increasing number of physicists has used its tools and concepts to study the behaviour of physical systems. The interaction between physics and information theory has been especially enlightening in the context of quantum mechanics. Indeed, the inherently statistical nature of quantum theory makes Information Theory particularly suitable to investigate the behaviour of quantum systems. The large body of results developed in the last 50 years gave rise to Quantum Information Theory (QI): a well developed research field which, at its core, tries to describe how quantum systems share or store information and ``what they do with it''\cite{Lloyd1988}. For this reason, it is opinion of the author that QI provides a good set of ideas and techniques to study the behaviour of quantum systems, beyond the standard tools of analysis inherited from classical mechanics. Nowadays, ideas from QI, such as the entanglement entropy, are part of the standard tools of analysis in various fields: from condensed matter systems to the more fundamental topics such as understanding the emergence of thermal equilibrium in quantum mechanics. \\

In more recent years, techniques and tools from QI have played an increasingly central role also in quantum gravity. Such interplay has proved particularly insightful both in the context of the holographic duality in AdS/CFT \cite{Maldacena1998a,Ryu2006b,Raamsdonk2010,Swingle2012,Czech2012a}, as well as for the current background independent approaches to quantum gravity, including loop quantum gravity (LQG)  \cite{Thiemann2008,Rovelli}, the related spin-foam formulation \cite{Perez2013}, and group field theory \cite{Oriti2016}. Interestingly, many background-independent approaches today share a microscopic description of space-time geometry given in terms of discrete, pre-geometric degrees of freedom of combinatorial and algebraic nature, based on spin-network Hilbert spaces \cite{Rovelli1995a,Baez1996,Major1999}. In this context, the behaviour of quantum correlations provides a new tool to characterise the quantum texture of space-time in terms of the structure of microscopic correlations of the spin networks states. For example, several recent works have considered the possibility to use specific features of the short-range entanglement (area law, thermal behaviour) to select states which may eventually lead to smooth spacetime geometry classically \cite{Bianchi2014,Chirco2015,Chirco2014,Pranzetti}. These analyses usually focus on states with few degrees of freedom, leaving open the question of whether a \emph{statistical} characterisation may reveal new structural properties. Our goal here is to tackle the problem from a slightly different perspective. We are interested in understanding how some elementary degrees of freedom give rise to the smooth structure that we perceive as spacetime. To achieve this goal, we propose the use of the information-theoretical notion of \emph{quantum typicality} as a tool to investigate and characterise universal local features of quantum geometry, going beyond the physics of states with few degrees of freedom. We will now explain in more details the meaning of this statement, whose intuition is built on the mathematical concept of the ``concentration of measure'' phenomenon. The typicality approach to thermalization is an application of this concept and its main tool is Levy's Lemma\cite{Ledoux2001}, which we give in Appendix \ref{App:Levy}.\\

Canonical typicality states that almost every pure state of a large quantum system subject to an arbitrary constraint $\mathcal{R}$ is such that its reduced density matrix over a sufficiently small subsystem $\rho_S(\phi)=\Tr_{\overline{S}}\ket{\phi}\bra{\phi}$ is approximately in the canonical state $\Omega_S=\Tr_{\overline{S}}\mathcal{I}_{\mathcal{R}}$ defined as the partial trace, over the complement $\overline{S}$ of $S$, of the ``equiprobable state'' $\mathcal{I}_{\mathcal{R}}$ defined on the constrained space $\mathcal{H}_{\mathcal{R}}$. A full summary of this statement was given in Section \ref{sec:Ch2Typicality}. A different way of phrasing it is through its relation with the ``a priori equal probabilities'' principle, which leads to the maximally mixed state $\mathcal{E}_{\mathcal{R}}$: For almost every pure state of the universe, the reduced state of a small subsystem is approximately the same as if the universe were in the equiprobable state $\mathcal{I}_{\mathcal{R}}$. This justifies the use of the maximum entropy principle, at the global level, as long as we are interested in the behaviour of small subsystems. 

As previously showed (see Section \ref{sec:Ch2Typicality}), the concentration of measure phenomenon can be exploited to argue for the ubiquity of thermal behaviour in macroscopic systems, but its use goes beyond that. If we apply the same logic to the Hilbert space of a tentative theory of quantum gravity, this implies that, in the macroscopic regime, the statistical fluctuation around the typical state $\Omega_S$ of a small subsystem will be incredibly suppressed. In other words, the local properties of the theory under scrutiny will be heavily constrained by the presence of global constraints. We interpret this as proof that, at the macroscopic scale, there are certain local properties of quantum geometry which are ``universal''. Such set of properties is heralded by $\Omega_S$ and it can be unraveled by studying the explicit form of $\Omega_S$.

When we use a typicality argument, we are implicitly performing a statistical analysis over the Hilbert space, to understand what is the ``average local behaviour''. Because of the concentration of measure phenomenon, such local behaviour is extremely peaked around the properties of $\Omega_S$. It is therefore reasonable to expect, from the correct quantum theory of gravity, that the typical properties will be the ones which we observe at the macroscopic scale. Thus, the typicality approach offers a nice and rigorous way to connect the local properties of a quantum theory to their counterpart in the macroscopic regime. For this reason, it is opinion of the author that this logic can be exploited to provide arguments to support, or rule out, tentative theories of quantum gravity. Indeed, the local properties of the classical theory are of paramount relevance and   a good candidate for a quantum theory of gravity should be able to reproduce them, in the macroscopic regime. The underlying intuition is that \emph{a smooth geometry is an emergent feature which results from the structure of correlations at the quantum level.} The concentration of measure phenomenon, and the resulting typicality arguments, can then be used to unravel the connection between ``microscopic structure'' and  ``macroscopic emergence''.

To summarise, the typicality arguments provide a rigorous way to connect the local properties of a quantum system to their macroscopic counterpart. We use this technique to investigate if the spin network formalism  can, at the macroscopic level, reproduce the ``smooth geometry'' behaviour which we expect from General Relativity.




\section{Spin Network states of quantum geometry}

Now we briefly introduce the formalism of spin-networks and their geometric interpretation, developed in LQG. This is, by no means, a complete introduction to spin networks or to
LQG. The sole purpose of this section is to give the reader the technical tools necessary to understand the original results presented in the next chapters. We will present the formalism, with a constructive exposition, and show in which sense the spin networks provide a well defined notion of \emph{quantum geometry}. 

\subsection{Spin Networks}

In several background-independent approaches to quantum gravity, the spin network states provide a \emph{kinematical} description of quantum geometry, in terms of superpositions of graphs $\Gamma$ labelled by group or Lie algebra elements representing holonomies of the gravitational connection and their conjugate triad \cite{Thiemann2008,Rovelli}. These states are constructed as follows (for a thorough introduction to spin-networks we refer to \cite{Rovelli1995a,Baez1996,Major1999}). A graph $\Gamma$ is a set of vertices (or nodes) which are connected by links (or edges). Call $V$ the number of vertices and $N$ the number of links. To each link $l \in \Gamma$ one associates an $SU(2)$ irreducible representation (irrep) labelled by a half-integer $j_l \in \mathbb{N}/2$ called spin. The representation (Hilbert) space is denoted $V^{j_l}$ and has dimension $d_{j_l} = 2 j_l +1$. To each link $l$ one associates the tensor product of the space of the irreducible representation $V_{j_l}$ times its dual $\overline{V}_{j_l}$. In other words, for each link $l$ with spin $j_l$ there is an Hilbert space $V_{j_l} \otimes \overline{V}_{j_{l}}$. A spin network is therefore a graph $\Gamma$, together with the assignment of a set of $N$ irreducible representations of $SU(2)$ labeled by their unique quantum number $j_l$, with $l=1,\ldots,N$. Its Hilbert space is simply the sum over all possible representations
\begin{align}
&\mathcal{H}_U=\bigoplus_{j_l \in \mathbb{N}/2} \bigotimes_{l=1}^N V_{j_l} \otimes \overline{V}_{j_l}.
\end{align}
\begin{figure}[h!]
\centering
\includegraphics[scale=0.2]{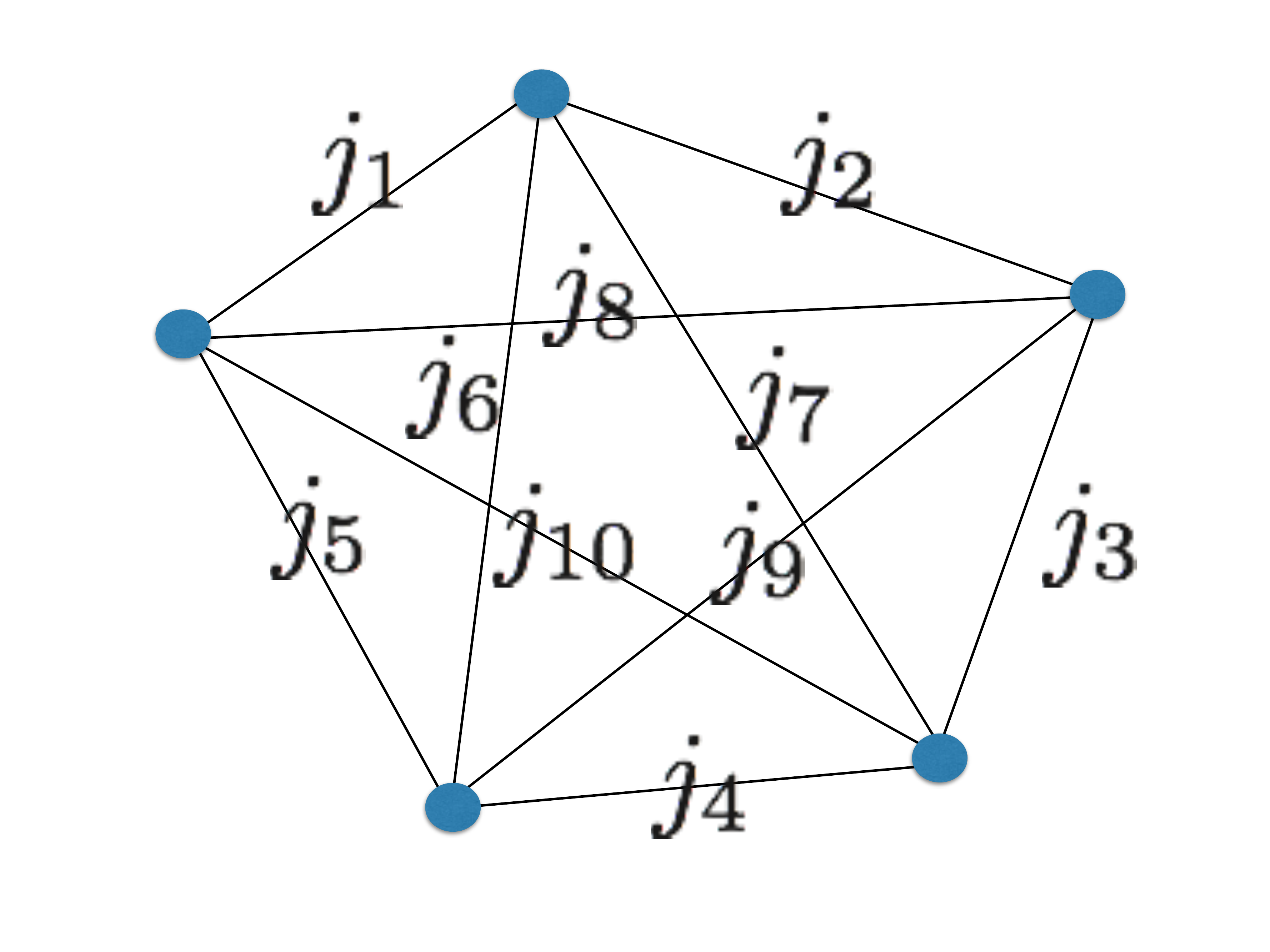}
\caption[Example of a spin-network for a $4$-simplex graph]{Example of a spin network with $5$ vertices connected by $10$ links labeled with ten values $\left\{ j_l\right\}_{l=1}^{10}$ of irreducible representations of $SU(2)$. The label $j_l$ stands for the fact that on the link $l$ we have an Hilbert space $V_{j_l} \otimes \overline{V}_{j_l}$.}\label{fig:SN}
\end{figure}
An example is given in Fig.(\ref{fig:SN}). These are not yet the spin network states that describe a quantum geometry. In order to get there we need to impose invariance under SU(2). The reason is that, at the classical level, General Relativity has a local gauge group which is the Lorenz group. If we perform a partial gauge fixing of this group, fixing the direction of the boosts, we are left with an SU(2) gauge invariance which accounts for invariance under rotation. For this reason, to each vertex $v$ of the graph one attaches an intertwiner $\mathcal{I}_v$, which is an $SU(2)$-invariant map between the representation spaces $V^{j_l}$ associated to all the links $l$ meeting at the vertex $v$,
\begin{align}
\mathcal{I}_v: \bigotimes_{l \,\,\text{ingoing}}  V^{j_l}\to \bigotimes_{l\,\, \text{outgoing}} \overline{V}^{j_l}\,\,.
\end{align}
One can alternatively consider $\mathcal{I}_v$ as a map from $\otimes_{l \in v} V^{j_l}\to \mathbb{C}\simeq V^0$ and call the intertwiner an invariant tensor (or a singlet state) between the representations attached to all the edges linked to the considered vertex. Once the $j_l$'s  are fixed, the intertwiners at the vertex $v$ form a Hilbert space $\mathcal{H}_v$
\begin{align}
\mathcal{H}_v \equiv  \text{Inv}_{SU(2)} [ \bigotimes_{l \in v}V^{j_l}] \,\,\, . \label{eq:Inv}
\end{align}
To understand better what is happening, it is useful to think about a large spin network and isolate a single vertex $v$, with the $N_v$ links around (see Figure \ref{fig:SN1N}). If we decompose the tensor product of all these $N_v$ representations into a sum over irreducible representations we have:
\begin{align}
&\bigotimes_{l=1}^{N_v} V_{j_l} = \left( \underbrace{V_0 \oplus \ldots \oplus V_0}_{d_0 \, \mathrm{times}} \right) \bigoplus \left( \underbrace{V_1 \oplus \ldots \oplus V_1}_{d_1 \, \mathrm{times}} \right) \bigoplus \ldots = \mathcal{K}_0 \oplus \mathcal{K}_1 \oplus \ldots
\end{align} 
$\mathcal{K}_0$ is the vector subspace of the whole tensor product which is invariant under $SU(2)$ and it is precisely the Intertwiner space defined in Eq.(\ref{eq:Inv}):
\begin{align}
&\mathcal{K}_0 = \mathrm{Inv}_{SU(2)} \left[ \bigotimes_{l \in v} V_{j_l} \right] 
\end{align}
Its dimension $d_0=d_0(\left\{ j_l\right\})$ depends on the precise value of the representations $j_l$ living on the links around the vertex $v$:
\begin{align}
&d_0(\left\{ j_l\right\}) = \int_{\mathrm{SU(2)}} dg \, \left(\prod_{l \in v} \chi_l(g) \right) = \frac{2}{\pi} \int_{0}^\pi \sin^2 \theta d \theta \, \left(\prod_{l \in v}\chi_{j_l}(\theta) \right)  \,\, ,
\end{align}
where  $\chi_{j_l}(g)$ is the character of SU(2) in the representation $j_l$. It's definition is as the trace of the group element $g$ when it is written in the $j_l$-th representation. It depends only on the rotation angle $\theta$:
\begin{align}
&\chi_j(g) = \chi_{j}(\theta) = \frac{\sin (2j+1)\theta}{\sin \theta}
\end{align}
The presence of a non-trivial intertwiner space accounts for all the possible ways in which we can contract the $\left\{j_l\right\}_{l\in v}$ irreducible representations of SU(2) (around the vertex $v$) to get an invariant tensor. 
\begin{figure}[h!]
\centering
\includegraphics[scale=0.2]{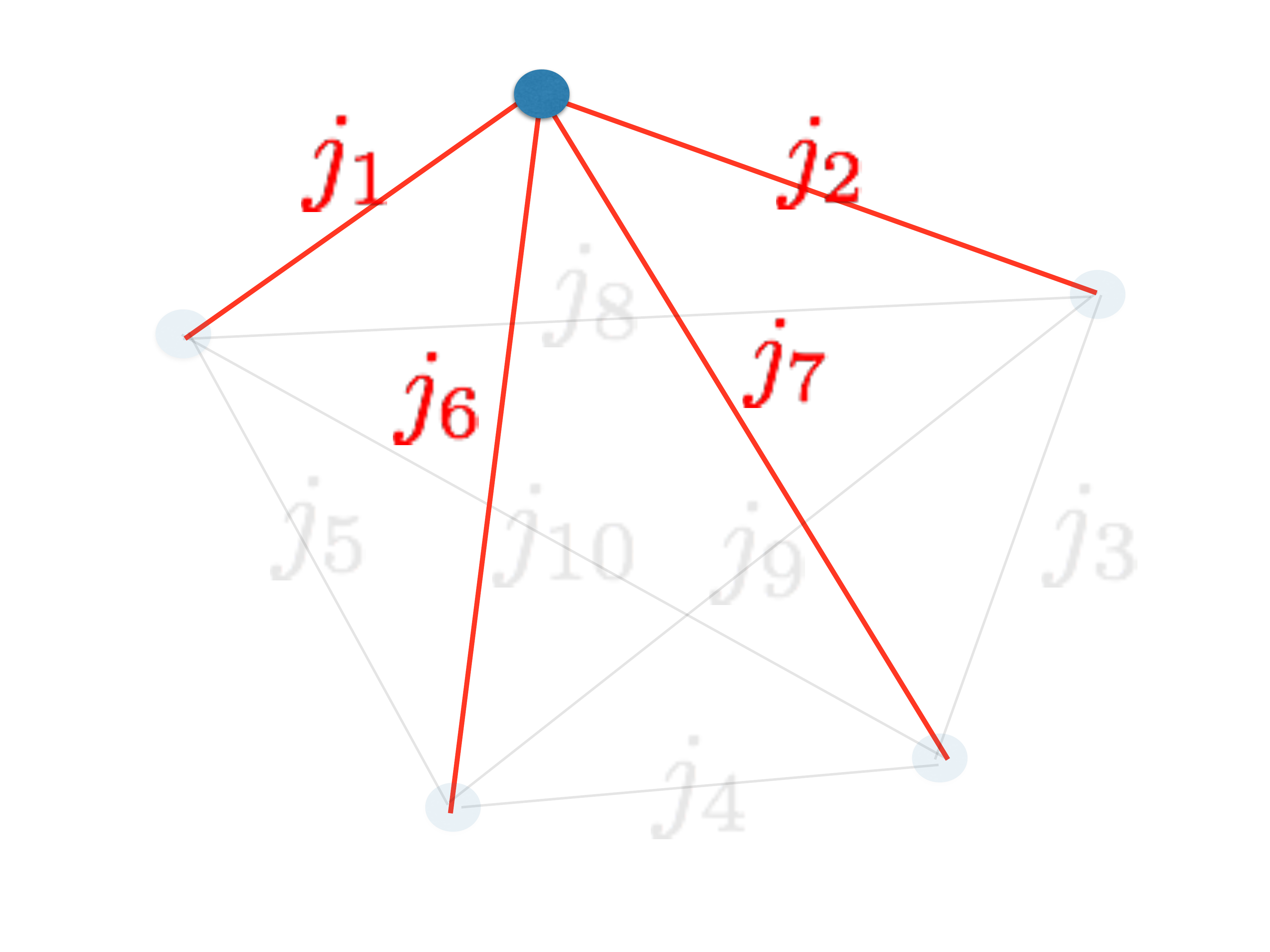}
\caption[Example of a single node in a spin-network]{Here we take the previous example of a spin network and we isolate a single node}\label{fig:SN1N}
\end{figure}
Eventually, a gauge-invariant spin network state $|\Gamma, \{j_e\},\{\mathcal{I}_v\}\rangle$ is defined as the assignment of representation labels $j_e$ to each edge and the choice of a vector $|\{\mathcal{I}_v\}\rangle \in \otimes_v\mathcal{H}_v$ for the vertices. The spin network state defines a wave function on the space of discrete connections $SU(2)^E/SU(2)^V$ ,
\begin{align}
\phi_{\{j_e\},\{\mathcal{I}_v\}} [g_e ] = \langle h_e |  \mathcal{I}_v \rangle \equiv  \text{tr} \bigotimes_e D^{j_e}(h_e) \otimes \bigotimes_v \mathcal{I}_v
\end{align}
where we contract the intertwiners $\mathcal{I}_v$ with the (Wigner) representation matrices of the group elements $g_e$ in the chosen representations $j_e$. Therefore, upon choosing a basis of intertwiners for every assignment of representations ${j_l}$, the spin network states are a basis of the space of wave functions associated to the graph $\Gamma$, 
\begin{align}
\mathcal{H}_{\Gamma} = \bigoplus_{\{j_e\}} \bigotimes_v \mathcal{H}_v = L_2[SU(2)^E/SU(2)^V] \,\, .
\end{align} 
Such discrete and algebraic objects provide a description of the fundamental excitations of quantum spacetime. We will now discuss the underlying notion of quantum geometry of the SU(2)-invariant spin network states. 

\subsection{Quantum Geometry}\label{Sec:QGeos}

The geometrical interpretation of spin network states was mostly developed in \cite{Bianchi,Barbieri1998,Baez2000,Freidel2011,Livine2014}. Here we summarize the resulting picture. Choose an arbitrary graph $\Gamma$ and start from the space of the states which are non-gauge invariant: $\mathcal{H}_\Gamma = \bigoplus_{j_l} \bigotimes_{l} V_{j_l} \otimes \overline{V}_{j_l}$. The direct sum over different irreps accounts for different choices of the spins over the links and states belonging to different choices are necessarily orthogonal. For this reason we choose a labelling $(j_1,\ldots,j_N)$ and everything we say will hold for arbitrary choices of the labeling. We now focus on the Hilbert space around an arbitrary node $v$: $\otimes_{l \in v} V_{j_l}$. Given that these are angular momentum states, to each link $l$ we can associate $3$ real numbers describing the angular momentum along the $3$ directions $x,y,z$: $\vec{J}_l$. Their algebra satisfies SU(2) commutation relations $[J_l^a,J_l^b]= \, i \epsilon_{abc}J_l^c$ and their matrix representation is given by the $j_l$-th irreducible representation of SU(2). The spin network states $\ket{j_l,m_l}$ provide a full basis as they diagonalize a complete set of commuting observables $(\vec{J}_l^2, J_l^z)$, for each link:

\begin{align}
&J_l^2 = \vec{J}_l \cdot \vec{J}_l \ket{j_l,m_l} = j_l(j_l+1) \ket{j_l,m_l} && J_l^z  \ket{j_l,m_l} = m_l \ket{j_l,m_l}
\end{align}

This means that, at each node, we have $N_v$ vectors $\vec{J}_l$ of fixed length $j_l(j_l+1)$ but with arbitrary orientation. See Figure \ref{fig:vectors} for an example of a 4-valent vertex.
\begin{figure}[h!]
\centering
\includegraphics[scale=0.3]{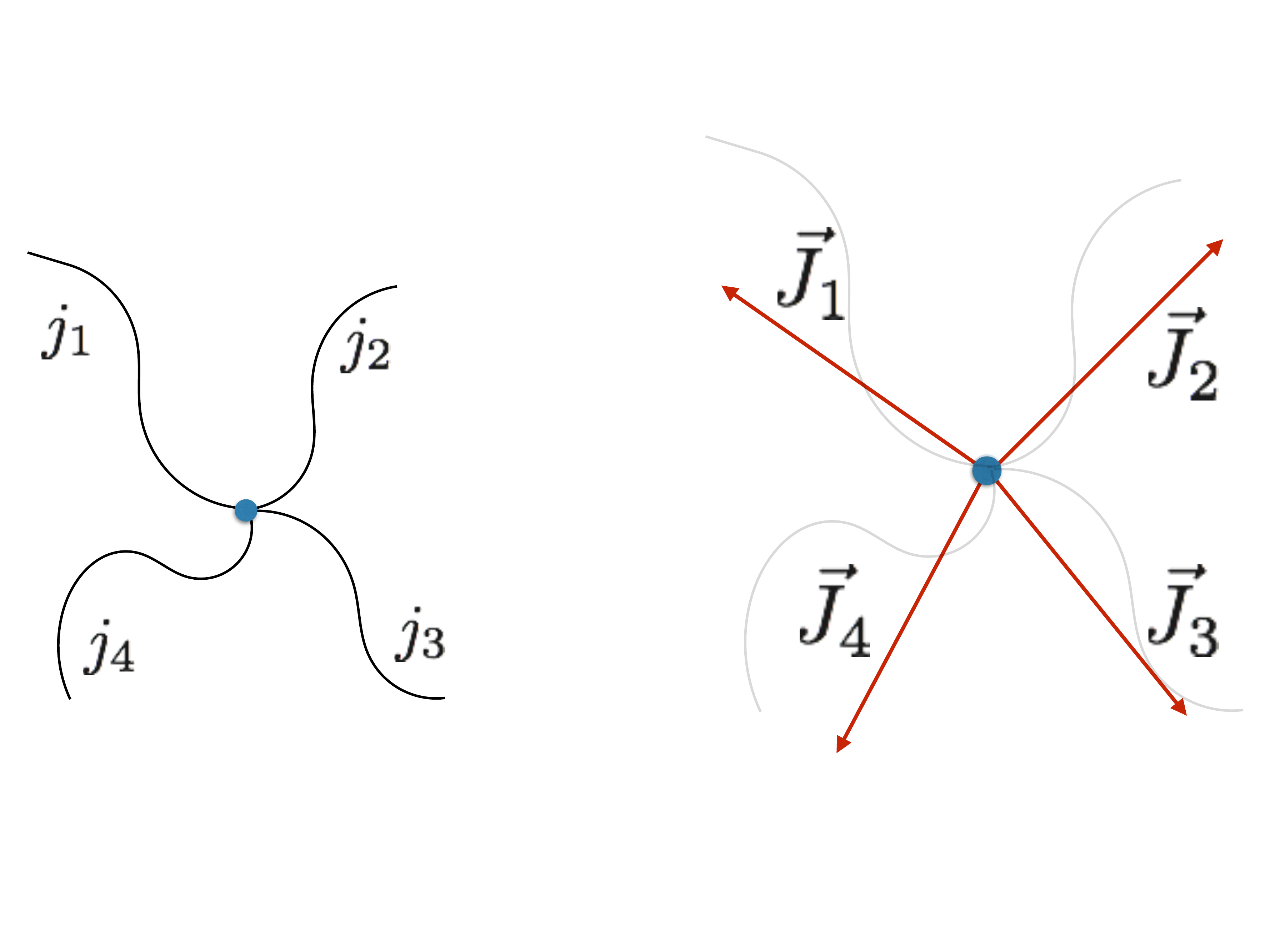}
\caption[Angular momentum vectors in a spin network]{In a four-valent vertex $v$ we have four angular momentum vectors $\left\{\vec{J}_l\right\}_{l=1}^4$ which can be defined around $v$. The non-gauge-invariant Hilbert space of the states spans all the possible orientations of these vectors.}\label{fig:vectors}
\end{figure}
The full geometric interpretation is available on the gauge-invariant Hilbert space $\mathcal{K}_0$. Indeed, the constraint imposing gauge-invariance is equivalent to the request that the sum of the vectors $\vec{J}_l$ around a node must be zero \cite{Bianchi,Barbieri1998,Baez2000,Freidel2011,Livine2014}. Hence, for a given vertex $v$, the allowed configurations of the vectors $\vec{J}_l$ in the gauge-invariant Hilbert space are the ones where $\sum_{l \in v} \vec{J}_l  = 0$.

Thanks to a theorem by Minkowski \cite{Bianchi}, we know that for an arbitrary set of $3D$ vectors $\left\{\vec{J}_l\right\}_{l=1}^{N_v}$ such that $\sum_{l=1}^{N_v} \vec{J}_l=0$ there is a $3D$ polyhedron where the vectors $\vec{J}_l$ are the surface vectors describing orientation and area of the faces of the polyhedron. This means that an arbitrary $N_v-$valent node is dual to a $3D$ polyhedron with $N_v$ faces. This is not merely an analogy. It can be shown that the intertwiner Hilbert space of an $N_v-$valent vertex is the quantization of the phase space of the shapes of a polyhedron with $N_v$ faces, at fixed areas\cite{Bianchi}. For this reason, given a cellular decomposition of a three-dimensional manifold, a spin-network graph with a node in each cell and edges connecting nodes in neighbouring cells is said to be dual to this cellular decomposition. Links carry ``area excitations'' while vertices carry ``volume excitations''. Each link is dual to a surface patch and the area of such patch depends on the Casimir of the representation $j_e$. Analogously, vertices of a spin network can be dually thought of as chunks of volume. See Figure \ref{fig:duality} for an example of a four-valent vertex.
\begin{align}
&\sum_{l\in v} \vec{J}_l = 0 \,\, .
\end{align}
\begin{figure}[h!]
\centering
\includegraphics[scale=0.3]{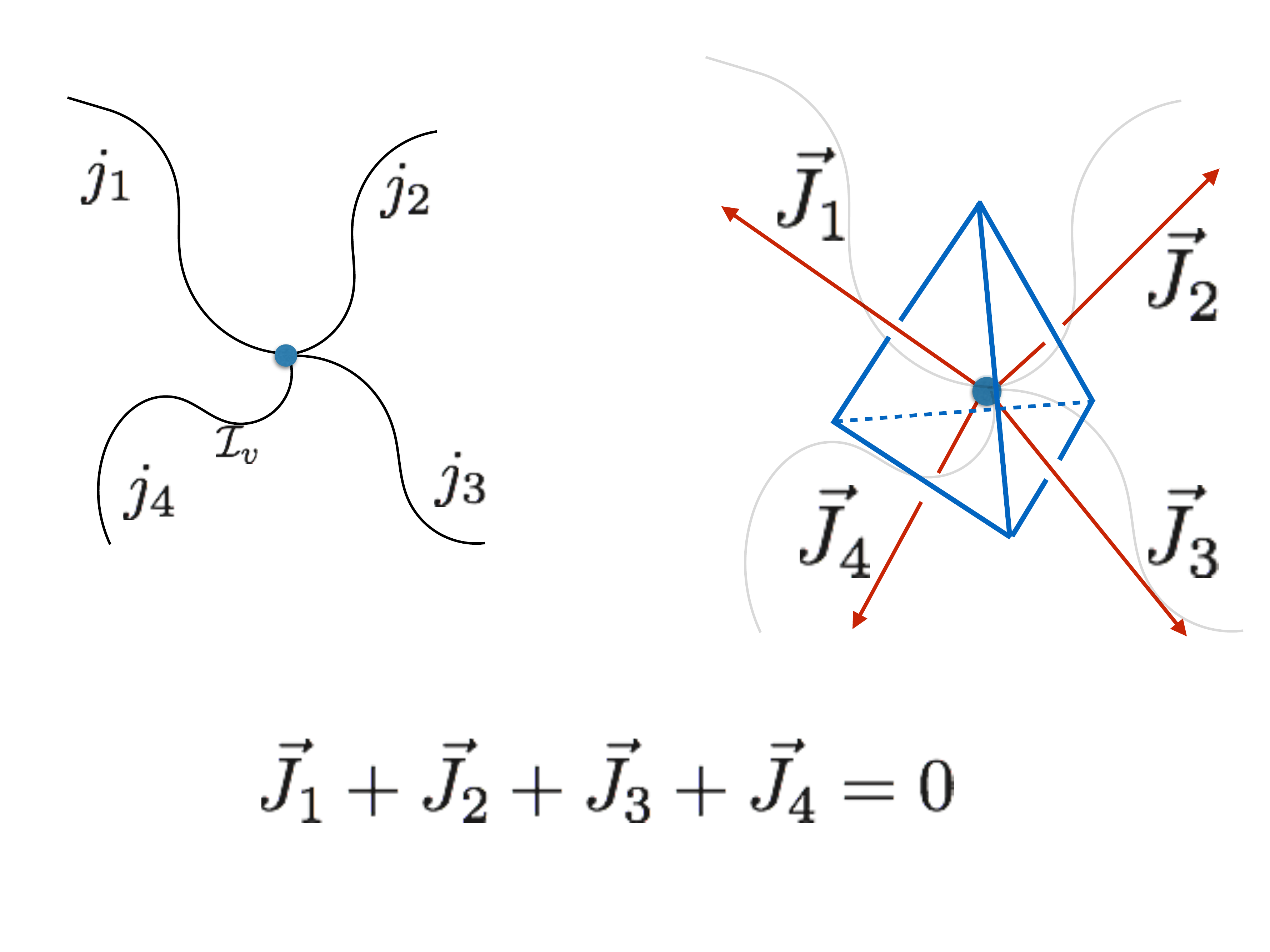}
\caption[Geometric interpretation of a node in a spin network]{Example of a four-valent node dual to a tetrahedron, describing the fundamental cell of a triangulated 3d space. The edges of the dual graph are labelled with spins $\{j_i\}$ which give the length of the area vectors $\vec{J}_i$. At the level of the area vectors, the gauge-invariance conditions is called ``closure constraint'' as the sum of all vectors $\vec{J}_i$ around a node is zero.}\label{fig:duality}
\end{figure}
This leads to the geometrical interpretation of spin network states as a collection of adjacent polyhedra and provides a well-defined notion of quantum geometry in $3D$ space. For example in Fig.(\ref{fig:4SimpGeos})
we draw the dual geometric picture of a spin network with $5$ vertices and $10$ links. For a generic spin network the dual geometric picture with be similar to the one in Fig.(\ref{fig:collection}). For all the details about the geometric interpretation of spin-networks we send the interested reader to the relevant literature\cite{Bianchi, Barbieri1998, Baez2000, Freidel2011, Livine2014}. We will now show how quanta of area and of volumes can be defined in this formalism.
\begin{figure}[h!]
\centering
\includegraphics[scale=0.2]{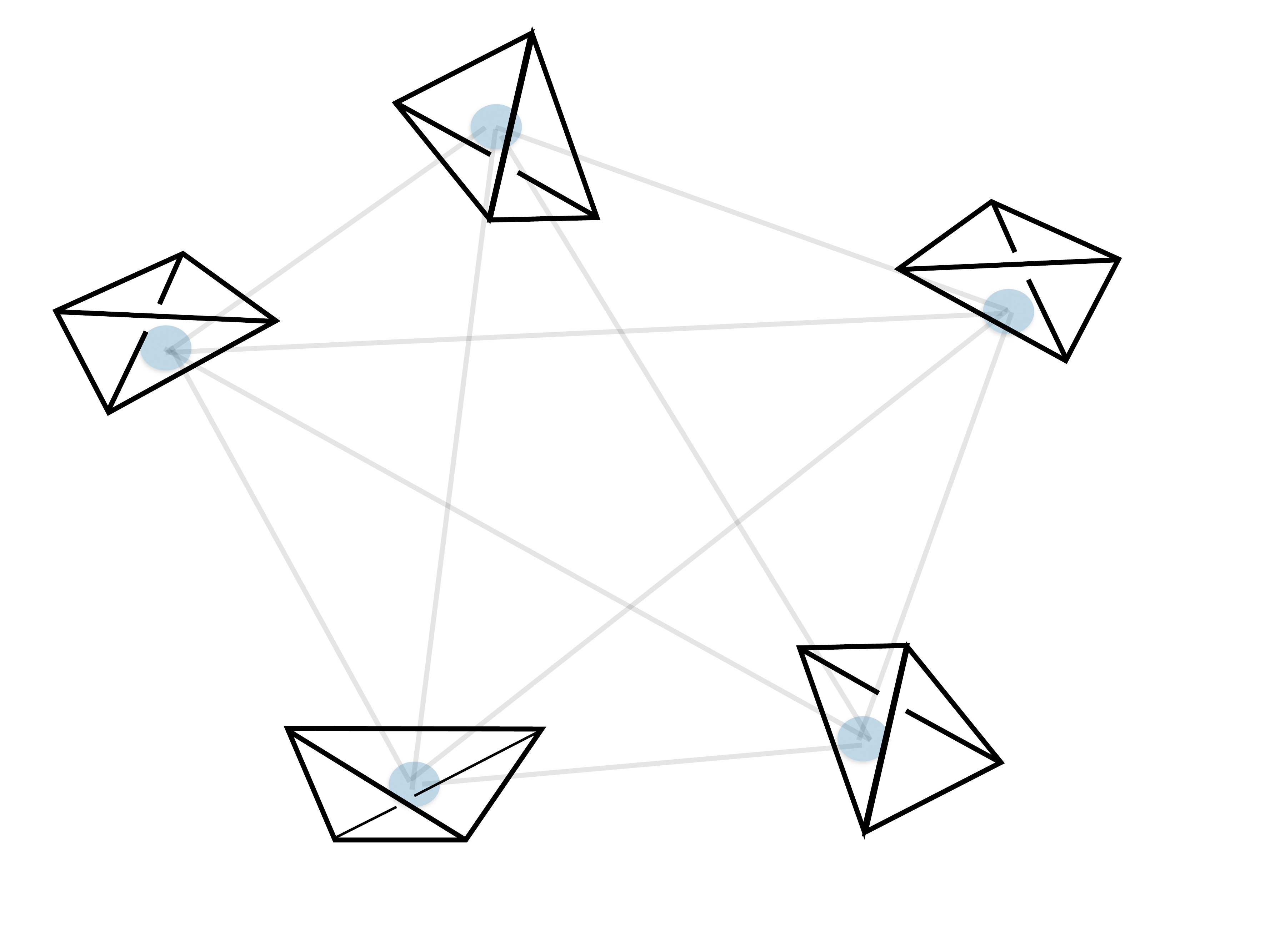}
\caption[Spin networks as collections of polyhedra]{Graph of a $4-$simplex which is geometrically interpreted as made by five tetrahedra.}\label{fig:4SimpGeos}
\end{figure}

\subsection{Area, Volume and fuzzy geometry}

Here we present some fundamental definitions and calculations which clarify the geometric interpretation of spin networks and their quantum nature. In particular, we will focus on the area and volume operators, for the simplest case of a quantum tetrahedron. The area operator $A_l$ on the link $l$ is diagonal in the spin network basis and its eigenvalues are simply $\sqrt{j_l(j_l+1)}$:
\begin{align}
&A_l \ket{j_l,m_l}=\gamma L_p^2 \sqrt{\vec{J}_l \cdot \vec{J}_l}  \ket{j_l,m_l} = \gamma L_p^2 \sqrt{j_l(j_l+1)} \ket{j_l,m_l} \,\,\, ,
\end{align}
where $L_p$ is the Planck length and $\gamma$ is a real parameter called Immirzi parameter. The volume operator is a bit more involved, but the case of a four-valent node can be addressed easily \cite{Bianchi}. Indeed from standard $3D$ geometry it can be shown that volume of a tetrahedron with area vectors $\left\{ \vec{J}_l\right\}_{l=1}^4$ is 
\begin{align}
&V = L_p^3 \gamma^3 \frac{\sqrt{2}}{3} \sqrt{ \left\vert \vec{J}_1 \cdot \left( \vec{J}_2 \times \vec{J}_3 \right) \right\vert}
\end{align} 
\begin{figure}
\centering
\includegraphics[scale=0.3]{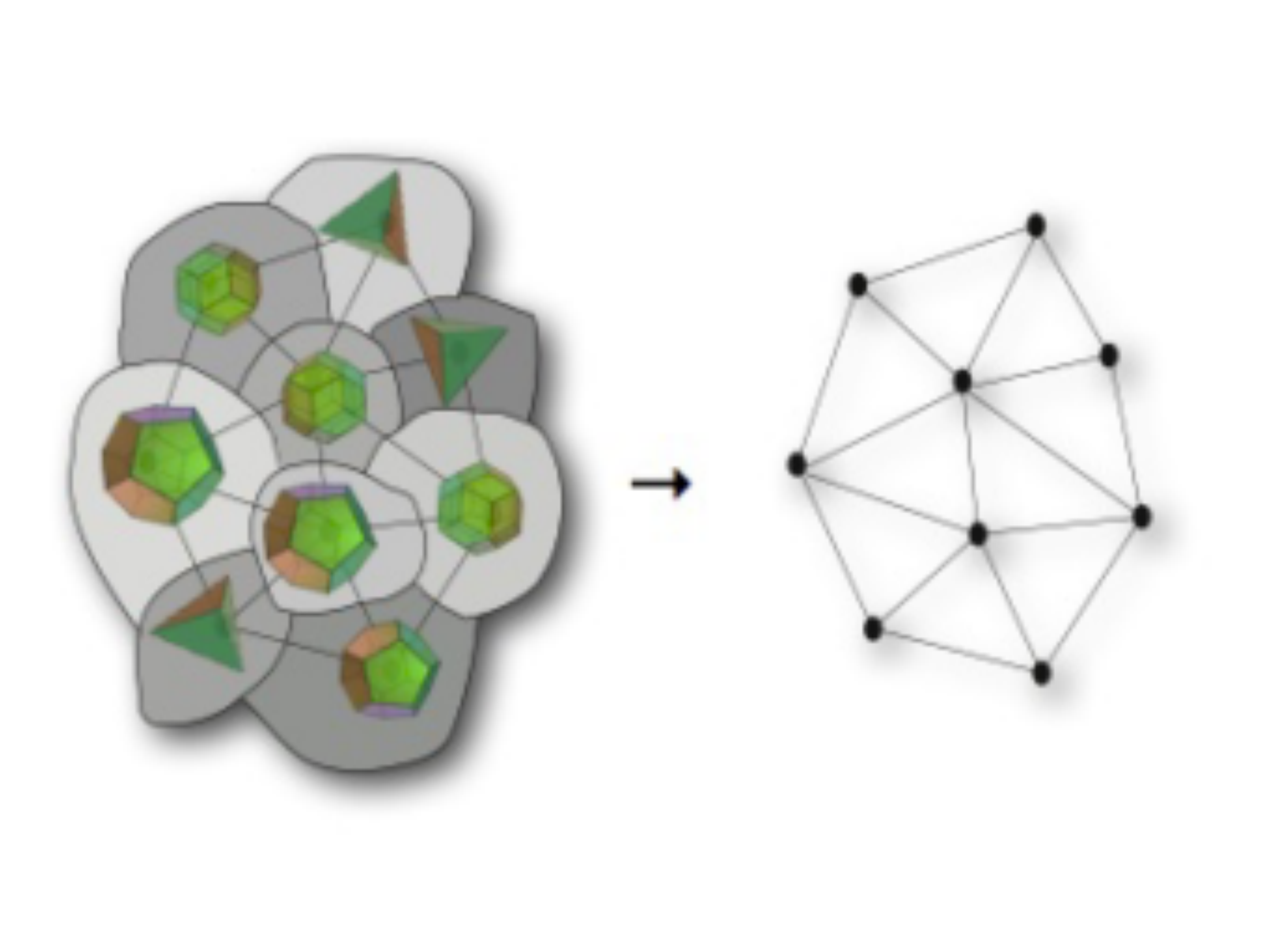}
\caption[Spin networks and their dual discrete geometry]{Geometric interpretation of a spin network as a collection of adjacent polyhedra.}\label{fig:collection}
\end{figure}
The four areas and the volume form a complete set of commuting observables and their basis states are the spin network states. But, is this a real tetrahedron? The answer is obviously no. The most important way to see this is that at the classical level the geometry of a tetrahedron is fixed when we give six numbers: the lengths of each side. However, at the quantum level we have only five numbers: the four areas and the volume. The situation is analogous to the standard case of angular momentum which, classically is defined by three numbers while at the quantum level is defined by two of them: $L^2$ and $L_z$. For this reason, the geometry of the quantum tetrahedron, in general will not be sharp, in the same way in which a quantum rotor does not have a sharp angular momentum $\vec{L}$. The fundamental reason is that the different components of the angular momentum do not commute. They satisfy the SU(2) commutation relation and for this reason they all define basis which are Mutually Unbiased. Because of that, if we are in an eigenstate of the angular momentum along the $z$ direction the components along $x$ and $y$ will be maximally spread. If $L_z$ is sharp, $L_x$ and $L_y$ are necessarily fuzzy. If we now apply the same logic to the quantum tetrahedron we see that the geometry can never be completely sharp. We will always have residual fuzziness, due to the fact that components along different directions do not commute. 

To summarize, there are two aspects where we can see that quantum geometry is fundamentally different from Riemannian geometry. First, area and volume can only take on discrete eigenvalues; second, geometry at the Planck scale is always fuzzy. This ends our introduction to spin networks and their quantum geometry. In the next chapter we will investigate the macroscopic behaviour of spin networks by means of the concentration of measure phenomenon.

\chapter{Typicality in spin-networks}\label{Ch7}


In this chapter we extend the so-called typicality approach, originally formulated in statistical mechanics contexts, to $SU(2)$ invariant spin network states. Our results do not depend on the physical interpretation of the spin-network. However, they are mainly motivated by the fact that spin-network states can describe states of quantum geometry, providing a gauge-invariant basis for the kinematical Hilbert space of several background independent approaches to quantum gravity. The first result is, by itself, the existence of a regime in which we show the emergence of a typical state. We interpret this as proof that, in that regime there are certain (local) properties of quantum geometry which are ``universal''. Such set of properties is heralded by the typical state, of which we give the explicit form. This is our second result. In the end, we study some interesting properties of the typical state, proving that the area-law for the entropy of a surface must be satisfied at the local level, up to logarithmic corrections which we are able to bound. This chapter is based on the work done by the author, in collaboration with Dr. G. Chirco, published in Ref. \cite{Anza2016}.

\section{Introduction}

In quantum statistical mechanics, \emph{canonical typicality} states that almost every pure state of a large quantum mechanical system, subject to the fixed energy constraint, is such that its reduced density matrix over a sufficiently small subsystem is approximately in the {\it canonical} state described by a thermal distribution \`{a} la Gibbs \cite{Hamma2018,Pranzetti,Dymarsky2018,Tasaki1998}. Such a statement goes beyond the thermal behaviour. For a generic closed system in a quantum pure state, subject to some \emph{global constraint}, the resulting canonical description will not be thermal, but generally defined in relation to the constraint considered \cite{Popescu2006,Linden}. Again, in this case, some specific properties of the system emerge at the local level, regardless of the nature of the global state. These properties depend on the physics encoded in the choice of the global constraints.
Within this generalised framework, we exploit the notion of \emph{typicality} to study whether and how  ``universal''  statistical features of the local correlation structure of a spin-network state emerge in connection with the  choice of the global constraint. We focus our analysis on the space of the $N$-valent $SU(2)$ invariant intertwiners, which are the building blocks of the spin network states. In LQG, such intertwiners can be thought of dually as a region of $3d$ space with an $S^2$  boundary (see Section \ref{Sec:QGeos}). We reproduce the typicality statement in the full space of $N$-valent intertwiners with fixed total area and we investigate the statistical behaviour of the canonical reduced state, dual to a small patch of the $S^2$  boundary, in the large $N$ limit. Eventually, we study the entropy of such a reduced state and its behaviour in different thermodynamic regimes.


\section{Intertwiner Typicality}  \label{tysn}


In the following we will consider the ``typicality'' argument, summarised in Section \ref{sec:Ch2Typicality} for the spin network states. The interest in testing the notion of typicality in quantum gravity resides in the kinematic nature of the statement, which is a fundamental feature to study the possibility of a thermal characterisation of reduced states of quantum geometry, regardless of any Hamiltonian evolution in time. In the following, we will focus on a fundamental building block of a spin network graph, the Hilbert space of a single intertwiner with $N$ legs. For this reason we consider a large quantum system, given by a collection of $N$ edges, represented by $N$ independent edges states. The Hilbert space of the system is the direct sum over $\{	j_i\}$'s of the  tensor product of $N$ irreducible representations $V^{j_i}$, 
\begin{align}\label{space}
\mathcal{H}= \bigoplus_{\{j_i\}} \bigotimes_{i=1}^N V^{j_i}.
\end{align} 
This set of independent edges plays the role of the ``universe''. 
Notice that, despite its extreme simplicity, this system has a huge Hilbert space. The single representation space $V^j$ has finite dimension $d_{j_i} = 2 j_i +1$. However, $d_{j_i}$ is summed over all $j_i \in \frac{\mathbb{N}}{2}$.  Therefore each Wilson line state (edge) lives in an infinite dimensional Hilbert space\footnote{An important detail is how we deal with spin-0 representations. In LQG these are avoided introducing  cylindrical consistency which requires that such links are equivalent to non-existent links. We do not require cylindrical consistency, hence spin-0 representations are allowed.}. In the following, we will always consider a cut-off in the value of the $SU(2)$ representation labelling the edge\footnote{Another way to introduce a cut-off in the representations, which has already been explored in literature \cite{Dupuis, Buffenoir1999,Delcamp2003,Fairbairn2012a,Han,Kassel}, is to consider the so-called q-deformation of $SU(2)$. This is usually done in LQG to include a cosmological constant.},

\begin{align}
\bigoplus_{\{j_i\}}\to \bigoplus_{\{j_i\}}^{j_i \le {J}_{max}}, \quad \text{with} \quad {J}_{max}\gg1.
\end{align}
This will allow us to deal with a very large but finite-dimensional space. Now, we want to split the universe in a ``small system plus environment'' fashion. We do so simply by defining two subsets of edges $E$ and $S$, with $\{1,\cdots , k \} \in {S}$  and $\{ k+1,\cdots , N \} \in{E}$, such that $k \ll N$. Consequently, we can write the Hilbert space of the universe as the tensor product $\mathcal{H}=\mathcal{H}_E \otimes \mathcal{H}_S$. We would like to stress here that, for our result to hold the $k$ links of the system do not need to be adjacent. Despite that, we are interested in the local properties of an intertwiner, therefore we will always think about these links as adjacent and forming a simply connected 2D patch.\\

The next step toward typicality consists in defining the constraint which restricts the allowed states of the system and environment to a subspace of the total Hilbert space.

\subsection{Definition of the constraint}

Two main ingredients are necessary to the definition of the constraint. The first one is the $SU(2)$ gauge invariance. This choice reduces the universe Hilbert space to the collection of the $SU(2)$-{invariant} linear spaces
\begin{align}\nonumber
\mathcal{H}_{N}=\bigoplus_{\{j_i\}}\text{Inv}_{SU(2)} [\bigotimes_{i=1}^N V^{j_i}],
\end{align} 
spanned by $N$-valent intertwiner states. 
Invariance under $SU(2)$ is the the first ingredient defining our subsystem constrained space.  
\begin{figure}[h]
    \centering
\includegraphics[width=2.5 in]{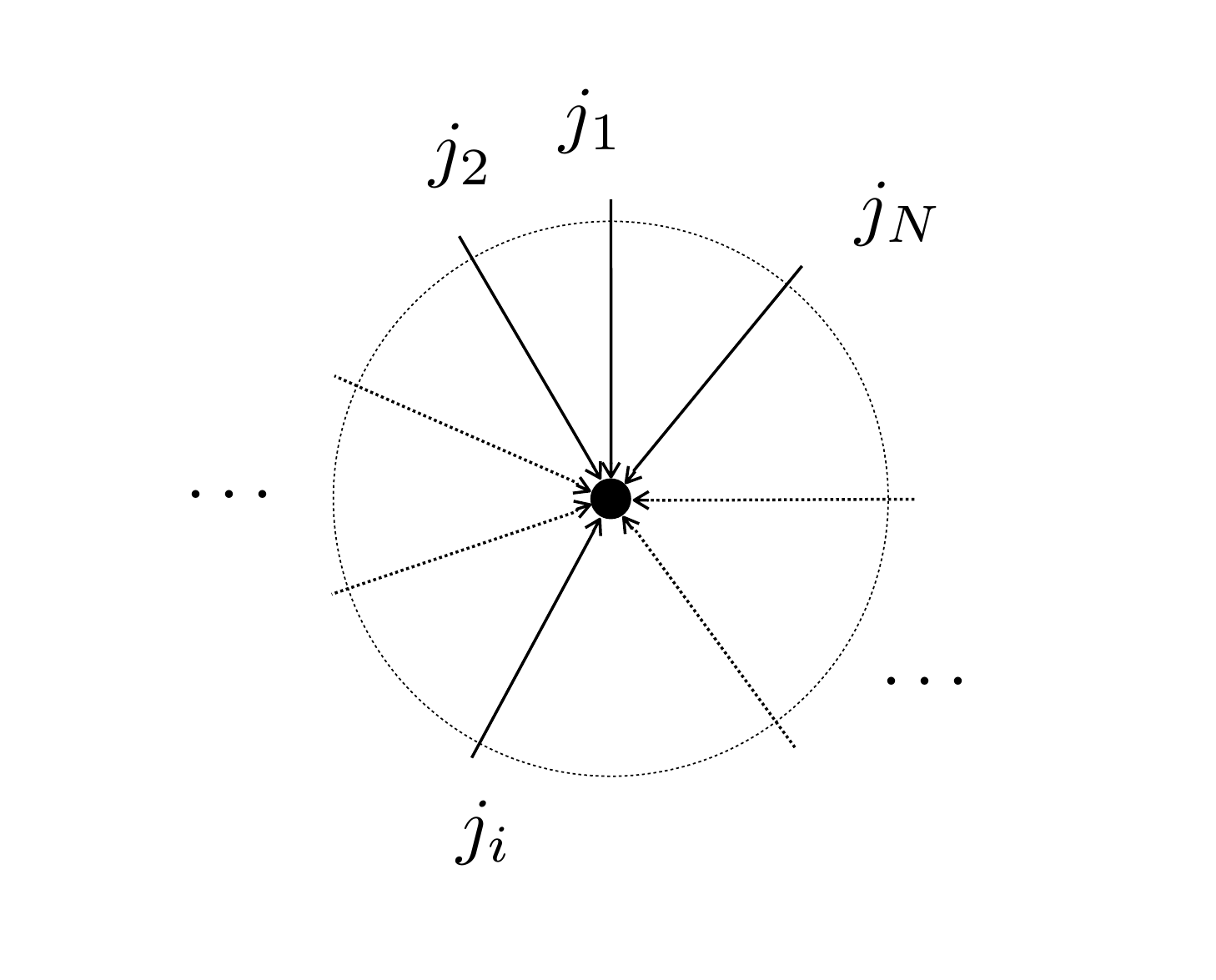}
\caption[The $N$-legged intertwiner]{The $N$-legged intertwiner system describes a convex polyhedron with $N$ faces, with the topology of a \emph{2-sphere}. The $N$ (oriented) edges are dual to the elementary $N$ surfaces comprising the surface of the polyhedron \cite{Bianchi,Barbieri1998,Baez1996,Freidel2011,Livine2013}. The intertwiner contains information on how the elementary surfaces, dual to the links, are combined together to form a surface boundary of the space region dual to the node \cite{Thiemann2008,Rovelli2004}.}\label{sphere}
\end{figure}

It has been proven in \cite{Freidel2010} that the Hilbert space of the $N$-valent intertwiners naturally decomposes into subspaces of constant total area\footnote{The choice of a linear area spectrum $j \times l_P^2$ is favoured by the forthcoming approach involving the $U(N)$ structure of the intertwiner space.} which, following the notation in \cite{Freidel2010} we call $\mathcal{H}_N$. Therefore $\mathcal{H} = \bigoplus_{J} \mathcal{H}_N^{(J)}$. We further constrain our system by considering only the invariant \emph{tensor product} Hilbert space, with total spin fixed to $J=J_0$ (see Fig. \ref{sphere}). This is the last ingredient. Eventually, the constrained Hilbert space is given by $\mathcal{H}_{\mathcal{R}} = \mathcal{H}_N^{(J_0)}$. \\

It was also proven in \cite{Freidel2010} that each subspace $\mathcal{H}_N^{(J)}$ of $N$-valent intertwiners with fixed total area $J$ carries an irreducible representation of $U(N)$. In this context, one can interpret $J_0$ as the total area dual to the set of $N$ legs of the intertwiner. The main reason behind the choice of $\mathcal{H}_N^{(J_0)}$ as constrained space is that in the semiclassical limit one can think of this system, dually, as a closed surface with area $J_0 l^2_P \gg l^2_P$, where $l_P$ is the Planck length. \\


\subsection{The canonical states of the system}\label{redu}

Once the constrained space has been defined, in order to compute the canonical reduced state, we need the expression of the maximally mixed state $\mathcal{I}_{\mathcal{R}}$ over $\mathcal{H}_{\mathcal{R}}$. This is formally given by

\begin{align}
&\mathcal{I}_{\mathcal{R}} \equiv \frac{1}{d_{\mathcal{R}}}{\mathbb{I}_{\mathcal{R}}} = \frac{1}{d_{\mathcal{R}}} P_{\mathcal{R}} ,
\end{align}

where $P_{\mathcal{R}}$ projects the states of $\otimes_{l} \mathcal{H}^{j_l}$ onto the $SU(2)$ gauge invariant subspace with fixed total spin number $\mathcal{H}_N^{(J_0)}$.

When dealing with $SU(2)$ quantum numbers there are two common choices for the basis of the Hilbert space: the coupled and the decoupled basis. The coefficients which connect the two basis are the well-known Clebsh-Gordan coefficient. Considering that the main task is to perform the partial trace of $\mathcal{I}_{\mathcal{R}}$ over the environment, a suitable basis to write the projector $P_{\mathcal{R}}$ is a \emph{semi-decoupled} basis in which all the quantum numbers within the system and within the environment, respectively, are coupled, but the environment and the system are not.\\

\begin{figure}[h]
\centering
\includegraphics[width=3 in]{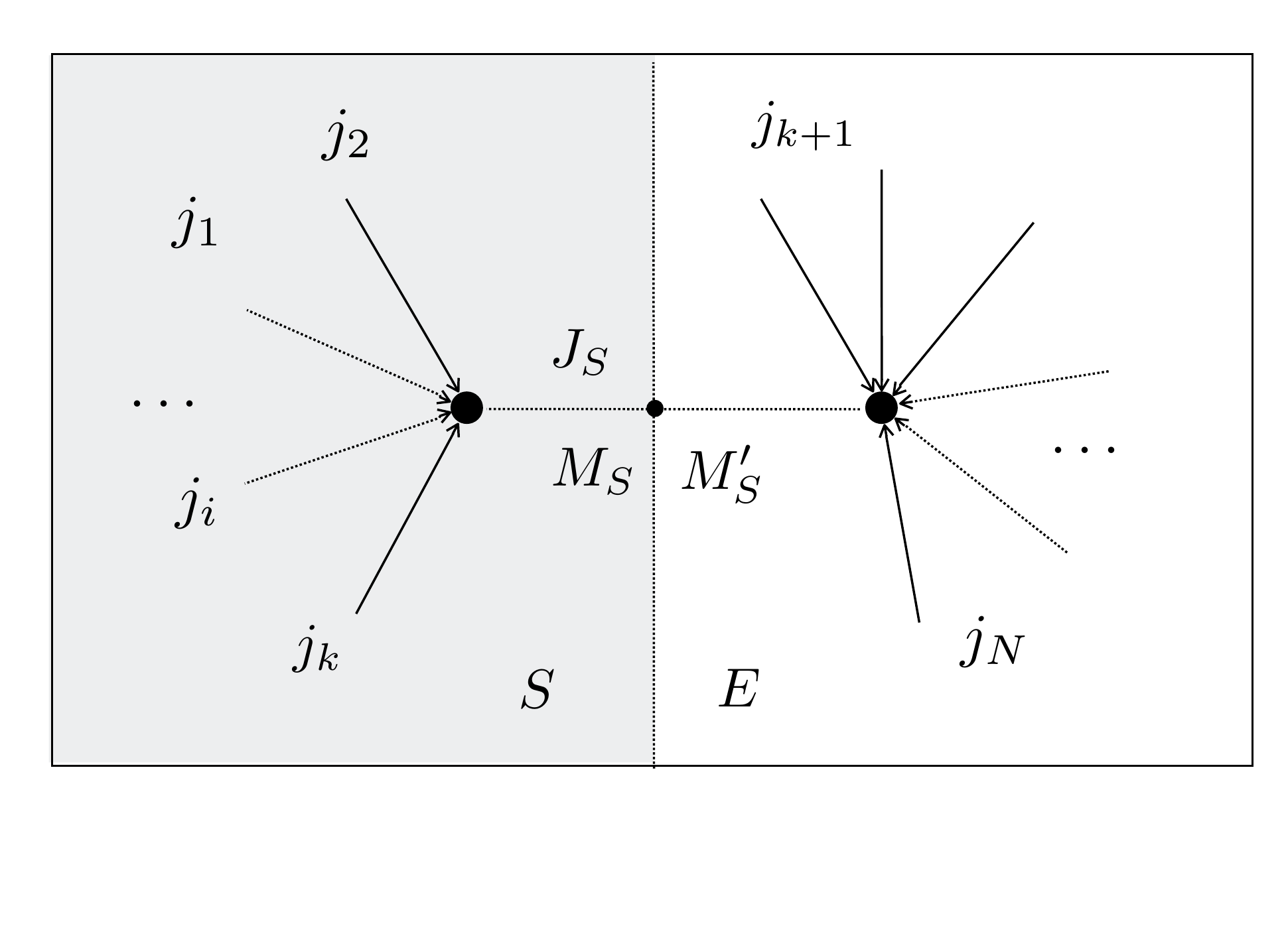}
\caption[Graphic illustration of a semi-decoupled basis]{Here we show a graphic illustration of the semi-decoupled basis that we are using to write the projector $P_{\mathcal{R}}$ onto the constrained Hilbert space $\mathcal{H}_{\mathcal{R}}$. For a recent work on the splitting of a gauge-invariant system we suggest \cite{Donnelly2016a}.}
\end{figure}

Using such semi-decoupled basis we can write the projector as

\begin{align} \nonumber
&P_{\mathcal{R}} = \sum^{(J_0)}_{\{j_E, j_S\}} \sum_{\eta_E, \sigma_S} \sum_{|\vec{J}_S|,M_S,{M'_S}} \frac{(-1)^{M_S + M'_S}}{d_{|\vec{J}_S|}} \cdot \nonumber \\
& \quad \cdot \Ket{\{j_E, j_S\}; \eta_E, \sigma_S; |\vec{J}_S|, -M_S; |\vec{J}_S|, M_S} \\ \nonumber
& \qquad  \qquad \Bra{\{j_E, j_S\}; \eta_E, \sigma_S; |\vec{J}_S|, -M'_S; |\vec{J}_S|, M'_S} 
\end{align}

where with $\Ket{\{j_E, j_S\}; \eta_E, \sigma_S; |\vec{J}_S|, -M_S; |\vec{J}_S|, M_S}$ we mean $\Ket{\{j_E\} ; \eta_E; |\vec{J}_S|, -M_S}_E  \otimes \Ket{ \{j_S\}; \sigma_S; |\vec{J}_S|, M_S }_S$, $d_{|\vec{J}_S|} \equiv 2 |\vec{J}_S| +1$ and $\sum^{(J_0)}_{\{j_E, j_S\}}$ means that we are summing only over the configurations of the spins $\left\{ j_i\right\}$ such that $\sum_{i \in E} j_i + \sum_{k \in S} j_k = J_0$. The quantum numbers  $\sigma_S$ and $\eta_E$ stand for the recoupling quantum numbers necessary to write the state in the coupled basis, respectively within the system and the environment. Eventually, $|\vec{J}_S|$ and $M_S$ are, respectively, the norm of the total angular momentum of the system and its projection over the $z$ axis; $|\vec{J}_E|$ and $M_E$ have the same meaning but they refer to the environment.

The details of the generic element of the semi-decoupled basis and of the way in which we obtain the projector can be found in the Supplementary Material.\\

The dimension of the  constrained Hilbert space $d_{\mathcal{R}} \equiv \mathrm{dim} (\mathcal{H}_{\mathcal{R}}) $ counts the degeneracy of the $N$-valent intertwiners with fixed total spin $J_0$. Given the equivalence between the space $\mathcal{H}_N^{(J_0)}$ of $N$-valent intertwiners with fixed total area $\sum_i j_i=J_0$ (including the possibility of trivial SU(2) irreps) and the irreducible representation of $U(N)$ formalism for $SU(2)$ intertwiners \cite{Freidel2010}, $d_{\mathcal{R}}$ can be calculated as the dimension of the equivalent maximum weight $U(N)$ irrep with Young tableaux given by two horizontal lines with equal number of cases $J_0$,
\begin{align} \label{dime}
d_{\mathcal{R}}=  \frac{1}{J_0+1} \binom{N + J_0 -1}{J_0} \binom{N+ J_0 -2}{J_0} 
\end{align}

 Thanks to the tensor product structure of the semi-decoupled basis, with respect to the bipartition of the universe into system and environment, we can easily perform the partial trace operation over the environment. The details of the computation can be found in the Supplementary Material. The final expression of the canonical state of the system is
 
\begin{align}
&\Omega_S = \sum_{J_S\le J_0/2}  \sum_{\sigma_S, |\vec{J}_S|, M_S} \sum_{\{{j}_S \}}^{J_S} \frac{D_{(N-k)}( |\vec{J}_S|, J_0-J_S )}{d_{|\vec{J}_S|}\,d_{\mathcal{R}}} \cdot \\ 
& \quad \cdot \Ket{\{j_S\}, \sigma_S,|\vec{J}_S|,M_S}  \Bra{\{j_S\},\sigma_S,|\vec{J}_S|,M_S} \nonumber 
\end{align}\label{canonico}

\begin{center}
\begin{figure}[h]
\centering
\includegraphics[width=4.5 in]{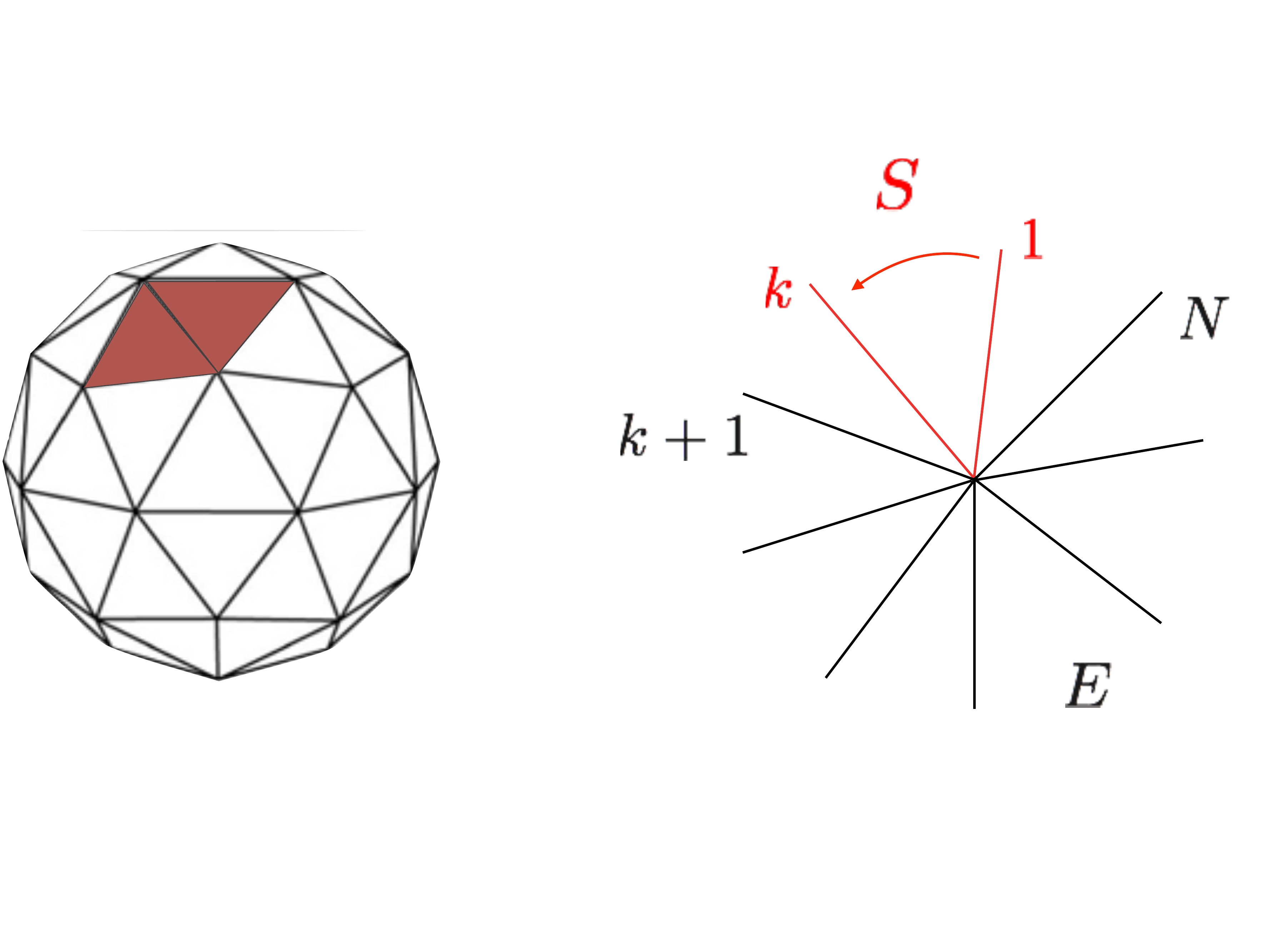}
\caption[Small patch of a larger surface]{A local patch of the 2d surface (in red), associated to a subset of intertwined links $\{ j_{1},\cdots ,j_k\}$ defining the ``system''. The ``environment'' is identified with the complementary 2d-surface associated to the set of links  $\{ j_{k+1},\cdots ,j_N\}$, with $N \gg k$.}\label{sphere2}
\end{figure}
\end{center}

Where $\sum_{\{{j}_S \}}^{J_S}$ means that we are summing over the configurations of the spins of the system $\left\{ j_S\right\}$ such that $\sum_{k \in S} j_k = J_S$. Moreover the definition of the $D-$functions is

\begin{equation} 
D_{(Q)}( x,y )\coloneqq  \frac{2x + 1}{x+y + 1} \binom{Q + y + x - 1}{x+y} \binom{Q + y - x - 2}{y - x}  \label{eq:canoni}
\end{equation}


We also define the following short-hand notation $W_{\mathcal{E}} \equiv D_{(N-k)}( |\vec{J}_S|, J_0-J_S )$. We will also call $W_{\mathcal{S}}\equiv D_{(k)}( |\vec{J}_S|, J_S )$ the dimension of the system's degeneracy space with fixed area $J_S$ and closure defect $ |\vec{J}_S|$, derived from the equivalent $U(N)$ representation as for the case of the environment in Eq.(\ref{eq:canoni}). 



The canonical weight $W_{\mathcal{E}}$ encodes all the information about the local structure of correlations of the reduced intertwiner state. The specific form of this factor tells us about the physics of the system, defined by the specific choice of constraints: the SU(2) gauge symmetry and the fixed total area constraint. Given the global constraint, the split in system and environment breaks the gauge symmetry. Due to the presence of the constraint, onto $\mathcal{H}_{\mathcal{R}}$ the quantum numbers of system $( \left\{ j_S \right\}, \sigma_S, |\vec{J}_S|, M_S )$ are intertwined with those of the environment $( \left\{ j_E \right\}, \eta_E, |\vec{J}_{E}| =|\vec{J}_{S}|, M_E = - M_S )$. This is why, beside the expected dependence on the total area of the system $J_S$, the canonical weight carries some interesting extra information on the local closure defect $ |\vec{J}_S|$.  

\section{Typicality of the reduced state}\label{typ}

In this section we study the region of the space of the parameters $(N,k,J_0,J_{max})$ where the canonical reduced state is typical. In other words, we investigate the distance of the canonical state from a randomly chosen pure state in $\mathcal{H}_{\mathcal{R}}$. 

Concretely, following the approach described in Section \ref{tysn}, we want to show that for the overwhelming majority of intertwiner states $|\mathcal{I}\rangle \in \mathcal{H}_{\mathcal{R}} \subseteq \mathcal{H}_E \otimes \mathcal{H}_S$, the trace distance $D(\rho_S,\Omega_S)$ between the reduced density matrix of the system $\rho_S = \Tr_E(|\mathcal{I} \rangle \langle \mathcal{I}|)$ and the canonical state $\Omega_S$ is extremely small\footnote{We remember that the trace-distance has an important physical interpretation: $D(\rho,\sigma)$ is the probability of telling apart $\rho$ and $\sigma$, by means of the most effective quantum measurement\cite{Nielsen2010}}. This proves two things: first, that the Hilbert space average of such trace distance is itself quite small in the regime in which we are interested in
\begin{align}
&\mathbb{E} \left[ D(\rho_S,\Omega_S)\right] \ll 1 \,\,,
\end{align}
where $\mathbb{E}$ indicates the Hilbert space average performed using the unique unitarily invariant Haar measure \cite{Gemmer2010,Bengtsson2008}. Second, that the fraction of states for which such distance is higher than a certain $\epsilon$ is exponentially vanishing in the dimension of the Hilbert space. \\

Following \cite{Popescu2006}, we have a bound on the average distance
\begin{align} \label{bbound}
&0 \leq \mathbb{E} \left[ D(\rho_S,\Omega_S)\right] \leq \sqrt{\frac{d_S}{d_E^{\mathrm{eff}}}} \leq \frac{d_S}{\sqrt{d_{\mathcal{R}}}}
\end{align}
Concretely, the first step toward the statement of typicality in our context amounts to study in which region of the parameters space $(J_0,N,k,J_{max})$ we have $d_S/\sqrt{d_{\mathcal{R}}} \ll 1$.

\subsection{Evaluation of the bound}

The Hilbert space of the system is the tensor product Hilbert space of the set of irreps $V^{j_i}$ with a given cutoff $J_{max}$. We assume $J_{max}\ge J_0$, in order to be sure that $\mathcal{H}_{N}^{(J_0)}$ will always carry an irreducible representation of $U(N)$.  Each $V^{j}$ has dimension $d_j=2j+1$. Therefore, considering the set of $k$ edges comprising the system, we have
\begin{align}
d_S = \prod_{i=1}^{k} \sum_{j_i=0,\frac{1}{2}}^{J_{max}}(2j_i+1) = \left( 2J_{max}+1\right)^{k} \left( J_{max}+1\right)^{k}
\end{align}
Analogously, for the environment we get $d_E = \left( 2J_{max}+1\right)^{N-k} \left( J_{max}+1\right)^{N-k}$. Since $d_{\mathcal{R}}$ is given in \eqref{dime}, we can focus on the last inequality in \eqref{bbound} and define the regime where $\mathbb{E} \left[ D(\rho_S,\Omega_S)\right] \ll 1 $. Studying the ratio
\begin{align}
&\frac{d_S^2}{d_{\mathcal{R}}} = \frac{(2 J_{\mathrm{max}} + 1)^{2k}(J_{\mathrm{max}} +1 )^{2k}}{\frac{1}{J_0 + 1} \binom{N+J_0 -1}{J_0} \binom{N+J_0 -2}{J_0}} \label{ratio}
\end{align}
we can see that $N$ and $J_0$ play a rather symmetric role in making this quantity small. The region of interest is certainly $J_0 \gg1$ or $ N \gg 1$, or both. As we will argue in the next section, $J_0,N \gg 1$ is precisely the regime of interest for the thermodynamical limit. Therefore we focus on this region, where there are two different regimes: $J_0 \gg N \gg 1$ or $N \geq J_0 \gg 1$. In both cases there are wide regions of the parameters space where the inequality $\mathbb{E} \left[ D(\rho_S,\Omega_S)\right] \ll 1$ holds. We were able to extract the following two conditions which guarantee an exponential decay of \eqref{ratio}, either on $N$ or on $J_0$:
\begin{subequations}
\begin{align}
&\frac{J_0}{k} > \log J_{max} && (J_0 \gg N \gg  1) \label{cut1}\\
&\frac{N}{k} \log j_0 > 2 \log J_{max} &&  (N \geq J_0 \gg  1) \label{cut2}
\end{align}
\end{subequations}
The details can be found in the supplementary material but we would like to present a physically motivated argument to provide a meaningful value for the cut-off $J_{max}$ and check the plausibility of the given bounds. As argued in \cite{Bianchi2011}, if we look at a sphere with small radius $l$, placed at a large distance $L$, we will see it within a small angle $\phi \sim \frac{l}{L}$. Therefore using the scale of the radius of the observed universe $L_U$ and assuming that there is nothing with size smaller than the planck length $l_P$, we will never see something with angular extension smaller than $\phi_{min} \sim \frac{l_P}{L_U}$.\\

A spherical harmonics of representation $j$ is able to discriminate angular distances of the order $\frac{4\pi}{2j+1}$. Therefore the existence of $\phi_{min}$ means that there is an upper bound to the representation which we need to consider which is $J_{max} \sim \frac{4\pi}{\phi_{min}^2}  = 4\pi \frac{L_U^2}{l_P^2}$. Using this argument we obtain the following cut-off
\begin{align}
&J_{max} \sim 4\pi \frac{L_U^2}{l_P^2} \approx 3 \times 10^{124} \sim e^{124 \times \log 10}
\end{align}
Putting the numbers in \eqref{cut1} and \eqref{cut2} we obtain
\begin{subequations}
\begin{align}
& \frac{J_0}{k} \gtrsim 3 \times 10^2 && (J_0 \gg N \gg  1)\\
& \frac{N}{k} \gtrsim 6 \times 10^2 && (N \geq J_0 \gg  1)
\end{align}
\end{subequations}

\subsection{Levy's lemma}

Following \cite{Popescu2006,Linden}, we can use Levy's lemma (see Appendix \ref{App:Levy}) to bound the fraction of the volume of states which are $\varepsilon$ more distant than $\frac{d_S}{\sqrt{d_{\mathcal{R}}}}$ from $\Omega_S$ as

\begin{align} 
&\frac{\mathrm{Vol} \left[  |\mathcal{I}\rangle \in \mathcal{H}_{\mathcal{R}}\, \vert \, D(\rho_S,\Omega_S) - \frac{d_S}{\sqrt{d_{\mathcal{R}}}}\geq \varepsilon \right] }{\mathrm{Vol} \left[|\mathcal{I}\rangle \in \mathcal{H}_{\mathcal{R}} \right] } \leq B_\epsilon(d_{\mathcal{R}})\\ \nonumber
& B_{\epsilon} (d_{\mathcal{R}})\equiv 4 \, \mathrm{Exp} \left[-\frac{2}{9\pi^3} d_{\mathcal{R}} \varepsilon^2 \right].
\end{align}

The dimension $d_{\mathcal{R}}$ can be evaluated numerically because we have an exact expression. We give a numeric example to show that it is not necessary to have huge areas or number of links for the typicality to emerge. Suppose we can evaluate the trace distance with precision: $\epsilon = 10^{-10}$. Moreover, $\frac{2}{9\pi^3} \sim  7 \times 10^{-3}$. With these numbers we have

\begin{align}
&B_{10^{-10}}(d_{\mathcal{R}}) = 4\mathrm{Exp} \left[-7\cdot 10^{-23} d_{\mathcal{R}}  \right].
\end{align}

Suppose we look at the most elementary patch, just a few links ($k=1,2$). The set of numbers $J_0 = N = 10^4$ gives the following bounds, using a cut-off given by the cosmological horizon 

\begin{subequations}
\begin{align}
&\frac{N}{k} \sim 10^4 \gg 6 \times 10^2 \\
& B_{10^{-10}}(d_{\mathcal{R}}) = 4\mathrm{Exp} \left[-5.6 \times 10^{5992}  \right] \ll 1
\end{align}
\end{subequations}

As we can see, the typicality emerges quite easily, due to the exponential-like growth of the constrained Hilbert space on the number of links $N$ and on the total area $J_0$.\\

The existence of a typical behaviour indicates the emergence of a regime where the properties of the reduced state of the $N$-valent intertwiner state are \emph{universal}. The structure of \emph{local} correlations carried by the reduced state is independent from the specific shape of the pure intertwiner state and it is locally the same everywhere. Due to the global symmetry constraint though, the canonical weight presents a very involved analytic form, despite the extreme simplicity of the system under study. In order to extract some physical information from this coefficient we are going to study its behaviour in the thermodynamic limit. 

\section{Thermodynamic limit \& area laws}\label{thermo}

In the standard context of statistical mechanics, when performing the thermodynamic limit the density of particles must be finite otherwise the energy density would diverge: $N,V \to + \infty$ with $\frac{N}{V} < + \infty$. As we will see in the forthcoming argument, the area is playing here the role of the energy, therefore we think that the correct way of performing the thermodynamic limit consists in taking $N,J_0 \to \infty$ with $\frac{J_0}{N} \equiv j_{0}  < +\infty$, where $j_0$ is the average spin of the intertwiner.\\

The entropy of the system is given by the von Neumann entropy,
\begin{align}
&S(\Omega_S)=-\text{Tr}[\Omega_S\,\log \Omega_S] \,\, .
\end{align}
Given the diagonal form of the canonical reduced density matrix $\Omega_S$ in Eq. (\ref{canonico}), this can be written as
\begin{align} \label{entro}
&S(\Omega_S)= -  \frac{1}{d_{\mathcal{R}}} \sum_{J_S\le J_0/2, |\vec{J}_S|} W_{\mathcal{S}} \,W_{\mathcal{E}}\log{\left(\frac{W_{\mathcal{E}}}{d_{|\vec{J}_S|}\,d_R}\right)}.
\end{align}

Within the typicality regime ($N, J_0 \gg1$) we can use the Stirling approximation for the factorials, to simplify the form of the binomial coefficients in $W_{\mathcal{E}}, W_{\mathcal{S}}$ and $d_{\mathcal{R}}$. We will study separately the three regimes $j_0 \gg 1$, $j_0 \ll 1$ and $j_0 \sim 1$. The details of the computation can be found in the supplementary material, here we only summarise the results.

\paragraph{Small average spin: $j_0 \ll 1$ -} In the case of small average spin, the leading term in the thermodynamic limit is 

\begin{align} \label{area}
&S(\Omega_S) \simeq \beta \langle 2J_S\rangle+\text{small corrections}\end{align}
where $\langle \cdot \rangle$ is the quantum mechanical average, on the canonical state $\Omega_S$, while
\begin{align} 
\beta \equiv \left(1+\log{\frac{N-k}{J_0}} \right)
\end{align}
is formally identified as the ``temperature'' of the environment. It turns out to be a function of the averaged spin of the environment.

Despite being quite far from the standard setting, a hint toward a thermodynamical interpretation of this result comes from the $U(N)$ description of the $SU(2)$ intertwiner space. Using the Schwinger representation of the $\mathfrak{su}(2)$ Lie algebra \cite{Girelli2005,Freidel2010}, one can describe the $N$-valent intertwiner state as a set of $2N$ oscillators, $a_i, b_j$. The quadratic operators $E_{ij}\equiv(a^{\dagger}_i a_j- b^{\dagger}_i b_j), E^{\dagger}_{ij}= E_{ji} $ acts on couples of punctures $(i, j)$ and form a closed $\mathfrak{u}(N)$ Lie algebra. The $\mathfrak{u}(1)$ Casimir operator is given by the oscillators' energy operator $E\equiv \sum_i E_i$, with $E_i \equiv E_{ii}$, and its value on a state gives twice the sum of the spins on all legs, $2 \sum_i j_i$. Therefore, one can interpret $E$ as measuring (twice) the total area of the boundary surface around the intertwiner.\\

In statistical mechanics, the thermal behaviour of the canonical state relies on the constraint of energy conservation. The emergence of the canonical state from the micro-canonical occurs as the degeneracy of the environment grows exponentially with the energy, hence decreasing exponentially with the system energy.

In these terms, constraining the total area is equivalent to fix a shell of eigenvalues (in fact a single eigenvalue) of the energy operator acting on the full system. In the limit $N\gg J_0 \gg 1$, the degeneracy of the single energy level grows exponentially.

For such a reason the area scaling described by \eqref{area} is consistent with a thermal interpretation for our reduced surface state. It is also worth to mention that the departure from the exact thermal behaviour, \`{a} la Gibbs, is a signature of the breaking of the global SU(2) symmetry (closure defect), witnessed by the explicit dependence of the reduced state on $|\vec{J}_S|$. 

\paragraph{High average spin: $j_0 \gg 1$}

Here we study the behaviour of the entropy in the regime $J_0\gg N \gg 1$. Up to $O(1/J_0)$ the logarithm of the normalised canonical weight is given by
\begin{align} \label{mah}
&-\log{\left(\frac{W_{\mathcal{E}}}{d_{J_S}\,d_R}\right)} \simeq -\log \left( \frac{J_0e}{N-k} \right)^{-2k} + \frac{3k}{N}-\frac{2kJ_S}{J_0}\\ \nonumber
&- \frac{2J_S+2|\vec{J}_E|}{J_0} \simeq k \log \left( \frac{J_0e}{N-k} \right)^2 + \text{small corrs}
\end{align}

Interestingly, the leading term does not depend on the quantum numbers of the system. Therefore the entropy is counting the number of orthogonal states on which the canonical state has non-zero support

\begin{align}
&S(\Omega_S) \simeq 2k \left( 1+ \log \left( \frac{J_0}{N} \right) \right)+ O\left(\frac{k}{N}, \frac{1}{J_0}\right) \,\,\, . \label{eq:ent2}
\end{align}
This makes the entropy extensive in the number of edges comprising the dual surface of the system. In this sense, the term $\left[ \left( \frac{J_0e}{N-k} \right)^2\right]^k$ defines some kind of \emph{effective} dimension of the system, suggesting that the following two things happen in such regime: first, the canonical state has approximately a tensor product structure; second that the total spin is equally distributed among all spins in the universe therefore the accessible Hilbert space of each spins is roughly limited by a representation of the order of $j_0$. 
The validity of this interpretation can be checked assuming a tensor product structure of $k$ links with single-link Hilbert space limited to the representation $\alpha \times j_0$ and computing the entropy $S_{\mathrm{eff}}$ as the logarithm of the dimension of this space. If we can find an $\alpha \sim 1$ such that the difference $S(\Omega_S) - S_{\mathrm{eff}}$ is proportional only to small corrections $O(\frac{1}{N},\frac{1}{J_0})$, we can say that our argument is not too far from what is happening in such a regime. With these assumptions the effective dimension of the Hilbert space of the system is 

\begin{align}
&d_S^{\mathrm{eff}} =\prod_i  \sum_{j_i = 0, \frac{1}{2}}^{\alpha j_0}  (2j_i +1) = (2 \alpha j_0 + 1)^{k}(\alpha j_0 + 1)^k 
\end{align}

In the $j_0 \gg 1$ regime we can write it as $d_S^{\mathrm{eff}} \simeq 2^k \alpha^{2k} j_0^{2k} + O(\frac{k}{j_0})$ which gives

\begin{align}
&S_{\mathrm{eff}} \equiv \log d_S^{\mathrm{eff}} \simeq 2k \log \left( \frac{J_0}{N} \right) + k \left(\log 2 \alpha^2 \right)
\end{align}

The difference between the two entropies 

\begin{align}
&S(\Omega_S) - S_{\mathrm{eff}} \simeq k(2-\log 2\alpha^2 ) + O\left( \frac{1}{N}, \frac{1}{J_0}\right)
\end{align}

is given only by small corrections of order $O\left( \frac{1}{N}, \frac{1}{J_0}\right)$ when $\alpha \simeq \frac{e}{\sqrt{2}}\approx 1.92$. This simple computation provides evidence that the result in Eq.(\ref{eq:ent2}) follows from the two aforementioned assumptions.

\paragraph{Order $1$ average spin: $j_0 \sim 1$.} Eventually, we compute the behaviour of the entropy in the intermediate regime $J_0 \sim N \gg 1$. With respect to the previous cases, this regime does not add anything new to the analysis. The observed behaviour is  extensive in the number of links of the system, with a coefficient which is slightly different from the previous one:

\begin{align}
&S(\Omega_S) \simeq \left(2k-3\right) (1+\mathcal{O}\left[ \left( \frac{k}{N}\right)^2\right])
\end{align}

The relevant computation can be found in the supplementary material.

\section{Summary and Discussion}\label{fine}

In this manuscript we extend the so-called typicality approach, originally formulated in statistical mechanics contexts, to a specific class of tensor network states given by $SU(2)$ invariant spin networks. In particular, following the approach given in \cite{Popescu2006}, we investigate the notion of canonical typicality for a simple class of spin network states given by $N$-valent intertwiner graphs with fixed total area.  Our results do not depend on the physical interpretation of the spin-network, however they are mainly motivated by the fact that spin networks provide a gauge-invariant basis for the kinematical Hilbert space of several background independent quantum gravity approaches, including loop quantum gravity, spin-foam gravity and group field theories.

The first result is the very existence of a regime in which we show the emergence of a canonical typical state, of which we give the explicit form. Geometrically, such a reduced state describes a patch of the surface comprising the volume dual to the intertwiner. The structure of correlations described by the state should tell us how local patches glue together to form a closed connected surface in the quantum regime. 

We find that, within the typicality regime, the canonical state tends to an exponential of the total spin of the subsystem with an interesting departure from the Gibbs state. 
The exponential decay \`a la Gibbs of the reduced state is perturbed by a parametric dependence on the norm of the total angular momentum vector of the subsystem (closure defect).  Such a feature provides a signature of the non local correlations enforced by the global gauge symmetry constraint. This is our second result. 

We study some interesting properties of the typical state within two complementary regimes, $N \gg J_0 \gg 1$ and $J_0 \geq N  \gg 1$. In both cases, we find that the area-law for the entropy of a surface patch must be satisfied at the local level, up to sub-leading logarithmic corrections due to the unavoidable dependence of the state from the closure defect. However, the area scaling interpretation of the entropy in the two regimes is quite different. In the $N \gg J_0 \gg 1$ regime, the result is related to the definition of a generalised Gibbs equilibrium state. The area is playing the role of the energy, as imposed by the specific choice of the global constraint, requiring total area conservation.  

On the other hand, in the $J_0 \geq  N  \gg 1$ regime, the area scaling is given by the extensivity of the entropy in the number of links comprising the reduced state, as for the case of the generalised (non $SU(2)$ invariant) spin networks \cite{Donnelly}. In this regimen, each link contributes independently to the result, indicating that the global constraints are very little affecting the local structure of correlations of the spin network state. Still, interestingly, the remainder of the presence of the constraints can be read in the definition of what looks like an effective dimension for the single link Hilbert space. 

We interpret these results as the proof that, within the typicality regime, there are certain (local) properties of quantum geometry which are ``universal'', namely independent of the specific form of the global pure spin network state and descending directly from the physical definition of the system encoded in the choice of the global constraints. These properties are heralded by the specific form of the canonical state and pertain the macroscopic-scale phenomenology of the quantum system under scrutiny. Moreover, it is important to stress here that this technique allows to address both the quantum and the semiclassical regime of spin networks. In other words, by focussing on the macroscopically relevant properties we have a simplified picture of the phenomenology of spin networks. This allows to improve our understanding of the semiclassical properties of spin networks and their connection with classical gravity. With this perspective in mind, here we discuss a connection between this work and a result in classical gravity.

In 1995 Ted Jacobson proved\cite{Jacobson} that Einstein Equations of General Relativity can be derived in a thermodynamic fashion, as an equation of state. He was able to show that Einstein Equations follow from the technical request that $\delta Q = T dS$ must hold for any local Rindler causal horizon, with $\delta Q$ being the energy flux, $T$ being the Unruh temperature and $S$ being the entropy of the horizon surface, which is assumed to be proportional to its area. Our result here is in full agreement with Jacobson's assumption that $S \propto \mathrm{Area}$. Indeed, thanks to typicality we observe that a small patch of surface has, almost always, an entropy proportional to its area. The fraction of states which evade this statement is exponentially suppressed in the dimension of the Hilbert space. Therefore, in the macroscopic regime, where classical gravity is expected to hold, we should have that the entropy of a surface is proportional to its area. This means that the Hilbert space of spin networks, in the macroscopic regime, is able to capture this particular aspect of classical gravity, which seems to be paramount for the emergence of Einstein Equations of General Relativity. This argument therefore supports the spin networks as underlying Hilbert space of the states for a quantum theory of gravity.

We would like to stress that our result is purely kinematic, being a statistical analysis on the Hilbert space of spin-network states. For the case of a simple intertwiner state, such study necessarily requires to consider a system with a \emph{large number} of edges, beyond the very large dimensionality of the Hilbert space of the single constituents. 
In this sense, the presented statistical analysis and thermal interpretation is very different from what recently done in \cite{Chirco2015,Chirco2014,Hamma2018}, considering quantum geometry states characterised by few constituents with a high dimensional Hilbert space. 
In fact, we expect a {large number} statistical analysis to play a prominent role in facing the problem of the continuum in quantum gravity. Therefore we think it is important to propose and develop new technical tools which are able to deal with a large numbers of elementary constituents and extract physically interesting behaviours.\\

The kinematic nature of the statement of typicality, together with its general formulation in terms of constrained Hilbert spaces given in \cite{Popescu2006}, provide an important tool to study the possibility of a thermal characterisation of reduced states of quantum geometry, regardless of any hamiltonian evolution in time. Beyond the simple case considered in the paper and in a more general perspective, we expect typicality to be useful to understand how large the effective Hilbert space of the theory can be, given the complete set of constraints defining it. It will also help in understanding which typical features we should expect to characterise a state in such space. If we think of dynamics as a flow on the constrained Hilbert space, we generally expect that, even if the initial state is highly non-typical, after a transient regime we will find the system in a state which is very close to the typical state. This happens because, as it has been shown in the original paper on typicality, the number of states close to the typical state are the overwhelming majority.\\

Finally, it is interesting to look at the proposed ``generalised'' thermal characterisation of a local surface patch, within the standard LQG description of the horizon, as a closed surface made of patches of quantized area. Differently from the \emph{isolated horizon} analysis (see e.g. \cite{Pranzetti,Oriti,Ghosh}), in our description the thermal character of the local patch is not (semi)classically induced by the thermal properties of a black hole horizon geometry, but emerges from a purely quantum description. In this sense, our picture goes along with the informational theoretic characterisation of the horizon proposed in \cite{Livine2006}. 

In fact, we believe typicality could be used to define an information theoretic notion of quantum horizon, as the boundary of a generic region of the quantum space with an emergent thermal behaviour.






\chapter{Fate of the Hoop Conjecture}\label{Ch8}


We consider a closed region $R$ of 3D quantum space described via SU(2) spin-networks. Using the concentration of measure phenomenon we prove that, whenever the ratio between the boundary $\partial R$ and the bulk edges of the graph overcomes a finite threshold, the state of the boundary is always thermal, with an entropy proportional to its area. The emergence of a thermal state of the boundary can be traced back to a large amount of entanglement between boundary and bulk degrees of freedom. Using the dual geometric interpretation provided by loop quantum gravity, we interpret such phenomenon as a pre-geometric analogue of Thorne's ``Hoop conjecture'', at the core of the formation of a horizon in General Relativity. This chapter is based on the work published by the author, in collaboration with Dr. G. Chirco, in Ref. \cite{Anza2017b}.

\section{Introduction}


In the previous chapter we showed that typicality can be fruitfully applied to the Hilbert space of the spin-network states of quantum geometry. In this context, it provides a remarkable tool to study the local behavior of a quantum geometry, in a fully kinematic approach, hence consistently with the general covariant nature of the spin-network description. Here we propose a radical shift of setting where the role of ``system'' and ``bath'' is played by the boundary and bulk degrees of freedom of a generic \emph{3-ball} of quantum space. Along with the statement of canonical typicality we prove that, whenever the bulk graph is sufficiently complex, the reduced state of the boundary is almost always a thermal state, regardless of what is the specific pure state of the whole region.  We find that the emergence of a typical state of the boundary is regulated by a \emph{threshold condition} on the ratio between boundary and bulk degrees of freedom.

In a series of works, similar thermal states of a boundary surface have been proposed as the pre-geometric equivalent of black hole configurations in non-perturbative quantum gravity~\cite{Charles2016,Livine2006,Livine2008}. By reading the topological defect of the boundary as a measure of  curvature in the bulk\cite{Livine2014}, we interpret the threshold condition for the emergence of typicality as a pre-geometric quantum analogue of the threshold mechanism associated to the formation of a black hole horizon in General Relativity, the famous Thorne's ``Hoop conjecture''~\cite{Flanagan1991}: {\it an horizon will form if and only if a mass $M$ gets compressed into a region with circumference in any direction $\mathcal{C}$ proportional to its mass}
\begin{align}
&\mathcal{C} \leq \mathcal{C}_{\mathrm{HC}} && \mathcal{C}_{\mathrm{HC}} \sim M
\end{align}
in units $G=c=1$.

Our result strongly supports the relation among thermal boundary states and horizons and it will help in the process of understanding how the thermal behavior of a black hole can emerge from a quantum theory of gravity. Moreover, it resonates with some recent results from tensor network theory which investigate the holographic mapping from bulk to boundary degrees of freedom \cite{Hayden2016,Pastawski2015,Qi2013,Ryu2006a,Chirco2017}. Thanks to its purely kinematic character, our result can provide new insights to the issue of the implementation of the holographic principle and its possible relation with coarse graining in loop quantum gravity \cite{Dittrich2014,Livine2017}. 

\subsection{A 3D region of quantum space} 

A spin network state $|\Gamma, \{j_e\},\{i_v\}\rangle$ is defined as the assignment of representation labels $\{j_e\}$ to each edge and the choice of a vector $|\{i_v\}\rangle \in \otimes_v\mathcal{H}_v$ for the vertices. For a given graph $\Gamma$, the spin networks provide a basis for the space of square-integrable wave-functional, endowed with the natural scalar product induced by the Haar measure on SU(2)
\begin{align}
\mathcal{H}_{\Gamma} = L_2[SU(2)^E/SU(2)^V]= \bigoplus_{\{j_e\}} \bigotimes_v \mathcal{H}_v \,\,\, .
\end{align}
Such spin-network states have a well defined geometric interpretation as collection of polyhedra glued together\cite{Bianchi,Barbieri1998}. Each node is dual to a polyhedron while each edge is dual to a surface patch with area proportional to the spin labelling its irrep $j_e$. 

\begin{figure}[h!]
\centering
\includegraphics[scale=0.3]{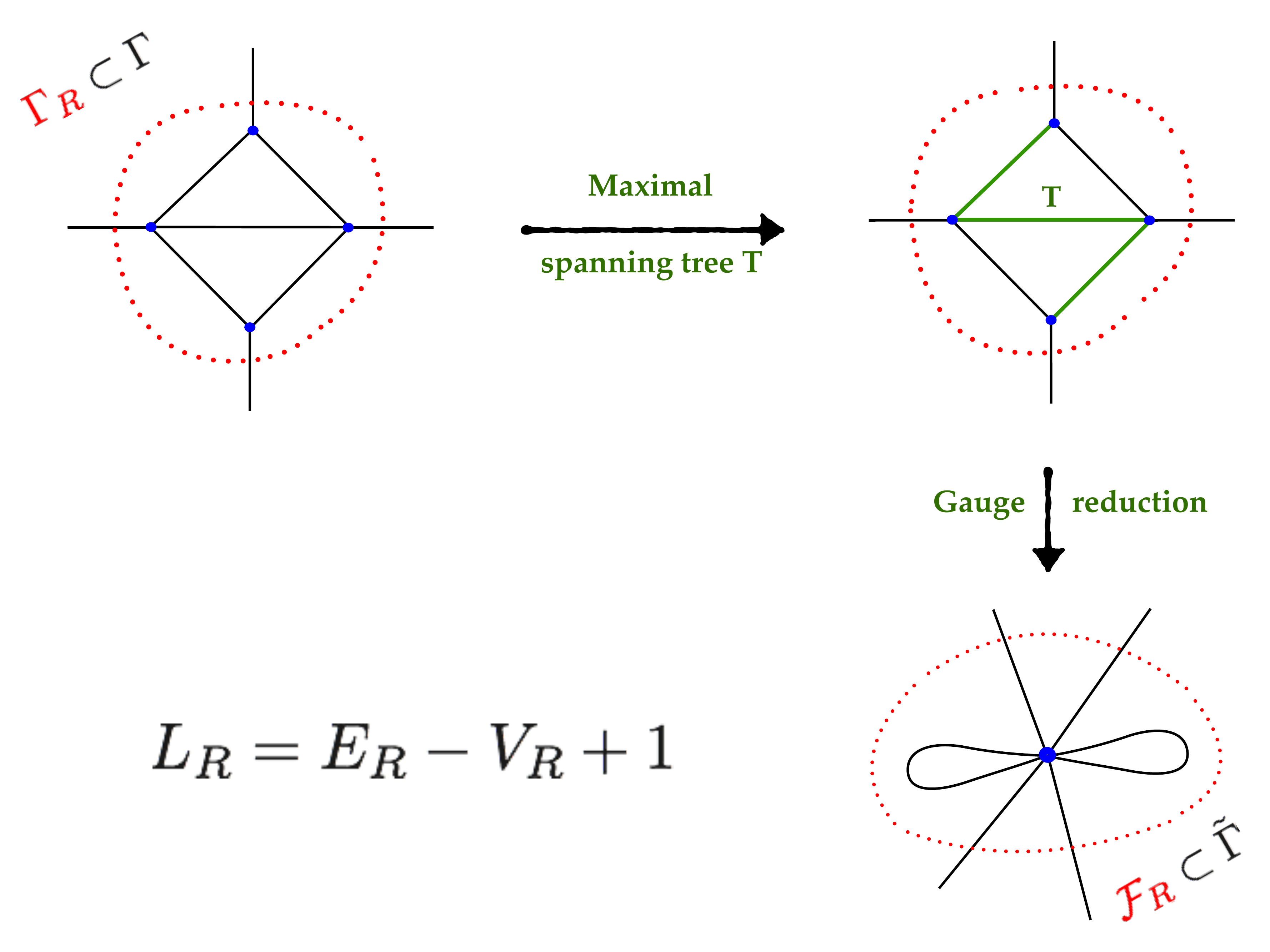}
\caption[Gauge-fixing procedure for a spin network]{Simple example of a subgraph $\Gamma_R$ reduced to a flower graph $\mathcal{F}_R$ with two loops and four external edges. A maximal spanning tree $T$ is chosen out of six different ways. Within the region $R$ the number of independent loops $L_R$ is fixed by the number of edges $E_R$ and vertices $V_R$ within $R$: $L_R = E_R - V_R + 1$. In this case we have $E_R=5$ and $V_R=4$. Thus the number of loops of $\mathcal{F}_R$ is $L_R=2$.}
\label{flowers}
\end{figure}
In particular, a bounded region $R$, dual to a subregion $\Gamma_R \subset \Gamma$ of the global graph $\Gamma$, includes a finite number of vertices $V_R$ and edges $E_R$. We define a boundary $\partial R$ as the set of edges which have only one end vertex in $R$. Their number is called $E_{\partial R}$. Consistently, bulk edges are paths connecting only vertices within $R$. The number of vertices and edges living outside $R$ is called $V_{\overline{R}}$ end $E_{\overline{R}}$, respectively. We can picture $R$ as a region of $3D$ space with the topology of a 3-ball and $\partial R$ as its boundary 2-sphere.

Exploiting the gauge invariance at each node inside $R$ we can simplify the structure of $\Gamma_R$, without loosing information \cite{Livine2006,Livine2008,Freidel2003,Livine2006b}. The gauge-invariant Hilbert space associated to the original graph is isomorphic to the gauge-invariant space defined on a new graph $\tilde{\Gamma}$, consisting of $V_{\overline{R}}+1$ vertices intertwining $E_{\overline{R}} + E_{\partial R}$ edges, together with a certain number of loops $L_R$ which depends on the internal structure of $\Gamma_R$. The number of independent loops is $L_R=E_R-V_R +1$, where $V_{R}$ and $E_R$ are, respectively, the number of vertices and edges inside $\Gamma_R$. The subgraph $\mathcal{F}_R \subset \tilde{\Gamma}$ describing the region $R$ after the gauge-fixing procedure (see Figure (\ref{flowers})) has the structure of a \emph{Flower} graph\footnote{To be precise, a flower graph is usually one that has only loops (the petals of a flower). However, we will use this convenient nomenclature to indicate a more generic graph with a certain number of loops and of external legs (the stems of a flower).}. The Hilbert space of $\tilde{\Gamma}$ takes the form
\begin{align}
\mathcal{H}_{\tilde{\Gamma}} & \equiv L_2[SU(2)^{E_{\overline{R}}+E_{\partial R} + L_R}/SU(2)^{V_{\overline{R}+1}}] = \bigoplus_{\{j_e,j_l\}} \left[ \left( \bigotimes_{v=1}^{V_{\overline{R}}} \mathcal{H}_{v} \right) \otimes \mathcal{H}_{\mathcal{F}_R} \right] \, .
\end{align}
In the last equality we split the tensor product over the intertwiners in two parts. The first one comprises the $V_{\overline{R}}$ vertices outside the region $R$ while the last one is the vertex which remains after performing the gauge-fixing procedure over the region $R$:
\begin{align}
&\mathcal{H}_{\mathcal{F}_R} \coloneqq \mathrm{Inv}_{\mathrm{SU(2)}} \left[ \bigotimes_{e=1}^{E_{\partial R}} V_{j_e} \bigotimes_{l=1}^{L_R} \left( V_{j_l} \otimes \overline{V}_{j_l}\right)\right] \, .
\end{align}
This Hilbert space is the focus of our investigation as it provides a synthetic description for a region $R$ of quantum space with fixed boundary areas and non-trivial internal degrees of freedom. In the next section we comment on its geometric interpretation.

\subsection{Separating Boundary and Bulk} 
Within $\mathcal{F}_R$ we identify the \emph{boundary} and \emph{bulk} degrees of freedom respectively with open edges and internal loops. 
The separation is motivated by the different geometric information they carry: the open edges irreps describe boundary surface patches, while internal non-contractible loops are associated with ``curvature excitations'' at the vertex \cite{Livine2006,Livine2008}. With no loops, the intertwined boundary edges are dual to the closed surface of a flat polyhedron. When non-contractible loops take part to the overall gauge invariance of the intertwiner state, the surface dual to the sole boundary edges can be seen as a convex polyhedron with a missing face. In this sense, the \emph{closure defect} induced by the loops is  interpreted as the discrete counterpart of the curvature. A gauge-invariant measure of such quantity is given by the trace of the holonomy around the loops.

Along with such qualitative separation of degrees of freedom, we can think of $\mathcal{H}_{\mathcal{F}_R}$ as a bipartite quantum system, with boundary and bulk correlated by the presence of the  SU(2) gauge-invariance constraint. In particular, our constrained space $\mathcal{H}_{\mathcal{F}_R}$ can be seen as embedded in a tensor product space 
\begin{align}\label{space}
\mathcal{H}_{\mathcal{F}_R} \subseteq \mathcal{H}\coloneqq \mathcal{H}_{\partial R} \otimes  \mathcal{H}_R,
\end{align} 
where we define the (unconstrained) boundary and bulk Hilbert spaces, respectively, as 
\begin{align}
&\mathcal{H}_{\partial R} \coloneqq \bigotimes_{e=1}^{E_{\partial R}} V_{j_e} && \mathcal{H}_R \coloneqq \bigotimes_{l=1}^{L_R} V_{j_l} \otimes \overline{V}_{j_l} \,\,\, .
\end{align}
We now focus on the properties of the boundary.

\section{Typicality of the boundary} 
As far as an observer external to the region $R$ is concerned, the geometry of $R$ is described by the information measured on its boundary $\partial R$. Such information is encoded by the reduced density matrix obtained by taking the partial trace of the whole state $ |\varphi_{\mathcal{F}_R}\rangle \langle \varphi_{\mathcal{F}_R}|$ over the loops $\rho_{\partial R} \coloneqq \text{Tr}_{L}[|\varphi_{\mathcal{F}_R}\rangle \langle \varphi_{\mathcal{F}_R}|]$. Starting from the description of a region of quantum space given before, we focus on the boundary edges.  Using the typicality tools (summarised in the supplemental material) one can prove that {\it whenever the dimension of the boundary Hilbert space is much smaller than the dimension of the constrained space, the reduced boundary state is almost always extremely close to the canonical state on the boundary $\Omega_{\partial R}$, regardless what is the global state of the whole region}. 

The {\it canonical state} of the boundary $\Omega_{\partial R}$ is defined as the partial trace, over the bulk degrees of freedom, of the microcanonical state $\mathcal{I}_{\mathcal{F}_R}$ of the space $\mathcal{H}_{\mathcal{F}_R}$:
\begin{align}
&\Omega_{\partial R} \coloneqq \Tr_L \mathcal{I}_{\mathcal{F}_R} &&\mathcal{I}_{{\mathcal{F}_R}} \coloneqq  \frac{1}{d_{{\mathcal{F}_R}}} P_{{\mathcal{F}_R}} 
\end{align}
Here $P_{{\mathcal{F}_R}}$ is the projector of the states of $\mathcal{H}_{\partial R} \otimes \mathcal{H}_R$ onto $\mathcal{H}_{{\mathcal{F}_R}}$, while $d_{\mathcal{F}_R} \coloneqq \Tr P_{{\mathcal{F}_R}}$ gives the dimension of the constrained space. We are interested in the trace-distance~\cite{Nielsen2010} between a generic reduced state $\rho_{\partial R}$ and $\Omega_{\partial R}$: $D(\rho_{\partial R},\Omega_{\partial R})$. Using the concentration of measure argument developed in \cite{Popescu2006} and summarised in Section \ref{sec:Ch2Typicality}, its average over the global Hilbert space $\mathbb{E}\left[D(\rho_{\partial R},\Omega_{\partial R})\right]$ satisfies 
\begin{align} \label{eq:bound}
&0 \leq \mathbb{E}\left[D(\rho_{\partial R},\Omega_{\partial R})\right] \leq \frac{1}{2}\frac{d_{\partial R}}{\sqrt{d_{\mathcal{F}_R}}} 
\end{align}
Moreover, the fraction of states which are $\epsilon > 0$ away from this average is exponentially suppressed in the dimension of the Hilbert space. Therefore, whenever the right-hand side of Eq.(\ref{eq:bound}) is much smaller than one, it will be concretely impossible to distinguish the actual reduced state $\rho_{\partial R}$ from $\Omega_{\partial R}$. The goal is then to evaluate this bound and find the regime where the average distance is close to zero.

\subsection{A simplified setting} 
In order to do this we need a closed expression for both $d_{\partial R}$ and $d_{\mathcal{F}_R}$. This is not always possible for $d_{\mathcal{F}_R}$, due to the fact that it depends on the specific representations of the boundary edges $\left\{j_e\right\}$ and of the loops $\left\{j_l\right\}$. As a first step toward a more complete study and to provide a quantitative analysis of our typicality argument, we now consider a simplified set-up. We work on the case where all the spin representations have the same value: $j_e = j_l = j_0$. While this assumption might seem drastic, the core of our argument does not depend on it. Only our ability to concretely evaluate the two dimensions involved. Moreover, it is worth noting that the system maintains a well-defined geometric interpretation. It describes a polyhedron with $E_{\partial R}$ faces with equal areas and non-trivial internal degrees of freedom. Further investigations aiming at extending the result to the general case will be reported elsewhere.

Within $\mathcal{H}_{\mathcal{F}_R}$, a consistent reorganisation of boundary and bulk degrees of freedom is realised by considering the unfolding of our intertwiner into two intertwiners, linked by a virtual edge, separately re-coupling the edges and the loops irreps. An example is given in Figure (\ref{flowers}). This corresponds to the following re-writing of the gauge-invariant space 
\begin{align} \label{unfolded2}
\mathcal{H}_{\mathcal{F}_R}=  \bigoplus_{k}\mathcal{H}^{(k)}_E \otimes \mathcal{H}^{(k)}_L\,\,,
\end{align}
where $k$ runs over the irreps of the virtual link. Moreover, $\mathcal{H}^{(k)}_E = V_k^{(E)} \cdot \mathcal{D}_k^E$, $\mathcal{H}^{(k)}_L = V_k^{(L)} \cdot \mathcal{D}_k^{2L}$ and $\mathcal{D}_k^E$,$\mathcal{D}_k^{2L}$ counts the degeneracies of the space $V_k$ in the boundary and in the bulk recoupling. The details of the way in which such decomposition is done can be found in the supplemental material, along with the information about the dimensions of $\mathcal{H}_E^{(k)}$ and $\mathcal{H}_L^{(k)}$. Here we just fix the notation. If we write a generic decomposition as $\otimes_n V_{j_0} = \bigoplus_{k} V_k {}^{j_0}F_k^{n} $, we call ${}^{j_0}d_k^{n} \coloneqq \mathrm{dim} \,\,\, {}^{j_0}F_k^{n} $. All dimensions needed in this paper can be written using ${}^{j_0}d_k^{n}$.

\subsection{Evaluation of the bound} 
Building on the technical results derived in \cite{Livine2006,Livine2008}, along with our statistical approach, we work in the regime $E_{\partial R},2L_{R} \gg 1$. Using the expression derived
in \cite{Livine2006,Livine2008} we obtain
\begin{align}
&\frac{d_{\partial R}}{\sqrt{d_{\mathcal{F}_R}}} \sim (2j_0+1)^{\frac{E_{\partial R}}{2}-L_{R}} [(j_0(j_0+1))(E_{\partial R}+2L_{R})]^{3/4}
\end{align}
The details can be found in the supplemental material. This quantity has a leading exponential behavior, both in the number of external edges and in the number of loops over which we trace. The exponent becomes negative when $E_{\partial R} < 2L_{R}$.  Such fast decay is present for any choice of $j_0$, even far from the ``semiclassical regime'' of large areas  $j_0 \gg 1$.
This is not exactly a threshold behavior but it is a fast exponential decay, which becomes faster as we approach the semiclassical regime $j_0 \gg 1$. In that regime the exponential decay of $\mathbb{E} \left[ D(\rho_{\partial R},\Omega_{\partial R})\right]$ to zero approaches a threshold behavior, regulated by the condition
\begin{align}\label{eq:threshold}
&E_{\partial R} < 2L_{R}
\end{align}
Intuitively speaking, the trace of the loop holonomy measures the curvature around a path. If one averages over the whole Hilbert space, each loop will contribute in an equal way to the average scalar curvature, which will be proportional to the number of loops. Thus Eq.\eqref{eq:threshold} suggests the existence of an inequality relating the area of the boundary surface with its scalar curvature. We will discuss the physical meaning of this condition in the conclusions.

\subsection{The typical boundary state} 
We now focus on the explicit form of the typical state of the boundary.  Starting from the decomposition of the constrained space given in Eq.(\ref{unfolded2}), a convenient basis in either of the two subspaces is labeled by three numbers, respectively $|k,m,a_k\rangle$ and $|k,m,b_k\rangle$, with $a_k, b_k$ running over the degeneracy of the irrep $V_k$ at fixed value of $k$ \cite{Anza2016,Livine2006}. A basis for the single intertwiner space is given by 
\begin{align} \label{basis}
| k, a_k , b_k \rangle \coloneqq \sum_{m=-k}^k \frac{(-1)^{k-m}}{\sqrt{2k+1}}  |k, -m, a_k \rangle_E\, \otimes  | k, m, b_k \rangle_L \, .
\end{align}

Each basis state\footnote{Notice that in the basis of Eq.\eqref{basis} the dependence on the spins $\{j_e\}, \{j_l\}$ is hidden in the re-coupled spin label $k$.} can be represented as a tensor product state on three subspaces,
\begin{align} \label{biortho}
|k, a_k , b_k \rangle \coloneqq  |{k} \rangle_{V_k^{(E)}\otimes V_k^{(L)}} \otimes |{a_k}\rangle_{\mathcal{D}_k^E} \otimes |{b_k}\rangle_{\mathcal{D}_k^L}
\end{align}
where $k$ runs over the global angular momentum of the boundary and of the bulk, which have to be equal;  $|a_k\rangle$ labels a basis vector of $\mathcal{D}_k^E$ and $|b_k\rangle$ labels a basis vector of $\mathcal{D}_k^{2L}$. The microcanonical state $\mathcal{I}_{\mathcal{F}_R}$ on $\mathcal{H}_{\mathcal{F}_R}$ is given by the normalised identity, which we write in the basis $\left\{ \Ket{k,a_k,b_k}\right\}$
\begin{align} \label{ro}
&\mathcal{I}_{\mathcal{F}_R} \coloneqq \frac{1}{d_{\mathcal{F}_R}}\sum_{\substack{k, a_k, b_k}} \Ket{k,a_k,b_k}\Bra{k,a_k,b_k}.
\end{align}
Thanks to the specific basis chosen the partial trace of $\mathcal{I}_{\mathcal{F}_R}$ is easily computed. The canonical state of the boundary is
\begin{align}
&\Omega_{\partial R} =  \sum_{\substack{k}} \frac{\,^{j_0}d_{k}^{(2L)}}{d_{\mathcal{F}_R}(2k+1)}\,\,\,{\mathbbm{1}}_{{V}_k^{(E)}} \otimes \mathbbm{1}_{\mathcal{D}_k^E} \, ,
\end{align}
where ${\mathbbm{1}}_{{V}_k^{(E)}}$ and $\mathbbm{1}_{\mathcal{D}_k^E}$ are, respectively, the identity over $V_k^{(E)}$ and $\mathcal{D}_k^E$.

\subsection{Canonical coefficient} 
Here we study the canonical coefficient $W_k^{E} \coloneqq \frac{\,^{j_0}d_{k}^{(2L)}}{d_{\mathcal{R}}(2k+1)}$ in order to understand what is the predicted behavior in the thermodynamic regime, which is $2L_{R} \gg E_{\partial R} \gg 1$. Using the expression given in \cite{Livine2006,Livine2008}  we obtain
\begin{align}
&W_k^E \stackrel{2L_{R} \gg k}{\sim} (2j_0+1)^{-E_{\partial R}} \left(1+\frac{E_{\partial R}}{2L_{\partial R}}\right)^{3/2} \left(\frac{k+1}{2k+1}\right) \label{eq:expansion}
\end{align}
$W_k^{E}$ depends on $k$ only through $\left(\frac{k+1}{2k+1}\right) \in [\frac{1}{2},1]$. This is a very mild dependence and as $k$ increases it fades away. Such behavior can be confirmed by using the expression of $W_k^E$ when $k = k_{\mathrm{max}} = j_0 E_{\partial R}$, which was given in \cite{Livine2006,Livine2008}. More details can be found in the supplemental material. We conclude that the canonical coefficient $W_k^E$ depends in an extremely weak way on the topological defect $k$. Therefore $\Omega_{\partial R}$ is very close to a completely mixed state of the boundary. This picture can  be checked by computing the von Neumann entropy $S_{\mathrm{vN}}(\Omega_{\partial R}) \coloneqq - \Tr \Omega_{\partial R} \log \Omega_{\partial R}$: 
\begin{align}\label{entrop}
&S_{\mathrm{vN}}(\Omega_{\partial R}) \simeq E_{\partial R} \log (2j_0+1) - \frac{3}{2} \log \left( 1+ \frac{E_{\partial R}}{2L_{R}}\right)
\end{align}
where $E_{\partial R} \log (2j_0+1) = \log d_{\partial R}$ is the maximum entropy. This confirms that $\Omega_{\partial R}$ is almost a microcanonical state, with an entropy proportional to its area $A_{\partial R}$:  $S(\Omega_{\partial R}) \propto A_{\partial R} \propto E_{\partial R}$.

\section{Summary and conclusions} We studied the properties of the boundary $\partial R$ of a generic region $R$ of 3D quantum space. We exploit the SU(2) spin-network formalism to provide a synthetic description of $R$ in terms of a flower graph $\mathcal{F}_{R}$. In this setting, the boundary degrees of freedom are living on the boundary edges $E_{\partial R}$ while the bulk degrees of freedom live on the internal loops $L_R$. The state of the system is pure and belongs to a SU(2) gauge-invariant Hilbert space $\mathcal{H}_{\mathcal{F}_R}$ while the state of the boundary is obtained by performing a partial trace over the bulk degrees of freedom. The result is a boundary state which does not satisfy the closure constraint induced by gauge invariance. Such a \emph{closure defect} encodes the total topological defect carried by the loops and it is interpreted as a measure of the average scalar curvature\cite{Livine2006,Livine2008,Freidel2003,Livine2006b} within the region $R$. 

The main result of the work is the existence of a condition $d_{\partial R}/\sqrt{d_{\mathcal{F}_R}} \ll 1$  whose validity guarantees the emergence of a typical state of the boundary. In order to evaluate such condition in a concrete case we studied the simplified setup where all the links have the same spin $j_0$. We observe the emergence of a threshold condition $2L_{R} > E_{\partial R}$ which guarantees that, regardless what is the specific state of the whole system, in such regime the state of the boundary is almost always a thermal state with entropy proportional to the area. The fraction of states eluding this result is exponentially suppressed in the dimension of the total Hilbert space. In the semiclassical regime $j_0 \gg 1$ such exponential suppression becomes a true threshold condition. We thereby argue that our results suggest a strong analogy with the Hoop Conjecture.

In General Relativity, Thorne's Hoop conjecture states that an object which collapses will form a black hole when a `circular hoop', with a circumference proportional to its Schwarzschild radius, can be placed around the object, in all the three directions. Once a black hole is formed, its horizon behaves as a thermal state, with an entropy proportional to the area. Our result shows that something similar happens at the pre-geometric level, within the spin-network formalism of non-perturbative quantum gravity, on a purely {\it information-theoretic} ground: when the number of internal loops exceed a certain threshold, the boundary state is a thermal state and its entropy is proportional to the area. This implies that the boundary does not have retrievable information about the bulk. Both statements point toward an interpretation of such a boundary state as the pre-geometric counterpart of a horizon. The proposed analogy with the Hoop conjecture is strengthened by the explicit form of the threshold condition, $2L_R > E_{\partial R}$. The trace of the loop holonomy provides a measure of the curvature and its average over the Hilbert space is proportional to the number of loops. Moreover since we are looking at the case were all the links have the same irrep, the number of edges $E_{\partial R}$ measures the area of the surface. For this reason we read the semiclassical limit of Eq.\eqref{eq:threshold} as an inequality relating the area of a closed surface with the curvature. Whenever such inequality is fulfilled for a region $R$, the state of its boundary behaves as an horizon as it has no retrievable information about the bulk. 

We would also like to stress that the thermal character of the boundary emerges because of its entanglement with the bulk, induced by the gauge invariance. In this sense, the proposed setting goes along with the vision according to which a classical spacetime, with its causal structure, is an emergent feature which results from the structure of correlations of a more fundamental level of description \cite{Raamsdonk2010,Maldacena2013,Cao2016}. In this picture, the concentration of measure phenomenon appears as a general and concrete tool to unravel the interplay between the structure of correlations in a quantum geometry and the emergence of semiclassical properties in the thermodynamic limit.

\chapter{Conclusions}\label{Ch9}


This conclusive chapter has two sections. In the first one, we give a quick summary of the results achieved, adopting a more general perspective. In the second one, we draw some conclusions, putting the emphasis on the overall picture. We also outline interesting developments for future work which we would like to pursue.

 \section{Summary}

The leitmotif of the research presented in this thesis is the concept of thermal equilibrium. Statistical mechanics provides tools to predict the behaviour of macroscopic systems, under the assumption that the system is at thermal equilibrium. While it is known that there are systems which escape this description, it is also known that predictions from statistical mechanics have been shown to be correct in several cases. An example is given by isolated quantum systems. Due to the unitary character of the dynamics of isolated systems, there is always a huge number of conserved quantities. Therefore, generic initial states will not evolve toward a thermal state as memory of the initial conditions will not be erased. 

Our approach to the problem is based on the idea that having a thermal state is a sufficient condition to ensure a thermal behaviour, but it is not a necessary one. This goes along with the fact that a definition of thermal behaviour which goes through the full characterization of the state can not be experimentally probed at the macroscopic scale: Too many observables have to be measured to reconstruct the state. In the first part of Chapter \ref{Ch3}, the analysis leading to this conclusion is given in full details and we proposed a notion of thermal equilibrium which is experimentally verifiable as it is focused on an arbitrary observable. 

Our approach is not the only one on the table. Pure-states statistical mechanics is an attempt to understand how the tools of statistical mechanics can be justified from an underlying quantum theory. Despite such similarity in the spirit, we would like to stress that our approach takes a further step, as it does not ascribe the emergence of thermal behaviour to the state of the system being of the ``thermal form''. Rather, we focus on what is measurable and address the emergence of thermal equilibrium without going through a characterization of the thermal form for the state of the system. 

In the second part of Chapter \ref{Ch3}, we unravel a connection between our approach and one of the pillars of pure-states statistical mechanics: The Eigenstate Thermalization Hypothesis. The concept of Hamiltonian Unbiased Observables provides a prototype for the observables which satisfy the ETH. The characterization is not complete, as these observables are completely insensitive to the energy while concrete observables can exhibit a mild and smooth dependence on the energy. In Chapters \ref{Ch4} and \ref{Ch5}, we study the consequences of the proposed approach to thermal equilibrium. In particular, in Chapter \ref{Ch4} we investigate the connection between the ETH and HUOs and give a theorem which can be used to understand which physical properties of the Hamiltonian ensure the validity of the ETH for physical observables. In Chapter \ref{Ch5}, the predictions of our approach are tested against a class of systems which is well-known to escape the description of statistical mechanics: The many-body localized systems. Our theoretical analysis predicts that, in the XXZ model, the equilibrium probability distribution of the local magnetization along the $x$ and $y$ direction should have thermal properties, described by the microcanonical ensemble. The numerical analysis agrees with this picture. This confirms that our approach goes beyond the standard statistical mechanics arguments, which rely on the assumption of a thermal state. The intuition developed is then used to show that an important feature of MBL, which has never been experimentally probed so far, can be unraveled by measuring local observables: The logarithmic growth of entanglement in time.\\

In the second part of the thesis, we exploited a technique from pure-states statistical mechanics to study the macroscopic behaviour of spin networks. These are the underlying states of Loop Quantum Gravity, introduced in Chapter \ref{Ch6}, and provide a well-defined notion of quantum geometry. In LQG, they describe the microscopic degrees of geometry at the quantum level. A spin network with a certain number of links and nodes has a dual geometric interpretation as a collection of adjacent polyhedra. Nodes are dual to polyhedra while links are dual to patches of surface shared by two adjacent polyhedra. Within this context, the concentration of measure technique can be used to argue that, in the macroscopic regime, there are some local properties of quantum systems which become true irrespectively of what is the precise state of the overall system. These properties are heralded by the ``Typical Local State'', of which we give the explicit form in two concrete situations. In the first part of Chapter \ref{Ch7} the technique is extended to study Hilbert spaces with gauge constraints. In the second part of Chapter \ref{Ch7} we studied the behaviour of a small patch of surface belonging to a larger surface. The typical state of such small patch of surface is a canonical Gibbs state where the role of the Hamiltonian is played by the local area operator. In Chapter \ref{Ch8} we apply the same idea to the study of the boundary of a large $3$D region of quantum space. By tracing out the ``internal'' degrees of freedom we compute the typical state of the boundary: it is a microcanonical state with an entropy proportional to the area of the boundary surface. 

We also study the relation which leads to the emergence of the boundary typicality. This has an exponential form which tames the fluctuations around the typical state. In the semiclassical regime the exponential behaviour becomes a threshold condition: If there is too much boundary-bulk entanglement, the state of the boundary becomes a thermal state with an entropy proportional to the area of the surface. We argue that such state is the quantum counterpart of a classical horizon as it has a thermal behaviour with an entropy proportional to the average value of its area. For this reason, we interpret the result as an information-theoretic collapse picture which, at the quantum level, mimics the condition for the creation of a black hole: the \emph{Hoop Conjecture}.

\section{Conclusions and Future work}

In this thesis we explored different aspects of the notion of thermal equilibrium. On the one hand we clarified that the emergence of a thermal behaviour should not always be traced back to the emergence of a thermal state. There is a weaker condition, specific for an arbitrary observable, whose validity can explain the observed emergence of thermal behaviour in experiments and numerical simulations. The theory that emerges is based on the Equilibrium equations derived in Chapter \ref{Ch3}. This should not be understood as an attempt to overthrow statistical mechanics. Rather, it is a way to expand the domain of applicability of statistical  techniques, in physics, to situations where the assumption of having a thermal state is not well justified. Adopting such perspective is useful also from the foundational point of view, as it allows to reconcile smoothly the maximum entropy principle and its predictions with the dynamics of isolated quantum systems: eigenvalues probability distributions with maximum entropy can emerge under the action of a unitary propagator. 

A connection with more popular approaches is provided by the notion of the Hamiltonian Unbiased Observables. These are peculiar solutions of the Equilibrium Equations which, if sufficiently degenerate, satisfy the Eigenstate Thermalization Hypothesis. We believe their correct interpretation is as follows. They are the ``prototypical observables'' which satisfy the ETH as their behaviour is very similar to the one given by Random Matrices. Unfortunately, because of that, they are completely insensitive to the energy scales of the model. While this can be used to our advantage, to argue that these observables will always thermalize to the predictions of the microcanonical ensemble, it also means that they are not always a good approximation to physical observables. Observables which thermalize are expected to exhibit a smooth dependence on the energy. Because of that, we think about the Hamiltonian Unbiased Observables as a first approximation to a complete characterization of observables which exhibits the ETH. Due to the fact that in several cases Random Matrix Theory provides a sufficiently good approximation to describe aspects of quantum equilibration and thermalization\cite{Torres-Herrera2016,Torres-Herrera2015,Borgonovi2016}, we formulate the following \emph{HUO proximity conjecture}: \emph{In the space of operators acting on a Hilbert space, observables which satisfy the ETH should be sufficiently close to observables which are Hamiltonian Unbiased Observables}. The validity of such conjecture would fit well both with observed experimental data and with numerical simulations. It would also reinforce the idea that the emergence of thermal behaviour should not necessarily be traced back to a thermal state of the system.\\

{\bf For future work}, we will certainly aim at refining the HUO picture to obtain a more detailed characterization of observables which satisfy the ETH. The first step is to understand how the HUBs condition is modified by small Hamiltonian perturbations. Moreover, the HUO are only one particular solution of the Equilibrium Equation derived in Chapter \ref{Ch3}. Other solutions are present and we believe it is worth to investigate their physical meaning. The main hope of the author is that this approach can be used to build a full theory to predict the equilibrium behaviour of quantum observables, beyond the thermal case. Moreover, in this work we mainly focused on characterising the equilibrium behaviour. The out-of-equilibrium behaviour of the HUOs was studied very briefly in Chapter \ref{Ch6}. It appears that the dynamical behaviour reproduces what we expect from a dynamically thermalising observable. However, the time-scale of equilibration seem to be too fast, for concrete observables. We believe this problem will also be addressed by studying how the HUB condition is modified by perturbation theory. 

There is one more direction which we would like to explore. It is well-known that the entanglement properties have important repercussions on the physics of many-body quantum systems. This has been exploited mostly in equilibrium physics. The analysis presented in Chapter \ref{Ch5} about the behaviour of the local entropy lead us to formulate the concept of \emph{Time-Dependent Entanglement Hamiltonian} as the operator $\tilde{H}(t)$ which is the solution of the equation $\rho_k(t) = \Tr_{N-k}\ket{\psi(t)}\bra{\psi(t)} = \frac{e^{-\tilde{H}(t)}}{\tilde{Z}(t)}$, where $\ket{\psi(t)} = e^{-\frac{i}{\hbar} \Ham t} \ket{\psi_0}$. This is the out-of-equilibrium counterpart of the concept of Entanglement Hamiltonian $H(E_\alpha)$, defined when the initial state is an energy eigenstate: $\rho_k(E_\alpha) = \Tr_{N-k} \ket{E_\alpha}\bra{E_\alpha} = \frac{e^{-H(E_\alpha)}}{Z(E_\alpha)}$. This definition is of practical use as the operator $H(E_\alpha)$ can be regarded as some ``Effective Hamiltonian'' of the system, which is generated by the entangled nature of the state $\ket{E_\alpha}$. Thus, the study of Time-Dependent Entanglement Hamiltonians could unravel interesting phenomenology, for example to study time-dependent phase transitions, and it certainly provides a practical concept to address the dynamical behaviour of entanglement in quantum systems.

To summarize, future work in this direction will focus on:
\begin{itemize}
\item Investigating the \emph{HUO proximity} picture, mainly by joint analytical and numerical techniques. The goal is to give a more refined characterization of quantum observables which satisfy the ETH.
\item Study the out-of-equilibrium behaviour of the HUOs and, possibly, of their more refined characterization resulting from the previous point. 
\item Other solutions of the Equilibrium Equations are possible. What is their physical relevance?
\item Study of entanglement dynamics, through the concept of time-dependent entanglement hamiltonians.
\end{itemize}


In the second part of the thesis we have used the concentration of measure phenomenon to study the local properties of a quantum geometry which are preserved in the macroscopic regime. This idea can be used to investigate the emergence of semiclassical properties, in the macroscopic regime. Indeed, thanks to the concentration of measure phenomenon the properties of the typical local state will be preserved in the macroscopic regime. In this sense, the typical local state is a representative state of the local behaviour, at the macroscopic scale. Focussing on this single state is a huge simplification. Nonetheless, this should allow to single out properties which are macroscopically relevant and study their semiclassical regime, to be compared with predictions from General Relativity.\\

For generic quantum systems, these properties are usually dominated by the existence of a large amount of entanglement between the local subsystem and the rest. This is also true for the Hilbert space of gauge-invariant spin networks. Some consequences of this large amount of entanglement are easy to understand. For example, a small patch of surface will look like it is in a canonical thermal state where the role of the Hamiltonian is played by the local area operator. The entropy associated to a patch of surface is therefore proportional to the area of the surface. It is well-known that such behaviour is part of the phenomenology of General Relativity, due to the work by Ted Jacobson \cite{Jacobson,Jacobson2016}.  This supports the idea that the Hilbert space of spin networks should be able to describe the continuous geometry that we experience at the classical level. 

The idea was pushed forward and another consequence of such ``typical'' large amount of entanglement was studied, in a slightly different context. When we have a large $3$D region of quantum space, the behaviour of the boundary is dominated by a large amount of entanglement with the bulk. Because of that, the state of the boundary becomes typical when the number of degrees of freedom in the bulk exceed a certain threshold, given by the number of degrees of freedom in the boundary. Such result mimics the analysis which leads to the threshold condition for the creation of a horizon. For this reason, we believe this ``local concentration of measure'' argument offers an \emph{information-theoretic} mechanism to understand the creation of a horizon from a semiclassical perspective. Too much entanglement between the bulk and the boundary provides the boundary with thermal features which resemble the ones of a horizon. \\

{\bf Future work} will be focused on refining this picture to make it more accurate. Various setups of physical interest, progressively more general, will be identified and their local properties will be studied by means of the concentration of measure. The first step will be to take an arbitrary spin network, parametrised by the number of vertices, links and loops, and to study the typical state of a link and of a vertex. The local properties will be concentrated around the typical local state and we will be able to study the way they approach the semiclassical regime. Understanding the emergence of the typical local properties would also be important for a comparison with the picture given by Tensor Networks in AdS/CFT. Indeed, the spin networks are simply tensor networks where the microscopic data have a well-defined geometric interpretation. The goal here would be to provide a tensor network model which is able to describe the semiclassical behaviour of a quantum spacetime. Ideally, this could result from a coarse-graining procedure which leads to the emergence of the typical properties at the local level. Moreover, a long-term goal of this investigation would be to understand the origin of the thermal properties of black holes, from a purely quantum perspective. In particular, we are interested in the role played by the boundary/bulk entanglement. Understanding this aspects should lead to a reliable quantum model for a black hole. With this mindset, the possibility to create a toy model to simulate the dynamical emergence of typicality is currently being investigated. 

Eventually, if we look at the issue of quantum gravity from a broader perspective, an important question to address in Loop Quantum Gravity is: ``how can the non-commutativity between area and curvature be amplified at the macroscopic scale?'' The ultimate goal of this research path is to find macroscopic effects of the non-commutativity which we would hopefully be able to measure. The question can be tackled using quantum information ideas and techniques\cite{Li2017a}. The way we think about this issue is built on a parallel with the theory of dynamical systems. There, the concept of Lyapunov exponents captures the rate at which two infinitesimally close trajectories diverge exponentially. Mirroring this behaviour, we will look for mechanisms which, in progressively larger systems, are able to magnify the non-commutativity between area and curvature, making it detectable at the macroscopic scale. 

Thus, future work in this direction will focus on:
\begin{itemize}
\item Find the typical state of a node and of a link, and connect their properties to the phenomenology of General Relativity;
\item Unravel the connection between spin networks and generic tensor network model for AdS/CFT correspondence;
\item Toy model for the dynamical emergence of a thermal behaviour of a horizon, due to its entanglement with the bulk;
\item  Macroscopic effects of the non-commutativity between area and curvature
\end{itemize}

\appendix
\chapter{Equilibrium Equations} \label{App:EE}

Here we present the full derivation of the equilibrium equations, using the Lagrange multipliers' technique. Our purpose is to find the distribution which maximises Shannon entropy of an observable $\mathcal{O}$. The space on which such an optimisation is formulated is the space of the finite-dimensional density matrices, which is the space of positive semi-definite and self-adjoint matrices. The maximisation is constrained by two equations: the first one accounts for the normalisation of the state while the second one accounts for a fixed value of the average energy. \\

If one wants to be absolutely rigorous, there is another constraint which needs to be imposed, which is the non-negativity of the density matrix. This is commonly indicated in the following way: $\rho \geq 0$. In other words, using the above-mentioned constraints there is no guarantee that, if a solution exists, it is a positive density matrix. This is not an actual problem, from the physical point of view, because one simply disregards a solution if it does not give a non-negative matrix. In such a case, the mathematical problem might have a solution, but not the associated physical question, which is the one we are really interested in. We would like to suggest an alternative route in such direction. If one wants to be completely rigorous, the Karush-Kuhn-Tucker (KKT) technique can be used to implement the non-negativity constraint on the density matrix.  However, we do not think we would gain any physical insight from the use of such technique and therefore we are not going to explore such a route.\\

We present here the derivation, in the general case of a mixed state $\rho = \sum_{n} q_n \Ket{\psi_n}\Bra{\psi_n}$ and of a degenerate observable $\mathcal{O} = \sum_{j,s} \lambda_j \Ket{j,s} \Bra{j,s}$, in which $\left\{ \Ket{j,s} \right\}$ is a complete basis of the Hilbert space. Here are the two constraints:

\begin{subequations}
\begin{align}
&\mathcal{C}_N \equiv \mathrm{Tr}\left( \rho\right) - 1 =\sum_{n;j,s} q_n |D_{js}^{(n)}|^2 -1 \\
&\mathcal{C}_E \equiv \mathrm{Tr} \left(\rho T \right) - E_0 = \sum_{n;j,s;j',s'} q_n \overline{D}_{js}^{(n)} T_{js,j' s'} D_{j' s'}^{(n)} - E_0
\end{align}
\end{subequations}

in which $D_{js}^{(n)}\equiv \braket{j,s}{\psi_n}$. Moreover $T_{js;j' s'} = \Bra{j,s}\hat{T}\Ket{j',s'}$. Exploiting Lagrange's multipliers technique one defines an auxiliary function $\Lambda_{\mathcal{O}}$, specific for the $\mathcal{O}$ observable, that can be freely optimised

\begin{align}
&\Lambda_{\mathcal{O}} [\hat{\rho},\lambda_E,\lambda_N] \equiv H_{\mathcal{O}}[\hat{\rho}] + \lambda_N \mathcal{C}_N + \lambda_E \mathcal{C}_E
\end{align}

The derivatives with respect to the Lagrange's multipliers $\lambda_E$ and $\lambda_N$ enforce the validity of the constraints 

\begin{subequations}
\begin{align}
&\frac{\delta \Lambda_{\mathcal{O}}}{\delta \lambda_N} = 0  \quad \Rightarrow \quad \mathcal{C}_N = 0\\
&\frac{\delta \Lambda_{\mathcal{O}}}{\delta \lambda_E} = 0  \quad \Rightarrow \quad \mathcal{C}_E = 0
\end{align}
\end{subequations}

while the derivatives with respect to the field variables and with respect to the statistical coefficients $q_n$ gives three equations, of which only two are independent:

\begin{align}
&\frac{\delta \Lambda_{\mathcal{O}}}{\delta D_{js}^{(n)}} = 0 && \frac{\delta \Lambda_\mathcal{O}}{\delta \bar{D}_{js}^{(n)}} = 0  
\end{align}

where

\begin{subequations}
\begin{align}
&\frac{\delta \Lambda_{\mathcal{O}}}{\delta D_{jn}} = - q_n \bar{D}_{js}^{(n)} \left[ \log \left( \sum_{m,p} q_m |D_{jp}^{(m)}|^2 \right) + (1-\lambda_N) \right] + \lambda_E q_n \sum_{i,p} T_{js,ip} \bar{D}_{ip}^{(n)}\\
&\frac{\delta \Lambda_\mathcal{O}}{\delta \bar{D}_{jn}} = - q_n D_{js}^{(n)} \left[ \log \left( \sum_{m,p} q_m |D_{jp}^{(m)}|^2 \right) + (1-\lambda_N) \right] + \lambda_E q_n \sum_{i,p} D_{ip}^{(n)} T_{ip,js} 
\end{align}
\end{subequations}

Instead of using these equations, we use the two following independent linear combinations

\begin{subequations}
\begin{align}
&D_{js}^{(n)} \frac{\delta \Lambda_{\mathcal{O}}}{\delta D_{js}^{(n)}} - \bar{D}_{js}^{(n)} \frac{\delta \Lambda_{\mathcal{O}}}{\delta \bar{D}_{js}^{(n)}}= 0  \\
& \frac{1}{2} \left(D_{js}^{(n)} \frac{\delta \Lambda_{\mathcal{O}}}{\delta D_{js}^{(n)}} + \bar{D}_{js}^{(n)} \frac{\delta \Lambda_{\mathcal{O}}}{\delta \bar{D}_{js}^{(n)}}= 0    \right)= 0 
\end{align}
\end{subequations}

which give the two equilibrium equations that we used in the main text

\begin{subequations}
\begin{align}
&\overline{\mathcal{E}}_n(j,s) = \mathcal{E}_n(j,s) \\
&- |D_{js}^{(n)}|^2 \log \left(\sum_{m,q} q_m  |D_{jq}^{(m)}|^2\right)  =   \left( 1-\lambda_N \right)|D_{js}^{(n)}|^2    -  \lambda_E \mathcal{E}_n(j,s) 
\end{align}
\end{subequations}

\chapter{Levy's lemma} \label{App:Levy}

In order to better understand the result it is useful to look at its most important step, which is the so-called Levy-lemma. Take an hypersphere in $d$ dimensions $S^{d}$, with surface area $V$. Any function $f$ of the point which does not vary too much 

\begin{align}
&f :  S^d \ni \phi \to f(\phi) \in \mathbb{R} &&|\nabla f| \leq 1
\end{align}

will have the property that its value on a randomly chosen point $\phi$ will approximately be close to the mean value.
\begin{align}
\frac{\mathrm{Vol}\left[ \phi \in S^d \, : \, f(\phi) - \MV{f} \geq \epsilon \right]}{\mathrm{Vol}\left[ \phi \in  S^d \right]} \leq 4 \, \mathrm{Exp} \left[ - \frac{d+1}{9 \pi^3} \epsilon^2 \right]
\end{align}

Where $\mathrm{Vol}\left[ \phi \in S^d \, : \, f(\phi) - \MV{f} \geq \epsilon \right]$ stands for ``the volume of states $\phi$ such that $f(\phi) - \MV{f} \geq \epsilon$''. $\MV{f}$ is the average of the function $f$ over the whole Hilbert space and $\mathrm{Vol}\left[ \phi \in  S^d \right]$ is the total volume of the Hilbert space. Integrals over the Hilbert space are performed using the unique unitarily invariant Haar measure.\\

The Levy lemma is essentially needed to conclude that all but an exponentially small fraction of all states are quite close to the canonical state. This is a very specific manifestation of a general phenomenon called ``concentration of measure'', which occurs in high-dimensional statistical spaces \cite{Ledoux2001}.

The effect of such result is that we can re-think about the ``a priori equal probability'' principle as an ``apparently equal probability'' stating that: as far as a small system is concerned almost every state of the universe seems similar to its average state, which is the maximally mixed state $\mathcal{E}_{\mathcal{R}} = \frac{1}{d_{\mathcal{R}}}\mathbb{I}_{\mathcal{R}}$.\\

\chapter{Proof of Theorem~\ref{thm:mainresult}}\label{App:Proof}




In this Appendix we provide the details of the proof of the main result of the paper, Theorem~\ref{thm:mainresult}. In the first subsection we give some background material, concerning the formalism of the generalized Bloch-vector parametrization. Such formalism will be used in the second subsection, where we give the actual proof of Theorem~\ref{thm:mainresult}.

\section{Generalised Bloch-vector parametrization}\label{sm:generalizedbloch}{Subsection~A1}
We start by briefly recalling the formalism of the generalized Bloch-vector parametrization \cite{Bengtsson2008,Bertlmann2008} of a pure quantum state. The standard Bloch-vector parametrization is a well-known way to describe the space of pure-states of a qubit, by using the isomorphism between its two-dimensional projective Hilbert space and a 2-sphere $\mathbb{S}^2$. Such an isomorphism can be easily generalized to arbitrary dimensions and it is well known that the projective space of a $D-$dimensional complex Hilbert space is isomorphic to $\mathbb{S}^{D^2-2}$. This isomorphism can be made explicit by associating to any normalized rank-$1$ projector $\ketbra{\psi}{\psi}$ a generalized Bloch vector $\vec{b}(\psi) \in \mathbb{S}^{D^2-2} \subset \mathbb{R}^{D^2 - 1}$ that fulfills
\begin{align}\label{eq:blochvectorcorrespondence}
\ketbra{\psi}{\psi} = \frac{\mathbb{I}}{D} + \sqrt{\frac{D-1}{D}} \,\, \vec{b}(\psi) \cdot \vec{\gamma}
\end{align}
where $\vec{\gamma}$ is a vector with elements $\gamma_i \coloneqq \hat{\gamma}_i / \sqrt{2}$ and $\hat{\gamma}_i$ are the $D^2-1$ generators of $SU(D)$, with the following properties:
\begin{align}
\hat{\gamma}_i &= \hat{\gamma}_i^\dagger & \Tr(\hat{\gamma}_i) &= 0 & \Tr(\hat{\gamma}_i\,\hat{\gamma}_j) &= 2\, \delta_{ij}
\end{align}
Even though the term ``Bloch vector'' is normally used to identify the $2$-dimensional case, hereafter we will use it for its $D$-dimensional counterpart. The constant prefactor $\sqrt{\frac{D-1}{D}}$ has been put to make the norm of the Bloch vector independent on the dimension of the Hilbert space and always equal to one. The square of the absolute value of the scalar product between two pure states $\Ket{\psi},\Ket{\psi'} \in \Hilb$ is mapped into the scalar product of the two Bloch vectors $\vec{b},\vec{b}'$, plus a constant term
\begin{align} 
  |\braket{\psi}{\psi'}|^2 = \frac{1}{D} + \frac{D-1}{D} \,  \vec{b} \cdot \vec{b}' .
\end{align}

From this relation we can see that mutual unbiasedness is a very natural condition when written in term of the respective Bloch vectors. For any two sets of pure states $\{\Ket{\psi_j}\}_j$ and $\{\ket{\psi'_k}\}_k$, with respective Bloch vectors $\{\vec{b}_j\}_j$ and $\{\vec{b}'_k\}_k$ we have 
\begin{equation}
  |\braket{\psi_j}{\psi'_k}|^2 = \frac{1}{D} \quad \Longleftrightarrow \quad \vec{b}_j \cdot \vec{b}'_k = 0 \quad .
\end{equation}
In other words the sets $\{\Ket{\psi_j}\}_j$ and $\{\ket{\psi'_k}\}_k$ are mutually unbiased if and only if their respective sets of Bloch vectors are orthogonal. Now we look at how the property 
of being a basis of the Hilbert space is written in terms of the Bloch vectors of the basis elements. Let $\{\ket{\psi_j}\}_{j=1}^{D} \subset \Hilb$ be a basis of a Hilbert space of dimension 
$D$, with associated Bloch vectors $\{\vec{b}_j\}_j$. Using Eq.~\eqref{eq:blochvectorcorrespondence} we find that $\{\ket{\psi_j}\}_{j=1}^{D}$ spans all of $\Hilb$ if and only if
\begin{align}
  \1 = \sum_{j=1}^{D} \ketbra{\psi_j}{\psi_j} = \1 + \sqrt{\frac{D-1}{D}} \sum_{j=1}^{D} \vec{b}_j \cdot \vec\gamma .
\end{align}
Since the elements of $\vec\gamma$ are the linearly independent generators of $SU(D)$, this is equivalent to $\sum_{j=1}^{D} \vec{b}_j = 0$.
At the same time, the vectors $\{\ket{\psi_j}\}_{j=1}^{D}$ are orthonormal if and only if $\forall j,k \in \{1,\dots,D\}$
\begin{align}
  \delta_{jk} = |\Scal{\psi_j}{\psi_k}|^2 = \frac{1}{D} + \frac{D-1}{D} \, \vec{b}_j \cdot \vec{b}_k ,
\end{align}
which is equivalent to
\begin{align}
 \vec{b}_j \cdot \vec{b}_k = \frac{D}{D-1} \delta_{jk} - \frac{1}{D-1} .
\end{align}
In summary we obtain that $\{\ket{\psi_j}\}_{j=1}^{D}$ is a complete orthonormal basis if and only if their Bloch vectors $\{\vec{b}_j\}_j$ satisfy the two following conditions
\begin{subequations}
\begin{align} \label{eq:conditionsforbeingabasis}
  &&\sum_{k=1}^{D} \vec{b}_k &= 0\\
  &\text{and} & \vec{b}_h \, \cdot \, \vec{b}_k &= \frac{D}{D-1} \delta_{hk} - \frac{1}{D-1} =
                                    \begin{cases}
                                      1 & \text{if } $h=k$\\
                                      -\frac{1}{D-1} & \text{if } h \neq k 
                                    \end{cases}
\end{align}\label{eq:BASE} 
\end{subequations}

\section{Proof of Theorem~\ref{thm:mainresult}}\label{sec:proofofmainresult}\label{sm:proofofmaintheorem}{Subsection~A2}
Here we present a detailed proof of Theorem~\ref{thm:mainresult} from the main text.
In order to do this we first introduce a well known theorem from geometry and the notions necessary to state it.
We then show how the generalized Block vector parametrization together with this theorem and properties of simplices allow to prove Theorem~\ref{thm:mainresult}.

In $\mathbb{R}^n$ an $n$-simplex is the generalization of the $2D$ triangle and the $3D$ tetrahedron to arbitrary dimensions.
A \emph{regular simplex} is a simplex which is also a regular polytope.
For example, the regular $2$-simplex is the equilateral triangle and the regular $3$-simplex is a tetrahedron in which all faces are equilateral triangles.
A $n$-simplex can be constructed by connecting a new vertex to all vertices of an $n-1$-simplex with the same distance as the common edge distance of the existing vertices.
This readily implies that the convex hull of any subset of $n$ out of the $n+1$ vertices of an $n$ simplex is itself a $n-1$-simplex, a so called \emph{facet} of the simplex.    
For $n=2$ they are the sides of the triangle, for $n=3$ they are the two dimensional triangles building the boundary surface of the tetrahedron.
To each facet we can associate a \emph{facet vector} defined as the vector orthogonal to the facet and with Euclidean length equal to the volume of the facet.
The result we need about these objects is the following theorem.

\begin{theorem}[Minkowski(-Weyl) Theorem \cite{poly}] \label{thm:minkowskiweyl}
  For any set of $n+1$ non co-planar vectors $\vec{V}_i \in \R^n$ that span $\R^n$ with the property 
  \begin{align}\label{eq:sumzero}
    \sum_{i=1}^{n+1} \vec{V}_i = 0 
  \end{align}
  there is a closed convex $n$-dim polyhedron whose facet vectors are the $\vec{V}_i$.  The converse is also true, for any closed convex polyhedron the facets vectors sum to zero.
\end{theorem}

If we apply the theorem to an $n$-simplex, whose facets vector are all of equal magnitude it can be easily seen that the (all equal) dihedral angles $\alpha$ between two facet vectors
are such that $\cos \alpha = - \frac{1}{n}$. This fact will be used in the proof of Theorem~\ref{thm:mainresult}. Projecting Eq.~\eqref{eq:sumzero} onto the direction of one vector $\vec{V}_k$ and using the fact that all dihedral angles have the same magnitude $\alpha$ in a simplex we have $\sum_{i=1}^{n+1} \vec{V}_k \cdot \vec{V}_i = 1 + n \cos \alpha = 0$. Which gives $\cos \alpha = - \frac{1}{n}$. We can now proceed with the proof of Theorem~\ref{thm:mainresult}.

\begin{proof}[Proof of Theorem~\ref{thm:mainresult}]
  If $\mathcal{H}$ is a $D$-dimensional Hilbert space, take an arbitrary decomposition $\mathcal{H}=\oplus_{j=1}^n \mathcal{H}_j$ and call $P_j$ the projectors onto $\mathcal{H}_j$. Define $p_{m,j} \coloneqq \bra{\psi_m}\,P_j\,\ket{\psi_m}$.
  For every $\ket{\psi_m}$ let
  \begin{equation}
    \ket{\psi_m^{(j)}} \coloneqq
    \begin{cases}
      P_j\,\ket{\psi_m}/\sqrt{p_{m,j}} & \text{if } p_{m,j} \neq 0\\
      0 & \text{otherwise}
    \end{cases}
  \end{equation}
  be the normalized projection onto the subspace associated with $P_j$ or the zero vector if $\ket{\psi_m}$ is orthogonal to that subspace.
  Now, for any vector $\ket{\varphi} \in \Hilb_j$ we can write $| \braket{\psi_m}{\varphi} |^2 = | \bra{\psi_m}\,P_j\,\ket{\varphi} |^2 = p_{j,k}\, | \braket{\psi_m^{(j)}}{\varphi} |^2$.
  As both $\ket{\psi_m^{(j)}}$ and $\ket{\varphi}$ are contained in $\Hilb_j$, via the construction described in \ref{sm:generalizedbloch}, they have associated generalized Bloch vectors $\vec{b}_m^{(j)}$ and $\vec{b}$ in $\mathbb{S}^{D_j^2-2}$.
  Using Eq.~\eqref{eq:blochvectorcorrespondence} we thus have
  \begin{equation}
    | \braket{\psi_m}{\varphi} |^2 = p_{m,j} \, \frac{1}{D_j} + p_{m,j} \, \frac{D_j-1}{D_j} \,  \vec{b} \cdot \vec{b}_m^{(j)} .
  \end{equation}
  We conclude that $\ket{\varphi} \in \Hilb_j$ has the desired property (Eq.~\eqref{eq:THEO}) of the basis vectors $\ket{j,k}$ if and only if $\vec{b}$ is orthogonal to all the $\vec{b}_m^{(j)}$.
  For any given $j$, in the worst case, all the $M$ vectors $\vec{b}_m^{(j)}$ are linearly independent, leaving a subspace of dimension $D_j^2-2-M$ for picking $\vec{b}$.
  Now, we don't want to pick just one vector $\vec{b}$ from this subspace, but $D_j$ many such vectors, which moreover satisfy the conditions in \eqref{eq:conditionsforbeingabasis} so that their associated state vectors form an orthonormal basis for $\Hilb_j$.
  The Minkowski(-Weyl) Theorem (Theorem~\ref{thm:minkowskiweyl}) tells us that this can be achieved by taking them to be the facet vectors $\vec{V}_i$ of a regular simplex in this subspace, as long as the subspace has sufficiently high dimension.
  More precisely, the first condition from \eqref{eq:conditionsforbeingabasis} is always satisfied for facet vectors $\vec{V}_i$ of general polytopes and the second condition can be achieved by using the facet vectors of a regular simplex, scaled so that they have Euclidean norm equal to one.
  This follows because the cosine of the angle between any two facet vectors of an $n$-simplex is $-1/n$.
  So, as long as the space of vectors orthogonal to all the $\vec{b}_m^{(j)}$ is large enough to accommodate for a $D_j-1$-simplex, $D_j$ suitable Bloch vectors of an orthonormal basis $\{\ket{j,k}\}_{k=1}^{D_j} \subset \Hilb_j$ that is unbiased with respect to all $\ket{\psi_m}$ can be found.
  This is the case as long as $D_j^2-2-M \geq D_j-1$.
\end{proof}

\section{Examples}\label{AppB}
In this second Appendix, we give more details about how to apply Theorem~\ref{thm:mainresult} to the three examples given in the manuscript and how to derive the results. We use a one-dimensional spin-1/2 chain as an exemplary case to showcase our result. Moreover, we will always be interested in using the Hamiltonian eigenvectors as a set of vectors for our theorem. This means $M=D$ and $\left\{ \ket{\psi_j} \right\}_{j=1}^D = \left\{ \ket{E_m}\right\}_{m=1}^D$. However, if for some reason one is interested in a limited portion of the energy spectrum, the results can be strengthened by limiting the set of eigenvectors to $M < D$.

\subsection*{Example 1: Local observables}

As first application of our Theorem, we study the emergence of ETH in a local observable which has support on less than half of the whole chain. The total number of spins is $N$ and the Hilbert space is split into tensor products of $k$ and $N-k$ spins: $\mathcal{H} = \mathcal{H}_k \otimes \mathcal{H}_{N-k}$. Local observables $A_{\mathrm{loc}}= A_k \otimes \mathbb{I}_{N/k} = \sum_{j=1}^{2^k} P_j a_j$ have support on $k \leq N-k$ sites. In this case all eigenvalues have degenerate subspaces with the same dimension: $\mathrm{dim} \mathcal{H}_j = \Tr P_j = D_j = 2^{N-k}$. The condition that ensures the validity of the hypothesis of Theorem~\ref{thm:mainresult} is $2^{N-k}(2^{N-k}-1) \geq 2^N + 1$. Applying the $\log $  to both sides and with some algebraic manipulations we obtain
\begin{equation}
2(N-k)\log 2 - N \log 2 \geq \log \left( \frac{1-\frac{1}{2^{N-k}}}{1+\frac{1}{2^N}}\right)
\end{equation}
The right-hand side is always negative. So we request the following (slightly stronger) condition
\begin{equation}
2(N-k)\log 2 - N \log 2 \geq 0 \geq \log \left( \frac{1-\frac{1}{2^{N-k}}}{1+\frac{1}{2^N}}\right)
\end{equation}
The condition arising from the first inequality gives $k \leq \frac{N}{2}$. Therefore, local observables with support on less than half of the chain satisfy the assumptions of our theorem. For them we obtain that there is a basis $\ket{a_j,k}$ that diagonalizes the observable, such that

  \begin{equation} 
    | \braket{E_m}{a_j,k} |^2 = \bra{E_m} P_j \ket{E_m} / 2^{N-k}
  \end{equation}
since $P_j = A_j \otimes \mathbb{I}_{N/k}$ we have $\bra{E_m} P_j \ket{E_m} = \Tr_k \left(A_j \rho_k(E_m)\right)$ where $\rho_k(E_m) = \Tr_{N/k}\ket{E_m}\bra{E_m}$. For small subsystems $k \ll N-k$, if the Hamiltonian eigenstates are highly entangled, which is expected to be true for a non-integrable system in the bulk of the spectrum, the von Neumann entropy of the reduced state is close to the maximum value $k \log 2 - S_{\mathrm{vN}}(\rho_k(E_m)) \leq \epsilon_{k}(E_m)$ with $\epsilon_k(E_m) \geq 0$. Using Pinsker's inequality and the fact that the relative entropy with respect to the maximally mixed state is just the difference between the two entropies we have 
\begin{equation}
|| \rho_k(E_m)- \frac{\mathbb{I}}{2^k}||^2 \leq \frac{1}{2} (k \log 2 - S_{\mathrm{vN}}(\rho_k(E_m))) \leq \frac{\epsilon_k(E_m)}{2} \, .
\end{equation}
Whenever $\epsilon_k(E_m) \ll 1$, which is expected to be true in the bulk of the spectrum, we have
\begin{equation}
| \braket{E_m}{a_j,k} |^2 = \frac{\bra{E_m} P_j \ket{E_m} }{2^{N-k}} = \frac{\Tr_k A_j \rho_k(E_m)}{2^{N-k}} = \frac{\bra{a_j} \rho_k(E_m)\ket{a_j}}{2^{N-k}} \simeq \frac{1}{2^N} \, .
\end{equation} 
We can therefore conclude that entanglement in the energy eigenstate is the feature that makes local observables be HUOs. Provided certain mild assumptions, which have been discussed in the paper, are satisfied, this guarantees that they satisfy ETH. We conclude that, if the energy eigenstates are highly entangled in a certain energy window $I_0 = \left[E_{a} , E_{b} \right]$, as it is expected to happen in a non-integrable model, ETH will hold for all local observables, in the same energy window.

\subsection*{Example 2: Extensive observable - Global magnetization }
In this second example we study the consequences of our theorem for an observable which is the extensive sum of local observables: the global magnetization $M_z=\sum_{n=1}^N \sigma_n^z$. Its spectral decomposition is $M_z=\sum_{j=-N}^{N} j \, P_j$ so the Hilbert space is decomposed as the direct sum of the $\mathcal{H}_j$, which are the images of the $P_j$: $\mathcal{H}=\bigoplus_{j=-N}^{N} \mathcal{H}_j$. Their dimension $D_j = \Tr \, P_j$ can be computed using combinatorial arguments: $D_j = C^N_{j}\equiv {N \choose \frac{N-j}{2}}$.  At fixed size $N$, $D_j \in [1,{N \choose N/2}]$. The inequality $D_j \geq 1+\frac{2^N+1}{D_j}$ selects a subset $j \in [-j_{*} ,j_{*} ]$ of subspaces $\mathcal{H}_j$ for which the theorem will hold. Note that the interval $[-j_{*},j_{*}]$ is symmetric with respect to zero because $D_{j}=D_{-j}$. In order to find how $j_{*}$ scales with the system size, we numerically compute how many subspaces $\mathcal{H}_j$ meet the condition $D_j \geq 1+\frac{2^N+1}{D_j}$. We call this number $q(N)$. Since the eigenvalues are given by the relative number $j \in \mathbb{Z} \cap [-N,N]$ and they are equally spaced, we have $q(N)=2 j_{*}+1$. Which means $j_{*}=\frac{q(N)-1}{2}$. In Fig.\ref{fig:Macro} we can see that it scales linearly with the system size: $q(N) \sim 1.56 N$. This gives $j_{*}(N) \sim 0.78 N$.

\begin{figure}[h]
\centering
\includegraphics[width=3.6 in]{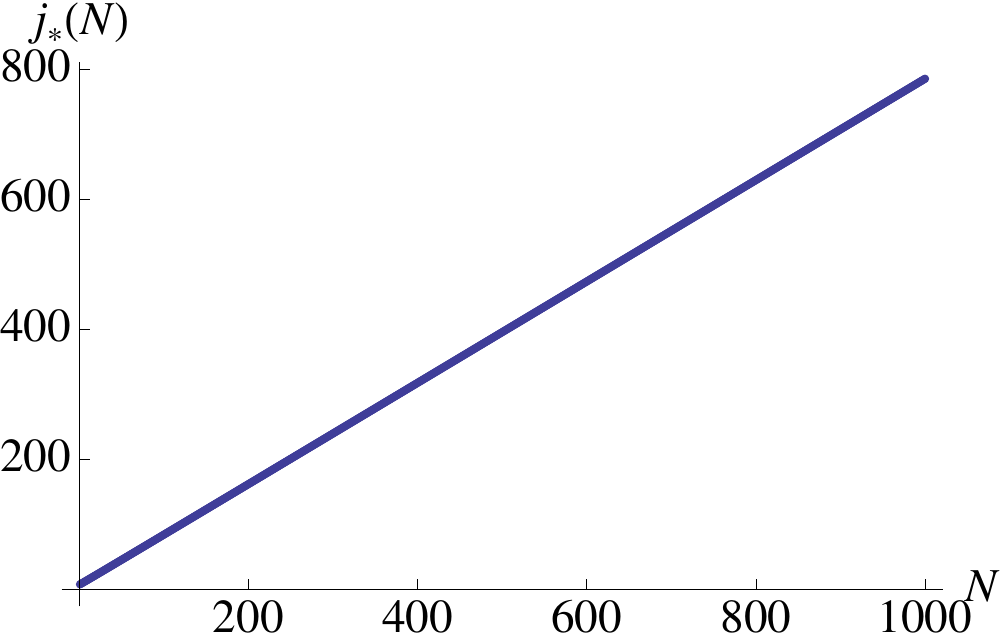}
\caption[Number of subspaces satisfying the hypothesis of the theorem]{Scaling of the number of subspaces $\mathcal{H}_j$ which meet the condition $D_j \geq 1+ \frac{2^N+1}{D_j}$.}\label{fig:Macro}
\end{figure}
The picture that we obtain is the following. States with ``macroscopic magnetization'', i.e. around the edges of the spectrum of $M_z$, have very small degeneracy and the theorem is not going to hold for them. In the bulk of the spectrum, however, there is a large window $j\in [-j_{*}(N),j_{*}(N)]$ where the respective subspaces $\mathcal{H}_j$ meet the conditions for the validity of the theorem. In summary, if we apply the theorem to the global magnetization we obtain:

\begin{equation}
\forall j \in \mathbb{Z} \cap [-j_{*}(N),j_{*}(N)]\,, \qquad  |\braket{j,s}{E_m}|^2 = \frac{\bra{E_m} P_j \ket{E_m}}{D_j} \,.
\end{equation}
We know that the relation we are interested in is the Hamiltonian Unbiasedness, which would be $\frac{\bra{E_m} P_j \ket{E_m}}{D_j} \simeq \frac{1}{2^N}$. For this reason we study the relation
\begin{equation}
\frac{\bra{E_m} P_j \ket{E_m}}{D_j} \simeq \frac{1}{2^N} \qquad \Rightarrow \qquad  \bra{E_m} P_j \ket{E_m} \simeq \frac{D_j}{2^N}\,,
\end{equation}
which in turn means to study how $\frac{D_j}{2^N}$ behaves. For this goal, in the large $N$ regime we can use Stirling's approximation. As it is known, there is not a unique way of using it. Rather, there are different ways, depending on the number of sub-leading terms that one is willing to use. Here we focus on the leading term. Note that Stirling's approximation can be used throughout the whole window $[-j_{*}(N),j_{*}(N)]$, as long as $N \gg 10$. This is true because $j_{*}(N) \sim 0.78 N$ so $|j| \in [0,0.78 N]$ and $\frac{N-j}{2}\sim 0.1*N$. Therefore, as long as $0.1 N \gg 1$, we can use Stirling's formula for all the factorials involved in $D_j$. It can be shown that if $n\geq k \gg 1$, at the leading order we have ${n \choose k} \sim 2^{n H_2(\frac{k}{n})}$ where $H_2(x)\equiv -x\log_2 x - (1-x) \log_2 (1-x)$ is the binary entropy. Using this we get 
\begin{equation}
D_j \approx 2^{N\,H_2\left(\frac{N-j}{2N}\right)} = 2^{N\,H_2\left(\frac{1}{2} - \frac{j}{2N}\right)} \,\,\, ,
\end{equation}
which in turn gives 
\begin{equation}
\frac{D_j}{2^N} \approx 2^{-N [1-H_2\left(\frac{1}{2} - \frac{j}{2N}\right)]} \,\,\, .
\end{equation}
We have the size of the system $N$ which multiplies a function which is a binary
relative entropy. If we call $p_{\mathrm{mix}}\coloneqq \left( \frac{1}{2}, \frac{1}{2} \right)$ and $p(j)=\left(\frac{1}{2} - \frac{j}{2N} ,\frac{1}{2} + \frac{j}{2N} \right)$ we have 

\begin{equation}\label{eq:ldb}
1-H_2[\left(\frac{1}{2} - \frac{j}{2N}\right)] = H_2\left[ p(j)||p_{\mathrm{mix}} \right] \quad \Rightarrow \quad \frac{D_j}{2^N} \approx 2^{-N H_2\left[ p(j)||p_{\mathrm{mix}} \right]}\,.
\end{equation}
Eq.~\eqref{eq:ldb} has a very interesting form.
It is telling us that the statistics of the eigenvalues $a_j$, induced by the eigenstates $\ket{E_m}$, satisfies a large deviation bound. The rate function is given by the binary Kullback-Leibler divergence $H_2[p(j)|\!| p_{\mathrm{mix}}]$. Now we can formulate a clear statement. Choose a subspace $\mathcal{H}_j$ with $|j| < j_{*}$ where the hypothesis of our theorem hold. \emph{If there is a $k \in \mathbb{N}, \,\,\, k < j_{*}$ such that for all $j \in [-k,k]$ we have $\bra{E_m}P_j \ket{E_m} \approx 2^{-N H_2\left[ p(j)||p_{\mathrm{mix}} \right]}$, the global magnetization $M_z$ will be an HUO and satisfy the ETH in the subspaces $\bigoplus_{|j|<k} \mathcal{H}_j$.} Concretely, this will happen if the measurement statistics generated by the energy eigenstates $\ket{E_m}$ on the eigenvalues $a_j$ satisfies a large deviation bound.

To build our intuition on what this means we evaluate $H_2\left(\frac{N-j}{2N}\right)$ in two regimes allowed by our Theorem: $\frac{|j|}{N} \ll 1$ and $\frac{|j| - j_{*}}{N} \ll 1$. In the first case, calling $x=\frac{|j|}{N}$ we can Taylor-expand $H_2(\frac{1-x}{2} )$ around $x \ll 1$ to obtain 
\begin{equation}
H_2\left(\frac{1-x}{2}\right) \stackrel{x \ll 1}{\approx} 1 - \frac{x^2}{2} \qquad \Rightarrow \qquad H_2\left(\frac{1-|j|/N}{2}\right) \approx 1 - \frac{j^2}{2N^2} \qquad |j|/N \ll 1\,. \label{eq:expansion}
\end{equation}

In the regime $|j|\approx j_{*}$ we have a better way to estimate $D_j$. Indeed in such regime $D_j \approx D_{j_{*}}$, which satisfies $D_{j_{*}} \approx 1+ \frac{2^N-1}{D_{j_{*}}}$. Solving for $D_{j_{*}}$ and taking the leading order in $N$ we obtain $D_{j_{*}} \approx 2^{N/2}$. Moreover, using the expression in Eq.~\eqref{eq:expansion} we can find how $D_j$ deviates from $D_{j_{*}}$. Indeed expanding $H_2(\frac{1-|j|/N}{2})$ around $j_{*}$ we get
\begin{equation}
H_2\left(\frac{1-|j|/N}{2}\right) \approx H_2\left(\frac{1-j_{*}/N}{2}\right) - \left. \frac{dH_2}{dx} \right\vert_{x=\frac{1-j_{*}/N}{2}} \frac{|j|-j_{*}}{2N} \approx \frac{1}{2} - \frac{3}{2} \left(\frac{|j|-j_{*} }{N}\right)\,.
\end{equation}
In summary, when $N \gg 10$
\begin{align}
& \frac{D_j}{2^N} \approx  \left\{ \begin{array}{ll} 
 2^{- \frac{j^2}{2N}} & \quad \frac{|j|}{N} \ll 1  \\
& \\
 2^{-\frac{N}{2} - \frac{3}{2}(|j|-j_{*})} & \quad \frac{|j| - j_{*}}{N} \ll 1
  \end{array} \right. 
\end{align}
This means that when we approach the thermodynamic limit $N \to \infty$, the eigenvalues with higher magnetization will be exponentially suppressed in the system size. This is indeed what we expect to be true at the macroscopic level.

\subsection*{Example 3: Macroscopic equilibrium - Normal typicality and von Neumann's Quantum H-theorem}

In this last example we investigate the connection of our theorem with the notion of Macro-observables proposed by von Neumann in his work on the Quantum H-theorem \cite{Neumann1929,Tumulka2010}.
This in turn is strictly related with the notion of Normal typicality developed in a series of more recent works by Goldstein \emph{et al.} \cite{Goldstein2006,Goldstein2010,Goldstein2010a}.
Again, we start by decomposing our Hilbert space $\mathcal{H}$ as a direct sum of subspaces $\mathcal{H}_j$.
The index $j$ runs over a finite number of values that identify different macroscopic properties of the system.
One could say that it identifies different ``macrostates'', characterized by the expectation value of commuting macroscopic observables.
In the original idea by von Neumann, in a classical system we measure position and momentum, which commute.
His point was that there are some coarse-grained approximation of actual position and momenta which can be ``rounded'' to obtain a set of commuting macro-observables.
Such set of commuting Macro-observables provides a decomposition of the Hilbert space $\mathcal{H} = \bigoplus_{j=1}^n \mathcal{H}_j$ where the index $j$ runs over all the possible different macrostates.
Each one of these spaces $\mathcal{H}_j$ is hugely degenerate and we assume here that we can use our Theorem for all of them.
Using the concentration of measure phenomenon it can be shown \cite{Goldstein2010,Goldstein2010a,Goldstein2006} that for most $t$, $\bra{\psi(t)}P_j \ket{\psi(t)} \simeq \frac{D_j}{D}$ for all $j$, for most Hamiltonians in the sense of Haar and for all $\psi(0)$.

Concretely, this happens for all $\psi(0)$ if and only if $\bra{E_m}P_j \ket{E_m} \simeq \frac{D_j}{D}$ for all $j$ and $m$.
Such a relation can be proven to hold in the same sense as before.
For most Hamiltonians in the sense of Haar
\begin{equation}
  \bra{E_m}P_j \ket{E_m} \simeq \frac{D_j}{D} \qquad \forall  j,m\,.
\end{equation}
The unitary for which this ``most'' holds is the one connecting the Hamiltonian eigenbasis to the basis giving the decomposition of the Hilbert space into ``commuting macro-observables''. We can now see that the connection of these ideas with ETH is unraveled by our theorem 1 and by the notion of HUO. Indeed, using our theorem, we can write 
\begin{equation}
\bra{E_m}P_j \ket{E_m} = D_j \left| \braket{j,s}{E_m}\right|^2
\end{equation}
Therefore
\begin{equation}
  \bra{E_m}P_j \ket{E_m} \simeq \frac{D_j}{D} \qquad \Longleftrightarrow \qquad \left| \braket{j,s}{E_m}\right|^2 \simeq \frac{1}{D}
\end{equation}
From this we conclude that \emph{for most Hamiltonians, in the sense of Haar, that the basis $\left\{\ket{j,s}\right\}$ which diagonalizes all the ``commuting macro-observables'' giving the decomposition $\mathcal{H}=\bigoplus_j \mathcal{H}_j$ is an Hamiltonian Unbiased Basis (HUB)}.
Moreover, thanks to the fact that each subspace $\mathcal{H}_j$ is highly degenerate and that the decomposition $\mathcal{H}=\oplus_j \mathcal{H}_j$ is generated by Macro-observables, this proves that all Macro-observables built in this way are HUO. Again, provided certain mild assumptions, which have been discussed in the main text, are satisfied, this guarantees that they satisfy ETH.

\chapter{Intertwiner Hilbert spaces}\label{App:Dimensions}

The main argument of the paper relies on the study of the dimensions of all Hilbert spaces involved, which we give here. We start with the constrained Hilbert space

\begin{align}
&\mathcal{H}_{\mathcal{F}_R} = \mathrm{Inv}_{SU(2)} \left[ \bigotimes_{e=1}^{E_{\partial R}} V_{j_e} \bigotimes_{l=1}^{L_{R}} \left( V_{j_l} \otimes \overline{V}_{j_l} \right) \right]
\end{align}
which can be embedded in a larger Hilbert space 
\begin{align}\label{space}
\mathcal{H}_{\mathcal{F}_R} \subseteq \mathcal{H} \coloneqq \mathcal{H}_{\partial R} \otimes  \mathcal{H}_R,
\end{align} 
where the unconstrained boundary and bulk Hilbert spaces are defined as
\begin{align}
&\mathcal{H}_{\partial R} \coloneqq \bigotimes_{e=1}^{E_{\partial R}} V_{j_e} && \mathcal{H}_R \coloneqq \bigotimes_{l=1}^{L_R} V_{j_l} \otimes \overline{V}_{j_l} \,\,\, .
\end{align}
We remember that $V_j$ is the space of the $j$-th irreducible representation of SU(2) and $\overline{V}_{j}$ is its dual. Their dimension is $\mathrm{dim} V_j = 2j+1$. Therefore
\begin{align}
&d_{\partial R} = \prod_{e=1}^{E_{\partial R}} (2j_e+1) &&d_{R} = \prod_{l=1}^{L_{R}} (2j_l+1)^2
\end{align}
The dimension of the constrained Hilbert space depends on the specific irreps which colours the graph. It can be given implicitly as the result of an integral over the characters of SU(2):
\begin{align}
&d_{\mathcal{F}_R} = \int_{\mathrm{SU(2)}} dg \, \prod_{e=1}^{E_{\partial R}} \prod_{l=1}^{L_{R}} \chi_e(g) (\chi_l(g))^2
\end{align}
where $\chi_e(g)$ is the SU(2) character in the $j_e$ irrep. Unfortunately there is no closed expression for such quantity. For such a reason we restricted our analysis to the case where $j_e=j_l=j_0$. In this case we have

\begin{align}
&d_{\partial R} = (2j_0+1)^{E_{\partial R}} &&d_{R} = (2j_l+1)^{2L_{R}} \, .
\end{align}
The constrained Hilbert space becomes

\begin{align}
&\mathcal{H}_{\mathcal{F}_R} = \mathrm{Inv}_{SU(2)} \left[ V_{j_0}^{E_{\partial R}+2L_R} \right]
\end{align}
and its dimension has been computed in \cite{Livine2006,Livine2008} in more general terms. We summarise the argument here. Suppose we have a tensor product of $n$ copies of the same representation space $V_j$. We can decompose it into its irreducible representations, each of them with a degeneration space:
\begin{align}\label{eq:dec}
&\bigotimes_{n} V_j = \bigoplus_k V_k \,\, \cdot \, {}^jF_k^n 
\end{align}
Here ${}^jF_k^n$ is such degeneration space and  ${}^{j} d_k^{n}\coloneqq \mathrm{dim} \,{}^{j} F_k^{n}$ counts the number of times the space $V_k$ appear in the decomposition.\\

The request of gauge-invariance applied to this space would select the trivial representation space:
\begin{align}
&\mathrm{Inv}_{SU(2)} \left[\bigotimes_{n} V_j \right] = V_0 \,\, \cdot \, {}^jF_0^n 
\end{align}
Since $\mathrm{dim}V_0 = 1$, this space is isomorphic to ${}^jF_0^n$. Applying this result to the constrained Hilbert space 
$\mathcal{H}_{\mathcal{F}_R}$ one can write down the following isomorphism ($\simeq$)
\begin{align}
&\mathcal{H}_{\mathcal{F}_R} \simeq {}^{j_0}F_0^{E_{\partial R}+2L_{R}}
\end{align}
There is no closed form for ${}^jd_k^n$, but it can be given as an integral over the SU(2) characters:
 \begin{align}
{}^{j} d_k^{n} & = \int_{SU(2)} dg \,\,  \chi_k(g) (\chi_{j}(g))^n =\\
& =  \frac{2}{\pi}\int_{0}^{\pi} d\theta \sin \theta \sin (2k+1)\theta \left( \frac{\sin (2j+1)\theta}{\sin \theta}\right)^n \nonumber
\end{align}
An expression for this quantity in the regime $n\gg k$ has been given in \cite{Livine2006,Livine2008}:
\begin{align}\label{eq:asy}
&{}^{j} d_k^{n} \sim \frac{(2j+1)^n(k+1)}{[j(j+1)n]^{3/2}}
\end{align}
Since we are interested in the regime $E_{\partial R},2L_{R}\gg 1$ we can use Eq.(\ref{eq:asy}) to obtain
\begin{align}\label{eq:dR}
&d_{\mathcal{F}_R}= {}^{j_0}d_0^{E_{\partial R}+2L_{R}} \sim \frac{(2j_0+1)^{E_{\partial R}+2L_{R}}}{[j_0(j_0+1)(E_{\partial R}+2L_{R})]^{3/2}}  
\end{align}
Using this expression and $d_{\partial R}$ we obtain the bound in Eq.(12) of the main text:
\begin{align}
&\frac{d_{\partial R}}{\sqrt{d_{\mathcal{F}_R}}} \sim (2j_0+1)^{\frac{E_{\partial R}}{2}-L_{R}} \left[ j_0(j_0+1)(E_{\partial R}+2L_{R})\right]^{3/4}
\end{align}
We now consider the decomposition of the constrained Hilbert space given in Eq.(6) of the main text. First we split the tensor product into
edges and loops:
\begin{align}\label{eq:gaugespace}
&\mathcal{H}_{\mathcal{F}_R} = \mathrm{Inv}_{SU(2)} \left[ \bigotimes_{E_{\partial R}} V_{j_0} \,\, \otimes \,\, \bigotimes_{2L_{R}} V_{j_0}\right]
\end{align}
Using the decomposition in Eq.(\ref{eq:dec}) we can write 
\begin{align}
&\bigotimes_{E_{\partial R}} V_{j_0} = \bigoplus_k V_k \cdot {}^{j_0}F_k^{E} \\
& \bigotimes_{2L_{R}} V_{j_0} = \bigoplus_y V_y \cdot {}^{j_0}F_y^{2L}
\end{align}
Plugging this into Eq.(\ref{eq:gaugespace}) and considering that the request of gauge invariance forces the two indices $k,y$ to be equal, we get:
\begin{align}
\mathcal{H}_{\mathcal{F}_R}  &= \bigoplus_{k} V_k^{(E)} \otimes V_k^{(L)}  \,\, \cdot {}^{j_0}F_{k}^{E} \otimes {}^{j_0}F_{k}^{2L} \nonumber \\
& = \bigoplus_k \mathcal{H}_k^{(E)} \otimes \mathcal{H}_k^{(2L)}
\end{align}
From this we can easily identify 
\begin{align}
&\mathcal{H}_k^{(E)} = V_k^{(E)} \,\, \cdot {}^{j_0}F_{k}^{E} \\
&\mathcal{H}_k^{(L)} = V_k^{(L)} \,\, \cdot {}^{j_0}F_{k}^{2L} 
\end{align}
Moreover, ${}^{j_0}F_{k}^{E}$ and ${}^{j_0}F_{k}^{2L} $ are the degeneration spaces which we called $\mathcal{D}_k^{E}$ and $\mathcal{D}_k^{L}$ in the main text. This also implies that
\begin{align}
&\mathrm{dim} \mathcal{D}_k^E = {}^{j_0} d_k^E &&\mathrm{dim} \mathcal{D}_k^L = {}^{j_0} d_k^{2L}
\end{align}

\addcontentsline{toc}{chapter}{Bibliography}
\bibliography{library}        
\bibliographystyle{unsrt}  

\end{document}